\newtheorem{thm}{Theorem}[section]
\newtheorem{lem}[thm]{Lemma}
\newtheorem{cor}[thm]{Corollary}
\newtheorem{claim}[thm]{Claim}
\newtheorem{prop}[thm]{Proposition}
\newtheorem{defn}[thm]{Definition}
\newtheorem{remark}[thm]{Remark}
\newtheorem{fact}[thm]{Fact}
\newcommand\eps{\varepsilon}
\renewcommand\epsilon{\eps}
\newcommand\inner[2]{\langle{#1},{#2}\rangle}
\newcommand\E{\mathop{\mathbb{E}}}
\renewcommand\leq{\leqslant}
\renewcommand\le{\leqslant}
\renewcommand\geq{\geqslant}
\renewcommand\ge{\geqslant}
\newcommand\defeq{\stackrel{def}{=}}
\newcommand\norm[1]{\left\|#1\right\|}
\newcommand\card[1]{\left|#1\right|}
\newcommand\Prob[2]{{\Pr_{#1}\left[ {#2} \right]}}
\newcommand\Expect[2]{{\mathop{\mathbb{E}}_{#1}\left[ {#2} \right]}}
\newcommand\set[1]{{\left\{ #1 \right\}}}
\newcommand\sett[2]{\left\{ \left. #1 \;\right\vert #2 \right\}}
\newcommand*\coef[4]{{#1}_{#2}({#3},{#4})}
\newcommand\power[1]{\set{0,1}^{#1}}
\newtheorem*{rep@theorem}{\rep@title}
\newcommand{\newreptheorem}[2]{%
\newenvironment{rep#1}[1]{%
\def\rep@title{\bf #2 \ref*{##1} \text{(Restated)} }%
\begin{rep@theorem} }%
{\end{rep@theorem} } }
\newtheorem*{rep@claim}{\rep@title}
\newcommand{\newrepclaim}[2]{%
\newenvironment{rep#1}[1]{%
\def\rep@title{\bf #2 \ref*{##1} \text{(Restated)} }%
\begin{rep@claim} }%
{\end{rep@claim} } }
\newtheorem*{rep@lemma}{\rep@title}
\newcommand{\newreplemma}[2]{%
\newenvironment{rep#1}[1]{%
\def\rep@title{\bf #2 \ref*{##1} \text{(Restated)} }%
\begin{rep@lemma} }%
{\end{rep@lemma} } }
\begin{document}
\title{Hypercontractivity on the symmetric group}
\author{
Yuval Filmus
\thanks{Department of Computer Science, Technion, Israel. This project has received funding from the European Union's Horizon 2020 research and innovation programme under grant agreement No~802020-ERC-HARMONIC.}
\and
Guy Kindler
\thanks{Einstein Institute of Mathematics, Hebrew University of Jerusalem.}
\and
Noam Lifshitz
\thanks{Einstein Institute of Mathematics, Hebrew University of Jerusalem.}
\and
Dor Minzer
\thanks{Department of Mathematics, Massachusetts Institute of Technology.}
}
\date{\vspace{-5ex}}
\maketitle
\begin{abstract}
The hypercontractive inequality is a fundamental result in analysis, with many applications throughout
discrete mathematics, theoretical computer science, combinatorics and more. So far, variants of this inequality
have been proved mainly for product spaces, which raises the question of whether analogous results hold over
non-product domains.

We consider the symmetric group, $S_n$, one of the most basic non-product domains, and establish
hypercontractive inequalities on it. Our inequalities are most effective for the class of \emph{global functions} on $S_n$,
which are functions whose $2$-norm remains small when restricting $O(1)$ coordinates of the input, and assert that low-degree, global functions
have small $q$-norms, for $q>2$.

As applications, we show:
\begin{enumerate}
  \item An analog of the level-$d$ inequality on the hypercube, asserting that the mass of a global function on low-degrees
  is very small. We also show how to use this inequality to bound the size of global, product-free sets in the alternating group $A_n$.
  \item Isoperimetric inequalities on the transposition Cayley graph of $S_n$ for global functions, that are
  analogous to the KKL theorem and to the small-set expansion property in the Boolean hypercube.
  \item Hypercontractive inequalities on the multi-slice, and stability versions of the Kruskal--Katona Theorem
  in some regimes of parameters.
\end{enumerate}
\end{abstract}

\section{Introduction}
The hypercontractive inequality is a fundamental result in analysis that allows one to compare various norms of low-degree functions over a given domain.
A notorious example is the Boolean hypercube $\power{n}$ equipped with the uniform measure, in which case the inequality states that for any function
$f\colon\power{n}\to\mathbb{R}$ of degree at most $d$, one has that $\norm{f}_q\leq \sqrt{q-1}^d\norm{f}_2$ for any $q\geq 2$.
(Here and throughout
the paper, we use expectation norms, $\norm{f}_q = \Expect{x}{\card{f(x)}^q}^{1/q}$, where the input distribution is clear from context, uniform
in this case). While the inequality may appear technical and mysterious at first sight, it has proven itself as remarkably useful, and lies at the heart
of numerous important results, e.g.~\cite{KKL,Friedgut,Bourgainjunta,MOO}.

While the hypercontractive inequality holds for general product spaces, in some important cases it is very weak quantitatively.
Such cases include the $p$-biased cube for $p=o(1)$, the multi-cube $[m]^n$ for $m = \omega(1)$, and the bilinear graph (closely related to the Grassmann graph).
This quantitative deficiency causes various analytical and combinatorial problems on these domains
to be considerably more challenging, and indeed much less is known there (and what is known is considerably more difficult to prove,
see for example~\cite{Friedgutksat}).

\subsection{Global hypercontractivity}
Recently, initially motivated by the study of PCPs (probabilistically checkable proofs) and later by sharp-threshold results,
variants of the hypercontractive inequality have been established in such domains~\cite{KMS2,KLLM,KLLMcodes}. In these variants,
one states an inequality that holds for all functions, but is only meaningful for a special (important) class of functions,
called \emph{global functions}. Informally, a function $f$ on a given product domain $\Omega = \Omega_1\times\dots\times\Omega_n$
is called global, if its $2$-norm, as well as the $2$-norms
of all its restrictions, are all small.\footnote{We remark that this requirement can often be greatly relaxed: (1) it is often enough
to only consider restrictions that fix $O(1)$ of the coordinates of the input, and (2) it is often enough that there are
``very few'' restrictions that have large $2$-norm, for an appropriate notion of ``very few''.} This makes these variants applicable in
cases that were previously out of reach, leading to new results, but at the same time harder to apply, since one has to make sure it is applied to a global function
to get a meaningful bound (see~\cite{KLLM,LM,KLLMcodes} for example applications.). It is worth noting that these variants are in fact
generalizations of the standard hypercontractive inequality, since one can easily show that in domains such as the Boolean hypercube, all
low-degree functions are global.

By now, there are various proofs of the above mentioned results:
(1) a proof by reduction to the Boolean hypercube, (2) a direct proof by expanding $\norm{f}_q^q$ (for even $q$'s),
(3) an inductive proof on $n$.%
\footnote{This inductive proof is actually much trickier than the textbook proof of the hypercontractive inequality over the Boolean cube. The reason is that
the statement of the result itself does not tensorize, thus one has to come up with an alternative, slightly stronger, statement, that does tensorize}
All of these proofs use the product structure of the domain very strongly, and therefore it is unclear how to adapt them beyond the realm of product spaces.

\subsection{Hypercontractivity on non-product spaces}
Significant challenges arise when trying to analyze non-product spaces.
The simplest examples of such spaces are the slice and multi-slice, and the symmetric group.
The classical hypercontractive inequality is equivalent to another inequality, the log-Sobolev inequality. Sharp Log-Sobolev inequalities were proven for the symmetric group and the slice by Lee and Yau~\cite{LeeYau}, and for the multi-slice by Salez~\cite{Salez} (improving on earlier work of Filmus, O'Donnell and Wu~\cite{FOW}).

While such log-Sobolev inequalities are useful for balanced slices and multi-slices, their usefulness for domains such as the symmetric group is limited, due to the similarity between $S_n$ and $[n]^n$. For this reason, Diaconis and Shahshahnai~\cite{DSslice} resorted to representation-theoretic techniques in their analysis of the convergence of Markov chains on $S_n$.
We rectify this issue in a different way, by extending global hypercontractivity to $S_n$.


\subsection{Main results}
The main goal of this paper is to study the symmetric group $S_n$, which is probably the most fundamental non-product domain. Throughout
this paper, we will consider $S_n$ as a probability space equipped with the uniform measure, and use expectation norms, as well as the corresponding expectation inner product, according to the uniform measure.
We will think of $S_n$ as a subset of $[n]^n$, and thereby for $\pi\in S_n$ refer to $\pi(1)$ as ``the first coordinate of the input''.

To state our main results, we begin with defining the notion of globalness on $S_n$.
Given $f\colon S_n\to\mathbb{R}$ and a subset $L\subseteq[n]\times [n]$ of the form $\set{(i_1,j_1),\ldots,(i_t,j_t)}$,
where all of the $i$'s are distinct and all of the $j$'s are distinct, we denote by $S_{n}^T$ the set of permutations
$\pi\in S_n$ respecting $T$ (i.e.\ such that $\pi(i_\ell) = j_\ell$ for all $\ell=1,\ldots,t$), sometimes known as a \emph{double coset} (and corresponding to the notion of link in complexes).
We denote by $f_{\rightarrow T}\colon S_n^T\to\mathbb{R}$
the restriction of $f$ to $S_n^T$, and equip $S_n^T$ with the uniform measure.

\begin{defn}
A function $f\colon S_n\to\mathbb{R}$ is called \emph{$\eps$-global with constant $C$} if for any consistent $T$, it holds that
$\|f_{\rightarrow T}\|_2\leq C^{|T|}\eps$.
\end{defn}
Our basic hypercontractive inequality is concerned with a Markov operator $\mathrm{T}^{(\rho)}$ that may at first not seem very natural.
We defer the precise development and motivation for $\mathrm{T}^{(\rho)}$ to Section~\ref{sec:overview}; for now, we encourage the reader to think of it as averaging
after a long random walk on the transpositions graph, say of length $\Theta(n)$.%
\footnote{Formally, our applications only require that the eigenvalues
corresponding to low-degree functions are bounded away from $0$ (given that $n$ is large enough in comparison the degree of $f$), which
will be the case.}
\begin{thm}\label{thm:hyp_noise_intro}
  For an even $q\in\mathbb{N}$ and $C>0$, there is $\rho>0$ and an operator $\mathrm{T}^{(\rho)}\colon L^2(S_n)\to L^2(S_n)$ satisfying:
  \begin{enumerate}
    \item If $f\colon\power{n}\to\mathbb{R}$ is $\eps$-global with constant $C$, then
    $\norm{\mathrm{T}^{(\rho)} f}_q\leq \eps^{\frac{q-2}{2}}\norm{f}_2^{2/q}$.
    \item There is an absolute constant $K$, such for all $d\in\mathbb{N}$ satisfying $d\leq \sqrt{\log n}/K$, it holds that the
    eigenvalues of $T^{(\rho)}$ corresponding to degree $d$ functions are at least $\rho^{-K\cdot d}$.
  \end{enumerate}
\end{thm}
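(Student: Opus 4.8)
The plan is to build $\mathrm{T}^{(\rho)}$ as a suitable average of ``restriction-to-double-coset then average-up'' operators, mimicking the classical noise operator but adapted to $S_n$. Concretely, for a small parameter $\rho$ I would define $\mathrm{T}^{(\rho)}$ so that applying it to $f$ produces, on each input $\pi$, an expectation of $f$ over permutations obtained by resampling all but a $\rho$-fraction of the coordinates conditioned on a random consistent choice; the footnote's suggestion of a length-$\Theta(n)$ transposition walk is the intuition, but for the proof I expect it is cleaner to realize $\mathrm{T}^{(\rho)}$ directly in terms of the double-coset averaging operators $\pi \mapsto \E_{T}\, \E[f \mid S_n^T]$ where $T$ fixes a random set of $\rho n$ coordinates. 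The first task is to verify that this operator is diagonalized by the degree decomposition of $L^2(S_n)$ (the isotypic/``degree'' filtration coming from the Johnson-scheme-like structure on permutations), and to compute its eigenvalue on degree-$d$ functions: this should come out to something like $\binom{\rho n}{d}/\binom{n}{d} \asymp \rho^d$ up to lower-order corrections, which for $d \le \sqrt{\log n}/K$ is bounded below by $\rho^{-K d}$ after renaming constants — this handles item~(2).

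For item~(1), the strategy is to reduce the $q$-norm bound for global functions on $S_n$ to the already-understood product-space statement. The key step is a coupling/embedding argument: conditioned on the random consistent set $T$ of size roughly $(1-\rho)n$, the fiber $S_n^T$ is a symmetric group on the remaining $\approx \rho n$ symbols, and averaging $f$ over the ``complementary'' random choices behaves like averaging over a product structure on the fixed block. More precisely, I would expand $\norm{\mathrm{T}^{(\rho)} f}_q^q = \E_\pi \big(\E_T \E[f \mid S_n^{T(\pi)}]\big)^q$ (using that $q$ is even so absolute values are harmless), open the $q$-th power into $q$ copies of the inner averaging, and interchange expectations so that the outer expectation is over a tuple of correlated restrictions. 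The globalness hypothesis $\norm{f_{\to T}}_2 \le C^{|T|}\eps$ controls each restricted factor in $L^2$, and the point of choosing $\rho$ small (depending on $q$ and $C$) is that the $q$-fold correlated product of these restrictions can be bounded by an $L^2$-type quantity — essentially one proves a one-block hypercontractive estimate on the ``coordinate'' being resampled and tensorizes it, exactly as in the product-space proofs of global hypercontractivity cited as \cite{KMS2,KLLM}, but now the ``blocks'' are the symbols of $S_n$ and the tensorization is along the $n$ positions. The output of this computation should be of the form $\norm{\mathrm{T}^{(\rho)}f}_q^q \le \eps^{q-2}\norm{f}_2^2$, which is the claimed bound.

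The main obstacle I anticipate is precisely that $S_n$ is \emph{not} a product space, so the ``tensorization'' in the previous paragraph is not literally valid: conditioning on the values of a permutation at some coordinates changes the distribution of the remaining values (it is a derangement-type constraint, not an independent product). I would handle this by a two-step decoupling: first pass from $S_n$ to the \emph{injection} model or to $[n]^n$ via a comparison of Dirichlet forms / a bounded Radon--Nikodym argument between the uniform measure on $S_n$ and the relevant product measure on the resampled block (the ratio is $1+o(1)$ as long as we only touch $O(\sqrt{\log n})$ coordinates, which is why the degree restriction in item~(2) reappears here), and second apply the known product-space global hypercontractivity there. Getting the dependence of $\rho$ on $q$ and $C$ right — and making sure the error terms from the non-product correction do not overwhelm the gain — is where the real work lies; everything else is bookkeeping with the eigenvalue computation and the even-$q$ expansion. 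I would also need to double-check that the same $\rho$ works simultaneously for items~(1) and~(2), i.e.\ that $\rho$ can be taken small enough for the hypercontractive estimate while still large enough that $\rho^{-Kd}$ is a genuine lower bound on the eigenvalues; since $\rho$ depends only on $q,C$ and the degree bound only on $n$, this is consistent, but it should be stated carefully.
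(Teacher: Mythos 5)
Your high-level plan — realize $\mathrm{T}^{(\rho)}$ via conditioning on a random consistent set and reduce the $q$-norm bound to product-space global hypercontractivity — is the right instinct, and it is the paper's strategy in spirit. But the mechanism you propose for crossing from $S_n$ to a product space has a genuine gap. You suggest a ``bounded Radon--Nikodym argument'' between the uniform measure on $S_n$ and a product measure, valid ``as long as we only touch $O(\sqrt{\log n})$ coordinates.'' Item~(1) of the theorem carries no degree restriction: $f$ is an arbitrary $\eps$-global function, and the $q$-th moment expansion you describe involves restrictions of all sizes, where the density of uniform-on-$S_n$ against any product measure is exponentially large and the comparison collapses. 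The paper avoids this entirely: it defines $\mathrm{T}^{(\rho)} = \mathrm{T}_{L^m\to S_n}\mathrm{T}_\rho\mathrm{T}_{S_n\to L^m}$ through an explicit coupling with the genuine product space $L^m$ (with $L=[n]^2$, $m=n^2$), observes that the two averaging operators of any coupling are $L^p$-contractions by Jensen (so no measure comparison is ever needed), checks that $\mathrm{T}_{S_n\to L^m}$ preserves $\eps$-globalness with constant $C$ (a short Cauchy--Schwarz argument using that, conditioned on the images of finitely many points, the permutation is uniform on the double coset), and then invokes the product-space inequality on $L^m$ as a black box. Your plan of ``opening the $q$-th power and tensorizing along the $n$ positions'' amounts to re-proving product hypercontractivity inside $S_n$, which is exactly the hard, non-tensorizing computation the coupling is designed to bypass.

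For item~(2) there is a second gap: you assert that $\mathrm{T}^{(\rho)}$ is diagonalized by the degree decomposition with a single eigenvalue $\approx \binom{\rho n}{d}/\binom{n}{d}\asymp\rho^d$ on degree-$d$ functions. The space $V_{=d}$ is not an eigenspace; it splits further into isotypic components (indexed by partitions with largest part $n-d$), and the operator can act with different eigenvalues on different components of the same degree. A lower bound on all eigenvalues therefore does not follow from a single closed-form computation. The paper's route is: show $\mathrm{T}^{(\rho)}$ commutes with the $S_n$-bimodule action, deduce that every eigenspace inside $V_d$ contains a $d$-junta, and then bound the Rayleigh quotient of that junta using the nearly-orthonormal system $v_T = 1_T/\|1_T\|_2$ — the diagonal terms satisfy $\langle\mathrm{T}^{(\rho)}v_T,v_T\rangle\ge(c\rho)^{|T|}$ while the off-diagonal terms are $O(1/\sqrt{n})$, which is where the lower bound on $n$ (equivalently your $d\le\sqrt{\log n}/K$) actually enters. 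Without the junta-localization step, the diagonal estimate alone does not control the eigenvalues.
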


As is often the case, once one has a hypercontractive inequality involving a noise operator whose eigenvalues are well-understood, one can
state a hypercontractive inequality for low-degree functions. For us, however, it will be important to relax the notion of globalness appropriately,
and we therefore consider the notion of bounded globalness.
\begin{defn}\label{def:bounded_global}
A function $f\colon S_n\to\mathbb{R}$ is called $(d,\eps)$-global if for any consistent $T$ of size at most $d$, it holds that
$\|f_{\rightarrow T}\|_2\leq \eps$.
\end{defn}
A natural example of $(d,\eps)$-global functions is the \emph{low-degree part} of $f$, denoted by $f^{\leq d}$, which is the degree~$d$ function which is closest to $f$ in $L_2$-norm. Here, a function has degree~$d$ if it can be written as a linear combination of indicators of sets $S_n^T$ for $|T| \leq d$.
Naively, one may expect such connection to trivially hold (by Parseval); the issue is that restrictions and degree-truncations
do not commute as well as in product spaces, so such naive arguments fail. Nevertheless, we show that such a connection indeed holds.

With Definition~\ref{def:bounded_global} in hand, we can now state our hypercontractive inequality for low-degree functions.
\begin{thm}\label{thm:Reasonability}
There exists $K>0$ such that the following holds. Let $q\in\mathbb{N}$ be even,
$n\geq q^{K\cdot d^{2}}$.
If $f$ is a $\left(2d,\epsilon\right)$-global function of degree $d$,
then $\|f\|_{q}\le q^{O\left(d^{3}\right)}\epsilon^{\frac{q-2}{q}}\|f\|_{2}^{\frac{2}{q}}$.
\end{thm}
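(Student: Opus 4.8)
The plan is to deduce the low-degree statement from the noise-operator statement in Theorem~\ref{thm:hyp_noise_intro}, in the same way one passes from a hypercontractive inequality for $\mathrm{T}_\rho$ to one for low-degree functions on the Boolean cube. Given a $(2d,\epsilon)$-global function $f$ of degree $d$, the idea is to write $f = \mathrm{T}^{(\rho)} g$ for a suitable $g$, bound $\|g\|_2$, check that $g$ is global (with a controlled parameter), and then apply part~(1) of Theorem~\ref{thm:hyp_noise_intro} to $g$. Since $f$ has degree $d$ and $n \geq q^{K d^2}$ is large enough, part~(2) of Theorem~\ref{thm:hyp_noise_intro} tells us that every eigenvalue of $\mathrm{T}^{(\rho)}$ on the degree-$\leq d$ subspace is at least $\rho^{K d}$; hence $\mathrm{T}^{(\rho)}$ is invertible on this subspace and $g := (\mathrm{T}^{(\rho)})^{-1} f$ is again of degree $d$, with $\|g\|_2 \leq \rho^{-Kd}\|f\|_2$. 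The same eigenvalue bound applied to each restriction $g_{\rightarrow T}$ (for $|T| \leq d$, noting that restricting to a double coset $S_n^T$ is itself, up to relabeling, a symmetric group $S_{n-|T|}$ on which the analogous operator and spectral bounds hold, and that degree does not increase under restriction) gives $\|g_{\rightarrow T}\|_2 \leq \rho^{-Kd}\|f_{\rightarrow T}\|_2 \leq \rho^{-Kd}\epsilon$. Thus $g$ is $(2d, \rho^{-Kd}\epsilon)$-global.

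Now apply part~(1) of Theorem~\ref{thm:hyp_noise_intro} with globalness parameter $C = \rho^{-Kd}$ and $\epsilon' = \rho^{-Kd}\epsilon$ (more precisely, one wants the version with bounded globalness up to level $2d \geq q/2 \cdot$ something, or one appeals to the bounded-globalness refinement of the noise inequality that the paper develops alongside Theorem~\ref{thm:hyp_noise_intro}): this yields
\[
\|f\|_q = \|\mathrm{T}^{(\rho)} g\|_q \leq (\epsilon')^{\frac{q-2}{2}} \|g\|_2^{2/q} \cdot (\text{correction factor}),
\]
and substituting $\epsilon' = \rho^{-Kd}\epsilon$ and $\|g\|_2 \leq \rho^{-Kd}\|f\|_2$ produces a bound of the shape $\rho^{-O(d^2)} \epsilon^{\frac{q-2}{q}}\|f\|_2^{2/q}$ after collecting the powers of $\rho^{-Kd}$. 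It remains to track the dependence of $\rho$ on $q$: the operator $\mathrm{T}^{(\rho)}$ furnished by Theorem~\ref{thm:hyp_noise_intro} is built so that $\rho^{-1}$ is polynomial in $q$ (roughly $\rho \approx (q-1)^{-\Theta(1)}$, mirroring the cube), so $\rho^{-O(d^2)} = q^{O(d^2)}$; being slightly more careful about the exponents arising when one iterates the restriction bound and when one converts between the $\epsilon^{(q-2)/2}\|\cdot\|_2^{2/q}$ normalization and the $\epsilon^{(q-2)/q}\|\cdot\|_2^{2/q}$ normalization in the statement gives the claimed $q^{O(d^3)}$.

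The main obstacle is the one the excerpt already flags: restrictions and degree-truncations do not commute on $S_n$ the way they do on product spaces, so one cannot blithely assert that $g = (\mathrm{T}^{(\rho)})^{-1} f$ has small restrictions just because $f$ does. Making the step "$g$ is $(2d, \rho^{-Kd}\epsilon)$-global" rigorous requires knowing that $\mathrm{T}^{(\rho)}$ commutes with restriction to double cosets (or at least interacts with it well enough), and that the spectral lower bound of part~(2) of Theorem~\ref{thm:hyp_noise_intro} persists on $S_n^T \cong S_{n-|T|}$ with the same constant $K$ — which is why the hypothesis is $n \geq q^{K d^2}$ rather than merely $n \geq q^{Kd}$: we need the spectral gap to survive after peeling off up to $d$ coordinates and still have $n - d$ large relative to the degree. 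A secondary technical point is that part~(1) of Theorem~\ref{thm:hyp_noise_intro} as stated uses the strong ($\epsilon$-global with constant $C$) notion, whereas $g$ is only globally bounded up to level $2d$; bridging this gap is exactly the role of the bounded-globalness machinery around Definition~\ref{def:bounded_global}, and one has to verify that level $2d$ of boundedness suffices for a degree-$d$ function raised to the $q$-th power (which expands into a degree-$qd$ object, but whose relevant restrictions only ever fix $O(q d)$ coordinates — and here the factor-of-$2$ slack and the largeness of $n$ are what save us, at the cost of the polynomial-in-$q$, exponential-in-$d^{2}$ or $d^{3}$ losses).
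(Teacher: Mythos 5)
Your overall strategy --- invert $\mathrm{T}^{(\rho)}$ on the degree-$\le d$ subspace, transfer globalness to $g=(\mathrm{T}^{(\rho)})^{-1}f$, and apply the noise-operator inequality to $g$ --- is genuinely different from the paper's, and it contains a gap at exactly the step you flag but do not resolve. The claim $\|g_{\rightarrow T}\|_2\le\rho^{-Kd}\|f_{\rightarrow T}\|_2$ does not follow from the spectral lower bound on $S_{n-|T|}$, because the restriction of $\mathrm{T}^{(\rho)}g$ to a double coset $S_n^T$ is \emph{not} $\mathrm{T}^{(\rho)}_{S_n^T}$ applied to $g_{\rightarrow T}$: it is an average, over many double cosets $T'$ compatible with $T$, of transported restrictions $g_{\rightarrow T'}$ (this is the content of Claim~\ref{claim:imm} / Proposition~\ref{prop:globalness of Tf}). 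Averaging operators of this kind are contractions, which is why globalness passes \emph{forward} through $\mathrm{T}^{(\rho)}$, but inverting the relation entangles all the $T'$ simultaneously and there is no reason the inverse should send a function with small restrictions to a function with small restrictions. Note also that small $2$-norm plus low degree does not by itself imply globalness (e.g.\ $1_{\pi(1)=1}$ has degree $1$ and $2$-norm $n^{-1/2}$ but a restriction of $2$-norm $1$), so this step cannot be skipped. A secondary issue: as written, taking the globalness constant $C=\rho^{-Kd}$ is circular, since in Theorem~\ref{thm:hyp_noise_intro} (and in Theorem~\ref{thm:Hypercontractivity}) $\rho$ is chosen as a function of $C$ via $\rho\le(10qC)^{-2}$; one must land on a universal constant independent of $\rho$.

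The paper avoids both problems by a different route. First, Lemma~\ref{lem:Bootstrapping the globalness} upgrades the hypothesis directly: a degree-$d$, $(2d,\epsilon)$-global function is automatically $\epsilon$-global with the \emph{universal} constant $4^8$ (proved via the derivative calculus of Section~\ref{sec:coupling2}, using that all derivatives of order $>d$ vanish by Claim~\ref{claim:degree reduction}). Hence Theorem~\ref{thm:Hypercontractivity} applies to $f$ itself and gives $\|\mathrm{T}^{(\rho)}f\|_q\le\epsilon^{(q-2)/q}\|f\|_2^{2/q}$ with no transfer of globalness to an inverted function. Second, instead of $(\mathrm{T}^{(\rho)})^{-1}$, Lemma~\ref{lem:Approximation} produces a polynomial $P$ with $P(0)=0$ and $\|P\|\le q^{O(d^3)}$ such that $P(\mathrm{T}^{(\rho)})f\approx f$ in $L^q$; since each $(\mathrm{T}^{(\rho)})^i$ for $i\ge 1$ factors through $\mathrm{T}^{(\rho)}$ and the remaining factors are $L^q$-contractions, one gets $\|P(\mathrm{T}^{(\rho)})f\|_q\le\|P\|\cdot\|\mathrm{T}^{(\rho)}f\|_q$, which is the inequality your inverse cannot supply. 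Constructing $P$ is where the real work lies (showing the eigenvalues on $V_d$ are bounded below and cluster around $d$ values, then interpolating); if you want to salvage your approach you would need either this polynomial device or a genuinely new argument that $(\mathrm{T}^{(\rho)})^{-1}$ preserves bounded globalness on $V_d$.
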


\begin{remark}
  The focus of the current paper is on the case that $n$ is very large in comparison to the degree $d$, and therefore
  the technical conditions imposed on $n$ in Theorems~\ref{thm:hyp_noise_intro} and~\ref{thm:Reasonability} will hold for us.
  It would be interesting to relax or even remove these conditions altogether, and we leave further investigation to future works.
\end{remark}

\subsection{Applications}
We present some applications of Theorem~\ref{thm:hyp_noise_intro} and Theorem~\ref{thm:Reasonability}, as outlined below.
\subsubsection{The level-$d$ inequality}
Our first application is concerned with the weight a global function has on its low degrees, which is an analog of
the classical level-$d$ inequality on the Boolean hypercube (e.g.~\cite[Corollary 9.25]{Od}).
\begin{thm}\label{thm:lvl_d}
There exists an absolute constant $C>0$ such that the following holds.
Let $d,n\in\mathbb{N}$ and $\eps>0$ such that $n\geq 2^{C d^3} \log(1/\eps)^{C d}$.
If $f\colon S_n\to \mathbb{Z}$ is $(2d,\eps)$-global, then
$\norm{f^{\leq d}}_2^2\leq 2^{C \cdot d^4}\eps^4 \log^{C\cdot d}(1/\eps)$.
\end{thm}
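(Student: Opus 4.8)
The plan is to combine the hypercontractive inequality for low-degree functions (Theorem~\ref{thm:Reasonability}) with the integrality hypothesis $f\colon S_n\to\mathbb Z$, exactly as one does on the Boolean hypercube. First I would reduce to the low-degree part: set $g = f^{\leq d}$. Since $f$ is $(2d,\eps)$-global, its low-degree truncation $g$ is $(2d,O(\eps))$-global as well — this is the nontrivial point flagged in the discussion after Definition~\ref{def:bounded_global}, namely that restriction and degree-truncation almost commute; I would either quote that fact or reprove the quantitative version of it that gives $\|g_{\to T}\|_2 \le \mathrm{poly}(d)\,\eps$ for $|T|\le 2d$. Also $g$ has degree $d$ and $\|g\|_2 \le \|f\|_2 \le \eps$ (taking $T=\varnothing$ in the globalness hypothesis). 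Writing $N := \|g\|_2^2 = \|f^{\leq d}\|_2^2$, this is exactly the quantity we want to bound.

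Next, apply Theorem~\ref{thm:Reasonability} to $g$ with an even exponent $q$ to be chosen: since $n \ge q^{K d^2}$ (which I will need to check is implied by the hypothesis on $n$, choosing $q = \Theta(\log(1/\eps))$) and $g$ is $(2d,O(\eps))$-global of degree $d$, we get
\[
  \|g\|_q \;\le\; q^{O(d^3)}\,(O(\eps))^{\frac{q-2}{q}}\,\|g\|_2^{\frac 2q}
  \;=\; q^{O(d^3)}\,O(\eps)^{\frac{q-2}{q}}\,N^{\frac 1q}.
\]
Now bring in integrality. The key elementary observation is that $g = f^{\le d}$ and $h := f - g$ are orthogonal, $f$ is integer-valued, and on the event $\{g \ne 0\}$ we cannot have $|f|$ small "on average" — more precisely, since $f$ is an integer, wherever $|g| \le 1/2$ we must still account for the mass, and one runs the standard argument: $N = \|g\|_2^2 = \langle g, f\rangle \le \|g\|_p \|f\|_{p'}$ is not quite the cleanest route, so instead I would use that $\mathbb E[g^2] = \mathbb E[g\cdot f] \le \mathbb E[|g|\cdot|f|]$ and bound $\mathbb E[|g| \cdot |f|]$ by splitting on whether $|f| = 0$. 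A cleaner variant: because $f$ is $\mathbb Z$-valued, $f^2 \ge |f|$ pointwise, hence $\mathbb E[|f|] \le \mathbb E[f^2] = \|f\|_2^2 \le \eps^2$, so $f$ is supported on a set of "effective size" $\le \eps^2$; then $N = \mathbb E[g f] \le \|g\|_q \cdot \mathbb E[|f|^{q/(q-1)}]^{(q-1)/q} \le \|g\|_q\,\mathbb E[|f|]^{(q-1)/q} \le \|g\|_q\,\eps^{2(q-1)/q}$, where I used $|f|^{q/(q-1)} \le |f|\cdot |f|^{1/(q-1)}$ and... this needs $|f|^{1/(q-1)}\le$ something; since $f$ is an integer this is fine only if $|f|$ is bounded, which it need not be. So the honest route is Cauchy–Schwarz in the form $N = \mathbb E[gf\cdot\mathbbm 1_{f\ne 0}] \le \|g\|_q\,\|f\,\mathbbm 1_{f \ne 0}\|_{q'}$ and then bound $\|f\,\mathbbm 1_{f\ne 0}\|_{q'}^{q'} = \mathbb E[|f|^{q'}] \le \mathbb E[f^2\cdot |f|^{q'-2}]$; for $q' = q/(q-1) \le 2$ we have $q' - 2 \le 0$, and on $\{f\ne 0\}$, $|f|^{q'-2}\le 1$, so $\mathbb E[|f|^{q'}] \le \|f\|_2^2 \le \eps^2$. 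Thus $\|f\,\mathbbm 1_{f\ne0}\|_{q'} \le \eps^{2/q'} = \eps^{2(q-1)/q}$, giving
\[
  N \;\le\; \|g\|_q\cdot \eps^{\frac{2(q-1)}{q}}
  \;\le\; q^{O(d^3)}\,O(\eps)^{\frac{q-2}{q}}\,N^{\frac 1q}\cdot \eps^{\frac{2(q-1)}{q}}.
\]
Rearranging, $N^{1 - 1/q} \le q^{O(d^3)}\,2^{O(q)}\,\eps^{\frac{q-2}{q} + \frac{2(q-1)}{q}} = q^{O(d^3)}2^{O(q)}\eps^{3 - 4/q}$, so $N \le \big(q^{O(d^3)}2^{O(q)}\big)^{\frac{q}{q-1}}\eps^{\frac{q}{q-1}(3-4/q)} = q^{O(d^3)}2^{O(q)}\eps^{\frac{3q-4}{q-1}}$. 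Choosing $q$ to be the smallest even integer with $q \ge \log(1/\eps)$ makes $2^{O(q)} = \mathrm{poly}(1/\eps)$ absorb into a fixed power loss while the exponent $\frac{3q-4}{q-1} \to 3$ from above but stays $> 4$ only for small $q$... so to actually land the exponent $4$ claimed in the statement I would instead be more careful: iterate or optimize $q$ so that the $\eps$-exponent is exactly $4$ (the statement allows $\eps^4\log^{Cd}(1/\eps)$, i.e.\ an extra polylog slack), which forces $q = \Theta(\log(1/\eps))$ and converts $q^{O(d^3)}$ into $2^{O(d^3)}\log^{O(d^3)}(1/\eps)$ and, after bookkeeping the $d$-dependence through the $q^{O(d^3)}$ and $\|g\|_2^{2/q} = N^{1/q}$ terms, yields $N \le 2^{C d^4}\eps^4\log^{Cd}(1/\eps)$.

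The main obstacle is the first reduction step — establishing that $f^{\le d}$ inherits $(2d, O(\eps))$-globalness from $f$ — because, as the authors stress, restriction and truncation do not commute on $S_n$ the way they do on product domains, so this requires the structural results developed earlier in the paper rather than a one-line Parseval argument; everything after that is the standard level-$d$ manipulation with the only subtlety being the careful choice of $q \approx \log(1/\eps)$ and the propagation of the $q^{O(d^3)}$ factor (which is why the final exponent on $d$ is $4$ rather than $3$) and of the additive $n \ge 2^{Cd^3}\log(1/\eps)^{Cd}$ hypothesis, which is precisely what is needed to guarantee $n \ge q^{Kd^2}$ so that Theorem~\ref{thm:Reasonability} applies with $q = \Theta(\log(1/\eps))$.
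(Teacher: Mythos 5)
Your plan is exactly the ``naive'' argument that the paper's introduction identifies as falling short of the stated bound, and the shortfall is fatal rather than a matter of bookkeeping. Granting both of your inputs --- that $g=f^{\leq d}$ is $(2d,O(\eps))$-global and that $\norm{f}_{q/(q-1)}\leq \eps^{2(q-1)/q}$ by integrality --- your chain gives $N^{(q-1)/q}\leq q^{O(d^3)}\eps^{(q-2)/q+2(q-1)/q}$, i.e.\ $N\leq q^{O(d^3)}\eps^{(3q-4)/(q-1)}$. But $(3q-4)/(q-1)=3-\tfrac{1}{q-1}<3$ for every $q$, approaching $3$ from \emph{below}; no choice of $q$, and no optimization over $q$, produces the exponent $4$. (Your remark that the exponent ``stays $>4$ only for small $q$'' is incorrect: it is never even $3$.) So the argument as written proves at best $\norm{f^{\leq d}}_2^2\leq \eps^{3-o(1)}$, which is precisely the weaker statement the authors contrast with Theorem~\ref{thm:lvl_d}.

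The missing idea is a bootstrap on the globalness of $f^{\leq d}$ itself, carried out by induction on $d$ (Proposition~\ref{prop:lvl_d}). Each order-$t$ derivative $\mathrm{D}'f$ is again integer-valued on a copy of $S_{n-t}$, so the induction hypothesis gives $\norm{(\mathrm{D}'f)^{\leq d-t}}_2\lesssim \eps^2\,\mathrm{polylog}(1/\eps)$; this is transferred to $\norm{\mathrm{D}(f^{\leq d})}_2$ via Lemma~\ref{lem:upper bound on restriction  of truncation } (which is where the real work of almost-commuting truncations with derivatives happens, using the polynomial $P(\mathrm{T})$ of Lemma~\ref{lem:operator that makes everything nice}). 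Consequently either $\norm{f^{\leq d}}_2^2$ is already at most $\tilde{O}_d(\eps^4)$ and one is done, or $f^{\leq d}$ is $(2d,4^d\norm{f^{\leq d}}_2)$-global, i.e.\ global with parameter proportional to its \emph{own} $2$-norm. In the latter case Theorem~\ref{thm:Hypercontractivity} gives $\norm{\mathrm{T}^{(\rho)}f^{\leq d}}_q\leq 4^d\norm{f^{\leq d}}_2$, and H\"older together with the eigenvalue bound of Corollary~\ref{cor:eigenvalues} yields $\norm{f^{\leq d}}_2^2\leq \rho^{-O(d)}\norm{f^{\leq d}}_2\,\eps^{2(q-1)/q}$; cancelling one factor of $\norm{f^{\leq d}}_2$ and squaring gives the $\eps^4$ bound. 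Without this self-referential globalness --- or some other mechanism gaining a full extra factor of $\eps$ in the $L^q$ bound on $f^{\leq d}$ --- your H\"older-and-rearrange step cannot reach the claimed exponent. (Your first reduction step, the $(2d,O(\eps))$-globalness of $f^{\leq d}$, also already requires this truncation--derivative machinery, as you note, but that is the lesser of the two issues.)
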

   Theorem~\ref{thm:lvl_d} should be compared to the level-$d$ inequality on the hypercube, which asserts that for any function $f\colon\power{n}\to\power{}$ with
  $\E[f] = \delta < 1/2$ we have that $\norm{f^{\leq d}}_2^2\leq \delta^2\left(\frac{10\log(1/\delta)}{d}\right)^{d}$, for all $d\leq \log(1/\delta)$.
  (Quantitatively, the parameter $\delta$ should be compared to $\eps^2$ in Theorem~\ref{thm:lvl_d} due to normalization).

  Note that it may be the case that $\eps$ in Theorem~\ref{thm:lvl_d} is much larger than $\norm{f}_2^{1/2}$, and then
  Theorem~\ref{thm:lvl_d} becomes trivial.\footnote{Parseval's identity implies that the sum of all $\|f^{=d}\|^2$ is $\norm{f}_2^2$,
  so in particular $\norm{f^{\leq d}}_2^2\leq \norm{f}_2^2$.} Fortunately, we can prove a stronger version of Theorem~\ref{thm:lvl_d}
  for functions $f$ whose $2$-norm is not exponentially small, which actually follows relatively easily from Theorem~\ref{thm:lvl_d}.

  \begin{thm}\label{thm:lvl_d_strong}
  There exists an absolute constant $C>0$ such that the following holds.
    Let $d,n\in\mathbb{N}$, $\eps>0$ be parameters and let $f\colon S_n\to\mathbb{Z}$ be
    a $(2d,\eps)$-global function.  If $n\geq 2^{C d^3} \log(1/\norm{f}_2)^{C \cdot d}$, then
    \[
       \norm{f^{\leq d}}_2^2\leq 2^{C \cdot d^4}\norm{f}_2^2\eps^2 \log^{C\cdot d}(1/\norm{f}_2^2).
    \]
\end{thm}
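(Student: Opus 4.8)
The plan is to derive Theorem~\ref{thm:lvl_d_strong} from Theorem~\ref{thm:lvl_d} by a case analysis on the size of $\norm{f}_2$, together with the basic Parseval bound $\norm{f^{\leq d}}_2^2 \le \norm{f}_2^2$. The idea is that Theorem~\ref{thm:lvl_d} is strong exactly when $\eps$ is roughly $\norm{f}_2^{1/2}$ (so that $\eps^4 \approx \norm{f}_2^2$), while the trivial Parseval bound handles the opposite regime; we interpolate between the two by considering the larger of $\eps$ and (a constant times) $\norm{f}_2^{1/2}$.

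First I would set $\delta = \norm{f}_2$ and distinguish two cases. In the case $\eps^2 \le \delta$, note that $f$ is automatically $(2d,\eps')$-global for $\eps' = \sqrt{\delta} \ge \eps$ (enlarging the globalness parameter only weakens the hypothesis), and the hypothesis $n \ge 2^{Cd^3}\log(1/\delta)^{Cd}$ is exactly the condition $n \ge 2^{Cd^3}\log(1/\eps'^2)^{Cd}$ needed to invoke Theorem~\ref{thm:lvl_d} with $\eps'$ in place of $\eps$. Applying Theorem~\ref{thm:lvl_d} then gives $\norm{f^{\leq d}}_2^2 \le 2^{Cd^4} (\eps')^4 \log^{Cd}(1/\eps'^2) = 2^{Cd^4}\delta^2 \log^{Cd}(1/\delta)$, which, after absorbing the factor-of-two discrepancy between $\log(1/\delta)$ and $\log(1/\delta^2)$ into the constant $C$, is exactly the claimed bound $2^{Cd^4}\norm{f}_2^2 \eps^2 \log^{Cd}(1/\norm{f}_2^2)$ — here one uses $\eps^2 \le \delta = \norm{f}_2$... wait, this needs care: the target has an $\eps^2$ factor but this bound has no $\eps$ dependence. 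The correct move is: since $\norm{f^{\leq d}}_2^2 \le \norm{f}_2^2 = \delta^2$ always (Parseval), and separately $\norm{f^{\leq d}}_2^2 \le 2^{Cd^4}\delta^2\log^{Cd}(1/\delta^2)$ from the previous line, we may multiply these two bounds after taking appropriate roots, or more simply observe the target RHS $2^{Cd^4}\delta^2\eps^2\log^{Cd}(1/\delta^2)$ is comparable to $2^{Cd^4}\delta^3\log^{Cd}(1/\delta^2)$ when $\eps^2 \approx \delta$; when $\eps^2 \ll \delta$ one instead combines the two displayed upper bounds for $\norm{f^{\leq d}}_2^2$ multiplicatively (bounding $\norm{f^{\leq d}}_2^2$ by the geometric-mean-type product that reintroduces the $\eps$ factor). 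In the remaining case $\eps^2 > \delta$, the desired RHS already exceeds $\norm{f}_2^4 / \delta^2 \cdot (\text{junk}) \ge \delta^2 \ge \norm{f^{\leq d}}_2^2$, so the statement is immediate from Parseval.

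The main obstacle, then, is purely bookkeeping: one must check that in the regime where Theorem~\ref{thm:lvl_d} is not directly applicable (because $\eps$ is much smaller than $\norm{f}_2^{1/2}$, so that the hypothesis on $n$ in Theorem~\ref{thm:lvl_d} — which involves $\log(1/\eps)$, not $\log(1/\norm{f}_2)$ — might fail), one can still route through the theorem by replacing $\eps$ with the larger quantity $\max(\eps, \norm{f}_2^{1/2})$, and then recover the extra $\eps^2$ factor in the conclusion by multiplying in the trivial Parseval bound. Concretely, writing $\eta = \max(\eps, \norm{f}_2^{1/2})$, apply Theorem~\ref{thm:lvl_d} with parameter $\eta$ to get $\norm{f^{\leq d}}_2^2 \le 2^{Cd^4}\eta^4 \log^{Cd}(1/\eta)$; combine with $\norm{f^{\leq d}}_2^2 \le \norm{f}_2^2$; and check that $\min\{2^{Cd^4}\eta^4\log^{Cd}(1/\eta),\ \norm{f}_2^2\} \le 2^{C'd^4}\norm{f}_2^2\eps^2\log^{C'd}(1/\norm{f}_2^2)$ by splitting on whether $\eta = \eps$ or $\eta = \norm{f}_2^{1/2}$ and in each sub-case using $\eta^2 \le \eps^2 + \norm{f}_2$ together with $\norm{f}_2 \le 1$ (which we may assume, else the statement is vacuous since $\norm{f^{\leq d}}_2 \le \norm{f}_2$ and the RHS is large). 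I expect the only genuinely delicate point to be confirming that the hypothesis $n \ge 2^{Cd^3}\log(1/\norm{f}_2)^{Cd}$ of Theorem~\ref{thm:lvl_d_strong} is strong enough to supply the hypothesis $n \ge 2^{Cd^3}\log(1/\eta)^{Cd}$ demanded by Theorem~\ref{thm:lvl_d} — which holds because $\log(1/\eta) \le \log(1/\norm{f}_2^{1/2}) = \frac12\log(1/\norm{f}_2)$ by the choice $\eta \ge \norm{f}_2^{1/2}$, so the constant $C$ comfortably absorbs the factor $2^{Cd}$.
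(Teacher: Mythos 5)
Your approach cannot work: the conclusion of Theorem~\ref{thm:lvl_d_strong} does not follow from the statement of Theorem~\ref{thm:lvl_d} together with Parseval by any case split or interpolation. The two bounds available to you are $\norm{f^{\leq d}}_2^2\leq 2^{Cd^4}\eps^4\log^{Cd}(1/\eps)$ and $\norm{f^{\leq d}}_2^2\leq\norm{f}_2^2$, while the target is (up to the same junk factors) $\norm{f}_2^2\eps^2$. Take $\norm{f}_2=\eps^2$ with $\eps$ small: both available bounds are then $\approx\eps^4$, but the target is $\approx\eps^6$. No minimum, root-taking, or weighted geometric mean of two upper bounds each of size $\eps^4$ can produce $\eps^6$ --- you noticed mid-argument that the $\eps^2$ factor is missing, but the proposed fix of ``combining the two bounds multiplicatively'' yields $\sqrt{AB}\approx\eps^4$ again, not $\eps^4\cdot\eps^2$. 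Your second case ($\eps^2>\norm{f}_2$, ``immediate from Parseval'') also fails: the desired right-hand side $2^{Cd^4}\norm{f}_2^2\eps^2\log^{Cd}(1/\norm{f}_2^2)$ is \emph{smaller} than $\norm{f}_2^2$ whenever $2^{Cd^4}\eps^2\log^{Cd}(1/\norm{f}_2^2)<1$, e.g.\ for $\eps^2\approx\norm{f}_2$ with $\norm{f}_2$ exponentially small, which the hypothesis on $n$ permits.

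The actual proof has to reopen the machinery rather than use Theorem~\ref{thm:lvl_d} as a black box, because the two factors $\norm{f}_2^2$ and $\eps^2$ come from two different places. First, Claim~\ref{claim:globalness_of_low_deg_part} (which rests on Proposition~\ref{prop:lvl_d}, the inductive derivative form of the level-$d$ inequality) shows that $f^{\leq d}$ is $\delta$-global with constant $4^8$ for $\delta=2^{O(d^4)}\eps^2\log^{O(d)}(1/\eps)$ --- this is where $\eps^2$ enters. Then the H\"older/hypercontractivity step with $q=\log(1/\norm{f}_2)$ gives $\norm{f^{\leq d}}_2^2\leq\rho^{-O(d)}\norm{f}_{q/(q-1)}\norm{\mathrm{T}^{(\rho)}f^{\leq d}}_q$, where $\norm{f}_{q/(q-1)}\leq O(\norm{f}_2^2)$ by integrality of $f$ and the choice of $q$, and $\norm{\mathrm{T}^{(\rho)}f^{\leq d}}_q\leq\delta$ by Theorem~\ref{thm:Hypercontractivity}. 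This is also why the hypothesis on $n$ involves $\log(1/\norm{f}_2)$ rather than $\log(1/\eps)$: it controls the exponent $q$, not the parameter fed into Theorem~\ref{thm:lvl_d}.
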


  \paragraph{On the proof of the level-$d$ inequality.}
  In contrast to the case of the hypercube, Theorem~\ref{thm:lvl_d} does not immediately follow from Theorem~\ref{thm:hyp_noise_intro} or Theorem~\ref{thm:Reasonability},
  and requires more work, as we explain below. Recall that one proof of the level-$d$ inequality on the hypercube proceeds, using hypercontractivity, as
  \[
    \norm{f^{\leq d}}_2^2
    = \langle f^{\leq d}, f\rangle
    \leq \norm{f^{\leq d}}_q \norm{f}_{1+1/(q-1)}
    \leq \sqrt{q-1}^d \norm{f^{\leq d}}_2 \norm{f}_{1+1/(q-1)},
  \]
  choosing suitable $q$, and rearranging. Our hypercontractive inequality does not allow us to make the final transition, and instead only tells us
  that $\norm{f^{\leq d}}_q \leq O_{d,q}(\eps^{(q-2)/q}) \norm{f^{\leq d}}_2^{2/q}$. Executing this plan only implies, at best,
  the quantitatively weaker statement that $\norm{f^{\leq d}}_2^2\leq \eps^{3/2} \log^{O_d(1)}(1/\eps)$. Here, the difference between $\eps^{3/2}$ and $\eps^2$ is often
  crucial, because such results are often only useful for very small $\eps$ anyway.

  To explain how we circumvent this issue, note first that the source of the inefficiency is that we used the fact that
  $f^{\leq d}$ is $(2d,\eps)$-global, but the reality could be that it is much more global than that (for example, the statement itself
  asserts a much stronger bound on the $2$-norm of $f^{\leq d}$). To exploit this point, let us consider the restriction that maximizes the $2$-norm of $f^{\leq d}$.
  The most optimistic case would be that the globalness of $f^{\leq d}$
  is achieved already by the function itself, which would say that $f^{\leq d}$ is $(2d,O_d(\norm{f^{\leq d}}_2))$-global.
  In this case, the argument from the hypercube goes through well enough to achieve the desired bound.

  What if the globalness of $f^{\leq d}$ is achieved by a restriction of size $r$ instead? In this case, we show that
  there is a ``derivative''\footnote{We only define the appropriate notion of derivatives we use in Section~\ref{sec:coupling2}, and for now
  encourage the reader to think of it as an analogous operation to the discrete derivative in the Boolean hypercube.}
  $g$ of $f^{\leq d}$ which achieves roughly the same $2$-norm as that restriction of $f^{\leq d}$,
  and taking any further ``derivatives'' only decreases the $2$-norm of $g$.
  We show that this implies that $g$ is $(2d, O_d(\norm{g}_2))$ global, so we have reached the same situation as before!

  The above discussion motivates an inductive approach, and in particular proving the statement for all integer-valued functions (and not only
  Boolean functions), as stated. This
  way, we are able to show that for $g$ above we have that $\norm{g}_2 = \tilde{O}_d(\eps^2)$, which implies that $f^{\leq d}$ is
  $(2d, \tilde{O}_d(\eps^2))$-global. This is a major improvement over our original knowledge regarding $f^{\leq d}$, and in particular
  it allows us to run the argument from the hypercube (described above) successfully.

\subsubsection{Global product-free sets are small}
We say that a family of permutations $F\subseteq S_n$ is product-free if there
are no $\pi_1,\pi_2,\pi_3\in S_n$ such that $\pi_3 = \pi_2\circ \pi_1$.
What is the size of the largest product-free family $F$?

With the formulation above, one can of course take $F$ to be the set of odd permutations, which
has size $\card{S_n}/2$. What happens if we forbid such permutations, i.e.\ only consider families
of even permutations?

Questions of this sort generalize the well-studied problem of finding arithmetic sequences in
dense sets. More relevant to us is the work of Gowers~\cite{GowersQuasi},
which studies this problem for a wide range of groups (referred therein as ``quasi-random groups''), and the
work of Eberhard~\cite{Eberhard} which specialized this question to $A_n$, and improves Gowers' results.
More specifically, Gowers' result shows that a product-free $F\subseteq A_n$ has size at most
$O\left(\frac{1}{n^{1/3}}\card{A_n}\right)$, and Eberhard's work~\cite{Eberhard} improves this bound to
$\card{F} = O\left(\frac{\log^{7/2}n}{\sqrt{n}} \card{A_n}\right)$. We remark that Eberhard's result is
tight up to the polylogarithmic factor, as can be evidenced from the family
\begin{equation}\label{eq:family_prod_free_ex}
 F = \sett{\pi\in A_n}{\pi(1)\in\set{2,\ldots,\sqrt{n}}, \pi(\set{2,\ldots,\sqrt{n}})\subseteq [n]\setminus[\sqrt{n}]}.
\end{equation}

In this section, we consider the problem of determining the maximal size of a \emph{global}, product-free set in $A_n$. In particular, we show:
\begin{thm}\label{thm:global_prod_free}
  There exists $N\in\mathbb{N}$ such that the following holds for all $n\geq N$.
  For every $C>0$ there is $K>0$, such that if $F\subseteq A_n$ is product-free and
  is $(6,C\cdot\sqrt{\delta})$-global, where $\delta = \card{F}/\card{A_n}$, then
  $\delta \leq \frac{\log^K n}{n}$.
\end{thm}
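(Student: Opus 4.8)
The plan is to mimic the standard Fourier-analytic argument for upper-bounding product-free sets in quasirandom groups, but to feed it the level-$d$ inequality for global functions (Theorem~\ref{thm:lvl_d}) in place of the trivial spectral bound, thereby gaining the extra savings needed to push the bound all the way down to $\mathrm{polylog}(n)/n$. Write $f = \mathbbm{1}_F$, so $\E[f] = \delta$, and recall that $F$ being product-free means $\langle f * f, f\rangle = 0$, where $*$ is convolution on $A_n$ (equivalently, $\E_{\pi,\sigma}[f(\pi)f(\sigma)f(\pi\sigma)] = 0$). Decompose $f = \delta + f^{=1} + f^{=2} + \dots$ into its isotypic/degree components. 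The constant part contributes $\delta^3$ to $\langle f*f, f\rangle$; orthogonality of distinct isotypic components (together with the fact that convolution acts on each isotypic component separately) means the remaining contribution is a sum over nontrivial irreducibles of terms controlled by the operator norm of $f$ acting by convolution on that isotypic block times $\|f^{=k}\|_2^2$-type quantities. Since the total is zero, the nontrivial part must cancel $\delta^3$, so $\delta^3 \le \sum_{\text{nontrivial}} (\text{conv. norm on that block}) \cdot \|f^{=k}\|_2^2$.

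Next I would split this nontrivial sum at a threshold degree $d = \Theta(\log\log n)$ or a suitable small constant multiple of it (chosen so the hypothesis $n \ge 2^{Cd^3}\log(1/\eps)^{Cd}$ of Theorem~\ref{thm:lvl_d} is satisfied with $\eps = C\sqrt{\delta}$, which holds once $\delta \ge 1/\mathrm{poly}(n)$ — and if $\delta$ is already that small we are done). For the \emph{high-degree} part (degree $> d$): here I invoke quasirandomness of $A_n$ — the smallest nontrivial irreducible representation of $A_n$ has dimension $\ge n-2$ — which by the Babai–Nikolov–Pyber / Gowers bound gives that the convolution operator norm of $f$ on any nontrivial isotypic block is at most $(\delta / (n-2))^{1/2}$ times a normalization, so the total high-degree contribution is at most roughly $\delta^{3/2} n^{-1/2} \cdot \|f^{>d}\|_2^2 \le \delta^{3/2} n^{-1/2} \cdot \delta = \delta^{5/2} n^{-1/2}$, which is $o(\delta^3)$ exactly when $\delta \gg n^{-1}$ — consistent with the target. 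For the \emph{low-degree} part (degree $\le d$): the convolution norm is bounded trivially by $\E|f| = \delta$, so this contributes at most $\delta \cdot \|f^{\le d}\|_2^2$, and here is where Theorem~\ref{thm:lvl_d} enters: since $F$ is $(6, C\sqrt{\delta})$-global and $2d \le 6$ for the relevant constant-size $d$ (this forces $d \le 3$, so in fact I would take $d$ to be a fixed small constant, say $d=3$), we get $\|f^{\le d}\|_2^2 \le 2^{Cd^4}(C\sqrt\delta)^4 \log^{Cd}(1/\delta) = O(\delta^2 \log^{O(1)}(1/\delta))$, so the low-degree contribution is $O(\delta^3 \log^{O(1)}(1/\delta))$.

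Combining: $\delta^3 \le O(\delta^3 \log^{O(1)}(1/\delta)) + o(\delta^3)$ is not yet a contradiction — the logarithmic factor in the level-$d$ bound is the wrong sign. The resolution, and the genuinely delicate point, is that I must \emph{not} throw away the degree-$1$ and degree-$2$ pieces so crudely: I should instead carry out the convolution analysis more carefully on the lowest few isotypic blocks, where the structure of $A_n$'s small representations is explicit, to show the low-degree contribution to $\langle f*f,f\rangle$ is actually bounded by $C' \|f^{=1}\|_2^2 \cdot \delta + (\text{smaller})$, and then apply the \emph{stronger} level-$d$ statement (Theorem~\ref{thm:lvl_d_strong}) with $\eps = C\sqrt\delta$ to get $\|f^{\le d}\|_2^2 \le 2^{Cd^4}\|f\|_2^2 \eps^2 \log^{Cd}(1/\|f\|_2^2) = O(\delta \cdot \delta \cdot \delta \log^{O(1)}(1/\delta))$ — i.e. an extra factor of $\delta = \|f\|_2^2$ that kills the logarithm and more. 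Then $\delta^3 \le O(\delta^4 \log^{O(1)}(1/\delta)) + O(\delta^{5/2} n^{-1/2})$, and since the first term is $o(\delta^3)$, we conclude $\delta^3 = O(\delta^{5/2} n^{-1/2})$, i.e. $\delta = O(n^{-1})$, in fact with no logarithmic loss — and tracking constants carefully through the globalness parameter $C$ yields $\delta \le \log^K(n)/n$ for $K = K(C)$ as claimed. The main obstacle is precisely this last bookkeeping: ensuring that the convolution structure on the low-degree isotypic components of $A_n$ genuinely lets one use the $\|f\|_2^2$-weighted form of the level-$d$ inequality (Theorem~\ref{thm:lvl_d_strong}) rather than the raw form (Theorem~\ref{thm:lvl_d}), since only the former has the power saving needed to beat $\delta^3$; handling the degree-$1$ part of $f$, which in $A_n$ corresponds to the standard representation and is not automatically negligible, is the crux.
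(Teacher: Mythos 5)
Your overall strategy---expand the product-counting form over isotypic components, cut at degree $3$ (matching the $(6,C\sqrt{\delta})$-globalness), and let the level-$d$ inequality supply the savings on the low levels---is essentially the paper's route (there phrased via the Cayley operator $T_F$, a trace-method eigenvalue bound, and a Hoffman-type argument). However, as written the proposal has two concrete quantitative gaps, each fatal at the target density $\delta\approx \log^K n/n$. First, the high-degree term: using only quasirandomness of $A_n$ (every nontrivial irreducible has dimension at least $n-1$), the convolution operator norm of $f$ on a nontrivial block is at most $\sqrt{\delta/(n-1)}$, so the degree-$>3$ contribution is at most $\sqrt{\delta/n}\,\|f^{>3}\|_2^2\le \delta^{3/2}n^{-1/2}$, not $\delta^{5/2}n^{-1/2}$ as you wrote (a spurious factor of $\delta$). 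At $\delta\sim 1/n$ this is $\sim n^{-2}$, which swamps the main term $\delta^3\sim n^{-3}$; this bound alone only recovers Gowers' $\delta\lesssim n^{-1/3}$. To do better one must use that blocks far from both the trivial and the sign representation have much larger dimension (Lemma~\ref{lem:dim_lb} combined with the trace bound of Lemma~\ref{lem:ev_up_TF}), and one must separately control the blocks that are high-degree but low-dimensional, i.e.\ partitions whose transpose has first row $\ge n-3$ such as $(2,1^{n-2})$. The paper handles these by applying the level-$d$ inequality to $f\cdot{\sf sign}$ via Lemma~\ref{lem:multiplying_by_parity} (globalness and integrality are preserved); your outline never addresses them, and if you work intrinsically on $A_n$ you still have to justify transferring the $S_n$ level-$d$ inequality to the merged $\{\lambda,\lambda^t\}$ blocks.

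Second, the low-degree term: your proposed rescue via Theorem~\ref{thm:lvl_d_strong} is a miscalculation. With $\eps=C\sqrt{\delta}$ and $\|f\|_2^2=\delta$ that theorem gives $\|f^{\le 3}\|_2^2\lesssim \|f\|_2^2\,\eps^2\log^{O(1)}(1/\delta)=\delta^2\log^{O(1)}(1/\delta)$, the same order as Theorem~\ref{thm:lvl_d}, not $\delta^3\log^{O(1)}(1/\delta)$; and no improvement of the level-$d$ inequality could give $o(\delta^2)$ here, since $\|f^{\le 3}\|_2^2$ can genuinely be of order $\delta^2$ for global sets. The actual gain in the paper comes from the other factor: even on the nontrivial low-degree blocks one uses the operator-norm bound $\sqrt{\delta/n}$ (dimension $\ge n-1$), not the trivial bound $\delta$, so the low-degree contribution is $\sqrt{\delta/n}\cdot\delta^2\log^{O(1)}(1/\delta)$, which is $o(\delta^3)$ precisely when $\delta\gg \log^{O(1)}(n)/n$---this is what closes the argument, and it is also why the final bound necessarily retains a polylogarithmic factor; your concluding claim of ``$\delta=O(n^{-1})$ with no logarithmic loss'' is a symptom of the two arithmetic errors rather than a genuine strengthening.
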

\begin{remark}
  A few remarks are in order.
  \begin{enumerate}
    \item We note that the above result achieves a stronger bound than the family in~\eqref{eq:family_prod_free_ex}.
    There is no contradiction here, of course, since that family is very much not global: restricting to $\pi(1) = 2$
    increases the measure of $F$ significantly.

    \item The junta method, which can be used to study many problems in extremal combinatorics, often considers the question for global families as
    a key component. The rough idea is to show that one can approximate a family $F$ by a union of families $\tilde{F}$ that satisfy an appropriate
    pseudo-randomness condition, such that if $F$ is product-free than so are the families $\tilde{F}$. Furthermore, inside any not-too-small pseudo-random family $\tilde{F}$, one may
    find a global family $\tilde{F}'$ by making any small restriction that increases the size of the family considerably. Thus, in this way one
    may hope to reduce the general question to the question on global families (see~\cite{KLLMcodes} for example).

    While at the moment we do not know how to employ the junta method in the case of product-free sets in $A_n$, one may still hope that it is possible,
    providing some motivation for Theorem~\ref{thm:global_prod_free}.

    \item Our result is in fact more general, and can be used to study the $3$-set version of this problem; see Corollary~\ref{cor:global_prod_free_strong}.

    \item We suspect that much stronger quantitative bounds should hold for global families; we elaborate on this suspicion in Section~\ref{sec:improve_prod_free}.
  \end{enumerate}
\end{remark}

\subsubsection{Isoperimetric inequalities}
Using our hypercontractive inequalities we are able to prove several isoperimetric inequalities for global sets.
Let $S\subseteq S_n$ be a set, and consider the transpositions random walk $\mathrm{T}$ that from a permutation
$\pi\in S_n$ moves to $\pi\circ \tau$, where $\tau$ is a randomly chosen transposition.
We show that if $S$ is ``not too sensitive along any transposition'',\footnote{The formal statement of the result requires an appropriate notion
of discrete derivatives which we only give in Section~\ref{sec:coupling2}.}
then the probability to exit $S$ in a random step according to $\mathrm{T}$ must be significant, similarly to the
classical KKL Theorem on the hypercube~\cite{KKL}.
The formal statement of this result is given in Theorem~\ref{thm:KKL_analog}.

We are also able to analyze longer random walks according to $\mathrm{T}$, of order $\approx n$, and show that
one has small-set expansion for global sets. See Theorem~\ref{thm:SSE} for a formal statement.

\subsubsection{Deducing the results for other non-product domains}
Our results for $S_n$ imply analogous results in the multi-slice. The deduction is done in a black-box fashion,
by a natural correspondence between functions over $S_n$ and over the multi-slice that preserves degrees, globalness, and $L_p$ norms.

This allows us to deduce analogs of our results for $S_n$ essentially for free (see Section~\ref{sec:deduce_other_domains}),
as well as a stability result for the classical Kruskal--Katona Theorem (see Theorem~\ref{thm:KK_stability}).

\subsubsection{Other applications}
Our hypercontractive inequality has also been used in the study of Probabilistically Checkable Proofs~\cite{BKM}. More precisely, to study a new
hardness conjecture, referred to as ``Rich $2$-to-$1$ Games Conjecture'' in~\cite{BKM}, and show that if true, it implies Khot's
Unique-Games Conjecture~\cite{Khot}.

\subsection{Our techniques}\label{sec:overview}
In this section we outline the techniques used in the proofs of Theorem~\ref{thm:hyp_noise_intro} and Theorem~\ref{thm:Reasonability}.

\subsubsection{The coupling approach: proof overview}
\subsubsection*{Obtaining hypercontractive operators via couplings}
Consider two finite probability spaces $X$ and $Y$, and suppose that
$\mathcal{C} = ({\bf x},{\bf y})$ is a coupling between them
(we encourage the reader to think of $X$ as $S_n$, and of $Y$ as a product space in which we
already know hypercontractivity to hold).
Using the coupling $\mathcal{C}$, we may define the averaging operators
$\mathrm{T}_{X\to Y}\colon L^{2}\left(X\right)\to L^{2}\left(Y\right)$ and
$\mathrm{T}_{Y\to X}\colon L^{2}\left(Y\right)\to L^{2}\left(X\right)$ as
\[
\mathrm{T}_{X\to Y}f(y)=\mathbb{E}_{\left(\mathbf{x},\mathbf{y}\right)\sim\mathcal{C}}\left[f\left(\mathbf{x}\right)\mid\mathbf{y}=y\right],
\qquad
\mathrm{T}_{Y\to X}f(x)=\mathbb{E}_{\left(\mathbf{x},\mathbf{y}\right)\sim\mathcal{C}}\left[f\left(\mathbf{y}\right)\mid\mathbf{x}=x\right].
\]

It is easily noted by Jensen's inequality, that each one of the operators $\mathrm{T}_{X\to Y}$ and $\mathrm{T}_{Y\to X}$
is a contraction with respect to the $L^p$-norm, for any $p\geq 1$. The benefit of considering these operators, is that given
an operator $\mathrm{T}_Y$ with desirable properties (say, it is hypercontractive, i.e.\ it satisfies $\|\mathrm{T}_{Y}f\|_{4}\le\|f\|_{2}$),
we may consider the lifted operator on $X$ given by $\mathrm{T}_X \defeq \mathrm{T}_{Y\to X}\mathrm{T}_{Y}\mathrm{T}_{X\to Y}$ and hope
that it too satisfies some desirable properties. Indeed, it is easy to see that if $\mathrm{T}_Y$ is hypercontractive, then
$\mathrm{T}_X$ is also hypercontractive:
\begin{equation}\label{eq1}
\|\mathrm{T}_{Y\to X}\mathrm{T}_{Y}\mathrm{T}_{X\to Y}f\|_{4}\le\|\mathrm{T}_{Y}\mathrm{T}_{X\to Y}f\|_{4}\le\|\mathrm{T}_{X\to Y}f\|_{2}\le\|f\|_{2}.
\end{equation}
We show that the same connection continues to hold for more refined hypercontractive
inequalities such as the one given in~\cite{KLLM,KLLMcodes} (and more concretely, Theorem~\ref{thm:KLLM} below).
We note that the proof in this case is slightly more involved.

While very elegant and appealing, the above approach can only be used to show hypercontractivity for a very special type of operators such as
$\mathrm{T}_X$ defined above, and it is not clear if such results are of any use at all. To remedy this situation, we study
the effect of this operator in the spectral domain. In particular, we show that the action of this operator on
``low-degree functions'' is very similar to the effect of the standard noise operator, and thus we are able deduce a hypercontractive
inequality for low-degree functions, as in Theorem~\ref{thm:Reasonability}.

\subsubsection{Instantiating the coupling approach for the symmetric group}
\subsubsection*{The coupled space}
Let $L = [n]^2$, and let $m$ be large, depending polynomially on $n$ ($m=n^2$ will do).
We will couple $S_n$ and $L^m$, where the idea is to think of each element of $L$ as local information
about the coupled permutation $\pi$. That is, the element $(i,j)\in L$ encodes the fact that $\pi$ maps
$i$ to $j$.

\subsubsection*{Our coupling}
We say
that a set $T=\left\{ \left(i_{1},j_{1}\right),\ldots,\left(i_{t},j_{t}\right)\right\} \subseteq L$
of pairs is \emph{consistent }if there exists a permutation $\pi$
with $\pi\left(i_{k}\right)=j_{k}$ for each $k\in\left[t\right]$,
and any such permutation $\pi$ is said to be consistent with $T$.

Our coupling between $S_n$ and $L^m$ is the following:
\begin{enumerate}
  \item Choose an element $\mathbf{x}\sim L^m$ uniformly at random.
  \item Greedily construct from ${\bf x}$ a set $T$ of consistent pairs. That is,
  starting from $k=1$ to $m$, we consider the $k$-th coordinate of ${\bf x}$, denoted by $(i_k,j_k)$,
  and check whether adding it to $T$ would keep it consistent. If so, we add $(i_k,j_k)$ to $T$, and
  otherwise we do not.
  \item Choose a permutation $\bm{\pi}$ consistent with $T$ uniformly at random.
\end{enumerate}

\subsubsection*{The resulting operator}
Finally, we can specify our hypercontractive operator on $S_{n}$.
Let $X=S_{n}$, $Y=L^{m}$ and $T_{X\to Y},T_{Y\to X}$ be the operators
corresponding to the coupling that we have just constructed.
Let $\mathrm{T}_{Y}=\mathrm{T}_{\rho}$
be the noise operator on the product space $L^{m}$, which can be defined in two equivalent ways:
\begin{enumerate}
\item Every element is retained with probability $\rho$, and resampled otherwise.
\item The $d$'th Fourier level is multiplied by $\rho^d$.
\end{enumerate}
Then $\mathrm{T}^{\left(\rho\right)}=\mathrm{T}_{Y\to X}\mathrm{T}_{Y}\mathrm{T}_{X\to Y}$
is our desired operator on $S_{n}$.

\smallskip

We next explain how to analyze the operator $\mathrm{T}_Y$.

\subsubsection*{Showing that $\mathrm{T}_Y$ satisfies refined hypercontractivity}
Recall the simplistic argument~\eqref{eq1}, showing that hypercontractivity of $\mathrm{T}_X$ implies that the hypercontractivity of $\mathrm{T}_Y$.
We intend to show, in a similar way, that refined hypercontractivity is also carried over by the coupling. Towards this end, we must show that
the notion of globalness is preserved: namely, if $f$ is global, then $g = \mathrm{T}_{S_n\to L^m} f$ is also global. This assertion however
very much depends on the precise notion of globalness we consider.
If we assume that $f$ is $\eps$-global with constant $C$, then it is easy to show
that $g$ is also $\eps$-global with constant $C$ (see Proposition~\ref{prop:n to m}), and the argument goes through smoothly. However, in the case that $f$ is only
guaranteed to be $(d,\eps)$-global, things are more interesting, and in this case we are only able to handle $f$'s that are of low-degree
(this is natural, as we will deal with the low-degree part of $(d,\eps)$-global functions).

A convenient feature of product spaces is that for low-degree functions, the notions of $\eps$-globalness with constant $C$, and $(D,\delta)$-globalness,
are equivalent up to small losses in parameters. This allows one to invoke results such as Theorem~\ref{thm:KLLM} in this case.
While we show that the case of the symmetric group possesses a similar property (at least when $n$ is large enough in comparison to $d$), we are not able to
immediately use it. The issue is that even if $f\colon S_{n}\to\mathbb{R}$
is a function of degree $d$, it may not be the case that $g=\mathrm{T}_{S_{n}\to L_{m}}f$ is also of low degree.

We circumvent this issue as follows. Suppose $f$ is $(2d,\eps)$-global is of degree $d$. Then, as remarked above, we argue that $f$ is $\eps$-global with
some absolute constant $C$, and so it is $(t,C^t\eps)$-global for all $t\in\mathbb{N}$. Thus, $g$ is $(t,C^t\eps)$-global for all $t$. Now, as $g$ is a function
over a product space, it is easily seen that the latter implies that the noisy version of $g$, $h = \mathrm{T}_{\frac{1}{4C}}g$, is $\eps$-global with factor $2$,
and thus we are able to invoke Theorem~\ref{thm:KLLM} on it. Together, this implies that taking
$\mathrm{T}_Y = \mathrm{T}_{1/80^2}\circ\mathrm{T}_{1/(4C)}$ gets us that $\|\mathrm{T}_X f\|_4\leq\sqrt{\eps\|f\|_2}$. (The constant $1/80^2$ arises from Theorem~\ref{thm:KLLM}.)

\subsubsection{The direct approach: proof overview}
Our second approach to establish hypercontractive inequalities goes via a rather different route.
One of the proofs of hypercontractivity in product domains proceeds by finding a convenient, orthonormal basis for the space of real-valued functions over $\Omega$
(which in product cases is easy as the basis tensorizes). This way, proving hypercontractivity amounts to studying moments of this basis functions as well as other forms,
which is often not very hard to do due to the simple nature of the basis.

When dealing with non-product spaces, such as $S_n$, we do not know how to produce such a convenient orthonormal basis. Nevertheless, our direct approach presented in Section~\ref{sec:direct}
relies on a representation of a function $f\colon S_n\to\mathbb{R}$ in a canonical form that is almost as good as in product spaces.
To construct this representation, we start
with obvious spanning sets such as
\[
\sett{\prod\limits_{\ell=1}^d 1_{\pi(i_{\ell}) = j_{\ell}}}{\card{\set{i_1,\ldots,i_d}} = \card{\set{j_1,\ldots,j_d}} = d}.
\]
This set contains many redundancies (and thus is not a basis), and we show how to use these
to enforce a system of linear constraints on the coefficients of the representation that turn out
to be very useful in proving hypercontractive inequalities.

\subsection{Organization of the paper}
In Section~\ref{sec:prelim} we present some basic preliminaries.
Sections~\ref{sec:coupling1},~\ref{sec:coupling2} and~\ref{sec:missing1} are devoted for presenting our approach to hypercontractivity
via coupling and algebraic arguments, and in Section~\ref{sec:direct} we present our direct approach. In Sections~\ref{sec:app} and~\ref{sec:lvl_d}
we present several consequences of our hypercontractive inequalities: the level-$d$ inequality in Section~\ref{sec:lvl_d}, and the other applications in Section~\ref{sec:app}.

\section{Preliminaries}\label{sec:prelim}

We think of the product operation in $S_n$ as function composition, and so $(\tau\sigma)(i) = (\tau\circ\sigma)(i) = \tau(\sigma(i))$.

Throughout the paper, we consider the space of real-valued functions on $S_n$ equipped with the expectation inner product, denoted by $L^{2}\left(S_{n}\right)$.
Namely, for any $f,g\colon S_n\to\mathbb{R}$ we define $\inner{f}{g} = \E_{\sigma\in S_n}[f(\sigma)g(\sigma)]$. A basic property of this space is that it is an
$S_{n}$-bimodule, as can be seen by defining the left operation on a function $f$ and a permutation $\tau$ as $\prescript{\tau}{} f(\sigma) = f(\tau\circ\sigma)$,
and the right operation $f^{\tau}(\sigma) = f(\sigma\circ\tau)$.

\subsection{The level decomposition} \label{sec:level decomposition}

We will define the concept of \emph{degree $d$ function} in several equivalent ways. The most standard definition is the one which we already mentioned in the introduction.

\begin{defn} \label{def:level_space}
Let $T = \{(i_1,j_1),\ldots,(i_t,j_t)\} \subseteq L$ be a set of $t$ consistent pairs, and recall that $S_n^T$ is the set of all permutations such that $\pi(i_k) = j_k$ for all $k \in [t]$.

The space $V_d$	consists of all linear combinations of functions of the form $1_T = 1_{S_n^T}$ for $|T| \leq d$. We say that a real-valued function on $S_n$ has degree (at most) $d$ if it belongs to $V_d$.

\end{defn}
By construction, $V_{d-1} \subseteq V_d$ for all $d \geq 1$. We define the space of functions of pure degree $d$ as
\[
 V_{=d} = V_d \cap V_{d-1}^\perp.
\]

It is easy to see that $V_n = V_{n-1}$, and so we can decompose the space of all real-valued functions on $S_n$ as follows:
\[
 \mathbb{R}[S_n] = V_{=0} \oplus V_{=1} \oplus \cdots \oplus V_{=n-1}.
\]
We comment that the representation theory of $S_n$ refines this decomposition into a finer one, indexed by partitions $\lambda$ of $n$; the space $V_{=d}$ corresponds to partitions in which the largest part is exactly $n-d$.

We may write any function $f\colon S_n\to\mathbb{R}$ in terms of our decomposition uniquely as
$\sum\limits_{i=0}^{n-1}{f^{=i}}$, where $f^{=i}\in V_{=i}$. It will also be convenient for us to have a notation for the projection of
$f$ onto $V_{d}$, which is nothing but $f^{\leq d} = f^{=0} + f^{=1}+\dots+f^{=d}$.

We will need an alternative description of $V_{=d}$ in terms of juntas.

\begin{defn}
Let $A,B \subseteq [n]$. For every $a \in A$ and $b \in B$, let $e_{ab} = 1_{\pi(a)=b}$. We say that a function $f\colon S_n \to \mathbb{R}$ is an $(A,B)$-junta if $f$ can be written as a function of the $e_{ab}$. We denote the space of $(A,B)$-juntas by $V_{A,B}$.

A function is a $d$-junta if it is an $(A,B)$-junta for some $|A|=|B|=d$.
\end{defn}

\begin{lem} \label{lem:V_AB spanning set}
The space $V_{A,B}$ is spanned by the functions $1_T$ for $T \subseteq A \times B$. Consequently, $V_d$ is the span of the $d$-juntas.
\end{lem}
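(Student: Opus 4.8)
The plan is to reduce everything to the standard fact that every real-valued function on a Boolean cube equals a multilinear polynomial in its coordinates. By the definition of an $(A,B)$-junta, such a function has the form $f=g(\dots,e_{ab},\dots)$ for some $g\colon\{0,1\}^{A\times B}\to\mathbb{R}$ (only the values of $g$ on the image of $\pi\mapsto(e_{ab}(\pi))_{(a,b)\in A\times B}$ are relevant, so we may extend $g$ to the whole cube arbitrarily). Writing $g$ in its multilinear form $g(z)=\sum_{T\subseteq A\times B}c_T\prod_{(a,b)\in T}z_{ab}$ and substituting $z_{ab}=e_{ab}$, we get
\[
f=\sum_{T\subseteq A\times B}c_T\prod_{(a,b)\in T}e_{ab}
\]
as functions on $S_n$. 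Hence $V_{A,B}$ is spanned by the monomials $\prod_{(a,b)\in T}e_{ab}$ over $T\subseteq A\times B$.

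Next I would identify each monomial with an indicator $1_T$. For a permutation $\pi$, the value $\prod_{(a,b)\in T}e_{ab}(\pi)$ equals $1$ exactly when $\pi(a)=b$ for all $(a,b)\in T$; thus if $T$ is consistent this monomial equals $1_{S_n^T}=1_T$, while if $T$ is inconsistent (two of its pairs share a first coordinate or a second coordinate) the monomial vanishes identically because $\pi$ is a bijection. So the spanning set above is, up to the zero function, precisely $\{1_T:T\subseteq A\times B\text{ consistent}\}$, and conversely each such $1_T=\prod_{(a,b)\in T}e_{ab}$ is manifestly an $(A,B)$-junta; therefore $V_{A,B}=\mathrm{span}\{1_T:T\subseteq A\times B\}$, which is the first claim. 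For the consequence, note that if $|A|=|B|=d$ then every consistent $T\subseteq A\times B$ has $|T|\le d$, so $V_{A,B}\subseteq V_d$ by Definition~\ref{def:level_space}, giving that the span of the $d$-juntas is contained in $V_d$; conversely, given a consistent $T=\{(i_1,j_1),\dots,(i_t,j_t)\}$ with $t\le d$ (we may assume $d\le n$, the only relevant range), pad $\{i_1,\dots,i_t\}$ and $\{j_1,\dots,j_t\}$ to sets $A,B$ of size exactly $d$; then $1_T\in V_{A,B}$ lies in the span of the $d$-juntas. Since such $1_T$ span $V_d$, this gives the reverse inclusion.

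I do not anticipate a real obstacle here. The only two points that need a moment's care are that the map $\pi\mapsto(e_{ab})_{(a,b)\in A\times B}$ need not be onto $\{0,1\}^{A\times B}$ — which is harmless, since the multilinear representation of $g$ agrees with $g$ on the entire cube and hence a fortiori on its image — and that it is precisely the bijectivity of $\pi$ that makes the inconsistent monomials vanish.
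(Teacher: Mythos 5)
Your proof is correct and follows essentially the same route as the paper's: expand the junta as a (multilinear) polynomial in the Boolean-valued functions $e_{ab}$, observe that monomials with a repeated row or column index vanish because $\pi$ is a bijection so the surviving monomials are exactly the indicators $1_T$ for consistent $T\subseteq A\times B$, and obtain the second claim by padding $T$ to sets $A,B$ of size $d$. The extra care you take about the map $\pi\mapsto(e_{ab}(\pi))$ not being onto the cube is a valid (if minor) point that the paper leaves implicit.
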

\begin{proof}
If $A = \{i_1,\ldots,i_d\}$ and $B = \{j_1,\ldots,j_d\}$ then an $(A,B)$-junta $f$ can be written as a function of $e_{i_sj_t}$, and in particular as a polynomial in these functions. Since $e_{i_sj_{t_1}} e_{i_sj_{t_2}} = e_{i_{s_1}j_t} e_{i_{s_2}j_t} = 0$ if $t_1 \neq t_2$ and $s_1 \neq s_2$, it follows that $f$ can be written as a linear combination of functions $1_T$ for $T \subseteq A \times B$.

Conversely, if $T = \{(a_1,b_1),\ldots,(a_d,b_d)\}$ then $1_T = e_{a_1b_1} \cdots e_{a_db_d}$.

To see the truth of the second part of the lemma, notice that if $|A|=|B|=d$ and $T \subseteq A \times B$ then $|T| \leq d$, and conversely if $|T| \leq d$ then $T \subseteq A \times B$ for some $A,B$ such that $|A|=|B|=d$.
\end{proof}

We will also need an alternative description of $V_{A,B}$.

\begin{lem} \label{lem:V_AB bimodule}
For each $A,B$, the space $V_{A,B}$ consists of all functions $f\colon S_n \to \mathbb{R}$ such that $f = \prescript{\tau}{}f^\sigma$ for all $\sigma$ fixing $A$ pointwise and $\tau$ fixing $B$ pointwise.
\end{lem}
\begin{proof}
Let $U_{A,B}$ consist of all functions $f$ satisfying the stated condition, i.e., $f(\pi) = f(\tau \pi \sigma)$ whenever $\sigma$ fixes $A$ pointwise and $\tau$ fixes $B$ pointwise.

Let $a \in A$ and $b \in B$. If $\sigma$ fixes $a$ and $\tau$ fixes $b$ then $\pi(a) = b$ iff $\tau \pi \sigma (a) = b$, showing that $e_{ab} \in U_{A,B}$. It follows that $V_{A,B} \subseteq U_{A,B}$.

In the other direction, let $f \in U_{A,B}$. Suppose for definiteness that $A = [a]$ and $B = [b]$. Let $\pi$ be a permutation such that $\pi(1) = 1, \ldots, \pi(t) = t$, and $\pi(i) > b$ for $i=t+1,\ldots,a$. Applying a permutation fixing $B$ pointwise on the left, we turn $\pi$ into a permutation $\pi'$ such that $\pi'(1),\ldots,\pi'(a) = 1,\ldots,t,b+1,\ldots,b+(a-t)$. Applying  a permutation fixing $A$ pointwise on the right, we turn $\pi'$ into the permutation $1,\ldots,t,b+1,\ldots,b+(a-t),\ldots,n,t+1,\ldots,a$. This shows that if $\pi_1,\pi_2$ are two permutations satisfying $e_{ab}(\pi_1) = e_{ab}(\pi_2)$ for all $a \in A,b \in B$ then we can find permutations $\sigma_1,\sigma_2$ fixing $A$ pointwise and permutations $\tau_1,\tau_2$ fixing $B$ pointwise such that $\tau_1 \pi_1 \sigma_1 = \tau_2 \pi_2 \sigma_2$, and so $f(\pi_1) = f(\pi_2)$. This shows that $f \in V_{A,B}$.
\end{proof}



\subsection{Hypercontractivity in product spaces}
We will make use of the following hypercontractive inequality, essentially due to~\cite{KLLMcodes}.
For that, we first remark that we consider the natural analog definitions of globalness for product spaces.
Namely, for a finite product space $(\Omega,\mu) = (\Omega_1\times\dots\times\Omega_m,\mu_1\times\dots\times\mu_m)$,
we say that $f\colon\Omega\to\mathbb{R}$ is $\eps$-global with a constant $C$, if for any $T\subseteq [m]$ and $x\in\prod_{i\in T}\Omega_i$
it holds that $\|f_{T\rightarrow x}\|_{2,\mu_x}^2\leq C^{|T|}\eps$, where $\mu_x$ is the distribution $\mu$ conditioned on coordinates of
$T$ being equal to $x$. Similarly, we say that $f$ is $(d,\eps)$-global if for any $|T|\leq d$
and $x\in\prod_{i\in T}\Omega_i$ it holds that $\|f_{T\rightarrow x}\|_{2,\mu_x}^2\leq \eps$.

\begin{thm} \label{thm:KLLM}
Let $q\in\mathbb{N}$ be even, and suppose $f$ is $\eps$-global with constant $C$, and let $\rho\le\frac{1}{(10qC)^2}$. Then
$\|\mathrm{T}_{\rho}f\|_{q}\le\epsilon^{q-2}\|f\|_{2}^{\frac{2}{q}}$.
\end{thm}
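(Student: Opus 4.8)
The plan is to obtain Theorem~\ref{thm:KLLM} as a restatement of the global hypercontractive inequality of~\cite{KLLMcodes}: once that inequality is in hand, all that remains is bookkeeping — matching the normalization of the globalness parameter $\epsilon$ used here to the one used there, and checking that the admissible noise rate depends on $q$ and $C$ only through $\rho \le 1/(10qC)^2$. Concretely, I would recall the inequality of~\cite{KLLMcodes} in the present normalization (in which ``$\epsilon$-global with constant $C$'' means $\norm{f_{T\rightarrow x}}_{2,\mu_x}^2 \le C^{|T|}\epsilon$ for every restriction $f_{T\rightarrow x}$), observe that our hypotheses coincide with theirs up to renaming constants, and read off the conclusion; the constant $10$ is not sacred, and any sufficiently large absolute constant suffices, which is all we use later.

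For completeness — and because Section~\ref{sec:overview} re-proves a variant of this inequality on the product space $L^m$ — I would also record the underlying argument. Since $q$ is even, $\norm{\mathrm{T}_\rho f}_q^q = \E[(\mathrm{T}_\rho f)^q]$ is an honest polynomial in $f$; expanding $f$ in its Efron--Stein decomposition $f = \sum_{S\subseteq[m]} f^{=S}$ — where $f^{=S}$ depends only on the coordinates of $S$ and has mean zero along each of them — and using $\mathrm{T}_\rho f^{=S} = \rho^{|S|} f^{=S}$ gives
\[
  \norm{\mathrm{T}_\rho f}_q^q \;=\; \sum_{S_1,\dots,S_q \subseteq [m]} \rho^{\,|S_1|+\cdots+|S_q|}\; \E\!\left[ f^{=S_1}\cdots f^{=S_q} \right].
\]
The first observation is that a summand vanishes unless every coordinate of $S_1\cup\cdots\cup S_q$ belongs to at least two of the sets $S_i$ (if a coordinate lies in exactly one $S_i$, integrate it out first and use that $f^{=S_i}$ is mean zero there), so only ``doubly covered'' tuples contribute, and for these $|S_1|+\cdots+|S_q| \ge 2\,|S_1\cup\cdots\cup S_q|$.

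The heart of the proof is bounding the surviving mixed moments. The naive estimate $|\E[f^{=S_1}\cdots f^{=S_q}]| \le \prod_i\norm{f^{=S_i}}_q$ is far too lossy; instead one fixes the combinatorial \emph{sharing pattern} of a doubly covered tuple — which two of the $q$ factors each coordinate is charged to, of which there are only $q^{O(k)}$ when $|S_1|+\cdots+|S_q|=k$ — and sums over the coordinate assignments realizing it. A generalized Cauchy--Schwarz/H\"older argument together with Parseval makes this inner sum collapse: two of the $q$ factors get integrated out and contribute $\norm{f}_2^2$, while the remaining $q-2$ factors contribute $L_2$-norms of restrictions of $f$ that fix some coordinates — and \emph{this is exactly where globalness enters}, since each such restricted $2$-norm squared is at most $C^{(\#\text{fixed coords})}\epsilon$ uniformly, and the numbers of fixed coordinates sum to at most $k$. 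Collecting terms, the contribution of all doubly covered tuples with $|S_1|+\cdots+|S_q|=k$ is at most $(qC)^{O(k)}\,\rho^k\, \epsilon^{(q-2)/2}\norm{f}_2^2$; since $\rho \le 1/(10qC)^2$ and the base $10qC$ is comfortably larger than $qC$, this decays geometrically in $k$, and summing gives $\norm{\mathrm{T}_\rho f}_q^q \le O\!\left(\epsilon^{(q-2)/2}\norm{f}_2^2\right)$, which is the claimed bound after taking $q$-th roots and absorbing the remaining absolute constant into the noise rate.

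The step I expect to be the main obstacle is the mixed-moment estimate in the previous paragraph. It is delicate because one must exploit not merely that $f^{=S_i}$ is supported on $S_i$ but that it is mean zero along each coordinate of $S_i$ — this is what both forces the double-cover condition and drives the Parseval collapse — and because the combinatorial count of surviving sharing patterns must be kept below the $\rho^{-k}$ budget with a base that is absolute, in particular independent of $C$, so that the globalness factors $C^{O(k)}$ can still be soaked up by $\rho^k$. Carrying this through is precisely the content of~\cite{KLLMcodes}, and in the body of the paper we invoke their inequality as a black box, checking only that its hypotheses match Theorem~\ref{thm:KLLM}.
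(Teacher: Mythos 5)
Your proposal coincides with what the paper actually does: Theorem~\ref{thm:KLLM} is not proved in the text but is imported from~\cite{KLLMcodes} (where it is established for $q=4$), with the remark that the argument is the same for all even $q$ — exactly your primary route of invoking the cited inequality after matching normalizations. Your supplementary sketch (Efron--Stein expansion of the $q$-th moment, the double-cover condition, and the restricted-norm/globalness bound on the surviving mixed moments) is a faithful outline of the cited proof, so nothing further is needed.
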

We remark that Theorem~\ref{thm:KLLM} was proved in~\cite{KLLMcodes} for $q=4$, however the proof is essentially the same for all even integers $q$.

\section{Hypercontractivity: the coupling approach}\label{sec:coupling1}

\subsection{Hypercontractivity from full globalness}
In this section we prove the following hypercontractive results for
our operator $\mathrm{T}^{\left(\rho\right)}$ assuming $f$ is global. We begin
by proving two simple propositions.
\begin{prop}
\label{prop:n to m}
Suppose $f\colon S_n\to\mathbb{R}$ is $\eps$-global with constant $C$, and let $g=\mathrm{T}_{S_{n}\to L^{m}}f$.
Then $g$ is $\eps$-global with constant $C$.
\end{prop}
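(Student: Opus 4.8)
The plan is to first write $g = \mathrm{T}_{S_n\to L^m} f$ out explicitly, and then reduce any bound on a restriction of $g$ to bounds on restrictions of $f$ to double cosets, which are exactly what the hypothesis controls. Recall that in our coupling a sequence $y = (y_1,\dots,y_m)\in L^m$ deterministically produces the greedy consistent set $T(y)$, and conditioned on the sequence being $y$ the coupled permutation $\bm{\pi}$ is uniform on $S_n^{T(y)}$. Consequently
\[
 g(y) = \mathbb{E}_{\bm{\pi}\sim S_n^{T(y)}}\bigl[f(\bm{\pi})\bigr],
\]
that is, $g(y)$ is just the average of $f$ over the double coset singled out by $y$.

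Next I would fix a restriction of $g$ freezing a set of coordinates $S\subseteq[m]$ to a tuple of values $x$, and let $A_0$ (resp.\ $B_0$) be the set of first (resp.\ second) entries of the pairs occurring in $x$, so that $|A_0|,|B_0|\le|S|$. Writing $\mathbf{y}$ for a uniformly random sequence extending $x$, one has $\|g_{S\to x}\|_2^2 = \mathbb{E}_{\mathbf{y}}[g(\mathbf{y})^2]$, and by Jensen's inequality applied inside the average over the double coset, $g(\mathbf{y})^2\le \mathbb{E}_{\bm{\pi}\sim S_n^{T(\mathbf{y})}}[f(\bm{\pi})^2]$. Hence $\|g_{S\to x}\|_2^2\le \mathbb{E}_{\bm{\pi}\sim\nu}[f(\bm{\pi})^2]$, where $\nu$ denotes the law of the coupled permutation when the sequence is conditioned to extend $x$.

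The crux is to identify $\nu$. I claim it is a convex combination $\nu = \sum_{T'}\lambda_{T'}\,\mathrm{Unif}(S_n^{T'})$ over consistent sets $T'$ with $|T'|\le|S|$ (indeed, over $T'$ all of whose first coordinates lie in $A_0$). This should follow from the symmetry of the construction: for permutations $\rho$ fixing $A_0$ pointwise and $\sigma$ fixing $B_0$ pointwise, the coordinatewise relabeling $(i,j)\mapsto(\rho(i),\sigma(j))$ of $L^m$ preserves the uniform measure, commutes with the greedy construction, and fixes the event ``$\mathbf{y}$ extends $x$''; tracing it through the coupling shows that it transforms the coupled permutation by $\bm{\pi}\mapsto\sigma\bm{\pi}\rho^{-1}$. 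Thus $\nu$ is invariant under left multiplication by the pointwise stabilizer of $B_0$ and right multiplication by the pointwise stabilizer of $A_0$, so it is a mixture of uniform distributions over the orbits of this action, and a direct check shows that each such orbit is a disjoint union of double cosets $S_n^{T'}$ whose pairs all have first coordinate in $A_0$.

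Granting the claim, the proposition follows quickly: using $\|f_{\to T'}\|_2\le C^{|T'|}\eps$ together with $|T'|\le|S|$ (and $C\ge1$),
\[
 \|g_{S\to x}\|_2^2\le \sum_{T'}\lambda_{T'}\|f_{\to T'}\|_2^2\le \sum_{T'}\lambda_{T'}\,C^{2|T'|}\eps^2\le C^{2|S|}\eps^2,
\]
so $\|g_{S\to x}\|_2\le C^{|S|}\eps$ for every $S$ and $x$, which is precisely the statement that $g$ is $\eps$-global with constant $C$. I expect the main obstacle to be the identification of $\nu$: the interaction of the greedy process with the frozen coordinates is genuinely delicate --- whether a frozen pair actually enters $T(y)$ can depend on the random coordinates processed before it --- so a direct combinatorial description of $\nu$ looks awkward, and the symmetry argument is really the way through; the care required is to use only relabelings that honestly preserve the restricted coupling, and to verify that the orbits of the resulting group decompose into double cosets $S_n^{T'}$ of the claimed bounded size.
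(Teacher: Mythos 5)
Your proposal is correct and follows essentially the same route as the paper: apply Jensen/Cauchy--Schwarz to reduce $\|g_{S\to x}\|_2^2$ to $\mathbb{E}[f(\bm{\pi})^2]$ under the conditional law of the coupled permutation, then use the symmetry of the coupling to show that this law is a mixture of uniform distributions on double cosets $S_n^{T'}$ with $|T'|\le|S|$, and finish with the globalness hypothesis. The paper packages the symmetry step slightly more lightly --- it conditions on the images $\sigma(i_k)$ of the frozen first-coordinates and invokes only the right action by the pointwise stabilizer of $\{i_k\}$ to see that the conditional law is uniform on $S_n^{T}$ --- whereas you decompose into orbits of the two-sided action; both yield the same mixture and the same bound.
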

\begin{proof}
Let $S$ be a set of size $t$, and
let $x=\bigl((i_{k},j_{k})\bigr)_{k\in S}\in L^{S}$.
Let $y\sim L^{\left[m\right]\setminus S}$ be chosen uniformly, and
let $\sigma$ be the random permutation that our coupling process
outputs given $\left(x,y\right)$. We have
\[
\|g_{S\to x}\|_{2}^{2}=\mathbb{E}_{y}\left(\mathbb{E}_{\sigma}f\left(\sigma\right)\right)^{2}\le\mathbb{E}_{\sigma}\left[f\left(\sigma\right)^{2}\right]
\]
by Cauchy--Schwarz. Next, we consider the values of $\sigma\left(i_{k}\right)$ for $k\in S$, condition on
them and denote $T=\left\{ \left(i_{k},\sigma(i_k)\right)\right\} $.
The conditional distribution of $\sigma$ given $T$ is uniform by the symmetry of elements in $\left[n\right]\setminus\left\{ i_{k}|\,k\in S\right\}$,
so for any permutation $\pi$ on $\left[n\right]\setminus\left\{ i_{k}|\,k\in S\right\}$ we have that $\sigma\pi$ has the same probability as $\sigma$.
Also, the collection $\{\sigma\pi\}$ consists of all permutations satisfying $T$, so
\[
\E\left[f\left(\sigma\right)^{2}\right]=
\mathbb{E}_{T}\left[\mathrm{\|}f_{T}\|_{2}^{2}\right]\le\max_{T}\|f_{T}\|_{2}^{2}
\leq C^{2|S|}\eps^2.\qedhere
\]
\end{proof}

\begin{fact}
\label{fact:contraction}Suppose that we are given two probability
spaces $\left(X,\mu_{X}\right),\left(Y,\mu_{Y}\right)$. Suppose further
that for each $x\in X$ we have a distribution $N\left(x\right)$
on $Y$, such that if we choose $x\sim\mu_{X}$ and $y\sim N\left(x\right)$,
then the marginal distribution of $y$ is $\mu_{Y}$. Define an operator
$\mathrm{T}_{Y\to X}\colon L^{2}\left(Y\right)\to L^{2}\left(X\right)$
by setting
\[
\mathrm{T}_{Y\to X}f\left(x\right)=\mathbb{E}_{y\sim N\left(x\right)}f\left(y\right).
\]
 Then $\|\mathrm{T}_{Y\to X}f\|_{q}\le\|f\|_{q}$ for each $q\ge1$.
\end{fact}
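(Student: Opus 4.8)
The plan is to prove this by a direct application of Jensen's inequality, exploiting the fact that $t\mapsto |t|^q$ is convex for every $q\ge 1$, together with the marginal hypothesis on the coupling. The key point is that $\mathrm{T}_{Y\to X}f(x)$ is, by definition, an average of the values of $f$ over the distribution $N(x)$, so taking its $q$-th power can only shrink things pointwise, and then the assumption that the mixture of the $N(x)$'s over $x\sim\mu_X$ reproduces $\mu_Y$ lets us integrate out $x$ cleanly.

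First I would fix $x\in X$ and apply Jensen's inequality to the probability measure $N(x)$ on $Y$ and the convex function $\phi(t)=|t|^q$, obtaining
\[
\bigl|\mathrm{T}_{Y\to X}f(x)\bigr|^q
= \Bigl|\mathbb{E}_{y\sim N(x)} f(y)\Bigr|^q
\le \mathbb{E}_{y\sim N(x)} \bigl|f(y)\bigr|^q.
\]
Then I would take expectations over $x\sim\mu_X$ on both sides, so that
\[
\|\mathrm{T}_{Y\to X}f\|_q^q
= \mathbb{E}_{x\sim\mu_X}\bigl|\mathrm{T}_{Y\to X}f(x)\bigr|^q
\le \mathbb{E}_{x\sim\mu_X}\,\mathbb{E}_{y\sim N(x)} \bigl|f(y)\bigr|^q.
\]
At this stage I would invoke the hypothesis: drawing $x\sim\mu_X$ and then $y\sim N(x)$ yields $y$ distributed as $\mu_Y$, so the iterated expectation on the right equals $\mathbb{E}_{y\sim\mu_Y}|f(y)|^q = \|f\|_q^q$. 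Taking $q$-th roots finishes the argument.

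There is essentially no obstacle here — the only thing to be slightly careful about is that convexity of $\phi(t)=|t|^q$ (and hence the validity of Jensen's inequality in the form used) requires $q\ge 1$, which is exactly the stated hypothesis; for $q<1$ the inequality would fail. I would also remark that the same computation shows $\mathrm{T}_{Y\to X}$ is a contraction in $L^q$ for the $q=\infty$ case trivially, and that an identical argument (with the roles of the spaces swapped) applies to $\mathrm{T}_{X\to Y}$, which is why both averaging operators associated with a coupling are $L^q$-contractions, as used implicitly in~\eqref{eq1}.
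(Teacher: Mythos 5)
Your proof is correct and matches the paper's approach exactly: the paper justifies this fact by the same pointwise application of Jensen's inequality to the conditional distribution $N(x)$ followed by integrating over $x\sim\mu_X$ and using the marginal condition. Nothing is missing.
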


We can now prove one variant of our hypercontractive inequality for global functions over the symmetric group.
\begin{thm}\label{thm:Hypercontractivity}
Let $q\in\mathbb{N}$ be even, $C,\eps>0$, and $\rho\le\frac{1}{(10qC)^2}$.
If $f\colon S_{n}\to\mathbb{R}$ is $\eps$-global with constant $C$, then
$\norm{\mathrm{T}^{\left(\rho\right)}f}_{q}\le\epsilon^{\frac{q-2}{q}}\|f\|_{2}^{\frac{2}{q}}$.
\end{thm}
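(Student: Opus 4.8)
The plan is to follow exactly the coupling template sketched in the overview, instantiated with $X = S_n$, $Y = L^m$, and the noise operator $\mathrm{T}_Y = \mathrm{T}_\rho$ on the product space $L^m$. Recall $\mathrm{T}^{(\rho)} = \mathrm{T}_{Y\to X}\mathrm{T}_\rho \mathrm{T}_{X\to Y}$, where the two transfer operators come from the greedy coupling between $S_n$ and $L^m$. The key structural facts are already in place: Fact~\ref{fact:contraction} gives that $\mathrm{T}_{Y\to X}$ is an $L^q$-contraction (its hypothesis — that pushing forward $\mu_X$ through the conditional distributions recovers $\mu_Y$ — holds because the coupling is a genuine joint distribution with the stated marginals), and Jensen/Cauchy--Schwarz likewise gives that $\mathrm{T}_{X\to Y}$ is an $L^p$-contraction for every $p\ge 1$.

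First I would set $g = \mathrm{T}_{S_n\to L^m} f$. By Proposition~\ref{prop:n to m}, since $f$ is $\eps$-global with constant $C$ on $S_n$, the function $g$ is $\eps$-global with constant $C$ on the product space $L^m$. Now I can invoke Theorem~\ref{thm:KLLM} directly: for $\rho \le \frac{1}{(10qC)^2}$ we get $\|\mathrm{T}_\rho g\|_q \le \eps^{\frac{q-2}{q}}\|g\|_2^{2/q}$. (I should double-check the exponent bookkeeping here — the statement of Theorem~\ref{thm:KLLM} as typeset reads $\eps^{q-2}\|f\|_2^{2/q}$, but comparison with Theorem~\ref{thm:Hypercontractivity} and the normalization strongly suggests the intended exponent on $\eps$ is $\frac{q-2}{q}$; I will use the latter, consistent form.) Finally, applying the $L^q$-contraction $\mathrm{T}_{Y\to X}$ and then the $L^2$-contraction $\mathrm{T}_{X\to Y}$ (the latter to bound $\|g\|_2 \le \|f\|_2$), I chain
\[
\norm{\mathrm{T}^{(\rho)} f}_q = \norm{\mathrm{T}_{Y\to X}\mathrm{T}_\rho g}_q \le \norm{\mathrm{T}_\rho g}_q \le \eps^{\frac{q-2}{q}}\norm{g}_2^{2/q} \le \eps^{\frac{q-2}{q}}\norm{f}_2^{2/q},
\]
which is exactly the claimed bound.

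This is essentially the argument~\eqref{eq1} from the overview, with the plain hypercontractivity of $\mathrm{T}_Y$ replaced by the refined (global) hypercontractivity of Theorem~\ref{thm:KLLM}. The one genuinely substantive step — and the only place the structure of $S_n$ and of the specific greedy coupling enters — is Proposition~\ref{prop:n to m}, namely that globalness is transported from $S_n$ to $L^m$ by $\mathrm{T}_{S_n\to L^m}$; but that has already been established, so here it is used as a black box. I do not expect any real obstacle in assembling the pieces: the only care needed is (a) verifying that the coupling meets the hypothesis of Fact~\ref{fact:contraction} so that $\mathrm{T}_{Y\to X}$ is a contraction, and (b) keeping the exponents on $\eps$ and on the $2$-norm consistent across the three inequalities, since the contraction $\mathrm{T}_{X\to Y}$ must be applied in $L^2$ (not $L^q$) to control $\|g\|_2$ by $\|f\|_2$. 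Everything else is a one-line concatenation of contractions.
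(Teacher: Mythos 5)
Your proof is correct and is essentially identical to the paper's own argument: set $g=\mathrm{T}_{S_n\to L^m}f$, transport globalness via Proposition~\ref{prop:n to m}, apply Theorem~\ref{thm:KLLM} to $\mathrm{T}_\rho g$, and sandwich with the two contractions from Fact~\ref{fact:contraction}. Your reading of the exponent in Theorem~\ref{thm:KLLM} as $\eps^{(q-2)/q}$ matches how the paper itself uses it (it bounds the $q$-th power $\norm{\mathrm{T}_\rho g}_q^q$ by $\eps^{q-2}\norm{g}_2^2$), so the bookkeeping is fine.
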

\begin{proof}
 Let $f\colon S_n\to\mathbb{R}$ be $\eps$-global with constant $C$.
 By Proposition \ref{prop:n to m}, the function $g=\mathrm{T}_{S_{n}\to L^{m}}f$ is also $\eps$-constant with constant $C$, and
 by Fact~\ref{fact:contraction} we have
 \[
 \norm{\mathrm{T}^{\left(\rho\right)}f}_{q}^{q} = \norm{\mathrm{T}_{L^{m}\to S_{n}}\mathrm{T}_{\rho}g}_{q}^{q}\le
 \norm{\mathrm{T}_{\rho}g}_{q}^{q}.
 \]
 Now, by Theorem \ref{thm:KLLM} we may upper-bound the last norm by $\epsilon^{q-2}\|g\|_{2}^{2}$, and using Fact~\ref{fact:contraction}
 again we may bound $\norm{g}_2^2\leq \norm{f}_2^2$.
\end{proof}
\begin{remark}
Once the statement has been proven for even $q$'s, a qualitatively similar statement can be automatically deduced
for all $q$'s, as follows.
Fix $q$, and take the smallest $q\leq q'\leq q+2$ that is an even integer. Then for
$\rho\le\frac{1}{(10(q+2)C)^2}\leq \frac{1}{(10q'C)^2}$ we may bound
\[
\norm{\mathrm{T}^{\left(\rho\right)}f}_{q}
\leq \norm{\mathrm{T}^{\left(\rho\right)}f}_{q'}
\leq \epsilon^{\frac{q'-2}{q'}}\|f\|_{2}^{\frac{2}{q'}}
\leq \epsilon^{\frac{q}{q+2}}\|f\|_{2}^{\frac{2}{q+2}},
\]
where in the last inequality we used $q'\leq q+2$ and $\norm{f}_2\leq \eps$.
\end{remark}

\subsection{Hypercontractivity for low-degree functions}
Next, we use Theorem \ref{thm:Hypercontractivity} to prove
our hypercontractive inequality
for low-degree functions that assumes considerably weaker globalness properties of $f$, namely Theorem~\ref{thm:Reasonability}.
The proof of the above theorem makes use of the following key lemmas. The first of which asserts that just like in the cube,
bounded globalness of a low-degree function implies (full) globalness.

\begin{lem}\label{lem:Bootstrapping the globalness}
Suppose $n\ge Cd\log d$ for
a sufficiently large constant $C$. Let $f\colon S_n\to\mathbb{R}$ be a $\left(2d,\epsilon\right)$-global
function of degree $d$. Then, $f$ is $\eps$-global with constant $4^8$.
\end{lem}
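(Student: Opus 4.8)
The goal is to upgrade the hypothesis that $f$ is $(2d,\eps)$-global (i.e.\ only restrictions of size $\le 2d$ are controlled) to the conclusion that \emph{all} restrictions $f_{\to T}$ obey $\|f_{\to T}\|_2 \le (4^8)^{|T|}\eps$. The plan is to proceed by induction on $|T|$, the base case $|T|\le 2d$ being exactly the hypothesis (with plenty of room to spare, since $C^{|T|}=(4^8)^{|T|}\ge 1$). For the inductive step, fix a consistent $T$ with $|T|=t>2d$; I want to bound $\|f_{\to T}\|_2$ in terms of $\|f_{\to T'}\|_2$ for strictly smaller $T'$.

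\textbf{Key step: using low degree to ``forget'' part of $T$.} The crucial point is that $f$ has degree $d$, so it lives in $V_d$, which by Lemma~\ref{lem:V_AB spanning set} is spanned by $d$-juntas. Intuitively, a degree-$d$ function ``depends on only $d$ coordinates at a time'', so conditioning on a large set $T$ of constraints is wasteful: most of those constraints are irrelevant. Concretely, I would write $T = \{(i_1,j_1),\dots,(i_t,j_t)\}$ and think of $f_{\to T}$ as obtained by first restricting to the double coset and then using that $f \in V_d$. The plan is to express $f_{\to T}$ (or its square, integrated) via a two-step averaging: choose a random subset $R \subseteq T$ of size $2d$ (or a random way to ``release'' all but $2d$ of the constraints), restrict $f$ to $S_n^R$, and then further restrict to $S_n^T$. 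Because $f$ has degree $d$, the restriction $f_{\to R}$ to a double coset of size $2d$ still has degree $\le d$ relative to the smaller symmetric group on the remaining coordinates, and crucially $\|f_{\to R}\|_2 \le \eps$ by hypothesis. The map from $f_{\to R}$ to $f_{\to T}$ is a further averaging/restriction operator, hence an $L^2$-contraction by (the $q=2$ case of) Fact~\ref{fact:contraction}. The subtlety — and this is where the degree hypothesis is genuinely needed rather than just convenient — is that restriction and degree-truncation do not commute cleanly on $S_n$ (as the introduction stresses repeatedly), so one cannot simply say ``$f_{\to T}$ only sees $2d$ of the constraints''; instead I expect to need the junta description from Lemma~\ref{lem:V_AB bimodule}, writing $f$ as a combination of $(A,B)$-juntas with $|A|=|B|=d$, pushing the restriction through each junta separately, and observing that for each such junta only the constraints of $T$ hitting $A$ on the left and $B$ on the right matter — at most $d$ of each.

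\textbf{Assembling the bound.} Once one knows that $f_{\to T}$ can be written as an average (over the random choice of which $\le 2d$ constraints are ``active'' for each junta term, together with a random completion of the remaining coordinates) of functions each of which is a restriction of degree $\le d$ of a function that is $(2d,\eps)$-global, convexity of $\|\cdot\|_2^2$ (Cauchy--Schwarz / Jensen, exactly as in the proof of Proposition~\ref{prop:n to m}) gives $\|f_{\to T}\|_2^2 \le \max \|f_{\to R}\|_2^2 \le \eps^2$, or more realistically $\|f_{\to T}\|_2^2 \le \text{(combinatorial factor)}\cdot \eps^2$ where the combinatorial factor counts the overlap between $T$ and the junta supports and is bounded by something like $2^{O(d \log(t/d))}$ or a crude $4^{O(t)}$. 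Tracking this factor carefully should yield the claimed constant $4^8$ (the specific value $4^8$ suggests the authors are being generous: one only needs \emph{some} absolute constant, and the $n \ge Cd\log d$ hypothesis is there to guarantee enough ``room'' — at least $2d$ free coordinates beyond those touched — for the random-completion step to make sense and for the uniformity-of-conditional-distribution argument in Proposition~\ref{prop:n to m} to apply).

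\textbf{Main obstacle.} The hard part is the non-commutativity of restriction and degree on $S_n$: I cannot naively truncate $f_{\to T}$ to degree $d$ after restricting, so the argument must pass through the explicit spanning set of $d$-juntas (Lemmas~\ref{lem:V_AB spanning set} and~\ref{lem:V_AB bimodule}) and handle each junta term's interaction with $T$ separately, then recombine — keeping the per-term $L^2$ bounds and the final triangle/convexity step compatible. Getting the bookkeeping right so that the accumulated constant is genuinely absolute (independent of $t$, $d$, and $n$) rather than growing like $d^{\Theta(d)}$ is the delicate point; the condition $n \ge Cd\log d$ is precisely what one needs to carry out the ``release surplus constraints and re-randomize'' maneuver without running out of coordinates.
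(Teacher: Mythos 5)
There is a genuine gap at the heart of your plan. Your key step asserts that $f_{\to T}$ can be written as an average of further restrictions of the functions $f_{\to R}$ with $R\subseteq T$, $|R|\le 2d$, and then invokes contraction (Fact~\ref{fact:contraction}) to get $\|f_{\to T}\|_2\le\max_R\|f_{\to R}\|_2\le\eps$. But the relationship between a restriction and its sub-restrictions goes the \emph{other} way: one has $\|f_{\to R}\|_2^2=\E_{T\supseteq R}\bigl[\|f_{\to T}\|_2^2\bigr]$ over random consistent extensions, so a bound on $\|f_{\to R}\|_2$ only controls $\|f_{\to T}\|_2$ \emph{on average} over $T\supseteq R$, never pointwise. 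Restriction is not a contraction — indeed, the whole reason globalness is a nontrivial hypothesis is that restricting can inflate the $2$-norm by a factor as large as $\|1_T\|_2^{-1}$. Your fallback of pushing $T$ through each $(A,B)$-junta term separately does not repair this: the junta decomposition of $f$ is neither unique nor norm-controlled (the individual terms can have enormous $2$-norms that cancel), so recombining per-term bounds by the triangle inequality gives a constant growing with the number of terms, roughly $\binom{n}{d}^2$, not an absolute constant. (Also, the hypothesis $n\ge Cd\log d$ is not for ``room to re-randomize''; in the actual proof it is used to ensure $(6d)!\le 4^n$ when bounding $\|f\|_\infty$, which handles restrictions of size $t>n/2$.)

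The mechanism the proof actually needs, and which your proposal lacks, is a way to convert the degree hypothesis into \emph{cancellation} between different restrictions of the same size. The paper does this with the discrete derivatives $\mathrm{D}_{(i_1,i_2)\to(j_1,j_2)}f=(f-f^{(i_1\,i_2)})_{\to\{(i_1,j_1),(i_2,j_2)\}}$ of Definition~\ref{def:derivative}: these reduce degree (Claim~\ref{claim:degree reduction}), so all derivatives of order $>d$ of a degree-$d$ function vanish identically, and Claim~\ref{claim:globalness equivalence} shows that small derivatives of all orders up to $t$ force $(t,2^t\eps)$-globalness. The nontrivial converse direction there is proved by comparing $\|f_{i_1\to j_1}\|_2$ to $\|f_{i_2\to j_1}\|_2$ via the derivative bound and then to the \emph{average} $\E_{i_2}\|f_{i_2\to j_1}\|_2^2=\|f\|_2^2\le\eps$; that ``compare an arbitrary restriction to the average through a telescoping difference'' step is exactly what lets one pass from average control to pointwise control, and nothing in your outline plays that role.
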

Thus, to deduce Theorem~\ref{thm:Reasonability} from Theorem~\ref{thm:Hypercontractivity}, it suffices to show that
$f$ may be approximated by linear combinations of $\mathrm{T}^{(\rho^i)} f$ for $i=1,2,\ldots$ in $L^q$, and this is the content
of our second lemma. First, let us introduce some convenient notations.
For a polynomial $P(z)=a_{0}+a_{1}z+\cdots+a_{k}z^k$, we denote the spectral norm of $P$ by
$\|P\|=\sum_{i=0}^{k}\left|a_{i}\right|$. We remark that it is easily seen that $\|P_{1}P_{2}\|\le\|P_{1}\|\|P_{2}\|$
for any two polynomials $P_1,P_2$.

\begin{lem}
\label{lem:Approximation} Let $n\ge C^{d^{3}}q^{-Cd}$ for a sufficiently
large constant $C$, and let $\rho=1/(400 C^3q^2)$. Then there exists a polynomial
$P$ satisfying $P\left(0\right)=0$ and $\|P\|\le q^{O\left(d^{3}\right)}$,
such that
\[
\norm{P\left(\mathrm{T}^{\left(\rho\right)}\right)f-f}_{q}\le\frac{1}{\sqrt{n}}\norm{f}_{2}
\]
 for every function $f$ of degree at most $d$.
\end{lem}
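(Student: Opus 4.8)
The plan is to work in the spectral domain. Since $\mathrm{T}^{(\rho)} = \mathrm{T}_{Y\to X}\mathrm{T}_\rho\mathrm{T}_{X\to Y}$ and $f$ has degree at most $d$, I first want to understand how $\mathrm{T}^{(\rho)}$ acts on each pure-degree space $V_{=e}$ for $e\le d$. The operator $\mathrm{T}^{(\rho)}$ commutes with the left and right $S_n$-actions (this follows from the symmetry of the coupling), so by Schur's lemma it acts as a scalar $\lambda_e(\rho)$ on each isotypic component; in particular it is a scalar on each $V_{=e}$ provided $V_{=e}$ is $S_n\times S_n$-irreducible, or more carefully, it preserves $V_{=e}$ and we can track its eigenvalues there. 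The first main step is therefore to show that for $n$ large enough in terms of $d$, the eigenvalue $\lambda_e(\rho)$ of $\mathrm{T}^{(\rho)}$ on $V_{=e}$ satisfies $\lambda_e(\rho) = \rho^e(1 + O_d(1/\sqrt n))$ or at least $\lambda_e(\rho) \ge \rho^{Ke}$ and $\lambda_e(\rho)$ is close to its ``main term''; this is exactly the kind of statement foreshadowed in Theorem~\ref{thm:hyp_noise_intro}(2), and I expect it is proven in an earlier section by analyzing the coupling combinatorially — the greedy construction fixes a consistent set $T$ whose size concentrates, and on degree-$e$ functions the relevant behavior is governed by how many of the $e$ ``query'' coordinates get pinned down, which happens with probability $1-O_d(1/\sqrt n)$ each.

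Given good control of the eigenvalues, the second step is a polynomial-interpolation argument. Let $\mathrm{T}^{(\rho)}$ restricted to $V_{\le d}$ have eigenvalues $\mu_0 > \mu_1 > \cdots$ — actually I only need that there are at most $d+1$ distinct relevant eigenvalues $\mu_0,\ldots,\mu_d$, one per level, each bounded below by $\rho^{O(d)}$ and separated from each other and from $0$ by at least $\rho^{O(d)}$ (using $\rho = 1/(400C^3q^2)$ and $n \ge C^{d^3}q^{-Cd}$ to make the perturbations negligible). I then take $P$ to be the degree-$(d+1)$ polynomial with $P(0) = 0$ and $P(\mu_e) = 1$ for $e = 0,\ldots,d$ — i.e.\ $P(z) = z\cdot Q(z)$ where $Q$ interpolates $1/\mu_e$ at the nonzero eigenvalues. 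Then $P(\mathrm{T}^{(\rho)})$ acts as the identity on the ``ideal'' degree-$e$ eigenspaces, and the error $P(\mathrm{T}^{(\rho)})f - f$ is supported on the discrepancy between the true eigenvalues $\lambda_e$ and the nominal $\mu_e$; bounding $\|P(\mathrm{T}^{(\rho)})f - f\|_q$ then reduces to bounding $\|P\|$ times the operator norm of $\mathrm{T}^{(\rho)}$'s deviation, which is $O_d(1/\sqrt n)\|f\|_2$ after using that $\mathrm{T}^{(\rho)}$ and its powers are $L^q$-contractions (Fact~\ref{fact:contraction}) so $\|P(\mathrm{T}^{(\rho)})g\|_q \le \|P\|\,\|g\|_q$ for any $g$, and then $\|g\|_q$ on the low-degree error term is controlled by $\|g\|_2$ up to $n^{O(d)}$ which is absorbed. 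Estimating $\|P\|$ is the routine-calculation part: Lagrange interpolation at $d+1$ nodes that are $\rho^{O(d)}$-separated and of size $\ge \rho^{O(d)}$, with $\rho$ polynomially small in $q$, yields $\|P\| = \rho^{-O(d^2)} = q^{O(d^3)}$ (the extra $d$ coming from the product over $d$ nodes, each contributing a factor $q^{O(d)}$).

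The main obstacle I expect is the first step — pinning down the eigenvalues $\lambda_e(\rho)$ of $\mathrm{T}^{(\rho)}$ on low-degree functions accurately enough, and in particular showing they are within $O_d(1/\sqrt n)$ of clean target values rather than merely bounded below. This requires a genuine analysis of the greedy coupling: understanding the distribution of the consistent set $T$ it produces, showing that for a fixed small set of coordinates the probability they are all ``resolved'' by $T$ is $1 - O_d(1/\sqrt n)$, and feeding this through the composition $\mathrm{T}_{Y\to X}\mathrm{T}_\rho\mathrm{T}_{X\to Y}$. I would expect this to be done in the sections on the coupling approach (Sections~\ref{sec:coupling1}–\ref{sec:coupling2}); once it is available, the interpolation argument above is essentially bookkeeping, with the only care needed being that the perturbation of the eigenvalues does not collapse the separation used to bound $\|P\|$, which is guaranteed by the hypothesis $n \ge C^{d^3}q^{-Cd}$.
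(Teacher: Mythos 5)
Your overall route is the same as the paper's: show that the eigenvalues of $\mathrm{T}^{(\rho)}$ on $V_d$ cluster near $d$ values $\lambda_1(\rho),\ldots,\lambda_d(\rho)$, each bounded below by $\rho^{O(d)}$; build a polynomial $P$ with $P(0)=0$ that is close to $1$ at all these eigenvalues; and convert the resulting $L^2$ approximation into an $L^q$ one using a crude hypercontractive bound of the form $\|g\|_q \le q^{O(d)} n^{d}\|g\|_2$ for degree-$d$ functions. (One imprecision in the first step: Schur's lemma does not give ``one eigenvalue per level,'' since $V_{=e}$ is not irreducible as a bimodule; the paper instead proves the clustering via the near-orthonormal junta basis $\{v_T\}$ and the estimates $\langle \mathrm{T}^{(\rho)}v_T,v_T\rangle \ge (c\rho)^{|T|}$, $\langle \mathrm{T}^{(\rho)}v_T,v_S\rangle = O(n^{-1/2})$. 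But you hedge this and the needed conclusion is what the paper establishes.)

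The genuine gap is in your choice of $P$ and the ensuing bookkeeping. You take $P$ to be the degree-$(d+1)$ Lagrange interpolant with $P(\mu_e)=1$ at the $d+1$ cluster centers. Since the true eigenvalues are only within a multiplicative $1\pm n^{-1/3}$ of the centers, and your $P$ vanishes to first order at each node (with $\|P'\|$ as large as $\rho^{-O(d^2)}$ near the nodes), the best you can say is $|P(\lambda)-1| \lesssim \rho^{-O(d^2)} n^{-1/3}$, so the $L^2$ error of $P(\mathrm{T}^{(\rho)})f-f$ is of order $n^{-1/3}\|f\|_2$, not $O_d(n^{-1/2})\|f\|_2$ and nowhere near small enough. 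This matters because your final step multiplies the $L^2$ error by the $n^{O(d)}$ loss from the $L^2$-to-$L^q$ conversion and claims it is ``absorbed'' --- it is not: to end with $n^{-1/2}\|f\|_2$ in $L^q$ you need the $L^2$ error to be at most about $n^{-d-1/2}\|f\|_2$. The paper's fix is to make $P-1$ vanish to order $\Theta(d)$ at each cluster center, taking $P(z)=1-\prod_{i=1}^{d}\bigl(\lambda_i^{-1}z-1\bigr)^{9d}$, so that the relative perturbation $n^{-1/3}$ is raised to the power $9d$, giving $L^2$ error $n^{-2d}\|f\|_2$ while keeping $\|P\|\le \rho^{-O(d^3)} = q^{O(d^3)}$. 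A secondary advantage of this product form is that it requires only lower bounds on the $\lambda_i$, whereas your Lagrange interpolant additionally needs the nodes to be pairwise $\rho^{O(d)}$-separated to control $\|P\|$, a separation that is neither proved nor needed in the paper's argument.
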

We defer the proofs of Lemmas~\ref{lem:Bootstrapping the globalness}  and~\ref{lem:Approximation} to Sections~\ref{sec:coupling2} and~\ref{sec:missing1}, respectively.
In the remainder of this section we derive Theorem~\ref{thm:Reasonability} from them, restated below.

\begin{reptheorem}{thm:Reasonability}
There exists $C>0$ such that the following holds. Let $q\in\mathbb{N}$ be even,
$n\geq q^{C\cdot d^{2}}$.
If $f$ is a $\left(2d,\epsilon\right)$-global function of degree $d$,
then $\|f\|_{q}\le q^{O\left(d^{3}\right)}\epsilon^{\frac{q-2}{q}}\|f\|_{2}^{\frac{2}{q}}$.
\end{reptheorem}
\begin{proof}
Choose $\rho = 1/(400 C^3 q^2)$, and let $P$ be as in Lemma \ref{lem:Approximation}. Then
\[
\norm{f}_{q}\le\norm{P\left(\mathrm{T}^{\left(\rho\right)}\right)f}_{q}+\frac{1}{\sqrt{n}}\norm{f}_{2}.
\]
 As for the first term, we have
\[
\norm{\sum_{i=1}^{l}a_{i}\left(\mathrm{T}^{\left(\rho\right)}\right)^{i}f}_{q}
\le
\sum_{i=1}^{l}\card{a_{i}}\norm{\left(\mathrm{T}^{\left(\rho\right)}\right)^{i}f}_{q}
\leq
\norm{P}\norm{\mathrm{T}^{\left(\rho\right)}f}_{q}
\leq
q^{O\left(d^{3}\right)}
\norm{\mathrm{T}^{\left(\rho\right)}f}_{q}.
\]
To estimate $\norm{\mathrm{T}^{\left(\rho\right)}f}_{q}$, note first that by Lemma \ref{lem:Bootstrapping the globalness},
$f$ is $\eps$-global for constant $4^8$, thus given that $C$ is large enough we may apply Theorem \ref{thm:Hypercontractivity}
to deduce that $\norm{\mathrm{T}^{\left(\rho\right)}f}_{q}\leq \epsilon^{\frac{q-2}{q}}\|f\|_{2}^{\frac{2}{q}}$.
As $\|f\|_{2}\le\epsilon$ we conclude that
\[
\|f\|_{q}\le q^{O\left(d^{3}\right)}\epsilon^{\frac{q-2}{q}}\|f\|_{2}^{\frac{2}{q}}+\frac{1}{\sqrt{n}}\|f\|_{2}=q^{O\left(d^{3}\right)}\epsilon^{\frac{q-2}{q}}\|f\|_{2}^{\frac{2}{q}}.
\qedhere
\]
\end{proof}

\section{Proof of Lemma~\ref{lem:Bootstrapping the globalness}}\label{sec:coupling2}
We begin by proving Lemma~\ref{lem:Bootstrapping the globalness}. A proof of the corresponding statement in product spaces
proceeds by showing that a function is $(d,\eps)$-global if and only if the $2$-norms of derivatives of $f$ of order $d$
are small. Since then derivatives of order higher than $d$ of $f$ are automatically $0$ (by degree considerations), they are
automatically small. Thus, if $f$ is a $(d,\eps)$-global function of degree $d$, then \emph{all derivatives} of $f$ have
small $2$-norm, and by the reverse relation it follows that $f$ is $\eps$-global for some constant $C$.

Our proof follows a similar high level idea. The main challenge in the proof is to find an appropriate analog of discrete derivatives from product
spaces, that both reduces the degree of the function $f$ and can be related to restrictions of $f$. Towards this end, we make the following key definition.
\begin{defn}\label{def:derivative}
Let $i_{1}\ne i_{2}\in\left[n\right]$ and $j_{1}\ne j_{2}\in\left[n\right]$.
\begin{enumerate}
  \item The Laplacian of $f$ along $(i_1,i_2)$ is defined as
$\mathrm{L}_{\left(i_{1},i_{2}\right)}\left[f\right]=f-f^{\left(i_{1}\,i_{2}\right)}$, where
we denote by $\left(i_{1}\,i_{2}\right)$ the transposition of $i_1$ and $i_2$.
  \item The derivative of $f$ along $(i_1,i_2)\rightarrow(j_1,j_2)$ is
  $(\mathrm{L}_{\left(i_{1},i_{2}\right)}f)_{(i_1,i_2)\rightarrow (j_1,j_2)}$. More explicity, it is a function defined on
  $S_n^{{(i_1,j_1),(i_2,j_2)}}$ (that is isomorphic to $S_{n-2}$) whose value on $\pi$ is
  \[
  f(\pi) - f(\pi\circ (i_1,i_2)).
  \]
  \item For distinct $i_1,\ldots,i_t$ and distinct $j_1,\ldots,j_t$, denote the ordered set
  $S=\left\{ \left(i_{1},j_{1}\right),\ldots,\left(i_{t},j_{t}\right)\right\}$ and
  define the Laplacian of $f$ along $S$ as $L_{S}\left[f\right]=L_{i_{1},j_{1}}\circ\cdots\circ L_{i_{t},j_{t}}\circ f$.

  For $(k_1,\ell_1),\ldots,(k_t,\ell_t)$, the derivative of $f$ along $S\rightarrow \{(k_1,\ell_1),\ldots,(k_t,\ell_t)\}$ is
  \[
   \mathrm{D}_{S\to\left\{ \left(k_{1},l_{1}\right),\ldots,\left(k_{t},l_{t}\right)\right\} } f=
   \left(L_{S}\left[f\right]\right)_{S\rightarrow \left\{ \left(i_{1},k_{1}\right),\left(j_{1},l_{1}\right),\ldots,\left(i_{t},k_{t}\right),\left(j_{t},l_{t}\right)\right\} }
  \]

  We call $\mathrm{D}$ a derivative of order $t$. We also include the case where $t=0$, and call the identity operator a $0$-derivative.
\end{enumerate}
\end{defn}

The following two claims show that the definition of derivatives above is good, in the sense that $2$-norms of derivatives relate to globalness,
and derivatives indeed reduce the degree of $f$.
\begin{claim}
\label{claim:globalness equivalence}
Let $t\in\mathbb{N}$, and $\eps>0$, and $f\colon S_{n}\to\mathbb{R}$.
\begin{enumerate}
  \item If $f$ is $(2t,\eps)$-global, then for each derivative $\mathrm{D}$ of order $t$ we have that
  $\norm{\mathrm{D} f}_2\leq 2^{t} \eps$.
  \item If $t\leq n/2$, and for all $\ell\leq t$ and every derivative $\mathrm{D}$ of order $\ell$ we have that
  $\norm{\mathrm{D} f}_2\leq \eps$, then $f$ is $(t,2^t \eps)$-global.
\end{enumerate}
\end{claim}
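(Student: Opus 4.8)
\emph{Proof proposal.}

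Part~1 should follow by directly expanding the Laplacian. Write an order-$t$ derivative as $\mathrm{D}f=(\mathrm{L}_Sf)_{\rightarrow T_1}$, where $S=\{(p_1,q_1),\dots,(p_t,q_t)\}$ consists of the $t$ pairwise disjoint transpositions used and $T_1$ is the associated consistent set of size $2t$ that fixes $p_m\mapsto u_m$, $q_m\mapsto v_m$. Since the transpositions $(p_m\,q_m)$ are pairwise disjoint they commute, so $\mathrm{L}_Sf=\sum_{A\subseteq[t]}(-1)^{|A|}f^{\sigma_A}$ with $\sigma_A=\prod_{m\in A}(p_m\,q_m)$, and hence $\mathrm{D}f=\sum_{A\subseteq[t]}(-1)^{|A|}(f^{\sigma_A})_{\rightarrow T_1}$. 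For each $A$, right multiplication by $\sigma_A$ is a measure-preserving bijection from $S_n^{T_1}$ onto $S_n^{T_1^A}$, where $T_1^A$ is the size-$2t$ consistent set obtained from $T_1$ by swapping the images of $p_m$ and $q_m$ for $m\in A$; hence $\norm{(f^{\sigma_A})_{\rightarrow T_1}}_2=\norm{f_{\rightarrow T_1^A}}_2\leq\eps$ by $(2t,\eps)$-globalness. The triangle inequality over the $2^t$ subsets $A$ then gives $\norm{\mathrm{D}f}_2\leq 2^t\eps$.

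Part~2 is the substantial direction. The natural idea --- to ``integrate'' Part~1 by peeling off restrictions one coordinate at a time --- runs into the usual obstruction on $S_n$: a derivative of a \emph{restriction} of $f$ need not be a derivative of $f$. I would get around this by proving, by induction on $|T|$, a stronger statement carrying a derivative and a restriction at once: if $t\leq n/2$ and $\norm{\mathrm{D}f}_2\leq\eps$ for every derivative $\mathrm{D}$ of order at most $t$, then $\norm{(\mathrm{D}f)_{\rightarrow T}}_2\leq 2^{|T|}\eps$ for every order-$\ell$ derivative operator $\mathrm{D}$ and every consistent $T$ with $\ell+|T|\leq t$ whose domain and range points avoid the coordinates fixed by $\mathrm{D}$; taking $\mathrm{D}=\mathrm{Id}$ recovers Part~2. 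Three ingredients feed this. (i) For a function $h$ on a double coset $S_n^{T_0}$ whose unfixed coordinate set is $\Omega$, and $a\in\Omega$, the \emph{star-average} $\mathrm{A}_ah(\pi)=\E_{c\sim\Omega}\,h(\pi\circ(a\,c))$ yields, on subtracting, the decomposition $h=\mathrm{A}_ah+\E_{c\sim\Omega}\,\mathrm{L}_{(a,c)}h$. (ii) $\mathrm{A}_a$ is contractive after restricting its $a$-coordinate: $\norm{(\mathrm{A}_ah)_{\rightarrow\{(a,b)\}}}_2\leq\norm h_2$; this is Cauchy--Schwarz together with the fact that $\pi\mapsto\pi\circ(a\,c)$ pushes the uniform measure on $\{\pi:\pi(a)=b\}$ to the uniform measure on $\{\pi:\pi(c)=b\}$, so $\E_{\pi:\pi(a)=b}h(\pi\circ(a\,c))^2=\norm{h_{\rightarrow\{(c,b)\}}}_2^2$, and averaging over $c\in\Omega$ returns $\norm h_2^2$. (iii) A derivative of a derivative is a derivative: if $\mathrm{D}$ has order $\ell$ and fixes the coordinates of $T_1$, and $a_s,c$ lie outside them, then $\mathrm{L}_{(a_s,c)}$ commutes with restriction to $S_n^{T_1}$ and concatenates with the Laplacian of $\mathrm{D}$, so fixing $a_s$ and $c$ as well produces an honest order-$(\ell+1)$ derivative of $f$.

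The induction itself: the case $|T|=0$ is the hypothesis. For $|T|=s\geq1$, write $T=T'\cup\{(a_s,b_s)\}$ and put $g'=(\mathrm{D}f)_{\rightarrow T'}$ --- performing the restriction $T'$ \emph{before} introducing any auxiliary point, so that the unfixed coordinates of $g'$ avoid the domain points of $T'$. Applying (i) with $a=a_s$ inside $g'$ and then restricting by $(a_s,b_s)$,
\[
(\mathrm{D}f)_{\rightarrow T}=(\mathrm{A}_{a_s}g')_{\rightarrow\{(a_s,b_s)\}}+\E_{c}\bigl[(\mathrm{L}_{(a_s,c)}g')_{\rightarrow\{(a_s,b_s)\}}\bigr].
\]
The first term has norm at most $\norm{g'}_2=\norm{(\mathrm{D}f)_{\rightarrow T'}}_2\leq 2^{s-1}\eps$ by (ii) and the inductive hypothesis (order $\ell$, restriction $T'$). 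In a term of the average, $c$ ranges over the unfixed coordinates of $g'$ and so avoids $T'$; hence $\norm{(\mathrm{L}_{(a_s,c)}g')_{\rightarrow\{(a_s,b_s)\}}}_2^2$ is an average over admissible $b'$ (the range-elements of $g'$ other than $b_s$) of $\norm{(\mathrm{L}_{(a_s,c)}g')_{\rightarrow\{(a_s,b_s),(c,b')\}}}_2^2$, and by (iii) each such restriction equals $(\widehat{\mathrm{D}}f)_{\rightarrow T'}$ for an order-$(\ell+1)$ derivative $\widehat{\mathrm{D}}$ whose fixed coordinates avoid $T'$; the inductive hypothesis (order $\ell+1$, restriction $T'$, total $\leq t$) bounds this by $2^{s-1}\eps$. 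Combining via Minkowski, $\norm{(\mathrm{D}f)_{\rightarrow T}}_2\leq 2^{s-1}\eps+2^{s-1}\eps=2^s\eps$; the bound $t\leq n/2$ guarantees $g'$ always has at least two unfixed coordinates, so $\mathrm{A}_{a_s}$ is defined. The one real obstacle throughout is the non-commutation of restrictions and derivatives on $S_n$, and it is exactly this that dictates both the strengthened invariant and the ``restrict by $T'$ first, then average over a fresh point $c$'' order of operations.
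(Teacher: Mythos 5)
Your proof is correct. Part 1 coincides with the paper's (the paper compresses the subset expansion of $\mathrm{L}_S$ into ``induction on $t$ plus the triangle inequality,'' but the content is the same: $2^t$ terms, each a restriction of $f$ to a consistent set of size $2t$). Part 2, however, takes a genuinely different route. The paper inducts on $t$: its base case $t=1$ is a scalar argument that converts $\norm{\mathrm{D}_{(i_1,i_2)\to(j_1,j_2)}f}_2\le\eps$ into $\left|\norm{f_{i_1\to j_1}}_2-\norm{f_{i_2\to j_1}}_2\right|\le\eps$ by averaging over $j_2$, applying Cauchy--Schwarz, and dividing by $\norm{f_{i_1\to j_1}}_2+\norm{f_{i_2\to j_1}}_2$, and then finds some $i_2$ with $\norm{f_{i_2\to j_1}}_2\le\eps$ via the identity $\mathbb{E}_{i_2}\norm{f_{i_2\to j_1}}_2^2=\norm{f}_2^2$; the inductive step reduces $(t,\cdot)$-globalness of $f$ to $(1,\cdot)$-globalness of $f_{\to T}$ for $|T|=t-1$, using commutation of restrictions and derivatives. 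You instead induct on the size of the restriction, with a strengthened invariant that carries a derivative and a restriction simultaneously, and your engine is the exact operator identity $h=\mathrm{A}_ah+\mathbb{E}_c\,\mathrm{L}_{(a,c)}h$ together with the contraction of the star-average under restriction. Both arrive at the same $2^{t}$ bound, and both hinge on the same structural point --- that derivatives of restrictions of $f$ need not be derivatives of $f$, so one must interleave the two operations carefully. Your version makes every step an identity followed by Cauchy--Schwarz or Minkowski, which is arguably more mechanical to verify and generalizes cleanly to the intermediate statement $\norm{(\mathrm{D}f)_{\to T}}_2\le 2^{|T|}\eps$; the paper's version is shorter once one accepts the scalar manipulation of norms in the base case, but its pigeonhole step (finding a good $i_2$) has no analogue in your argument.
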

\begin{proof}
The first item follows immediately by induction on $t$ using the triangle inequality.
The rest of the proof is devoted to establishing the second item, also by induction on $t$.
\paragraph{Base case $t=0,1$.}
The case $t=0$ is trivial, and we prove the case $t=1$.
Let $i_{1},i_{2}\in\left[n\right]$
be distinct and let $j_{1},j_{2}\in\left[n\right]$ be distinct. Since
$\|\mathrm{D}_{\left(i_{1},i_{2}\right)\to\left(j_{1},j_{2}\right)}f\|_2\le\epsilon$
we get from the triangle inequality that
\begin{equation}
\left|\|f_{i_{1}\to j{}_{1},i_{2}\to j_{2}}\|_{2}-\|f_{i_{2}\to j_{1},i_{1}\to j_{2}}\|_{2}\right|\le\epsilon.\label{eq:tria}
\end{equation}
Multiplying (\ref{eq:tria}) by $\|f_{i_{1}\to j{}_{1},i_{2}\to j_{2}}\|_{2}+\|f_{i_{2}\to j_{1},i_{1}\to j_{2}}\|_{2}$ we get that
\[
\left|\|f_{i_{1}\to j{}_{1},i_{2}\to j_{2}}\|_{2}^{2}-\|f_{i_{2}\to j_{1},i_{1}\to j_{2}}\|_{2}^{2}\right|  \le
\epsilon\left(\|f_{i_{1}\to j{}_{1},i_{2}\to j_{2}}\|_{2}+\|f_{i_{2}\to j_{1},i_{1}\to j_{2}}\|_{2}\right).
\]
Taking average over $j_{2}$ and using the triangle inequality on the left-hand side, we get that
\[
\left|\|f_{i_{1}\to j_{1}}\|_{2}^{2}-\|f_{i_{2}\to j_{1}}\|_{2}^{2}\right|\le\epsilon
\mathbb{E}_{j_{2}}\left[\|f_{i_{1}\to j{}_{1},i_{2}\to j_{2}}\|_{2}+\|f_{i_{2}\to j{}_{1},i_{1}\to j_{2}}\|_{2}\right].
\]
 By Cauchy--Schwarz,
 $\mathbb{E}_{j_{2}}\left[\|f_{i_{1}\to j{}_{1},i_{2}\to j_{2}}\|_{2}\right]\leq \mathbb{E}_{j_{2}}\left[\|f_{i_{1}\to j{}_{1},i_{2}\to j_{2}}\|_{2}^2\right]^{1/2}
 =\|f_{i_{1}\to j{}_{1}}\|_{2}$, and similarly for the other term, so we conclude
\[
\left|\|f_{i_{1}\to j_{1}}\|_{2}^{2}-\|f_{i_{2}\to j_{1}}\|_{2}^{2}\right|\le\epsilon\left(\|f_{i_{1}\to j_{1}}\|_{2}+\|f_{i_{2}\to j_{1}}\|_{2}\right),
\]
 and dividing both sides of the inequality by $\|f_{i_{1}\to j_{1}}\|_{2}+\|f_{i_{2}\to j_{1}}\|_{2}$ we get
\[
\card{\|f_{i_{1}\to j_{1}}\|_{2}-\|f_{i_{2}\to j_{1}}\|_{2}}\le\epsilon.
\]
 Since $\mathbb{E}_{i_{2}\sim\left[n\right]}\|f_{i_{2}\to j_{1}}\|_{2}^{2}=\|f\|_{2}^{2}\leq \eps$, we get that
 there is $i_2$ such that $\|f_{i_{2}\to j_{1}}\|_{2}\le\epsilon$, and the above inequality implies that
 $\|f_{i_{1}\to j_{1}}\|_{2}\le2\epsilon$ for all $i_1$. This completes
the proof for the case $t=1$.

\paragraph{The inductive step.}
Let $t>1$. We prove that $f$ is $\left(t,2^{t}\epsilon\right)$-global, or
equivalently that $f_{T}$ is $\left(1,2^{t}\epsilon\right)$-global
for all consistent sets $T$ of size $t-1$. Indeed, fix a consistent $T$ of size $t-1$.

By the induction hypothesis, $\|f_{T}\|_{2}\le2^{t-1}\epsilon$,
and the claim would follow from the $t=1$ case once we show that $\|\mathrm{D}f_{\rightarrow T}\|_{2}\le2^{t-1}\epsilon$
for all order $1$ derivatives $\mathrm{D}=\mathrm{D}_{\left(i_{1},i_{2}\right)\to\left(j_{1},j_{2}\right)}$,
where $i_{1},i_{2}$ do not appear as the first coordinate of an element in $T$,
and $j_{1},j_{2}$ do not appear as a second coordinate of an element
of $T$ (we're using the fact here that the case $t=1$ applies, as $S_{n}^{T}$ is isomorphic
to $S_{n-\left|T\right|}$ as $S_{n-\left|T\right|}$-bimodules).
Fix such $\mathrm{D}$, and let $g=\mathrm{D}_{\left(i_{1},i_{2}\right)\to\left(j_{1},j_{2}\right)}f$.
By hypothesis, for any order $t-1$ derivative $\tilde{\mathrm{D}}$ we have that $\|\tilde{\mathrm{D}}g\|_{2}\le\epsilon$, hence by the
induction hypothesis $\|g{}_{\rightarrow T}\|_{2}\leq 2^{t-1}\eps$. Since restrictions and derivatives commute, we have
$g{}_{\rightarrow T} = \mathrm{D}_{\left(i_{1},i_{2}\right)\to\left(j_{1},j_{2}\right)}f_{\rightarrow T}$, and we conclude that
$f_{\rightarrow T}$ is $\left(1,2^{t}\epsilon\right)$-global,
as desired.
\end{proof}

\begin{claim}
\label{claim:degree reduction} If $f$ is of degree $d$, and $\mathrm{D}$
is a $t$-derivative, then $\mathrm{D}f$ is of degree $\le d-t$.
\end{claim}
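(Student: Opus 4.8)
\textbf{Proof proposal for Claim~\ref{claim:degree reduction}.}
The plan is to reduce to the case $t=1$ by induction on $t$, and to handle $t=1$ directly by expanding $f$ in the spanning set $\{1_T : |T|\le d\}$ of $V_d$ (Definition~\ref{def:level_space}, Lemma~\ref{lem:V_AB spanning set}). For the inductive step, I would peel off the last pair of the derivative: writing a $t$-derivative $\mathrm{D}_{S\to K}$ along disjoint pairs $\{i_1,j_1\},\dots,\{i_t,j_t\}$ with targets $(k_1,\ell_1),\dots,(k_t,\ell_t)$, set $g=\mathrm{D}_{(i_t,j_t)\to(k_t,\ell_t)}f$, which lives on $S_n^{\{(i_t,k_t),(j_t,\ell_t)\}}\cong S_{n-2}$. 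Since $\mathrm{L}_{(i_m,j_m)}=\mathrm{I}-\mathrm{R}_{(i_m\,j_m)}$ is right multiplication by a transposition supported on indices disjoint from $\{i_t,j_t\}$, it fixes the double coset $S_n^{\{(i_t,k_t),(j_t,\ell_t)\}}$ setwise and hence commutes with restriction to it; consequently $\mathrm{D}_{S\to K}f=\mathrm{D}_{S'\to K'}g$, where $S',K'$ are the first $t-1$ pairs and targets, viewed on $S_{n-2}$. Applying the induction hypothesis (with $n-2$, $d-1$, $t-1$) gives $\deg(\mathrm{D}_{S\to K}f)\le (d-1)-(t-1)=d-t$, provided the $t=1$ case supplies $\deg(g)\le d-1$.

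For the base case $t=1$, let $\mathrm{D}=\mathrm{D}_{(i_1,i_2)\to(j_1,j_2)}$ and $C=S_n^{\{(i_1,j_1),(i_2,j_2)\}}$. By linearity it suffices to prove $\deg(\mathrm{D}1_T)\le |T|-1$ on $C$ for every consistent $T$ with $1\le|T|\le d$ (and $\mathrm{D}1=0$ when $|T|=0$). From Definition~\ref{def:derivative}(2), $\mathrm{D}1_T=1_T|_C-1_{T'}|_C$, where $T'$ is obtained from $T$ by applying the transposition $(i_1\,i_2)$ to the first coordinates. If $\{i_1,i_2\}$ is disjoint from the first coordinates of $T$, then $T'=T$ and $\mathrm{D}1_T=0$. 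Otherwise $T$ (and likewise $T'$) contains a pair whose first coordinate lies in $\{i_1,i_2\}$; for $\pi\in C$ one has $\pi(i_1)=j_1,\pi(i_2)=j_2$, so either $1_T|_C\equiv 0$, or, writing $T=T_0\sqcup T_1$ with $T_1=\{(a,b)\in T: a\in\{i_1,i_2\}\}$, injectivity of $\pi$ forces $T_1\subseteq\{(i_1,j_1),(i_2,j_2)\}$ and $T_0$ to avoid the indices $i_1,i_2,j_1,j_2$, whence $1_T|_C=1_{T_0}|_C$ with $|T_0|=|T|-|T_1|\le|T|-1$. The same applies to $1_{T'}|_C$, so $\mathrm{D}1_T$ is a combination of restricted indicators $1_{T_0}|_C$ with $|T_0|\le|T|-1$, i.e.\ has degree $\le|T|-1\le d-1$ on $C$.

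The one place requiring care is the bookkeeping around the identification $S_n^{\{(i_1,j_1),(i_2,j_2)\}}\cong S_{n-2}$ and the corresponding notion of degree (degree $\le k$ meaning a linear combination of $1_{T_0}|_C$ with $T_0$ using only the $n-2$ surviving row/column indices and $|T_0|\le k$), together with the commutation of the Laplacians $\mathrm{L}_{(i_m,j_m)}$ with restrictions along disjoint pairs in the inductive step. This is exactly the ``restrictions and derivatives commute'' principle already invoked in the proof of Claim~\ref{claim:globalness equivalence}, so I expect no genuine obstacle — the argument is essentially just careful casework on how the removed indices $i_1,i_2,j_1,j_2$ meet $T$, plus a routine induction.
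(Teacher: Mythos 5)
Your proposal is correct and follows essentially the same route as the paper: reduce to $t=1$ by iterating (your commutation-with-restriction observation just makes the paper's "apply it repeatedly" explicit), then expand $f$ by linearity over the indicators $1_T$ and do casework on how $\{i_1,i_2\}$ and $\{j_1,j_2\}$ meet $T$, exactly as in the paper's proof. No gaps.
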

\begin{proof}
It is sufficient to consider the case $t=1$ of the proposition, as
we may apply it repeatedly. By linearity of the derivative $\mathrm{D}$
it is enough to show it in the case where $f=x_{i_{1}\to j_{1}}\cdots x_{i_{t}\to j_{t}}$.
Now note that the Laplacian $L_{(k_{1}k_{2})}$ annihilates $f$ unless
either $k_{1}$ is equal to some $i_{\ell}$, or $k_{2}$ is equal to
some $i_{\ell}$, or both, and we only have to consider these cases.
Each derivative corresponding to the Laplacian $L_{(k_1,k_2)}$
restricts both the image of $k_{1}$ and the image of $k_{2}$,
so after applying this restriction on $L_{(k_1,k_2)} f$ we either get
the $0$ function, a function of degree $d-1$, or a function of degree $d-2$.
\end{proof}

We are now ready to prove Lemma~\ref{lem:Bootstrapping the globalness}. To prove that $f$ is global, we handle restrictions
of size $t\leq n/2$, and restrictions of size $t>n/2$ separately, in the following two claims.
\begin{claim}
\label{claim:globalness up to n/2}
Suppose $f\colon S_n\to\mathbb{R}$ is a $\left(2d,\epsilon\right)$-global
function of degree $d$. Then $f$ is $\left(t,4^{t}\epsilon\right)$-global
for each $t\le\frac{n}{2}$.
\end{claim}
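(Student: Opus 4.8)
The goal is to bound restrictions of size $t \le n/2$. The plan is to use Claim~\ref{claim:globalness equivalence} to reduce the problem about restrictions to a problem about derivatives, and then exploit the degree bound via Claim~\ref{claim:degree reduction}. Concretely, suppose $f$ has degree $d$ and is $(2d,\epsilon)$-global. Then by the first item of Claim~\ref{claim:globalness equivalence}, every derivative $\mathrm{D}$ of order exactly $d$ satisfies $\|\mathrm{D} f\|_2 \le 2^d \epsilon$. By Claim~\ref{claim:degree reduction}, any derivative of order $d$ applied to $f$ gives a function of degree $\le 0$, i.e.\ a constant; and any derivative of order $> d$ kills $f$ entirely. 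So the key point is that \emph{all} derivatives of $f$, of every order $\ell$, have $2$-norm at most $2^d\epsilon$: for $\ell \le d$ this follows because further differentiating a derivative of order $\ell$ down to order $d$ can only be done, and for $\ell \le d$ we directly invoke the first item of Claim~\ref{claim:globalness equivalence} (noting $(2d,\epsilon)$-globalness implies $(2\ell,\epsilon)$-globalness for $\ell \le d$), giving $\|\mathrm{D}f\|_2 \le 2^\ell\epsilon \le 2^d\epsilon$; for $\ell > d$, $\mathrm{D} f = 0$.

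With the uniform bound $\|\mathrm{D} f\|_2 \le 2^d \epsilon$ in hand for all derivatives $\mathrm{D}$ of all orders $\ell \le t$, I would apply the second item of Claim~\ref{claim:globalness equivalence} with $\epsilon$ there replaced by $2^d\epsilon$. Provided $t \le n/2$, this yields that $f$ is $(t, 2^t \cdot 2^d \epsilon)$-global. Since we are aiming for the bound $(t, 4^t\epsilon)$ and we only care about $t$ in a range where $2^{t+d} \le 4^t$ — which holds once $t \ge d$ — the small-$t$ cases $t < d$ are handled trivially because $(2d,\epsilon)$-globalness immediately gives $(t,\epsilon)$-globalness for $t \le 2d$, and $\epsilon \le 4^t\epsilon$. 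So in all cases $t \le n/2$ we obtain that $f$ is $(t, 4^t\epsilon)$-global, as claimed. (One should double-check the constant bookkeeping: the statement's constant $4$ is generous enough to absorb the $2^d$ factor as long as $t\ge d$, and the overlap region $d \le t \le 2d$ can be covered either way.)

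The main obstacle here is really just making sure the two items of Claim~\ref{claim:globalness equivalence} compose cleanly with the degree-reduction fact — in particular, verifying that the bound $\|\mathrm{D} f\|_2 \le 2^d\epsilon$ holds \emph{for every order} $\ell \le t$ and not merely for $\ell = d$, since the hypothesis of the second item of Claim~\ref{claim:globalness equivalence} requires control of derivatives of all intermediate orders. This is where the assumption that $f$ has degree $d$ is used twice: once to note derivatives of order $> d$ vanish, and once (implicitly, via $(2d,\epsilon)$-globalness giving $(2\ell,\epsilon)$-globalness) to bound the lower-order ones. The hypothesis $t \le n/2$ is exactly what is needed to legitimately invoke the second item of Claim~\ref{claim:globalness equivalence}. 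I would write the argument as a short induction-free deduction, citing Claims~\ref{claim:globalness equivalence} and~\ref{claim:degree reduction} as black boxes, and spend a sentence on the constant check relating $2^{t+d}$ to $4^t$.
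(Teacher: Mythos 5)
Your proposal is correct and follows essentially the same route as the paper: reduce the statement about restrictions to one about derivatives via the second item of Claim~\ref{claim:globalness equivalence}, bound derivatives of order $\ell\le d$ by its first item, and kill derivatives of order $\ell>d$ via Claim~\ref{claim:degree reduction}. The only difference is cosmetic bookkeeping — the paper uses the uniform bound $2^{t}\epsilon$ on all derivatives of order $\le t$ and gets $4^{t}\epsilon$ in one shot, whereas you use $2^{d}\epsilon$ and patch the $t<d$ range separately, which also works.
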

\begin{proof}
By the second item in Claim \ref{claim:globalness equivalence}, it is enough to show that for
each $t$-derivative $\mathrm{D}$ we have $\|\mathrm{D}f\|_{2}\le2^{t}\epsilon$.
For $t\le d$ this follows from the first item in Claim~\ref{claim:globalness equivalence},
and for $t>d$ it follows from Proposition \ref{claim:degree reduction} as we have that
$\mathrm{D} f = 0$ for all derivatives of order $t$.
\end{proof}

For $t\ge\frac{n}{2}$, we use the obvious fact $f$ is always
$\left(t,\|f\|_{\infty}\right)$-global, and upper bound the infinity norm of $f$
using the following claim.
\begin{claim}
\label{claim:upper bound on the infty norm} Let $f$ be a $\left(2d,\epsilon\right)$-global
function of degree $d$. Then $\|f\|_{\infty}\le\sqrt{\left(6d\right)!}4^{3n}\epsilon$.
\end{claim}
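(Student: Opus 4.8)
The plan is to bound $\|f\|_\infty$ by walking down a full restriction of $f$ one coordinate at a time, controlling how much the $2$-norm of the restricted function can grow at each step, and using the fact that a function on $S_1$ (a single point) equals its own $\infty$-norm. Concretely, fix a permutation $\pi$ at which $|f(\pi)|$ is (close to) maximized. Reading off $\pi(1), \pi(2), \ldots, \pi(n)$ gives a chain of consistent sets $\emptyset = T_0 \subset T_1 \subset \cdots \subset T_n$ with $|T_k| = k$ and $S_n^{T_n} = \{\pi\}$, so $|f(\pi)| = \|f_{\to T_n}\|_2$. The goal is to show that each step from $T_{k}$ to $T_{k+1}$ can be taken so that $\|f_{\to T_{k+1}}\|_2$ is not much larger than $\|f_{\to T_k}\|_2$ — but of course we do not get to choose the chain, since it is dictated by $\pi$. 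Instead I would argue that for \emph{every} consistent $T$ of size $k < n$ and \emph{every} extension of $T$ by one pair $(i,j)$ with $i,j$ unused, one has $\|f_{\to T\cup\{(i,j)\}}\|_2 \leq (\text{something})\cdot \|f_{\to T}\|_2$, with the multiplicative loss summing/multiplying up to $4^{3n}$ over the at most $n$ steps.

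For the per-step bound, the key is Claim~\ref{claim:globalness equivalence} together with Claim~\ref{claim:degree reduction}: the function $f_{\to T}$, being a restriction of a degree-$d$ function, still has degree at most $d$, and its $(2d,\cdot)$-globalness is inherited (with a bounded blow-up) from that of $f$ via the derivative characterization — this is essentially the content of Claim~\ref{claim:globalness up to n/2} applied inside the double coset $S_n^T \cong S_{n-k}$, valid as long as $n - k$ is still large enough for Claim~\ref{claim:globalness up to n/2} to apply, i.e.\ roughly $k \le n - 2d$; for the last $O(d)$ steps one just uses the trivial bound that restricting multiplies the $2$-norm by at most $\sqrt{|S_n^T|} \le \sqrt{(n-k)!}$, and since $n - k \le 2d$ there this contributes a factor at most $\sqrt{(2d)!}$, matching the $\sqrt{(6d)!}$ in the statement with room to spare. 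In the ``bulk'' regime $k \le n - 2d$, the relation $\mathbb{E}_{j}\|f_{\to T\cup\{(i,j)\}}\|_2^2 = \|f_{\to T}\|_2^2$ (averaging over the image $j$ of a fixed unused $i$) shows the \emph{average} restriction does not grow, and $(2d,\cdot)$-globalness of $f_{\to T}$ shows no single restriction can be too large relative to $\|f_{\to T}\|_2$ — specifically, combining the $t=1$ case reasoning from Claim~\ref{claim:globalness equivalence} (the inequality $|\,\|f_{i_1 \to j_1}\|_2 - \|f_{i_2\to j_1}\|_2| \le \|\mathrm{D}f\|_2$ and the derivative bound $\|\mathrm{D}f\|_2 \le 2^d \cdot (\text{globalness const})$) yields $\|f_{\to T\cup\{(i,j)\}}\|_2 \le \|f_{\to T}\|_2 + 2^{O(d)}\epsilon \cdot(\text{const})^k$-type control, which after $n$ steps and careful bookkeeping is absorbed into the claimed $4^{3n}$.

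I expect the main obstacle to be the bookkeeping of how the globalness constant degrades along the chain of restrictions: each time we pass from $f_{\to T}$ to $f_{\to T \cup \{(i,j)\}}$ we invoke Claim~\ref{claim:globalness up to n/2} afresh inside a smaller symmetric group, and a naive application multiplies the globalness constant by $4$ each time, which over $\Theta(n)$ steps is exactly what produces the $4^{\Theta(n)}$ factor — so the delicate point is to verify that this is the \emph{only} source of exponential-in-$n$ loss and that it does not compound into something worse (like $4^{n^2}$). The cleanest way to organize this is probably a single induction on $n$ (or on the number of remaining coordinates): assume the bound $\|g\|_\infty \le \sqrt{(6d)!}\,4^{3m}\epsilon'$ for every $(2d,\epsilon')$-global degree-$d$ function $g$ on $S_m$ with $m < n$, then for $f$ on $S_n$ pick the worst unused coordinate $i=1$, write $\|f\|_\infty = \max_j \|f_{1\to j}\|_\infty$, observe $f_{1\to j}$ is a degree-$d$ function on $S_{n-1}$ that is $(2d, 2\epsilon)$-global (for the $j$ achieving the max, using the $t=1$ globalness-transfer inequality and that the \emph{average} such restriction has $2$-norm $\le \epsilon$), and apply the inductive hypothesis with $\epsilon' = 2\epsilon$ and $m = n-1$; the factor $4^{3(n-1)} \cdot 2 \le 4^{3n}$ then closes the induction. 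The base case $n \le O(d)$ is the trivial $\ell_2$-to-$\ell_\infty$ bound with the $\sqrt{(6d)!}$ factor.
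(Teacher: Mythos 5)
Your overall strategy (induction on $n$, peeling off coordinates, with a trivial $L^2$-to-$L^\infty$ base case when $n\le O(d)$) is the right one and matches the paper, but the quantitative heart of your inductive step has a gap. You restrict \emph{one} coordinate per step and claim $f_{1\to j}$ is $(2d,2\epsilon)$-global. That is not what the hypotheses give you: a size-$2d$ restriction of $f_{1\to j}$ is a size-$(2d+1)$ restriction of $f$, which is not controlled by the $(2d,\epsilon)$-globalness of $f$ at all. The averaging identity $\mathbb{E}_j\|f_{1\to j}\|_2^2=\|f\|_2^2$ together with the $t=1$ derivative inequality only controls the norm of $f_{1\to j}$ \emph{itself}, not the norms of its deep restrictions; and running the derivative argument at depth $2d$ requires bounds at depth $2d+2$, so you cannot bootstrap within the $(2d,\epsilon)$ hypothesis alone. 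The correct tool is Claim~\ref{claim:globalness up to n/2}, which (using that $f$ has degree $d$) upgrades $f$ to being $(t,4^t\epsilon)$-global for all $t\le n/2$; but then the best you can certify is that $f_{1\to j}$ is $(2d,4^{2d+1}\epsilon)$-global, and feeding $\epsilon'=4^{2d+1}\epsilon$ into your inductive hypothesis on $S_{n-1}$ yields $\sqrt{(6d)!}\,4^{3(n-1)}4^{2d+1}\epsilon$, which exceeds the target $\sqrt{(6d)!}\,4^{3n}\epsilon$ whenever $d\ge 2$. So the induction does not close: the per-step budget is $4^3$ while the per-step loss is $4^{2d+1}$.

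The paper's proof resolves exactly this bookkeeping issue by restricting $d$ coordinates per inductive step rather than one: when $3d\le n/2$, Claim~\ref{claim:globalness up to n/2} gives that $f$ is $(3d,4^{3d}\epsilon)$-global, hence $f_{\to S}$ is $(2d,4^{3d}\epsilon)$-global for any $|S|=d$, and the inductive call on $S_{n-d}$ costs $4^{3(n-d)}\cdot 4^{3d}=4^{3n}$ --- the loss $4^{3d}$ is amortized over $d$ coordinates, matching the budget exactly. If you reorganize your induction to take $d$ coordinates at a time (and keep your base case $n\le 6d$), your argument goes through.
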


\begin{proof}
We prove the claim by induction on $n$. The case $n=1$ is obvious, so let $n>1$.

If $3d\le\frac{n}{2}$, then by Claim~\ref{claim:globalness up to n/2} we have that $f$
is $\left(3d,4^{3d}\epsilon\right)$-global, and hence for each set $S$ of size $d$,
the function $f_{\rightarrow S}$ is $\left(2d,4^{3d}\epsilon\right)$-global. Therefore,
the induction hypothesis implies that
\[
\|f\|_{\infty}=\max_{S:\,\left|S\right|=d}\|f_{S}\|_{\infty}\le\sqrt{\left(6d\right)!}4^{3\left(n-d\right)}\cdot4^{3d}\epsilon=\sqrt{\left(6d\right)!}4^{3n}\epsilon.
\]
 Suppose now that $n\le6d$. Then $\|f\|_{\infty}^{2}\le\left(6d\right)!\|f\|_{2}^{2}$
since the probability of each atom in $S_{6d}$ is $\frac{1}{\left(6d\right)!}$.
Hence, $\|f\|_{\infty}\le\sqrt{\left(6d\right)!}\epsilon$.
\end{proof}
Note that $(6d)!\leq 4^n$ given $C$ is sufficiently large, so for $t>n/2$, Claim~\ref{claim:upper bound on the infty norm} implies that
$f$ is $(t, 4^{4n}\eps) = (t, 4^{8t}\eps)$-global.\qed


\section{Proof of Lemma~\ref{lem:Approximation}}\label{sec:missing1}
\paragraph{Proof overview.}
Our argument first constructs a very strong approximating polynomial in the $L_2$-norm.
The approximation will be in fact strong enough to imply, in a black-box way, that it is also an approximating polynomial in $L_q$.

To construct an $L_2$ approximating polynomial, we use spectral considerations.
Denote by $\lambda_1,\ldots,\lambda_{\ell}$ the eigenvalues of $\mathrm{T}^{(\rho)}$ on the space of degree $d$ functions. Note that
if $P$ is a polynomial such that $P(\lambda_i) = 1$ for all $i$, then $P(\mathrm{T}^{\rho}) f = f$ for all $f$ of degree $d$. However,
as $\ell$ may be very large, there may not be a polynomial $P$ with small $\norm{P}$ satisfying $P(\lambda_i) = 1$ for all $i$, and
to circumvent this issue we must argue that, at least effectively, $\ell$ is small.
Indeed, while we do not show that $\ell$ is small, we do show that there are $d$ distinct values, $\lambda_1(\rho),\ldots,\lambda_d(\rho)$, such that
each $\lambda_i$ is very close to one of the $\lambda_j(\rho)$'s. This, by interpolation, implies that we may find a low-degree polynomial $P$
such that $P(\lambda_i)$ is very close to $1$ for all $i=1,\ldots,\ell$. Finally, to argue that $\norm{P}$ is small, we show that each $\lambda_i(\rho)$ is bounded
away from $0$.

It remains then to establish the claimed properties of the eigenvalues $\lambda_1,\ldots,\lambda_{\ell}$, and
we do so in several steps. We first identify the eigenspaces of $\mathrm{T}^{(\rho)}$ among the space of low-degree functions,
and show that each one of them contains a junta. Intuitively, for juntas it is much easier to understand the action of the $\mathrm{T}^{(\rho)}$,
since when looking on very few coordinates, $S_n$ looks like a product space. Indeed, using this logic we are able to show that all eigenvalues
of $\mathrm{T}^{(\rho)}$ on low-degree functions are bounded away from $0$. To argue that the eigenvalues are concentrated on a few values, we use the
fact that taking symmetry into account, the number of linearly independent juntas is small.

Our proof uses several notations appearing in Section~\ref{sec:level decomposition}, including the actions of $S_n$ on functions from the left $\prescript{\tau}{}f$ and from the right $f^\sigma$, the level decomposition $V_d$, the spaces $V_{A,B}$, and the concept of $d$-junta.


\subsection{Identifying the eigenspaces of $\mathrm{T}^{(\rho)}$}
\subsubsection{$\mathrm{T}^{\left(\rho\right)}$ commutes with the action of $S_{n}$ as a bimodule}
\begin{lem}\label{lem:Trho is symmetric}
The operator $\mathrm{T}^{\left(\rho\right)}$
commutes with the action of $S_{n}$ as a bimodule.
\end{lem}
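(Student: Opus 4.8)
The plan is to trace through the definition of $\mathrm{T}^{(\rho)} = \mathrm{T}_{L^m \to S_n} \mathrm{T}_\rho \mathrm{T}_{S_n \to L^m}$ and check that every building block intertwines the left and right actions of $S_n$. Concretely, I need to verify that for all $\tau, \sigma \in S_n$ and all $f\colon S_n \to \mathbb{R}$, one has $\mathrm{T}^{(\rho)}(\prescript{\tau}{}f^\sigma) = \prescript{\tau}{}(\mathrm{T}^{(\rho)} f)^\sigma$. Since $\mathrm{T}^{(\rho)}$ is a composition, it suffices to handle the three factors separately, with the caveat that the middle factor $\mathrm{T}_\rho$ acts on $L^m$, so I first need to specify how $S_n$ acts on $L^m = ([n]^2)^m$: the natural action is the ``diagonal'' one where $\tau$ acts on the left by sending a coordinate value $(i,j)$ to $(i, \tau(j))$, and $\sigma$ acts on the right by sending $(i,j)$ to $(\sigma^{-1}(i), j)$ — chosen precisely so that the statement ``$(i,j)$ encodes $\pi(i) = j$'' is compatible with $\pi \mapsto \tau \pi \sigma$.

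The key step is then to observe that the coupling $\mathcal{C}$ between $S_n$ and $L^m$ is \emph{equivariant} under these actions. That is, if $(\mathbf{x}, \bm{\pi}) \sim \mathcal{C}$, then $(\tau \cdot \mathbf{x} \cdot \sigma, \tau \bm{\pi} \sigma)$ has the same distribution as $\mathcal{C}$. This follows by inspecting the three steps defining the coupling: (1) the uniform distribution on $L^m$ is invariant under the coordinatewise $S_n \times S_n$ action; (2) the greedy construction of the consistent set $T$ commutes with relabeling — applying $\tau$ on the left and $\sigma$ on the right to $\mathbf{x}$ produces exactly the set $\tau \cdot T \cdot \sigma$ (consistency of a set is preserved under simultaneous relabeling of the domain and range, and the greedy acceptance/rejection decisions are unchanged); (3) choosing a uniform permutation consistent with $T$ and then conjugating is the same as choosing a uniform permutation consistent with $\tau \cdot T \cdot \sigma$. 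Granting equivariance of the coupling, the identities $\mathrm{T}_{S_n \to L^m}(\prescript{\tau}{}f^\sigma) = \prescript{\tau}{}(\mathrm{T}_{S_n \to L^m} f)^\sigma$ and the analogous one for $\mathrm{T}_{L^m \to S_n}$ are immediate from the definition of these averaging operators as conditional expectations under $\mathcal{C}$. Finally, $\mathrm{T}_\rho$ on $L^m$ commutes with the coordinatewise $S_n \times S_n$ action because the noise operator treats all coordinates symmetrically and, within a coordinate, acts by resampling from the uniform distribution on $[n]^2$, which is $S_n \times S_n$-invariant.

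I expect the main obstacle to be purely bookkeeping: pinning down the precise $S_n$-action on $L^m$ so that the encoding ``$(i,j)$ means $\pi(i)=j$'' is respected, and then checking that the greedy step in the coupling is genuinely equivariant rather than merely invariant in distribution. One subtlety worth stating carefully is that the greedy procedure depends on the \emph{order} of the coordinates $k = 1, \ldots, m$, but since the $S_n \times S_n$ action permutes coordinate \emph{values} and not coordinate \emph{indices}, the order is untouched, and the acceptance decision at step $k$ depends only on whether the relabeled pair is consistent with the relabeled partial set — which holds iff the original pair was consistent with the original partial set. Once this is laid out, the lemma follows by composing the three intertwining identities.
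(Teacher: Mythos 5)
Your proof is correct and follows essentially the same route as the paper: define the bimodule action of $S_n$ on $L^m$ so that the encoding ``$(i,j)$ means $\pi(i)=j$'' is respected, show the coupling measure is invariant under the simultaneous two-sided action (whence the two averaging operators intertwine the actions), note that $\mathrm{T}_\rho$ on $L^m$ also intertwines them, and compose. The only difference is one of packaging — the paper isolates ``invariant coupling implies intertwining operators'' as a separate claim proved via indicator functions, while you treat it as immediate from the conditional-expectation definition and instead spell out the equivariance of the greedy construction, which the paper leaves implicit.
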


The proof relies on the following claims.
\begin{claim}
\label{claim:composition}
If $\mathrm{T},\mathrm{S}$ are operators
that commute with the action of $S_{n}$ as a bimodule, then so is
$\mathrm{T\circ S}$.
\end{claim}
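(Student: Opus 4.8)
The plan is to unwind the definition of ``commuting with the action of $S_n$ as a bimodule'' and verify the claim by a direct, essentially formal computation. Recall that the bimodule structure consists of a left action $\prescript{\tau}{}f(\sigma) = f(\tau\circ\sigma)$ and a right action $f^{\tau}(\sigma) = f(\sigma\circ\tau)$, and saying that an operator $\mathrm{T}$ commutes with this action means precisely that $\mathrm{T}(\prescript{\tau}{}f) = \prescript{\tau}{}(\mathrm{T}f)$ and $\mathrm{T}(f^{\tau}) = (\mathrm{T}f)^{\tau}$ for every $\tau\in S_n$ and every $f\in L^2(S_n)$. So the statement to prove is: if $\mathrm{T}$ and $\mathrm{S}$ each satisfy these two identities, then $\mathrm{T}\circ\mathrm{S}$ does too.

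The key steps are short. First I would fix $\tau\in S_n$ and $f\in L^2(S_n)$ and compute, using the left-commutation of $\mathrm{S}$ followed by the left-commutation of $\mathrm{T}$:
\[
(\mathrm{T}\circ\mathrm{S})(\prescript{\tau}{}f) = \mathrm{T}\bigl(\mathrm{S}(\prescript{\tau}{}f)\bigr) = \mathrm{T}\bigl(\prescript{\tau}{}(\mathrm{S}f)\bigr) = \prescript{\tau}{}\bigl(\mathrm{T}(\mathrm{S}f)\bigr) = \prescript{\tau}{}\bigl((\mathrm{T}\circ\mathrm{S})f\bigr).
\]
Then I would repeat the identical chain of equalities with the right action in place of the left action, giving $(\mathrm{T}\circ\mathrm{S})(f^{\tau}) = ((\mathrm{T}\circ\mathrm{S})f)^{\tau}$. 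Since $\tau$ and $f$ were arbitrary, $\mathrm{T}\circ\mathrm{S}$ commutes with both the left and the right action, i.e.\ with the action of $S_n$ as a bimodule, which is exactly the claim.

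There is essentially no obstacle here: the result is a purely formal consequence of functoriality of composition, and the only thing to be careful about is that the argument must be carried out separately for the left and the right action (one does not automatically imply the other), and that one should not accidentally conflate the two actions when applying the hypotheses on $\mathrm{S}$ and $\mathrm{T}$. The main ``work'' in the surrounding lemma lies elsewhere — namely in checking that $\mathrm{T}_{S_n\to L^m}$, $\mathrm{T}_\rho$, and $\mathrm{T}_{L^m\to S_n}$ individually interact correctly with the symmetries, so that Lemma~\ref{lem:Trho is symmetric} follows by applying this claim — but Claim~\ref{claim:composition} itself is just the observation that the bimodule-commuting operators form a set closed under composition.
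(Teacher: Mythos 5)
Your proof is correct and matches the paper's argument, which performs the same one-line chain of equalities $\prescript{\pi_{1}}{}{(\mathrm{T}\mathrm{S}f)}^{\pi_{2}}=\mathrm{T}(\prescript{\pi_{1}}{}{(\mathrm{S}f)}^{\pi_{2}})=\mathrm{T}\mathrm{S}(\prescript{\pi_{1}}{}f^{\pi_{2}})$, only with the left and right actions handled simultaneously rather than separately. The two formulations are equivalent, so there is nothing to add.
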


\begin{proof}
We have $\prescript{\pi_{1}}{}{\left(\mathrm{T}Sf\right)}^{\pi_{2}}=\mathrm{T}\left(\prescript{\pi_{1}}{}{Sf}^{\pi_{2}}\right)=\mathrm{TS}(\prescript{\pi_{1}}{}f^{\pi_{2}})$.
\end{proof}

Let $X$ and $Y$ be $S_n$-bimodules, and consider $X\times Y$ as an $S_n$-bimodule with the operation
$\prescript{\sigma_{1}}{}{\left(x,y\right)}^{\sigma_{2}}=\left(\prescript{\sigma_{1}}{}x^{\sigma_{2}},\prescript{\sigma_{1}}{}y^{\sigma_{2}}\right)$.
We say that a probability distribution $\mu$ on $X\times Y$ is invariant under the action of $S_n$ on both sides if
$\mu(\prescript{\sigma_{1}}{}{\left(x,y\right)}^{\sigma_{2}}) = \mu(x,y)$ for all $x\in X$, $y\in Y$ and $\sigma_1,\sigma_2\in S_n$.
\begin{claim}
\label{claim:symmetry of operators}
Let $X,Y$ be $S_{n}$-bimodules that are coupled by the probability measure $\mu$, and suppose that $\mu$ is invariant under
the action of $S_{n}$ from both sides. Then the operators $\mathrm{T}_{X\to Y},\mathrm{T}_{Y\to X}$ commute with the action
of $S_{n}$ from both sides.
\end{claim}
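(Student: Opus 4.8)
The plan is to unwind the definitions of $\mathrm{T}_{X\to Y}$ and $\mathrm{T}_{Y\to X}$ and use the invariance of $\mu$ directly. Fix $f\colon Y\to\mathbb{R}$, $\sigma_1,\sigma_2\in S_n$, and $x\in X$. By definition,
\[
\mathrm{T}_{Y\to X}f(x) = \Expect{(\mathbf{x},\mathbf{y})\sim\mu}{f(\mathbf{y})\mid \mathbf{x}=x},
\]
so $\prescript{\sigma_1}{}{(\mathrm{T}_{Y\to X}f)}^{\sigma_2}(x) = \mathrm{T}_{Y\to X}f(\prescript{\sigma_1}{}x^{\sigma_2})$, which is the conditional expectation of $f(\mathbf{y})$ given $\mathbf{x} = \prescript{\sigma_1}{}x^{\sigma_2}$. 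The key step is to observe that since $\mu$ is invariant under the simultaneous action $(\mathbf{x},\mathbf{y})\mapsto(\prescript{\sigma_1}{}{\mathbf{x}}^{\sigma_2},\prescript{\sigma_1}{}{\mathbf{y}}^{\sigma_2})$, conditioning on $\mathbf{x} = \prescript{\sigma_1}{}x^{\sigma_2}$ and reading off $f(\mathbf{y})$ is the same as conditioning on $\mathbf{x}=x$, reading off $f(\prescript{\sigma_1}{}{\mathbf{y}}^{\sigma_2}) = \prescript{\sigma_1}{}f^{\sigma_2}(\mathbf{y})$. Hence $\prescript{\sigma_1}{}{(\mathrm{T}_{Y\to X}f)}^{\sigma_2}(x) = \mathrm{T}_{Y\to X}(\prescript{\sigma_1}{}f^{\sigma_2})(x)$, which is exactly the statement that $\mathrm{T}_{Y\to X}$ commutes with the bimodule action. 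The same computation, with the roles of $X$ and $Y$ swapped, handles $\mathrm{T}_{X\to Y}$; note that the invariance hypothesis on $\mu$ is symmetric in $X$ and $Y$, so nothing new is needed.

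To make the conditional-expectation manipulation rigorous I would phrase it via the joint distribution rather than conditional densities, to avoid any issue with events of measure zero in the marginal of $\mathbf{x}$: for a fixed $x_0\in X$ with positive marginal probability,
\[
\mathrm{T}_{Y\to X}f(x_0) = \frac{1}{\mu_X(x_0)}\sum_{y} \mu(x_0,y) f(y),
\]
and substituting $x_0 = \prescript{\sigma_1}{}x^{\sigma_2}$, reindexing the sum by $y\mapsto \prescript{\sigma_1}{}y^{\sigma_2}$, and using $\mu(\prescript{\sigma_1}{}x^{\sigma_2},\prescript{\sigma_1}{}y^{\sigma_2}) = \mu(x,y)$ together with $\mu_X(\prescript{\sigma_1}{}x^{\sigma_2}) = \mu_X(x)$ (which follows from invariance of $\mu$) yields the claim after recognizing $f(\prescript{\sigma_1}{}y^{\sigma_2}) = \prescript{\sigma_1}{}f^{\sigma_2}(y)$. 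For $x_0$ with zero marginal probability the operator value is conventionally irrelevant (or can be defined consistently), so this case causes no trouble.

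I do not expect a genuine obstacle here; the statement is essentially a bookkeeping exercise in how the group action interacts with conditioning, and the only thing to be careful about is matching the left/right action conventions (recall $\prescript{\tau}{}f(\sigma) = f(\tau\sigma)$ and $f^{\tau}(\sigma) = f(\sigma\tau)$) so that the reindexing $y\mapsto\prescript{\sigma_1}{}y^{\sigma_2}$ lines up correctly with $\prescript{\sigma_1}{}f^{\sigma_2}$. Once this claim is in hand, together with Claim~\ref{claim:composition} it immediately gives Lemma~\ref{lem:Trho is symmetric}: $\mathrm{T}^{(\rho)} = \mathrm{T}_{Y\to X}\mathrm{T}_\rho\mathrm{T}_{X\to Y}$ is a composition of three operators each commuting with the bimodule action (for $\mathrm{T}_\rho$ on $L^m$ this is the standard fact that the product-space noise operator is symmetric under coordinate permutations and, here, under the relevant $S_n$-action on $L = [n]^2$), provided one also checks that our specific coupling between $S_n$ and $L^m$ is invariant under the simultaneous two-sided action — which it is, since both the uniform draw from $L^m$ and the greedy consistency construction and the final uniform completion are equivariant.
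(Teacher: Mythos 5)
Your proof is correct and is essentially the same argument as the paper's: both boil down to a change of variables $y \mapsto \prescript{\sigma_1}{}y^{\sigma_2}$ justified by the two-sided invariance of $\mu$ (and hence of the marginal $\mu_X$). The paper merely packages the same computation by reducing to the indicator basis $\{1_x\}$ and pairing against $\{1_y\}$, which is a presentational rather than a substantive difference.
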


\begin{proof}
We prove the claim for $\mathrm{T}_{X\to Y}$ (the argument for $\mathrm{T}_{Y\to X}$ is identical).
Let $\mu_{X},\mu_{Y}$ be the marginal distributions of $\mu$ on $X$ and on $Y$, and for each $x\in X$ denote by
$1_{x}$ the indicator function of $x$. Then the set $\left\{ 1_{x}\right\} _{x\in X}$
is a basis for $L^{2}\left(X\right)$, and so it is enough to show that for all
$x$ and $\sigma_1,\sigma_2\in S_n$ it holds that
$\prescript{\sigma_{1}}{}{\left(\mathrm{T}_{X\to Y}1_{x}\right)} ^{\sigma_2} = \mathrm{T}_{X\to Y}\left(\prescript{\sigma_{1}}{}{1_{x}} ^{\sigma_2}\right)$.
Note that as these are two functions over $Y$, it is enough to show that
\[
\left\langle \prescript{\sigma_{1}}{}{\left(\mathrm{T}_{X\to Y}1_{x}\right)}^{\sigma_{2}},1_{y}\right\rangle =\left\langle \mathrm{T}_{X\to Y}\left(\prescript{\sigma_{1}}{}{1_{x}}^{\sigma_{2}}\right),1_{y}\right\rangle
\]
for all $y$, since $\left\{1_y\right\}_{y\in Y}$ forms a basis for $L^2(Y)$.

Fix $x$ and $y$.
Since $\mu$ is invariant under the action of $S_n$ on both sides, it follows that $\mu_{Y}$ is invariant under the action of
$S_{n}$, so we have
\[
\left\langle \prescript{\sigma_{1}}{}{\left(\mathrm{T}_{X\to Y}1_{x}\right)}^{\sigma_{2}},1_{y}\right\rangle = \left\langle \mathrm{T}_{X\to Y}1_{x},\prescript{\sigma_{1}^{-1}}{}{1_{y}}^{\sigma_{2}^{-1}}\right\rangle
=\left\langle\mathrm{T}_{X\to Y}1_{x},1_{\prescript{\sigma_1}{}y^{\sigma_{2}}}\right\rangle
=\mu\left(x,\sigma_{1} y\sigma_{2}\right),
\]
where in the penultimate transition we used the fact that
$\prescript{\sigma_1^{-1}}{} {1_y} ^{\sigma_2^{-1}} = 1_{\prescript{\sigma_1}{} y ^{\sigma_2}}$.
On the other hand, we also have that the last fact holds for $1_x$, and so
\[
\left\langle \mathrm{T}_{X\to Y}\left(\prescript{\sigma_{1}}{}{1_{x}}^{\sigma_{2}}\right),1_{y}\right\rangle
=
\left\langle \mathrm{T}_{X\to Y}1_{\prescript{\sigma_{1}^{-1}}{}x^{\sigma_{2}^{-1}}},1_{y}\right\rangle
=\mu\left(\sigma_{1}^{-1}x\sigma_{2}^{-1},y\right).
\]
The claim now follows from the fact that $\mu$ is invariant under the action of $S_{n}$ from both sides.
\end{proof}
We are now ready to move on to the proof of Lemma \ref{lem:Trho is symmetric}.
\begin{proof}[Proof of Lemma \ref{lem:Trho is symmetric}]
 We let $S_{n}$ act on $L$ from the right by setting $\left(i,j\right)\pi=\left(\pi\left(i\right),j\right)$
and from the left by setting $\pi\left(i,j\right)=\left(i,\pi\left(j\right)\right)$.
For a function $f$ on $L^{m}$ we write $\prescript{\pi_{1}}{}f^{\pi_{2}}$
for the function
\[
\left(x_{1},\ldots,x_{m}\right)\mapsto f\left(\pi_{1}x_{1}\pi_{2},\ldots,\pi_{1}x_{m}\pi_{2}\right).
\]

By Claim \ref{claim:symmetry of operators} the operators $\mathrm{T}_{\rho},\mathrm{T}_{S_{n}\to L^{m}},\mathrm{T}_{L^{m}\to S_{n}}$
commute with the action of $S_{n}$ as a bimodule, and therefore so
is $\mathrm{T}^{\left(\rho\right)}$ by Claim \ref{claim:composition}.
\end{proof}

\subsubsection{Showing that the spaces $V_{A,B}$ and $V_{d}$ are invariant under
$\mathrm{T}^{\left(\rho\right)}$}

First we show that $V_{A,B}$ is an invariant subspace of $\mathrm{T}^{\left(\rho\right)}$.
\begin{lem}
\label{lem:abjuntas}
Let $\mathrm{T}$ be an endomorphism of $L^{2}\left(S_{n}\right)$
as an $S_{n}$-bimodule. Then $\mathrm{T}V_{A,B}\subseteq V_{A,B}$.
Moreover, $TV_{d}\subseteq V_{d}$.
\end{lem}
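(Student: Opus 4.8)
The plan is to use the alternative, ``coordinate-free'' description of $V_{A,B}$ given in Lemma~\ref{lem:V_AB bimodule}: a function $f$ lies in $V_{A,B}$ if and only if $f = \prescript{\tau}{}f^{\sigma}$ for every $\sigma$ fixing $A$ pointwise and every $\tau$ fixing $B$ pointwise. This characterization is manifestly stated in terms of the bimodule action, which is exactly the structure $\mathrm{T}$ is assumed to commute with, so the claim should fall out almost formally. Concretely, I would first fix $f \in V_{A,B}$ and an arbitrary pair $(\sigma,\tau)$ with $\sigma$ fixing $A$ pointwise and $\tau$ fixing $B$ pointwise. Using that $\mathrm{T}$ is an $S_n$-bimodule endomorphism, I compute
\[
\prescript{\tau}{}{(\mathrm{T}f)}^{\sigma} = \mathrm{T}\bigl(\prescript{\tau}{}f^{\sigma}\bigr) = \mathrm{T}f,
\]
where the last equality uses that $f \in V_{A,B}$ and Lemma~\ref{lem:V_AB bimodule}. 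Since this holds for all such $\sigma,\tau$, applying Lemma~\ref{lem:V_AB bimodule} in the reverse direction gives $\mathrm{T}f \in V_{A,B}$, proving $\mathrm{T}V_{A,B}\subseteq V_{A,B}$.

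For the ``moreover'' part, I would deduce $\mathrm{T}V_d \subseteq V_d$ from the $V_{A,B}$ case together with Lemma~\ref{lem:V_AB spanning set}, which tells us that $V_d$ is the span of the $d$-juntas, i.e.\ $V_d = \sum_{|A|=|B|=d} V_{A,B}$. By linearity of $\mathrm{T}$ it suffices to check that $\mathrm{T}$ maps each $V_{A,B}$ with $|A|=|B|=d$ into $V_d$, and this is immediate since $\mathrm{T}V_{A,B}\subseteq V_{A,B}\subseteq V_d$. Hence $\mathrm{T}$ preserves the sum, i.e.\ $\mathrm{T}V_d \subseteq V_d$.

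I do not expect a genuine obstacle here; the only point requiring a little care is making sure the direction of the actions matches up — that ``$\sigma$ fixes $A$ pointwise'' pairs with the right-action $f^{\sigma}$ and ``$\tau$ fixes $B$ pointwise'' with the left-action $\prescript{\tau}{}f$, exactly as in the statement of Lemma~\ref{lem:V_AB bimodule} — and that the definition of bimodule endomorphism is being used in the form $\prescript{\tau}{}{(\mathrm{T}f)}^{\sigma} = \mathrm{T}(\prescript{\tau}{}f^{\sigma})$, which is what ``commutes with the action of $S_n$ as a bimodule'' means. Everything else is a one-line formal manipulation, and no properties of $\mathrm{T}$ beyond bimodule-equivariance and linearity are needed.
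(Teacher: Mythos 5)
Your proof is correct and is essentially identical to the paper's own argument: both use the invariance characterization of $V_{A,B}$ from Lemma~\ref{lem:V_AB bimodule} together with the bimodule-equivariance of $\mathrm{T}$ for the main claim, and Lemma~\ref{lem:V_AB spanning set} plus linearity for the ``moreover'' part. If anything, you are slightly more careful than the paper's write-up about matching ``fixes $A$'' with the right action and ``fixes $B$'' with the left action, consistently with the statement of Lemma~\ref{lem:V_AB bimodule}.
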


\begin{proof}
Let $f\in V_{A,B}$. We need to show that $\mathrm{T}f\in V_{A,B}$.
Let $\sigma_{1}\in S_{\left[n\right]\setminus A},\sigma_{2}\in S_{\left[n\right]\setminus B}$.
Then
\[
\prescript{\sigma_{1}}{}{\left(\mathrm{T}f\right)}^{\sigma_{2}}=\mathrm{T}\left(\prescript{\sigma_{1}}{}f^{\sigma_{2}}\right)=\mathrm{T}f,
\]
 where the first equality used the fact that $\mathrm{T}$ commutes
with the action of $S_{n}$ from both sides, and the second inequality
follows from Lemma~\ref{lem:V_AB bimodule}.
The `moreover' part follows from Lemma~\ref{lem:V_AB spanning set}.
\end{proof}
\begin{lem}
\label{lem:contains a junta} Let $\lambda$ be an eigenvalue of $\mathrm{T}^{\left(\rho\right)}$
as an operator from $V_{d}$ to itself. Let $V_{d,\lambda}$ be
the eigenspace corresponding to $\lambda$. Then $V_{d,\lambda}$
contains a $d$-junta.
\end{lem}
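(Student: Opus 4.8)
The goal is to show that the eigenspace $V_{d,\lambda}$ of $\mathrm{T}^{(\rho)}$, viewed as an operator on $V_d$, contains a $d$-junta. The natural strategy is to exploit the fact that the $d$-juntas span $V_d$ (Lemma~\ref{lem:V_AB spanning set}) together with the fact that $\mathrm{T}^{(\rho)}$ preserves each $V_{A,B}$ (Lemma~\ref{lem:abjuntas}). First I would recall that $\mathrm{T}^{(\rho)}$ is self-adjoint — this follows because each of $\mathrm{T}_{S_n\to L^m}$, $\mathrm{T}_\rho$, $\mathrm{T}_{L^m\to S_n}$ is either self-adjoint or the adjoint of another, so $\mathrm{T}^{(\rho)}=\mathrm{T}_{L^m\to S_n}\mathrm{T}_\rho\mathrm{T}_{S_n\to L^m}$ is self-adjoint (and moreover positive semidefinite). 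Consequently $V_d$ decomposes orthogonally into eigenspaces $V_{d,\lambda}$, and the spectral projection $\Pi_\lambda$ onto $V_{d,\lambda}$ is a polynomial in $\mathrm{T}^{(\rho)}$ with zero constant term when $\lambda\neq 0$ (and in general is a limit / polynomial in $\mathrm{T}^{(\rho)}$ restricted to $V_d$); in particular $\Pi_\lambda$ also commutes with the bimodule action of $S_n$ and preserves every $V_{A,B}$.

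\textbf{Key steps.} Pick any $A,B$ with $|A|=|B|=d$ for which $\Pi_\lambda V_{A,B} \neq \{0\}$; such a pair exists because the $V_{A,B}$ (over all such $A,B$) span $V_d$ by Lemma~\ref{lem:V_AB spanning set}, so if $\Pi_\lambda$ killed every $V_{A,B}$ it would kill all of $V_d$, contradicting that $\lambda$ is an actual eigenvalue. Now take any $f\in V_{A,B}$ with $\Pi_\lambda f\neq 0$. Then $g \defeq \Pi_\lambda f$ lies in $V_{d,\lambda}$ by definition of $\Pi_\lambda$, and it lies in $V_{A,B}$ because $\Pi_\lambda$ preserves $V_{A,B}$ (it is a polynomial in $\mathrm{T}^{(\rho)}$, which preserves $V_{A,B}$ by Lemma~\ref{lem:abjuntas}). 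But $g\in V_{A,B}$ means exactly that $g$ is an $(A,B)$-junta, hence a $d$-junta, and $g\neq 0$ is in $V_{d,\lambda}$. This produces the desired nonzero $d$-junta in the eigenspace.

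\textbf{Main obstacle.} The only delicate point is justifying that the spectral projection $\Pi_\lambda$ is a polynomial in $\mathrm{T}^{(\rho)}$ (so that it inherits the invariance of $V_{A,B}$ from Lemma~\ref{lem:abjuntas}), rather than merely an abstract orthogonal projection. This is where self-adjointness of $\mathrm{T}^{(\rho)}$ on the finite-dimensional space $V_d$ is used: the minimal polynomial of $\mathrm{T}^{(\rho)}|_{V_d}$ has distinct roots (the finitely many eigenvalues $\lambda_1,\dots,\lambda_\ell$ on $V_d$), so by Lagrange interpolation there is a polynomial $P_\lambda$ with $P_\lambda(\lambda_i)=\mathbbm{1}[\lambda_i=\lambda]$, and then $\Pi_\lambda = P_\lambda(\mathrm{T}^{(\rho)})$ on $V_d$. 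Since $V_{A,B}\subseteq V_d$ is $\mathrm{T}^{(\rho)}$-invariant, it is $P_\lambda(\mathrm{T}^{(\rho)})$-invariant, which is all that is needed. (Alternatively one can avoid explicitly invoking self-adjointness by noting that $\mathrm{T}^{(\rho)}|_{V_d}$ is at least diagonalizable, e.g.\ because it is conjugate to a self-adjoint operator via the coupling, but the self-adjoint route is cleanest.) Everything else is formal bookkeeping with the decomposition $V_d = \sum_{A,B} V_{A,B}$ and the commutation properties already established.
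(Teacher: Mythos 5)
Your proof is correct and rests on the same two pillars as the paper's: the $\mathrm{T}^{(\rho)}$-invariance of the spaces $V_{A,B}$ (Lemma~\ref{lem:abjuntas}) together with the fact that these spaces span $V_d$. The paper packages this slightly differently --- it diagonalizes $\mathrm{T}^{(\rho)}$ on each $V_{A,B}$ separately and identifies the sum of the resulting eigenspaces with $V_{d,\lambda}$ by uniqueness of the eigenspace decomposition --- whereas you apply the global spectral projection $\Pi_\lambda$ (realized as a polynomial in $\mathrm{T}^{(\rho)}$ via Lagrange interpolation) to a $V_{A,B}$ it does not annihilate; both routes need the diagonalizability coming from self-adjointness of $\mathrm{T}^{(\rho)}$, which you are right to make explicit since the paper leaves it tacit.
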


\begin{proof}
Since each space $V_{A,B}$ is $\mathrm{T}^{\left(\rho\right)}$ invariant,
we may decompose each $V_{A,B}$ into eigenspaces $V_{A,B}^{\left(\lambda\right)}$.
Let
\[
V_{d}^{(\lambda)}
=
\sum_{|A|,|B|\leq d}V_{A,B}^{\left(\lambda\right)}.
\]
 Then for each $\lambda$, $V_{d}^{(\lambda)}$ is an eigenspaces of $\mathrm{T}^{\left(\rho\right)}$ with eigenvalue $\lambda$, and
\[
\sum_{\lambda}V_{d}^{(\lambda)}=\sum_{|A|,|B|\leq d}V_{A,B}=V_{d} = \sum_{\lambda}V_{d,\lambda}.
\]
By uniqueness, it follows that $V_{d,\lambda} = V_{d}^{(\lambda)}$ for all $\lambda$. Fix $\lambda$; then we get that
there are $|A|,|B|\leq d$ such that $V_{A,B}^{\lambda}\subseteq V_{d,\lambda}$, and since any function in $V_{A,B}$ is
a $d$-junta by definition, the proof is concluded.
\end{proof}

We comment that the representation theory of $S_n$ supplies us with explicit formulas for $2d$-juntas in $V_{d,\lambda}$ (arising in the construction of Specht modules), which can be turned into $d$-juntas by symmetrization. Since we will not need such explicit formulas here, we skip this description.

\subsection{Finding a basis for $V_{A,B}$}
We now move on to the study of the spaces $V_{A,B}$. These spaces
have small dimension and are therefore easy to analyse. We first construct
a set $\left\{ v_{T}\right\} $ of functions in $V_{A,B}$ that form
a nearly-orthonormal basis.
\begin{defn}
Let $T=\left\{ \left(i_{1},j_{1}\right),\ldots,\left(i_{k},j_{k}\right)\right\} \subseteq\left[d\right]^{2}$
be  consistent. Let $1_{T}$ be the indicator
function of permutation $\pi$ in $S_{n}$ that satisfy the
restrictions given by $T$, i.e.\ $\pi\left(i_{1}\right)=j_{1},\ldots,\pi\left(i_{i_{k}}\right)=j_{k}$.
We define $v_{T}=\frac{1_{T}}{\|1_{T}\|_{2}}$.
\end{defn}

Since the spaces $V_{A,B}$ are isomorphic (as $S_{n-d}$ bimodules)
for all sets $A,B$ of size $d$, we shall focus on the case where
$A = B =[d]$.
\begin{lem}
Let $d\le\frac{n}{2}$, and let $T\ne S$ be sets of size $d$. Then
$\left\langle v_{T},v_{S}\right\rangle \le O\left(\frac{1}{n}\right)$.
\end{lem}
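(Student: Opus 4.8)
The statement reduces to an explicit count, since by definition $\langle v_T,v_S\rangle=\langle 1_T,1_S\rangle/\bigl(\|1_T\|_2\,\|1_S\|_2\bigr)$, and each of these inner products merely counts permutations. First I would record the normalizations: $S_n^T$ consists of exactly the $(n-d)!$ permutations agreeing with the $d$ values prescribed by $T$, so $\|1_T\|_2^2=\Pr_{\pi\in S_n}[\pi\in S_n^T]=(n-d)!/n!$, and likewise $\|1_S\|_2^2=(n-d)!/n!$. For the numerator, $\langle 1_T,1_S\rangle=\Pr_\pi[\pi\in S_n^T\cap S_n^S]=\Pr_\pi[\pi\text{ respects every pair of }T\cup S]$.

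Next I would split on whether $T\cup S$ is consistent. If it is not, then $S_n^T\cap S_n^S=\emptyset$, hence $\langle v_T,v_S\rangle=0$ and there is nothing to prove. (This already disposes of the case where $T,S$ have the full size $d$ inside $[d]^2$, since then $T\cup S$ cannot be a partial matching.) If $T\cup S$ is consistent, write $|T\cup S|=d+j$; then $\langle 1_T,1_S\rangle=(n-d-j)!/n!$, and because $T$ and $S$ have the \emph{same} size the denominator is exactly $(n-d)!/n!$, so that
\[
\langle v_T,v_S\rangle \;=\; \frac{(n-d-j)!}{(n-d)!} \;=\; \prod_{i=0}^{j-1}\frac{1}{\,n-d-i\,}.
\]

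The combinatorial input is that $j\ge 1$: since $|T|=|S|$ and $T\ne S$, neither set contains the other, so $T\cup S$ strictly contains $S$. (This is exactly where equality of the sizes is used, not just $T\ne S$; for a nested pair $S\subsetneq T$ the analogous estimate would only give $O(1/\sqrt n)$.) Given $j\ge 1$ the product above is nonempty, every factor is positive since $n-d-j+1\ge n-2d+1\ge 1$ by the hypothesis $d\le n/2$, and the single factor $n-d$ is already at least $n/2$; hence the product is at least $n/2$, and $\langle v_T,v_S\rangle\le 2/n=O(1/n)$.

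I do not expect a genuine obstacle here — after the normalizations it is a one-line count followed by a falling-factorial estimate. The only points requiring care are the dichotomy on consistency of $T\cup S$, and the observation that it is equality of the sizes of $T$ and $S$ (rather than merely $T\ne S$) that rules out the nesting which would otherwise weaken the bound; the hypothesis $d\le n/2$ then enters only to convert the falling-factorial ratio into a clean $O(1/n)$.
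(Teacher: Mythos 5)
Your proof is correct and follows the same route as the paper's: split on whether $T\cup S$ is consistent, compute $\langle v_T,v_S\rangle=(n-|T\cup S|)!/(n-d)!$ in the consistent case, note $|T\cup S|\ge d+1$ since $T\ne S$ have equal size, and bound the ratio by $1/(n-d)\le 2/n$ using $d\le n/2$. The extra observations (the nesting caveat and the remark about size-$d$ subsets of $[d]^2$) are correct but not needed.
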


\begin{proof}
If $T\cup S$ is not consistent, then $1_{T}1_{S}=0$ and so $\left\langle v_{T},v_{S}\right\rangle =0$.
Otherwise,
\[
\left\langle v_{T},v_{S}\right\rangle =\frac{\mathbb{E}\left|1_{T\cup S}\right|}{\|1_{T}\|_{2}\|1_{S}\|_{2}}=\frac{\left(n-\left|T\cup S\right|\right)!}{\sqrt{\left(n-\left|T\right|\right)!\left(n-\left|S\right|\right)!}}\le \frac{(n-d-1)!}{(n-d)!} =
 O\left(\frac{1}{n}\right).\qedhere
\]
\end{proof}
\begin{prop}
\label{large inner product} There exists an absolute constant $c>0$
such that for all consistent $T\subseteq L$ we have
\[
\left\langle \mathrm{T}^{\left(\rho\right)}v_{T},v_{T}\right\rangle \ge\left(c\rho\right)^{\left|T\right|}.
\]
\end{prop}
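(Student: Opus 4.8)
\textbf{Proof proposal for Proposition~\ref{large inner product}.}
The plan is to exploit the definition $\mathrm{T}^{(\rho)} = \mathrm{T}_{L^m\to S_n}\mathrm{T}_\rho\mathrm{T}_{S_n\to L^m}$ and unfold the inner product along the coupling, reducing everything to an explicit, positive event in the product space $L^m$. First I would write
\[
\left\langle \mathrm{T}^{(\rho)}v_T, v_T\right\rangle
= \left\langle \mathrm{T}_\rho \mathrm{T}_{S_n\to L^m} v_T,\; \mathrm{T}_{S_n\to L^m} v_T\right\rangle,
\]
using that $\mathrm{T}_{L^m\to S_n}$ is the adjoint of $\mathrm{T}_{S_n\to L^m}$ (both come from the same coupling measure $\mu$). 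Set $g = \mathrm{T}_{S_n\to L^m} v_T \in L^2(L^m)$; then the right-hand side is $\langle \mathrm{T}_\rho g, g\rangle = \sum_{S} \rho^{|S|}\,\|g^{=S}\|_2^2$ summing over Fourier "sets" $S\subseteq[m]$ of the product space, where I am using the standard Fourier/Efron--Stein decomposition on $L^m$. Since all these terms are nonnegative, it suffices to isolate one term that is already large enough: I will argue that $g$ has a non-negligible amount of Fourier mass supported on sets $S$ of size at most $|T|$, which will give a contribution of at least $\rho^{|T|}$ times that mass.

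The heart of the matter is therefore to understand $g = \mathrm{T}_{S_n\to L^m} v_T$ well enough to lower bound its low-level Fourier mass. Here I would use the combinatorial meaning of the coupling: given $\mathbf{x}\sim L^m$, the greedy procedure builds a consistent set of pairs and then a uniform consistent permutation $\boldsymbol\pi$; so $g(x) = \mathbb{E}[v_T(\boldsymbol\pi)\mid \mathbf{x}=x]$, which (up to the normalizing factor $1/\|1_T\|_2$) is the conditional probability that the sampled permutation respects $T$. The key observation is that whether $\boldsymbol\pi$ respects $T$ depends on $x$ only through a bounded amount of information — essentially, whether the pairs of $T$ (or pairs conflicting with $T$) show up, and in what order, among the first few "decisive" coordinates. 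Concretely I would show that $g$ is well-approximated (or exactly equal, after a harmless modification of the coupling parameters, or with an error that is $O(1/n)$) by a function depending on $O(|T|)$ coordinates of $x$, and that this junta has a genuinely nonzero projection onto levels $\le |T|$. A clean way to extract the bound: compute $\langle g, h\rangle$ for a suitable explicit low-degree test function $h$ on $L^m$ (for instance a product of indicators $\prod_{(i,j)\in T} 1_{x_k = (i,j)}$-type objects, or their Fourier characters), show this inner product is $\ge \Omega_{|T|}(1)$ in absolute value while $\|h\|_2$ is controlled, and conclude that $\sum_{|S|\le |T|}\|g^{=S}\|_2^2 \ge c^{|T|}$ for an absolute $c$; then $\langle \mathrm{T}_\rho g,g\rangle \ge \rho^{|T|}\sum_{|S|\le|T|}\|g^{=S}\|_2^2 \ge (c\rho)^{|T|}$.

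An alternative, possibly cleaner route that avoids Fourier bookkeeping on $L^m$ altogether: observe that $\mathrm{T}^{(\rho)}$ preserves the spaces $V_{A,B}$ (Lemma~\ref{lem:abjuntas}), so on the finite-dimensional space $V_{[d],[d]}$ it acts as a fixed matrix, and $\rho^{|T|}$ times the ``identity coupling'' piece sits inside it; one can then try to bound $\langle \mathrm{T}^{(\rho)} v_T, v_T\rangle$ from below by tracking the event that, in the coupling, the greedy set $T'$ actually contains $T$ (probability $\ge \rho^{|T|}$-ish, since each pair of $T$ must survive noise and get greedily selected), and on that event the output permutation respects $T$ so $v_T(\boldsymbol\pi)\cdot v_T(\pi_{\text{in}})$ contributes positively. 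Making this precise requires care that \emph{all} other contributions are nonnegative in expectation — which is exactly why the Fourier-positivity viewpoint $\langle \mathrm{T}_\rho g,g\rangle = \sum_S \rho^{|S|}\|g^{=S}\|_2^2 \ge 0$ term-by-term is the safe framework to work in.

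\textbf{Main obstacle.} The delicate point is the quantitative lower bound on the low-level Fourier mass of $g=\mathrm{T}_{S_n\to L^m}v_T$: one must show that the ``conditioning on $\mathbf{x}=x$'' does not wash out all correlation with $T$ down to a $1/\mathrm{poly}(n)$ level that would be too weak, i.e.\ that $g$ genuinely looks like a normalized $|T|$-junta rather than a nearly-constant function. This is where the precise choice of $m$ (polynomial in $n$, e.g.\ $m=n^2$) and the greedy structure of the coupling must be used, and where the constant $c$ gets pinned down; I expect this to require an explicit computation of a single inner product $\langle g, h\rangle$ as above, together with an $O(1/n)$-type error analysis showing lower-order coordinates contribute negligibly.
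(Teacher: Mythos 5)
Your overall framing — unfold $\mathrm{T}^{(\rho)}$ through the coupling and isolate a positive contribution — is the right one, and your ``alternative route'' is essentially the paper's proof. But as written there is a genuine gap in both of your routes, and in both cases it is the same gap: the quantitative step that makes the powers of $n$ cancel. In route (a), reducing to $\langle \mathrm{T}_\rho g, g\rangle = \sum_S \rho^{|S|}\|g^{=S}\|_2^2$ is fine, but the claim $\sum_{|S|\le|T|}\|g^{=S}\|_2^2\ge c^{|T|}$ is essentially the proposition itself, and your proposed test-function computation does not deliver it: for a single test function $h_S=1_{x_S=T}$ one gets $\langle g,h_S\rangle/\|h_S\|_2 \approx n^{-|T|/2}$, i.e.\ only $n^{-|T|}$ worth of Fourier mass, which is far too weak. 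One must sum over all $\approx m^{|T|}$ position sets $S$ and control the off-diagonal correlations, and this is exactly the bookkeeping you have deferred.

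In route (b) you stop short for the wrong reason. No Fourier positivity is needed: since $1_T\ge 0$ and all three operators are conditional-expectation operators over the joint coupling, $\langle\mathrm{T}^{(\rho)}1_T,1_T\rangle$ \emph{is} the probability $\Pr[\sigma_x\in S_n^T \text{ and } \sigma_y\in S_n^T]$, so any sub-event gives a valid lower bound with no worry about other contributions. The paper proceeds exactly this way: it sums over the disjoint events $A_S$ (indexed by the ordered positions $S$ at which the pairs of $T$ sit in both $x$ and $y$), requiring additionally that no other coordinate of $x$ or $y$ lies in $T$ or contradicts $T$ — a condition you omit, and without which $y$'s greedy pass is not guaranteed to select $T$ even if the pairs survive the noise. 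Each $A_S$ has probability $\ge n^{-2|T|}\,\Omega(\rho)^{|T|}$, there are $\approx n^{|T|}$ choices of $S$ (up to constants depending on $|T|$), and the normalization $\|1_T\|_2^{-2}=n!/(n-|T|)!\approx n^{|T|}$ exactly absorbs the remaining $n^{-|T|}$; your ``probability $\ge\rho^{|T|}$-ish'' skips this cancellation, which is the actual content of the proof. So the missing ingredients are concrete: (i) pass to the unnormalized $1_T$ and interpret the quantity as a probability, (ii) define the good events carefully enough that both greedy passes are forced to select $T$, and (iii) verify that the $n^{-2|T|}\cdot n^{|T|}\cdot n^{|T|}$ accounting closes.
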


\begin{proof}
Let $x\sim L^{m},y\sim N_{\rho}\left(x\right)$, and let $\sigma_{x},\sigma_{y}\in S_{n}$
be corresponding permutations chosen according to the coupling. We
have
\[
\left\langle \mathrm{T}^{\left(\rho\right)}v_{T},v_{T}\right\rangle =\frac{n!}{\left(n-\left|T\right|\right)!}\left\langle \mathrm{T}^{\left(\rho\right)}1_{T},1_{T}\right\rangle ,
\]
as $\norm{1_T}_2^2=\frac{\left(n-\left|T\right|\right)!}{n!}$.
We now interpret $\left\langle \mathrm{T}^{\left(\rho\right)}1_{T},1_{T}\right\rangle $
as the probability that both $\sigma_{x}$ and $\sigma_{y}$ satisfy
the restrictions given by $T$. For each ordered subset $S\subseteq\left[2n\right]$
of size $\left|T\right|$ consider the event $A_{S}$ that $x_{S}=y_{S}=T$,
while all the coordinates of the vectors $x_{\left[2n\right]\setminus S},y_{\left[2n\right]\setminus S}$
do not contradict $T$ and do not belong to $T$. Then
\[
\left\langle \mathrm{T}^{\left(\rho\right)}x_{T},x_{T}\right\rangle \ge\sum_{S\text{ an ordered $|T|$-subset of }\left[2n\right]}\Pr\left[A_{S}\right].
\]

Now the probability that $x_{S}=T$ is $\left(\frac{1}{n}\right)^{2\left|T\right|}$.
Conditioned on $x_{S}=T$, the probability that $y_{S}=T$ is at least
$\rho^{\left|T\right|}$. When we condition on $x_{S}=y_{S}=T$, we
obtain that the probability that $x_{\left[n\right]\setminus S}$
and $y_{\left[n\right]\setminus S}$ do not involve any coordinate
contradicting $T$ or in $T$ is at least $\left(1-\frac{2\left|T\right|}{n}\right)^{2n}=2^{-\Theta\left(|T|\right)}$.
Hence $\Pr\left[A_{S}\right]\ge\left(\frac{1}{n}\right)^{2\left|T\right|}\Omega\left(\rho\right)^{\left|T\right|}$.
So wrapping everything up we obtain that
\[
\left\langle \mathrm{T}^{\left(\rho\right)}v_{T},v_{T}\right\rangle \ge
\frac{(2n)!}{(2n-|T|)!}\cdot\frac{n!}{(n-|T|)!}\frac{1}{n^{2|T|}} \Omega(\rho)^{|T|}
=\Omega\left(\rho\right)^{\left|T\right|}.\qedhere
\]
\end{proof}
\begin{lem}
\label{lem:small inner product} Let $\rho\in\left(0,1\right)$. Then
for all sets $T\ne S$ of size at most $n/2$ we have
$\left\langle \mathrm{T}^{\left(\rho\right)}v_{T},v_{S}\right\rangle =O\left(\frac{1}{\sqrt{n}}\right)$.
\end{lem}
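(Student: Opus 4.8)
The plan is to evaluate $\langle\mathrm{T}^{(\rho)}v_T,v_S\rangle$ probabilistically, exactly as in the proof of Proposition~\ref{large inner product}. Write $t=|T|$, $s=|S|$, and recall $\|1_T\|_2^2=(n-t)!/n!$. Let $\sigma_x,\sigma_y$ be the coupled permutations produced from $x\sim L^m$ and its $\rho$-noisy copy $y$ (so $\sigma_x$ is built from $x$ and $\sigma_y$ from $y$), and write $\sigma\models T$ to mean that $\sigma$ satisfies the restrictions given by $T$. Then, just as there,
\[
\langle\mathrm{T}^{(\rho)}v_T,v_S\rangle=\frac{n!}{\sqrt{(n-t)!\,(n-s)!}}\,\langle\mathrm{T}^{(\rho)}1_T,1_S\rangle=\frac{n!}{\sqrt{(n-t)!\,(n-s)!}}\,\Pr\big[\sigma_x\models T\ \wedge\ \sigma_y\models S\big],
\]
so it suffices to bound this probability.

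If $t\ne s$, say $t>s$, then bounding $\Pr[\sigma_x\models T\wedge\sigma_y\models S]\le\Pr[\sigma_x\models T]=(n-t)!/n!$ already gives $\langle\mathrm{T}^{(\rho)}v_T,v_S\rangle\le\sqrt{(n-t)!/(n-s)!}$, which is $O(1/\sqrt n)$ since $t-s\ge1$ and $t\le n/2$; the case $s>t$ is symmetric. (This is where the bound $1/\sqrt n$, rather than $1/n$, actually occurs.) So assume henceforth $t=s=:d$, and note $d\ge1$ since $T\ne S$.

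By Lemma~\ref{lem:Trho is symmetric} the operator $\mathrm{T}^{(\rho)}$ commutes with the $S_n$-bimodule action, so the joint law of $(\sigma_x,\sigma_y)$ is invariant under $(\sigma,\tau)\mapsto(\alpha\sigma\beta,\alpha\tau\beta)$ for all $\alpha,\beta\in S_n$. Taking $\beta=\mathrm{id}$ shows that $g:=\sigma_x^{-1}\sigma_y$ is independent of $\sigma_x$ and that $\sigma_x$ is uniform on $S_n$, while taking $\alpha=\beta^{-1}$ shows that the law $\nu$ of $g$ is conjugation-invariant. Since $\sigma_y\models S$ is equivalent to $\sigma_x\models S'_g$, where $S'_g:=\{(g(i),j):(i,j)\in S\}$ is again a partial matching, conditioning on $g$ gives
\[
\Pr[\sigma_x\models T\wedge\sigma_y\models S]=\mathbb{E}_{g\sim\nu}\Big[\mathbf{1}\big[T\cup S'_g\text{ is a partial matching}\big]\cdot\frac{(n-|T\cup S'_g|)!}{n!}\Big].
\]

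When $T\cup S'_g$ is a partial matching we have $|T\cup S'_g|=2d-|T\cap S'_g|$, so $|T\cup S'_g|\ge d+1$ unless $S'_g=T$ (in which case $|T\cup S'_g|=d$). The $g$'s with $|T\cup S'_g|\ge d+1$ contribute at most $(n-d-1)!/n!$. For the event $\{S'_g=T\}$: comparing second coordinates forces $S$ and $T$ to have the same set $B$ of second coordinates, and forces $g$ to send the set $A_S$ of first coordinates of $S$ onto the set $A_T$ of first coordinates of $T$ via the bijection $\psi$ that pairs, for each $j\in B$, the $S$-partner of $j$ with the $T$-partner of $j$. Because $T\ne S$ we cannot have $\psi=\mathrm{id}_{A_S}$, so there is $a\in A_S$ with $\psi(a)\ne a$, and hence $\Pr_{g\sim\nu}[S'_g=T]\le\Pr_{g\sim\nu}[g(a)=\psi(a)]\le\frac1{n-1}$, where the last step uses that conjugation-invariance of $\nu$ makes all off-diagonal transition probabilities $\Pr_{g\sim\nu}[g(a')=b']$ (with $a'\ne b'$) equal, so each is at most $\frac1{n-1}\Pr[g(a')\ne a']\le\frac1{n-1}$. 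Thus $\Pr[\sigma_x\models T\wedge\sigma_y\models S]\le\frac1{n-1}\cdot\frac{(n-d)!}{n!}+\frac{(n-d-1)!}{n!}\le\frac{2(n-d-1)!}{n!}$, and therefore
\[
\langle\mathrm{T}^{(\rho)}v_T,v_S\rangle\le\frac{n!}{(n-d)!}\cdot\frac{2(n-d-1)!}{n!}=\frac{2}{n-d}\le\frac4n=O\!\left(\frac1{\sqrt n}\right),
\]
using $d\le n/2$. The conceptual crux is the reduction to the single random element $g=\sigma_x^{-1}\sigma_y$, which rests entirely on the symmetry of $\mathrm{T}^{(\rho)}$ (Lemma~\ref{lem:Trho is symmetric}); the main work is then the combinatorial bookkeeping of the last paragraph — pinning down exactly when $|T\cup S'_g|=d$ and what this forces on $g$ — and everything else is routine arithmetic.
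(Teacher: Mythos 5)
Your proof is correct, and for the main (equal-size) case it takes a genuinely different route from the paper's. Both arguments share the same skeleton: interpret $\langle \mathrm{T}^{(\rho)}v_T,v_S\rangle$ as a normalized probability for the coupled pair $(\sigma_x,\sigma_y)$, and dispose of the case $|T|\neq|S|$ by the trivial bound $\Pr[\sigma_x\models T]$ (this is also where the paper's $1/\sqrt{n}$ comes from). For $|T|=|S|=d$, the paper fixes a single pair $(i,j)\in S\setminus T$, bounds the probability by $\frac{1}{n}\Pr[\sigma_x\models T\mid \sigma_y(i)=j]$, and then argues case by case (conditioning on whether $\sigma_x(i)=j$) that the conditional law of $\sigma_x$ is uniform on the relevant coset, so each conditional probability is $O(\|1_T\|_2^2)$. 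You instead package all the symmetry into the single random element $g=\sigma_x^{-1}\sigma_y$: bi-invariance of the joint law (which does follow from Lemma~\ref{lem:Trho is symmetric} together with uniformity of the marginals) gives that $\sigma_x$ is uniform and independent of $g$ and that the law of $g$ is conjugation-invariant, after which the whole computation reduces to bounding $\Pr[S'_g=T]\le \Pr[g(a)=\psi(a)]\le \frac{1}{n-1}$ for a conjugation-invariant permutation and an off-diagonal pair. Your route is somewhat cleaner in that the symmetry assumptions are stated and used explicitly once, rather than invoked implicitly in the paper's conditional-uniformity claims, and it yields the sharper $O(1/n)$ bound in the equal-size case just as the paper's does; the paper's version avoids introducing $g$ and its conjugation-invariance, at the cost of a slightly more ad hoc case analysis. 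All the combinatorial bookkeeping in your last paragraph (when $|T\cup S'_g|=d$, what $S'_g=T$ forces on $g$, and the $1/(n-1)$ bound) checks out.
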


\begin{proof}
Suppose without loss of generality that $\|1_{T}\|_{2}^{2}\le\|1_{S}\|_{2}^{2}$,
so $\left|T\right|\ge\left|S\right|$. Choose $x\sim L^{m},y\sim N_{\rho}\left(x\right)$,
and let $\sigma_{x},\sigma_{y}$ by the corresponding random permutations
given by the coupling. We have
\[
\left\langle \mathrm{T}^{\left(\rho\right)}v_{T},v_{S}\right\rangle =\text{\ensuremath{\frac{\Pr\left[1_{T}\left(\sigma_{x}\right)=1,1_{S}\left(\sigma_{y}\right)=1\right]}{\sqrt{\mathbb{E}1_{T}\mathbb{E}1_{S}}}}}
\]
 As the probability in the numerator is at most $\mathbb{E}\left[1_{T}\right]$,
we have
\[
\left\langle T^{\left(\rho\right)}v_{T},v_{S}\right\rangle \le\sqrt{\frac{\mathbb{E}\left[1_{T}\right]}{\mathbb{E}\left[1_{S}\right]}}=\sqrt{\frac{\left(n-\left|T\right|\right)!}{\left(n-\left|S\right|\right)!}},
\]
and the proposition follows in the case that $\left|S\right|<\left|T\right|$.

It remains to prove the proposition provided that $\left|S\right|=\left|T\right|$.
Let $\left(i,j\right)\in S\setminus T$. Note that
\[
\Pr\left[1_{T}\left(\sigma_{x}\right)=1,1_{S}\left(\sigma_{y}\right)=1\right]
\leq
\frac{1}{n}\Pr\left[1_{T}\left(\sigma_{x}\right)=1 \mid \sigma_{y}\left(i\right)=j\right].
\]
Let us condition further on $\sigma_{x}\left(i\right)$. Conditioned
on $\sigma_{x}\left(i\right)=j$, we have that $\sigma_{x}$ is
a random permutation sending $i$ to $j$, and so
$\Pr\left[1_{T}\left(\sigma_{X}\right)=1\right]$ is either
0 (if $\left(i,j\right)$ contradicts $T$) or $\frac{\left(n-1-\left|T\right|\right)!}{\left(n-1\right)!}=O\left(\|1_{T}\|_{2}^{2}\right)$
(if $\left(i,j\right)$ is consistent with $T$).

Conditioned on $\sigma_{x}\left(i\right)\ne j$ (and on $\sigma_{y}\left(i\right)=j$),
we again obtain that $\sigma_{x}$ is a random permutation that does
not send $i$ to $j$, in which case
\[
\Pr\left[1_{T}\left(\sigma_{x}\right)=1\right]=\frac{\left(n-\left|T\right|\right)!}{n!-\left(n-1\right)!}=O\left(\|1_{T}\|_{2}^{2}\right)
\]
 if $\left(i,j\right)$ contradicts $T$, and
\[
\Pr\left[1_{T}\left(\sigma_{x}\right)=1\right]=\frac{\left(n-\left|T\right|\right)!-\left(n-\left|T\right|-1\right)!}{n!-\left(n-1\right)!}=O\left(\|1_{T}\|_{2}^{2}\right)
\]
if $\left(i,j\right)$ is consistent with $T$. This completes the
proof of the lemma.
\end{proof}

\begin{prop}
\label{prop:large inner product} Let $C$ be a sufficiently large
constant. If $n\ge\left(\frac{\rho}{C}\right)^{-d}C^{d^{2}}$ and
$f$ is a $d$-junta, then
\[
\left\langle \mathrm{T}^{\left(\rho\right)}f,f\right\rangle \ge\rho^{O\left(d\right)}\|f\|_{2}^{2}.
\]
\end{prop}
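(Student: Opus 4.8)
\textbf{Proof proposal for Proposition~\ref{prop:large inner product}.}
The plan is to diagonalize $\mathrm{T}^{(\rho)}$ on the space spanned by the basis $\{v_T\}_{|T|\le d}$ of $V_{[d],[d]}$, using the estimates we have just proven on the matrix entries $\langle \mathrm{T}^{(\rho)} v_T, v_S\rangle$. By Lemma~\ref{lem:abjuntas} every $d$-junta lies in some $V_{A,B}$ with $|A|=|B|=d$, and all such spaces are isomorphic as $S_{n-d}$-bimodules, so it suffices to treat $f\in V_{[d],[d]}$. Write $f = \sum_{T} c_T v_T$ where $T$ ranges over consistent subsets of $[d]^2$; there are at most $\sum_{k=0}^d \binom{d}{k}^2 k! \le (d+1)\cdot d!\, d^d = 2^{O(d\log d)}$ such $T$, a bound depending only on $d$. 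I will split $\langle \mathrm{T}^{(\rho)} f, f\rangle$ into diagonal and off-diagonal contributions.

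First I would expand
\[
\langle \mathrm{T}^{(\rho)} f, f\rangle = \sum_T c_T^2 \langle \mathrm{T}^{(\rho)} v_T, v_T\rangle + \sum_{T\ne S} c_T c_S \langle \mathrm{T}^{(\rho)} v_T, v_S\rangle.
\]
For the diagonal term, Proposition~\ref{large inner product} gives $\langle \mathrm{T}^{(\rho)} v_T, v_T\rangle \ge (c\rho)^{|T|} \ge (c\rho)^d =: \rho^{O(d)}$ (assuming $\rho<1$, $c<1$), so the diagonal part is at least $\rho^{O(d)} \sum_T c_T^2$. For the off-diagonal term, Lemma~\ref{lem:small inner product} bounds each entry by $O(1/\sqrt n)$, so by Cauchy--Schwarz (or just $|c_Tc_S|\le \tfrac12(c_T^2+c_S^2)$) the off-diagonal contribution is at most $O(N^2/\sqrt n)\sum_T c_T^2$, where $N = 2^{O(d\log d)}$ is the number of indices. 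Likewise, the near-orthonormality lemma ($\langle v_T,v_S\rangle = O(1/n)$ for $T\ne S$) gives $\|f\|_2^2 = \sum_T c_T^2 + \sum_{T\ne S} c_Tc_S\langle v_T,v_S\rangle \le (1 + O(N^2/n))\sum_T c_T^2$, and similarly $\|f\|_2^2 \ge (1 - O(N^2/n))\sum_T c_T^2$. Hence, as long as $n$ is large enough that $N^2/\sqrt n$ is smaller than a suitable constant multiple of $\rho^{O(d)}$ — which is exactly guaranteed by the hypothesis $n \ge (\rho/C)^{-d} C^{d^2}$ once one checks $N^2 = 2^{O(d\log d)} \le C^{d^2}$ and $\rho^{-O(d)} \le (\rho/C)^{-d}$ — the diagonal term dominates and we conclude $\langle \mathrm{T}^{(\rho)} f, f\rangle \ge \tfrac12 \rho^{O(d)}\sum_T c_T^2 \ge \rho^{O(d)}\|f\|_2^2$.

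The main obstacle, and the reason the quantitative hypothesis on $n$ takes the stated form, is book-keeping the error terms: one must verify that the number $N$ of consistent pairs $T\subseteq[d]^2$ is bounded by something like $C^{d^2}$ (so that $N^2$ times the $1/\sqrt n$ off-diagonal bound is still negligible compared to the diagonal lower bound $\rho^{O(d)}$), and that the $\rho^{-O(d)}$ loss from the diagonal estimate is absorbed by the $(\rho/C)^{-d}$ factor in the hypothesis. Everything else is a routine diagonal-dominance argument. One subtlety to state carefully: the exponent $O(d)$ appearing in the conclusion is the same $O(d)$ as in the hypothesis up to the choice of the constant $C$, so I would fix the implied constants at the outset and then verify the inequalities hold for $C$ sufficiently large.
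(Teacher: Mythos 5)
Your proposal is correct and follows essentially the same route as the paper: expand $f=\sum_T c_T v_T$ in the near-orthonormal spanning set of $V_{[d],[d]}$, lower-bound the diagonal via Proposition~\ref{large inner product}, control the off-diagonal via Lemma~\ref{lem:small inner product} and Cauchy--Schwarz, and conclude by diagonal dominance under the stated lower bound on $n$. The only cosmetic difference is that you count the consistent $T\subseteq[d]^2$ more tightly as $2^{O(d\log d)}$ where the paper simply uses $2^{O(d^2)}$; both suffice.
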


\begin{proof}
Since $\left\{ v_{T}\right\} _{T\subseteq\left[d\right]^{2}}$ span
the space $V_{\left[d\right],\left[d\right]}$ of $\left(\left[d\right],\left[d\right]\right)$-juntas by Lemma~\ref{lem:V_AB spanning set},
we may write $f=\sum a_{T}v_{T}$. Now
\begin{align*}
\left\langle \mathrm{T}^{\left(\rho\right)}f,f\right\rangle  & =\sum_{T}a_{T}^{2}\left\langle \mathrm{T}^{\left(\rho\right)}v_{T},v_{T}\right\rangle +\sum_{T\ne S}a_{T}a_{S}\left\langle \mathrm{T}^{\left(\rho\right)}v_{T},v_{S}\right\rangle .
\end{align*}
 By Lemma \ref{lem:small inner product} we have
\begin{align*}
\left|\sum_{T\ne S}a_{T}a_{S}\left\langle \mathrm{T}^{\left(\rho\right)}v_{T},v_{S}\right\rangle \right|
  \le O\left(\sum_{T\ne S}\frac{\left|a_{T}a_{S}\right|}{\sqrt{n}}\right)
  \le O\left(\frac{1}{\sqrt{n}}\right)\left(\sum_{T}\left|a_{T}\right|\right)^{2}
  \le\frac{2^{O\left(d^{2}\right)}}{\sqrt{n}}\left(\sum_{T}\left|a_{T}\right|^{2}\right),
\end{align*}
 where the last inequality is by Cauchy--Schwarz. On the other hand, by Proposition \ref{large inner product}
we have
\[
\sum_{T}a_{T}^{2}\left\langle \mathrm{T}^{\left(\rho\right)}v_{T},v_{T}\right\rangle \ge\rho^{O\left(d\right)}\left(\sum_{T}a_{T}^{2}\right).
\]
Using a similar calculation, one sees that
\[
\|f\|_{2}^{2}=\left(1\pm\frac{2^{O\left(d^{2}\right)}}{n}\right)\sum_{T}a_{T}^{2},
\]
so we get that
\[
\left\langle \mathrm{T}^{\left(\rho\right)}f,f\right\rangle
\geq\left(\rho^{O(d)} - \frac{2^{O\left(d^{2}\right)}}{\sqrt{n}}\right)\sum_{T}a_{T}^{2}
\geq\left(\rho^{O(d)} - \frac{2^{O\left(d^{2}\right)}}{\sqrt{n}}\right)\norm{f}_2^2
\geq \rho^{O(d)}\norm{f}_2^2.\qedhere
\]

\end{proof}
\begin{cor}
\label{cor:eigenvalues} Let $C$ be a sufficiently large absolute
constant. If $n\ge\left(\frac{\rho}{C}\right)^{-d}C^{d^{2}}$ then
all the eigenvalues of $\mathrm{T}^{\left(\rho\right)}$ as an operator
from $V_{d}$ to itself are at least $\rho^{O\left(d\right)}$.
\end{cor}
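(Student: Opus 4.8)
The plan is to read off the corollary from the two main results just established: Lemma~\ref{lem:contains a junta}, which says that every eigenspace of $\mathrm{T}^{(\rho)}$ on $V_d$ contains a $d$-junta, and Proposition~\ref{prop:large inner product}, which gives a quantitative lower bound on $\langle \mathrm{T}^{(\rho)} f, f\rangle$ for $d$-juntas $f$ under exactly the stated hypothesis $n \ge (\rho/C)^{-d} C^{d^2}$.

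First I would fix an eigenvalue $\lambda$ of $\mathrm{T}^{(\rho)}$ acting on $V_d$ (recall $V_d$ is $\mathrm{T}^{(\rho)}$-invariant by Lemma~\ref{lem:abjuntas}, and $\mathrm{T}^{(\rho)}$ is self-adjoint since $\mathrm{T}_{S_n\to L^m}$ and $\mathrm{T}_{L^m\to S_n}$ are mutually adjoint and $\mathrm{T}_\rho$ is self-adjoint, so $V_d$ really does decompose into eigenspaces with real eigenvalues). By Lemma~\ref{lem:contains a junta}, the eigenspace $V_{d,\lambda}$ contains a nonzero $d$-junta $f$. Since $f$ is an eigenvector with eigenvalue $\lambda$, we have $\langle \mathrm{T}^{(\rho)} f, f\rangle = \lambda \|f\|_2^2$.

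On the other hand, since $f$ is a $d$-junta and $n \ge (\rho/C)^{-d}C^{d^2}$, Proposition~\ref{prop:large inner product} gives $\langle \mathrm{T}^{(\rho)} f, f\rangle \ge \rho^{O(d)}\|f\|_2^2$. Combining the two displays and dividing by $\|f\|_2^2 > 0$ yields $\lambda \ge \rho^{O(d)}$. As $\lambda$ was an arbitrary eigenvalue, this proves the corollary.

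I do not expect any genuine obstacle here: the substance of the argument already lives in Proposition~\ref{prop:large inner product} (estimating the diagonal and off-diagonal inner products $\langle \mathrm{T}^{(\rho)} v_T, v_S\rangle$) and in the eigenspace-structure Lemma~\ref{lem:contains a junta}. The only point to be slightly careful about is that $\mathrm{T}^{(\rho)}$ restricted to $V_d$ is diagonalizable, so that "all eigenvalues" is meaningful and the eigenspaces indeed span $V_d$; this is immediate from self-adjointness, as noted above.
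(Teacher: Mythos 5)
Your proposal is correct and matches the paper's own proof: it combines Lemma~\ref{lem:contains a junta} (each eigenspace of $\mathrm{T}^{(\rho)}$ on $V_d$ contains a nonzero $d$-junta) with Proposition~\ref{prop:large inner product} applied to that junta, exactly as the paper does. The added remark on self-adjointness of $\mathrm{T}^{(\rho)}$, justifying the eigenspace decomposition, is a harmless (and correct) elaboration of a point the paper leaves implicit.
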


\begin{proof}
By Lemma~\ref{lem:contains a junta}, each eigenspace $V_{d,\lambda}$
contains a $d$-junta. Let $f\in V_{d,\lambda}$ be
a nonzero $d$-junta. Then by Proposition \ref{prop:large inner product},
\[
\lambda=\frac{\left\langle \mathrm{T}^{\left(\rho\right)}f,f\right\rangle }{\|f\|_{2}^{2}}\ge\rho^{O\left(d\right)}.\qedhere
\]
\end{proof}

\subsection{Showing that the eigenvalues of $\mathrm{T}^{\left(\rho\right)}$
on $V_{d}$ are concentrated on at most $d$ values}

Let $\lambda_{i}\left(\rho\right)=\left\langle \mathrm{T}^{\left(\rho\right)}v_{T},v_{T}\right\rangle $,
where $T$ is a set of size $i$. Then symmetry implies that $\lambda_{i}\left(\rho\right)$
does not depend on the choice of $T$.
\begin{lem}
\label{lem:approximating the eigenvalues}
Suppose that $n\ge\left(\frac{\rho}{C}\right)^{O\left(d\right)}C^{d^{2}}$.
Then each eigenvalue of $\mathrm{T}^{\left(\rho\right)}$ as an operator
on $V_{d}$ is equal to $\lambda_{i}(\rho)\left(1\pm n^{-\frac{1}{3}}\right)$
for some $i\le d$.
\end{lem}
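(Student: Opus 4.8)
The plan is to reduce the statement to a question about the eigenvalues of $\mathrm{T}^{(\rho)}$ on the single space $V_{[d],[d]}$, which has dimension at most $2^{O(d^2)}$, and then to view the matrix of $\mathrm{T}^{(\rho)}$ on $V_{[d],[d]}$ in the basis $\{v_T\}$ as a small perturbation of a diagonal matrix. For the reduction, let $\lambda$ be an eigenvalue of $\mathrm{T}^{(\rho)}\colon V_d\to V_d$. By Lemma~\ref{lem:contains a junta} the eigenspace $V_{d,\lambda}$ contains a nonzero $d$-junta $f\in V_{A,B}$ with $|A|=|B|=d$. A suitable two-sided translation $f\mapsto\prescript{\tau}{}{f}^{\sigma}$ maps $V_{A,B}$ onto $V_{[d],[d]}$ and commutes with $\mathrm{T}^{(\rho)}$ by Lemma~\ref{lem:Trho is symmetric}, so it produces a nonzero $\lambda$-eigenvector in $V_{[d],[d]}$. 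Since $V_{[d],[d]}$ is $\mathrm{T}^{(\rho)}$-invariant (Lemma~\ref{lem:abjuntas}), it therefore suffices to show that every eigenvalue of $\mathrm{T}^{(\rho)}|_{V_{[d],[d]}}$ is within an additive $2^{O(d^2)}/\sqrt n$ of some $\lambda_i(\rho)$ with $i\le d$.

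For the core estimate, recall that the $v_T$ are normalized, that $\langle v_T,v_S\rangle=O(1/n)$ for distinct consistent $T,S\subseteq[d]^2$, and that there are at most $2^{O(d^2)}$ of them; hence their Gram matrix is $G=I+E$ with $\|E\|_{\mathrm{op}}\le\|E\|_F\le 2^{O(d^2)}/n$, and in particular the $v_T$ form a basis of $V_{[d],[d]}$ once $n$ is large. Let $M$ be the symmetric matrix $M_{T,S}=\langle\mathrm{T}^{(\rho)}v_T,v_S\rangle$, symmetry being a consequence of the self-adjointness of $\mathrm{T}^{(\rho)}$. The matrix of $\mathrm{T}^{(\rho)}|_{V_{[d],[d]}}$ in the basis $\{v_T\}$ is $G^{-1}M$, whose eigenvalues coincide with those of the symmetric matrix $\tilde M:=G^{-1/2}MG^{-1/2}$. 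Write $M=D+F$ where $D$ is diagonal with $D_{T,T}=\langle\mathrm{T}^{(\rho)}v_T,v_T\rangle=\lambda_{|T|}(\rho)$ and $F$ has zero diagonal; then $\|F\|_{\mathrm{op}}\le\|F\|_F\le 2^{O(d^2)}/\sqrt n$ by Lemma~\ref{lem:small inner product}. Using $\|G^{-1/2}-I\|_{\mathrm{op}}\le 2^{O(d^2)}/n$ and $\|D\|_{\mathrm{op}}\le\|\mathrm{T}^{(\rho)}\|_{\mathrm{op}}\le 1$ (the last bound because $\mathrm{T}^{(\rho)}$ is a composition of $L^2$-contractions), a routine expansion of $\tilde M=G^{-1/2}(D+F)G^{-1/2}$ gives $\|\tilde M-D\|_{\mathrm{op}}\le 2^{O(d^2)}/\sqrt n$.

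To conclude, observe that the eigenvalues of the diagonal matrix $D$ are exactly the numbers $\lambda_i(\rho)$ for $0\le i\le d$, so by Weyl's perturbation inequality every eigenvalue of $\tilde M$, and hence (by the reduction above) every eigenvalue of $\mathrm{T}^{(\rho)}$ on $V_d$, lies within $2^{O(d^2)}/\sqrt n$ of some $\lambda_i(\rho)$ with $i\le d$. Finally, Proposition~\ref{large inner product} gives $\lambda_i(\rho)\ge(c\rho)^i\ge(c\rho)^d$, and for $C$ large the stated lower bound on $n$ forces $2^{O(d^2)}/\sqrt n\le n^{-1/3}(c\rho)^d\le n^{-1/3}\lambda_i(\rho)$, which is exactly the asserted relation $\lambda=\lambda_i(\rho)(1\pm n^{-1/3})$. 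I expect the two steps needing the most care to be the reduction to $V_{[d],[d]}$ via the bimodule symmetry, and keeping the perturbation bookkeeping clean; the only place the precise hypothesis on $n$ is used is in the final passage from the additive error $2^{O(d^2)}/\sqrt n$ to the multiplicative error $n^{-1/3}$.
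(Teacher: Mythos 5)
Your proof is correct. It uses exactly the same two key estimates as the paper --- the diagonal values $\langle\mathrm{T}^{(\rho)}v_T,v_T\rangle=\lambda_{|T|}(\rho)$, the $O(1/\sqrt n)$ bound on the off-diagonal entries from Lemma~\ref{lem:small inner product}, and the near-orthonormality of the $v_T$ --- but packages the final perturbation step differently. The paper argues directly from the eigenvector equation: writing an eigenfunction $f=\sum_S a_S v_S$ in the spanning set, pairing $\mathrm{T}^{(\rho)}f-\lambda f=0$ against each $v_S$, and choosing the $S$ with maximal $|a_S|$ yields $|\lambda_{|S|}(\rho)-\lambda|\le 2^{O(d^2)}/\sqrt n$ --- essentially a Gershgorin-type argument that never needs the $v_T$ to be linearly independent, nor the matrix $\bigl(\langle\mathrm{T}^{(\rho)}v_T,v_S\rangle\bigr)$ to be symmetric. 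Your route instead symmetrizes via the Gram matrix and invokes Weyl's inequality; this is more systematic and arguably cleaner bookkeeping, but it obliges you to verify two extra facts the paper sidesteps: that $\mathrm{T}^{(\rho)}=\mathrm{T}_{Y\to X}\mathrm{T}_\rho\mathrm{T}_{X\to Y}$ is self-adjoint (true, since $\mathrm{T}_{X\to Y}$ and $\mathrm{T}_{Y\to X}$ are mutually adjoint and $\mathrm{T}_\rho$ is self-adjoint) and that $\{v_T\}$ is genuinely a basis of $V_{[d],[d]}$ (which your Gram-matrix bound does deliver for $n$ large). Both arrive at the same additive error $2^{O(d^2)}/\sqrt n$, and your final conversion to the multiplicative form via $\lambda_i(\rho)\ge(c\rho)^d$ matches the paper's.
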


\begin{proof}
Let $\lambda$ be an eigenvalue of $\mathrm{T}^{\left(\rho\right)}$,
and let $f$ be a corresponding eigenfunction in $V_{\left[d\right],\left[d\right]}$.
Write
\[
f=\sum a_{S}v_{S},
\]
 where the sum is over all $S=\left\{ \left(i_{1},j_{1}\right),\ldots,\left(i_{t},j_{t}\right)\right\} \subseteq\left[d\right]$.
Then $0=\mathrm{T}^{\left(\rho\right)}f-\lambda f$,
 but on the other hand for each set $S$ we have
\begin{align*}
\left\langle \mathrm{T}^{\left(\rho\right)}f-\lambda f,v_{S}\right\rangle  & =a_{S}\left(\left\langle \mathrm{T}^{\left(\rho\right)}v_{S},v_{S}\right\rangle -\lambda\right)
  \pm\sum_{\left|S\right|\ne\left|T\right|}\left|a_{T}\right|\left(\left|\left\langle \mathrm{T}^{\left(\rho\right)}v_{T},v_{S}\right\rangle \right|+\left|\lambda\right|\left|\left\langle v_{T},v_{S}\right\rangle\right|\right)\\
 & =a_{S}\left(\lambda_{\left|S\right|}\left(\rho\right)-\lambda\right)\pm O\left(\frac{\sum_{T\ne S}\left|a_{T}\right|}{\sqrt{n}}\right).
\end{align*}
Thus, for all $S$ we have that
\[
\card{a_S}\card{\lambda_{\left|S\right|}\left(\rho\right)-\lambda}\leq O\left(\frac{\sum_{T\ne S}\left|a_{T}\right|}{\sqrt{n}}\right).
\]
On the other hand, choosing $S$ that maximizes $\card{a_{S}}$, we find that
$\card{a_{S}}\ge\frac{\sum_{T\ne S}\left|a_{T}\right|}{2^{d^{2}}}$,
and plugging that into the previous inequality yields that $\card{\lambda_{\left|S\right|}\left(\rho\right)-\lambda}\le\frac{O\left(2^{d^{2}}\right)}{\sqrt{n}}\le n^{-0.4}\rho^{-d}\leq n^{-1/3}\lambda_{\left|S\right|}\left(\rho\right)$,
provided that $C$ is sufficiently large.
\end{proof}

\subsection{An $L^2$ variant of Lemma~\ref{lem:Approximation}}\label{sec:f_approx_poly}
\begin{lem}\label{lem:2-approximation} Let $n\ge\rho^{-Cd^{3}}$ for a sufficiently
large constant $C$. There exists a polynomial $P(z)=\sum_{i=1}^{k}a_{i}z^{i}$,
such that $\|P\|\le\rho^{-O\left(d^{3}\right)}$ and $\|P\left(\mathrm{T}^{\left(\rho\right)}\right)f-f\|_{2}\le n^{-2d}\|f\|_{2}$.
\end{lem}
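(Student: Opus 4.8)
The plan is to build $P$ by spectral interpolation, using the structural facts about the spectrum of $\mathrm{T}^{(\rho)}$ on $V_d$ that were established in this section. Recall that $\mathrm{T}^{(\rho)}$ restricted to $V_d$ is self-adjoint (it commutes with the $S_n$-bimodule action, and in particular is symmetric), so $V_d$ decomposes into eigenspaces $V_{d,\lambda}$. By Corollary~\ref{cor:eigenvalues}, every eigenvalue $\lambda$ of $\mathrm{T}^{(\rho)}$ on $V_d$ satisfies $\lambda \geq \rho^{O(d)}$, and by Lemma~\ref{lem:approximating the eigenvalues}, each such $\lambda$ lies within a multiplicative $(1\pm n^{-1/3})$ of one of the $d+1$ reference values $\lambda_0(\rho),\ldots,\lambda_d(\rho)$ (and $\lambda_0(\rho)=1$ corresponds to constants, which we will want to kill since $P(0)=0$ is not required here but $P$ has no constant term; in fact since the statement allows $a_i$ starting from $i=1$, we just need $P(1)\approx 1$ on the constant eigenspace too, which is fine). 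So effectively the spectrum, up to tiny perturbations, is supported on at most $d+1$ points, all bounded below by $\rho^{O(d)}$ and above by $1$.

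The key step is then a Lagrange-interpolation construction. First I would take the degree-$\leq d$ polynomial $Q(z) = \sum_{i=0}^{d} \ell_i(z)$, where $\ell_i$ is the Lagrange basis polynomial for the nodes $\lambda_0(\rho),\ldots,\lambda_d(\rho)$, so that $Q(\lambda_i(\rho)) = 1$ for every $i$. Since the nodes are separated (their pairwise gaps are $\geq \rho^{O(d)}$, because distinct $\lambda_i(\rho)$ differ by a fixed amount depending only on $\rho$ and $d$, as one sees from Proposition~\ref{large inner product} and the explicit combinatorial formula) and all lie in $[\rho^{O(d)},1]$, each $\ell_i$ has spectral norm $\rho^{-O(d^2)}$, hence $\|Q\| \leq \rho^{-O(d^2)}$. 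Now for a genuine eigenvalue $\lambda = \lambda_i(\rho)(1\pm n^{-1/3})$, Taylor expansion gives $|Q(\lambda) - 1| = |Q(\lambda) - Q(\lambda_i(\rho))| \leq \|Q'\|_{\infty,[0,1]} \cdot n^{-1/3}\lambda_i(\rho) \leq \rho^{-O(d^2)} n^{-1/3}$, which can be made much smaller than $n^{-2d}$ by taking $C$ large, i.e. $n \geq \rho^{-Cd^3}$. The issue is that $Q$ has a constant term; to fix this, note $Q(0)$ is some constant of size $\rho^{-O(d^2)}$, and I would instead interpolate the function "$1$" at the nodes but also require the polynomial to vanish at $0$, i.e. work with the $d+2$ nodes $\{0,\lambda_0(\rho),\ldots,\lambda_d(\rho)\}$ and target values $\{0,1,\ldots,1\}$. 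Since $0$ is separated from all the $\lambda_i(\rho)$ by $\geq \rho^{O(d)}$ as well, the same bounds hold, and the resulting $P$ has $P(0)=0$, $\|P\|\leq \rho^{-O(d^2)} \leq \rho^{-O(d^3)}$, and $|P(\lambda)-1|\leq \rho^{-O(d^2)}n^{-1/3}$ for every true eigenvalue $\lambda$ of $\mathrm{T}^{(\rho)}$ on $V_d$.

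Finally I would assemble the $L^2$ bound. Given $f$ of degree at most $d$, decompose $f = \sum_\lambda f_\lambda$ along the eigenspaces of $\mathrm{T}^{(\rho)}|_{V_d}$; this is an orthogonal decomposition because the operator is self-adjoint on $V_d$. Then
\[
\|P(\mathrm{T}^{(\rho)})f - f\|_2^2 = \sum_\lambda (P(\lambda)-1)^2 \|f_\lambda\|_2^2 \leq \max_\lambda (P(\lambda)-1)^2 \cdot \|f\|_2^2 \leq \left(\rho^{-O(d^2)} n^{-1/3}\right)^2 \|f\|_2^2,
\]
and choosing the constant $C$ in $n \geq \rho^{-Cd^3}$ large enough makes $\rho^{-O(d^2)} n^{-1/3} \leq n^{-2d}$, giving the claim. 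The main obstacle I anticipate is the bookkeeping around the constant-term requirement and confirming that the reference values $\lambda_i(\rho)$ are genuinely pairwise separated (and separated from $0$) by at least $\rho^{\Omega(d)}$ with gaps that do not shrink faster than the interpolation can tolerate; this needs the explicit formula for $\lambda_i(\rho) = \langle \mathrm{T}^{(\rho)} v_T, v_T\rangle$ from Proposition~\ref{large inner product} together with a lower bound of the same flavour, rather than just the one-sided estimate proved there. Everything else is routine Lagrange interpolation and orthogonality.
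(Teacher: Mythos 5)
There are two genuine gaps here, one of which you flagged yourself and one of which you did not.

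First, the separation issue you raise at the end is not a minor bookkeeping point but a real obstruction to the Lagrange route: nothing in the paper (and nothing easy) gives a lower bound on $|\lambda_i(\rho)-\lambda_j(\rho)|$ for $i\neq j$. Proposition~\ref{large inner product} is a one-sided lower bound on each $\lambda_i(\rho)$ individually, and without a matching upper bound of the form $\lambda_i(\rho)\le (C\rho)^i$ you cannot even rule out that two of the reference values coincide, in which case the Lagrange basis polynomials do not exist (and if they nearly coincide, $\|\ell_i\|$ blows up). The paper's construction $P(z)=1-\prod_{i=1}^d\bigl(\lambda_i^{-1}z-1\bigr)^{9d}$ is designed precisely to avoid this: the product vanishes at every $\lambda_i(\rho)$ regardless of whether they are distinct, and the only spectral input needed is the lower bound $\lambda_i\ge\rho^{O(d)}$ of Corollary~\ref{cor:eigenvalues} (to control $\lambda_i^{-1}$ and hence $\|P\|$) together with the clustering statement of Lemma~\ref{lem:approximating the eigenvalues}.

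Second, and more seriously, your final error estimate is quantitatively false. You arrive at $|P(\lambda)-1|\le\rho^{-O(d^2)}n^{-1/3}$ for each true eigenvalue and claim that taking $C$ large in $n\ge\rho^{-Cd^3}$ makes this at most $n^{-2d}$. It cannot: the inequality $\rho^{-O(d^2)}n^{-1/3}\le n^{-2d}$ requires $\rho^{-O(d^2)}\le n^{-(2d-1/3)}$, whose left side is at least $1$ and whose right side is less than $1$ for every $d\ge1$; enlarging $n$ only makes the target $n^{-2d}$ shrink faster than your error $n^{-1/3}$. A first-order (simple-zero) interpolant can never convert the $n^{-1/3}$ clustering radius of Lemma~\ref{lem:approximating the eigenvalues} into an $n^{-2d}$ approximation. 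This is exactly why the paper raises each factor to the power $9d$: the perturbation $|\lambda_i^{-1}\lambda_\star-1|\le n^{-1/3}$ becomes $n^{-3d}$ after the power, which then absorbs the $\rho^{-O(d^3)}$ contribution of the remaining factors and lands below $n^{-2d}$. Your argument could in principle be repaired by replacing $Q$ with $1-(1-Q)^{9d}$ (amplification) \emph{and} by supplying the missing separation of the nodes, but as written both the construction and the final display are incorrect. The orthogonal-eigenspace assembly of the $L^2$ bound at the end is fine and matches the paper.
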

\begin{proof}
 Choose $P(z) =
1-\prod_{i=1}^{d}\left(\lambda_{i}^{-1}z-1\right)^{9d}$, where $\lambda_i = \lambda_i(\rho)$.
 Orthogonally decompose $\mathrm{T}^{\left(\rho\right)}$ to write
$f=\sum_{\lambda}f^{=\lambda}$, for nonzero orthogonal functions
$f^{=\lambda}\in V_{d}$ satisfying $\mathrm{T}^{\left(\rho\right)}f^{=\lambda}=\lambda f^{=\lambda}$,
and let $g=P\left(\mathrm{T}^{\left(\rho\right)}\right)f-f$.
Then $g=\sum_{\lambda}\left(P\left(\lambda\right)-1\right)f^{=\lambda}$.
Therefore
\[
\|g\|_{2}^{2}=\sum_{\lambda}\left(P\left(\lambda\right)-1\right)^{2}\|f^{=\lambda}\|_{2}^{2}\le\max_{\lambda}(P\left(\lambda\right)-1)^2\|f\|_{2}^{2}.
\]
Suppose the maximum is attained at $\lambda_{\star}$. By Lemma \ref{lem:approximating the eigenvalues},
there is $i\leq d$ such that $\lambda_{\star}=\lambda_{i}(1\pm n^{-\frac{1}{3}})$, and so
\[
\card{\left(\lambda_i^{-1} \lambda_{\star} - 1\right)^{9d}}
\leq n^{-3d}.
\]
For any $j\neq i$, we have by Proposition \ref{prop:large inner product} that $\lambda_{j}\ge\rho^{O(d)}$, and so
\[
\card{\left(\lambda_i^{-1} \lambda_{\star} - 1\right)^{9d}}
\leq \rho^{-O(d^2)}.
\]
Combining the two inequalities, we get that
\[
(1-P(\lambda_{\star}))^2\leq
\rho^{-O(d^3)} n^{-6d}
\leq n^{-2d},
\]
where the last inequality follows from the lower bound on $n$. To finish up the proof then, we must upper bound $\norm{P}$,
and this is relatively straightforward:
\[
\norm{P}\leq 1 + \norm{\prod_{i=1}^{d}\left(\lambda_{i}^{-1}z-1\right)^{9d}}
\leq 1 + \prod_{i=1}^{d}\norm{\lambda_{i}^{-1}z-1}^{9d}
= 1 + \prod_{i=1}^{d}(1+\lambda^{-1})^{9d}
\leq 1 + \prod_{i=1}^{d}(1+\rho^{-O(d)})^{9d},
\]
which is at most $\rho^{-O(d^3)}$. In the second inequality, we used the fact that $\norm{P_1 P_2}\leq \norm{P_1}\norm{P_2}$.
\end{proof}

\subsection{Deducing the $L^q$ approximation}
To deduce the $L^q$ approximation of the polynomial $P$ from Lemma~\ref{lem:2-approximation}
we use the following basic type of hypercontractive inequality (this bound is often times too weak
quantitatively, but it is good enough for us since we have a very strong $L_2$ approximation).
\begin{lem}
\label{lem:traditional hypercontractivity }Let $C$ be sufficiently
large, and let $n\ge C^{d^{2}}q^{2d}$. Let $f\colon S_{n}\to\mathbb{R}$
be a function of degree $d$. Then
$\|f\|_{q}\le q^{O(d)}n^{d}\|f\|_{2}$.
\end{lem}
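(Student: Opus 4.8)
The plan is to deduce the lemma from a uniform (i.e.\ $L^\infty$) bound. Since we work with expectation norms we always have $\|f\|_q\le\|f\|_\infty$, so it suffices to prove the stronger, $q$-free estimate
$\|f\|_\infty\le\sqrt{\dim V_d}\,\|f\|_2$ for every $f\in V_d$, together with the crude bound $\dim V_d\le 2n^{2d}$; the factor $q^{O(d)}$ and the lower bound on $n$ in the statement are then only slack.

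For the $L^\infty$ bound I would use a reproducing-kernel argument. Fix an orthonormal basis $\phi_1,\dots,\phi_D$ of $V_d$ with respect to the expectation inner product, where $D=\dim V_d$. For $f\in V_d$ write $f=\sum_i\inner{f}{\phi_i}\phi_i$ and evaluate at a point $\sigma\in S_n$; Cauchy--Schwarz gives $f(\sigma)^2\le\|f\|_2^2\sum_i\phi_i(\sigma)^2$. The key point is that the function $\sigma\mapsto\sum_i\phi_i(\sigma)^2$ is \emph{constant}: it is the diagonal of the kernel of the orthogonal projection onto $V_d$, hence independent of the choice of orthonormal basis; and $V_d$ is a left $S_n$-module (left translation permutes the spanning indicators $1_T$ while preserving $|T|$, by the description in Lemma~\ref{lem:V_AB spanning set}), while left translation is unitary for the expectation inner product and maps one orthonormal basis of $V_d$ to another. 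Consequently this function is left-invariant, hence constant by transitivity of the left action, and its value equals its average $\sum_i\|\phi_i\|_2^2=D$. This yields $\|f\|_\infty^2\le D\|f\|_2^2$.

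Finally I would bound $D=\dim V_d$ crudely. By Lemma~\ref{lem:V_AB spanning set}, $V_d$ is spanned by the indicators $1_T$ with $|T|\le d$, and the number of consistent sets $T$ with $|T|=k$ is $\binom{n}{k}\cdot n(n-1)\cdots(n-k+1)\le n^{2k}$, so $\dim V_d\le\sum_{k=0}^d n^{2k}\le 2n^{2d}$ (for $n\ge 2$). Combining, $\|f\|_q\le\|f\|_\infty\le\sqrt{2}\,n^{d}\|f\|_2\le q^{O(d)}n^{d}\|f\|_2$. The only step that requires any care is the constancy of $\sigma\mapsto\sum_i\phi_i(\sigma)^2$, which is exactly where the $S_n$-module structure of $V_d$ (already set up in the preliminaries) enters; everything else is a one-line estimate, and in particular no hypothesis relating $n$ to $d$ and $q$ is actually needed for this weak bound.
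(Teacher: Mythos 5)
Your proof is correct, and it takes a genuinely different route from the paper. The paper derives the lemma from its own machinery: it writes $f=\mathrm{T}^{(\rho)}g$ with $g=\sum_\lambda\lambda^{-1}f_{=\lambda}$, uses the eigenvalue lower bound (Corollary~\ref{cor:eigenvalues}) to control $\|g\|_2$, shows $g$ is global with crude parameters via Cauchy--Schwarz and Lemma~\ref{lem:Bootstrapping the globalness}, and then invokes Theorem~\ref{thm:Hypercontractivity}; this is why the hypothesis $n\ge C^{d^2}q^{2d}$ appears in the statement. Your argument instead proves the stronger, $q$-free bound $\|f\|_\infty\le\sqrt{\dim V_d}\,\|f\|_2$ via the reproducing kernel of the projection onto $V_d$: the key observation that $\sigma\mapsto\sum_i\phi_i(\sigma)^2$ is constant is justified correctly, since $\prescript{\tau}{}1_T=1_{T'}$ with $|T'|=|T|$ shows $V_d$ is invariant under the (unitary) left translation, so the basis-independent diagonal of the kernel is left-invariant and equals its average $D$. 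The dimension count $\dim V_d\le\sum_{k\le d}\binom{n}{k}n_k\le 2n^{2d}$ is also fine since the $1_T$ span $V_d$ by definition. What your approach buys is a self-contained, elementary proof with no hypothesis relating $n$ to $d$ and $q$ and with no $q$-dependence in the constant; what the paper's approach buys is nothing extra here --- it is simply reusing already-established lemmas rather than introducing a new argument. Either proof suffices for the downstream uses (Lemmas~\ref{lem:Approximation} and~\ref{lem:weak level d}).
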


\begin{proof}
Let $\rho=\frac{1}{(10\cdot 4^8\cdot q)^{2}}$. Decomposing $f$ into the $\sum\limits_{\lambda} f_{=\lambda}$ where
$T^{(\rho)} f_{=\lambda} = \lambda f_{=\lambda}$, we may find $g$ of degree
$d$, such that $f=\mathrm{T}^{\left(\rho\right)}g$, namely, $g = \sum\limits_{\lambda}\lambda^{-1} f_{=\lambda}$.
By Parseval and Corollary~\ref{cor:eigenvalues}, we get that $\|g\|_{2}\le\rho^{-O(d)}\|f\|_{2}$. Thus, we have that
$\norm{f}_q = \norm{T^{(\rho)} g}_q$, and to upper bound this norm we intend to use Theorem~\ref{thm:Hypercontractivity},
and for that we show that $g$ is global with fairly weak parameters.


Let $T\subseteq L$ be consistent of size at most $2d$. Then
\[
\|g_{\rightarrow T}\|_{2}^2 = \frac{\E_{x} g(x) 1_T(x)}{\E_x[1_T(x)]}
\leq \sqrt{ \frac{\E_{x} g(x)^2}{\E_x[1_T(x)]}}
\leq n^{\frac{\left|T\right|}{2}}\|g\|_{2}^{2}
\leq n^{\frac{\left|T\right|}{2}}\rho^{-O(d)} \|f\|_{2}^{2},
\]
and so $g$ is $(2d,\eps)$ global for  $\epsilon=n^{d/2}\rho^{-O(d)}\|f\|_{2}$.
Lemma \ref{lem:Bootstrapping the globalness} now implies that $g$ is $\eps$-global with constant $4^8$. By the choice of
$\rho$, we may now use Theorem~\ref{thm:Hypercontractivity} to deduce that
\[
\norm{T^{(\rho)} g}_q\leq
\eps^{(q-2)/q} \norm{g}_2^{2/q}
\leq n^{d/2} \rho^{-O(d)}\norm{f}_2
\leq n^{d} q^{O(d)} \norm{f}_2.\qedhere
\]
\end{proof}

Finally, we combine Lemma~\ref{lem:2-approximation} and Lemma~\ref{lem:traditional hypercontractivity } to deduce the $L^q$ approximating polynomial.
\begin{proof}[Proof of Lemma~\ref{lem:Approximation}]
 Let $f$ be a function of degree $d$. By lemma \ref{lem:2-approximation}
there exists a $P$ with $\|P\|\le\rho^{-O\left(d^{3}\right)}$ and
$P\left(0\right)=0$ such that the function $g=P\left(\mathrm{T}^{\left(\rho\right)}\right)f-f$
satisfies $\|g\|_{2}\le n^{-2d}\|f\|_{2.}$ By Lemma \ref{lem:traditional hypercontractivity },
$\|g\|_{q}\le q^{4d}n^{-d}\|f\|_{2}\le\frac{1}{\sqrt{n}}\|f\|_{2}$,
provided that $C$ is sufficiently large, completing the proof.
\end{proof}

\section{Hypercontractivity: the direct approach}\label{sec:direct}
In this section, we give an alternative proof to a variant of Theorem~\ref{thm:Reasonability}.
This approach starts by identifying a trivial spanning set of the space $V_t$ of degree $t$ functions from Definition~\ref{def:level_space}.

\paragraph{Notations.} For technical reasons, it will be convenient for us to work with ordered sets. We denote by $[n]_t$ the collection
of ordered sets of size $t$, which are simply $t$-tuples of distinct elements from $[n]$, but we also allow set operations (such as $\setminus$)
on them. We also denote $n_t = \card{[n]_t} = n(n-1)\cdots(n-t+1)$.
For ordered sets $I = \set{i_1,\ldots,i_t}$, $J = \set{j_1,\ldots,j_t}$, we denote by $1_{I\rightarrow J}(\pi)$ the indicator of
$\pi(i_k) = j_k$ for all $k=1,\ldots,t$; for convenience, we also denote this by $\pi(I) = J$.

With the above notations, the following set clearly spans $V_t$, by definition:
\begin{equation}\label{eq:spanning}
\sett{1_{I\rightarrow J}}{\card{I} = \card{J}\leq t}.
\end{equation}

We remark that this set is not a basis, since these functions are linearly dependent. For example, for $t=1$ we have
$\sum_{i=1}^{n} 1_{\pi(1) = i} - 1 = 0$.
This implies that a function $f\in V_{1}$ has several different representations as a linear combination of functions from the spanning set~\eqref{eq:spanning}.
The key to our approach is to show that there is a way to canonically choose such a linear combination, which is both unique and works well with computations of high moments.

\begin{defn}\label{def:normalized}
  Let $f\in V_{=t}$, and suppose that $f = \sum\limits_{I,J\in[n]_t} \coef{a}{}{I}{J} 1_{I\rightarrow J}$. We say that this representation is \emph{normalized} if
  \begin{enumerate}
    \item For any $1\leq r\leq t$, $J = \set{j_1,\ldots,j_t}$ and $I = \set{i_1,\ldots,i_{r-1}, i_{r+1},\ldots, i_t}$ we have that
    \[
    \sum\limits_{i_r\not\in I} \coef{a}{}{\set{i_1,\ldots,i_t}}{J} = 0.
    \]
    \item Analogously, for any $1\leq r\leq t$, $I = \set{i_1,\ldots,i_t}$ and $J = \set{j_1,\ldots,j_{r-1}, j_{r+1},\ldots, j_t}$ we have that
    \[
    \sum\limits_{j_r\not\in J} \coef{a}{}{I}{\set{j_1,\ldots, j_t}} = 0.
    \]
    \item Symmetry: for all ordered sets $I,J$ of size $t$ and $\pi\in S_t$, we have $\coef{a}{}{I}{J} = \coef{a}{}{\pi(I)}{\pi(J)}$.
  \end{enumerate}
\end{defn}
More loosely, we say that a representation according to the spanning set~\eqref{eq:spanning} is normalized if averaging the coefficients according to a single
coordinate results in $0$. We also refer to the equalities in Definition~\eqref{eq:spanning} as ``normalizing relations''. In this section, we show that
a normalized representation always exists, and then show how it is useful in establishing hypercontractive statements similar to Theorem~\ref{thm:Reasonability}.

Normalized representations first appear in the context of the slice by Dunkl~\cite{Dunkl}, who called normalized representations \emph{harmonic functions}. See also the monograph of Bannai and Ito~\cite[III.3]{BannaiIto} and the papers~\cite{Filmus2016a,FM2019}. Ryan O'Donnell (personal communication) has proposed calling them \emph{zero-flux} representations.

\subsection{Finding a normalized representation}
\begin{lem}\label{lem:normalized_rep_exists}
  Let $0\leq t\leq t$, and let $f\in V_{t}$. Then we may write
  $f = h + g$, where $h\in V_{t-1}$ and $g$ is given by a set of
  coefficients satisfying the normalizing relations
  $g = \sum\limits_{I,J\in[n]_t}\coef{a}{t}{I}{J}1_{I\rightarrow J}(\pi)$.
\end{lem}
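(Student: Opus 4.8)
The plan is to convert an arbitrary representation of $f$ into a normalized one at the cost of a lower-degree error term, so the first step is bookkeeping. Using Definition~\ref{def:level_space}, write $f=\sum_{|I|=|J|\le t}c_{I,J}1_{I\to J}$; the terms with $|I|=|J|\le t-1$ already lie in $V_{t-1}$, so after moving them into $h$ we may assume $f=\Phi(c)+h_0$ with $h_0\in V_{t-1}$ and $\Phi(c):=\sum_{I,J\in[n]_t}c_{I,J}1_{I\to J}$. Since $1_{I\to J}=1_{\pi(I)\to\pi(J)}$ as functions for every $\pi\in S_t$, replacing $c$ by its average over this $S_t$-action changes neither $\Phi(c)$ nor the problem, so we may assume item~3 of Definition~\ref{def:normalized} already holds. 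It is convenient to extend every coefficient array by zero to tuples that have a repeated entry; this is consistent with $1_{I\to J}$ being the zero function in that case, and it lets us restate items~1 and~2 of Definition~\ref{def:normalized} as the single clean requirement that, for each coordinate $r$ and each side, averaging that coordinate out of the array returns the zero array.

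The engine is a family of averaging operators. For each $1\le r\le t$ I would let $A^I_r$ be the operator that replaces the $r$-th entry of the ordered set $I$ by a uniformly random element of $[n]$ and averages, define $A^J_r$ analogously on the $J$-side, and put $P^I_r=\mathrm{id}-A^I_r$ and $P^J_r=\mathrm{id}-A^J_r$. Each $A$ is idempotent, its output is independent of the coordinate it averaged, and — since averaging distinct coordinates commutes — all of the operators $\{A^I_r,A^J_r\}_r$ pairwise commute. Moreover, applying $A^I_r$ (or $A^J_r$) to any array $e$ changes the represented function only within $V_{t-1}$: the array $A^I_re$ does not depend on the $r$-th $I$-coordinate, so $\Phi(A^I_re)=\sum_{I_{-r},J}(A^I_re)_{I_{-r},J}\sum_{i}1_{(I_{-r},\,i)\to J}$, and the inner sum is the lower-degree function $1_{I_{-r}\to J_{-r}}$ by the basic linear dependence among the $1_{I\to J}$.

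Consequently the array
\[
a:=\Bigl(\textstyle\prod_{r=1}^{t}P^I_r\Bigr)\Bigl(\textstyle\prod_{r=1}^{t}P^J_r\Bigr)c
\]
(well-defined because all factors commute) should represent a function congruent to $\Phi(c)$ modulo $V_{t-1}$, be $S_t$-symmetric (the displayed product is manifestly invariant under the simultaneous $S_t$-action, which only permutes the commuting factors), and — because each $P^I_r$ forces the $r$-th $I$-side relation and commutes with all other factors, so that earlier relations are not undone — satisfy all the normalizing relations. One then sets $g:=\Phi(a)=\sum_{I,J\in[n]_t}a_{I,J}1_{I\to J}$ and $h:=f-g=h_0+(\Phi(c)-\Phi(a))\in V_{t-1}$, which is the claimed decomposition.

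The place where care is genuinely needed — and where I expect the main obstacle to lie — is reconciling these averaging operators with the ordered-set structure: the ``natural'' averaging that ranges only over values keeping $I$ a tuple of distinct elements fails to commute, since after averaging one coordinate the admissible range for the next one has changed; the fix of averaging over all of $[n]$ on the zero-extended arrays makes the operators genuinely commute but then forces one to check that the zero-extension of $a$ still satisfies the normalizing relations exactly and that no spurious ``diagonal'' contributions survive. This step is technically delicate but elementary; everything else — the reduction to a symmetric coefficient array, the identification of the normalizing relations with ``average $=0$'', and the degree drop $\sum_{i}1_{(I'\cup i)\to J}=1_{I'\to J_{-r}}$ — is immediate from Definitions~\ref{def:level_space} and~\ref{def:normalized}. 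As an alternative, one could instead induct on $t$, peeling off one coordinate and invoking the statement for $S_{n-1}$, though the bookkeeping there is comparable.
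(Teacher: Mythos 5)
There is a genuine gap, and it sits exactly at the point you flag as ``technically delicate but elementary'': the two properties you need from your projections --- that they pairwise commute and that each one enforces its normalizing relation exactly --- cannot both hold for the same operator, and your proposed fix secures the first at the cost of the second. Concretely, take $P^I_r=\mathrm{id}-A^I_r$ with $A^I_r$ averaging the $r$-th entry uniformly over all of $[n]$ on the zero-extended array, and write $S(I',J)=\sum_{i\notin I'}a(I_{r\to i},J)$ for $I'=I\setminus\{i_r\}$. Then $(P^I_ra)(I,J)=a(I,J)-\tfrac{1}{n}S(I',J)$, and the normalizing relation of Definition~\ref{def:normalized} sums over the $n-t+1$ admissible values $i_r\notin I'$, giving
\[
\sum_{i_r\notin I'}(P^I_ra)(I,J)=S(I',J)-\frac{n-t+1}{n}\,S(I',J)=\frac{t-1}{n}\,S(I',J),
\]
which is nonzero in general for $t\ge 2$. (The $1/n$-operator kills the sum over \emph{all} $i_r\in[n]$ of the \emph{formal} array, but the formal values at tuples with a repeat are $-\tfrac1nS\neq 0$, so this is not the required relation on the zero-extension.) Conversely, the operator that does enforce the relation exactly is the one normalized by $\tfrac{1}{n-t+1}$ over the admissible range, and those operators do not commute, as you observe. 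So the central assertion of your plan --- a single commuting family of projections each forcing its relation, whence the product forces all of them --- is not established, and the one-line defect above shows the specific candidate fails.

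The paper takes the route you dismiss: it applies the correctly normalized (non-commuting) corrections sequentially, and the real content is then a direct verification (Claim~\ref{claim:change_coef}, items 1--3) that applying the correction for coordinate $r$ preserves every relation already established for $r'\neq r$ and for the $J$-side, via an interchange of summation in which both resulting sums vanish by the inductive assumption. That verification is short but it is the proof; it is not subsumed by commutativity. The rest of your write-up --- moving lower-order terms into $h$, symmetrizing over $S_t$, and the degree-drop identity $\sum_{i_r\notin I'}1_{I\to J}=1_{I'\to J'}$ showing each correction changes $f$ only within $V_{t-1}$ --- matches the paper and is fine. To repair your argument, either switch to the $\tfrac{1}{n-t+1}$ normalization and prove the preservation statement directly, or genuinely prove (rather than assert) that some commuting family enforces the relations as stated.
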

\begin{proof}
  The proof is by induction on $t$.

  Fix $t\ge1$ and $f\in V_{t}$. Then we may write
  $f(\pi) = \sum\limits_{I,J\in[n]_t} \coef{a}{}{I}{J} 1_{I\rightarrow J}(\pi)$, where
  the coefficients satisfy the symmetry property from Definition~\ref{def:normalized}.

  Throughout the proof,
  we will change the coefficients in a sequential process, and always maintain the form
  $f = h + \sum\limits_{\card{I} = \card{J}=t} \coef{b}{}{I}{J} 1_{I\rightarrow J}(\pi)$ for $h\in V_{t-1}$.

  Take $r\in [t]$, and for each $I = \set{i_1,\ldots,i_t}$, $J = \set{j_1,\ldots,j_t}$, define the coefficients
  \begin{equation}\label{eq:change_coef}
  \coef{b}{}{I}{J} = \coef{a}{}{I}{J} - \frac{1}{n-t+1}\sum\limits_{i\not\in I\setminus\set{i_r}}
  \coef{a}{}{\set{i_1,\ldots,i_{r-1}, i, i_{r+1},\ldots,i_t}}{J}.
  \end{equation}
  In Claim~\ref{claim:change_coef} below, we prove that after making this change of coefficients, we may write
  $f = h + \sum\limits_{\card{I} = \card{J}=t} \coef{b}{}{I}{J} 1_{I\rightarrow J}(\pi)$, and that the coefficients
  $\coef{b}{}{I}{J}$ satisfy all normalizing relations that the $\coef{a}{}{I}{J}$ do, as well as the normalizing relations
  from the first collection in Definition~\ref{def:normalized} for $r$. We repeat this process for all $r\in[t]$.

  After this process is done, we have $f = h + \sum\limits_{I,J\in[n]_t} \coef{b}{}{I}{J} 1_{I\rightarrow J}(\pi)$,
  where the coefficients $\coef{a}{}{I}{J}$ satisfy the first collection of normalizing relations from Definition~\ref{def:normalized}.
  We can now perform the analogous process on the $J$ part, and by symmetry obtain that after this process, the second collection of normalizing
  relations in Definition~\ref{def:normalized} hold. One only has to check that this does not destroy the first collection of normalizing relations, which we also prove in Claim~\ref{claim:change_coef}.

  Finally, we symmetrize $f$ to ensure that it satisfies the symmetry condition. To do so, we replace $g = \sum\limits_{I,J\in[n]_t} \coef{b}{}{I}{J} 1_{I\to J}(\pi)$ with $g' = \sum_{\pi \in S_t} g^\pi$, where $(1_{I \to J})^\pi = 1_{\pi(I) \mapsto \pi(J)}$ (and extended linearly). It is easy to check that $g = g^\pi$ as functions, and that $g^\pi$ satisfies the two sets of normalizing relations. It follows that so does $g'$, and furthermore by construction, $g'$ is symmetric.
%
  \end{proof}

  \begin{claim}\label{claim:change_coef}
  The change of coefficients~\eqref{eq:change_coef} has the following properties:
    \begin{enumerate}
    \item The coefficients $\coef{b}{}{I}{J}$ satisfy the normalizing relation in the first item for $r$
    in Definition~\ref{def:normalized}.
    \item If the coefficients $\coef{a}{}{I}{J}$ satisfy the normalizing relation in the first item in Definition~\ref{def:normalized} for $r' \neq r$,
    then so do $\coef{b}{}{I}{J}$.
    \item If the coefficients $\coef{a}{}{I}{J}$ satisfy the normalizing relation in the second item in Definition~\ref{def:normalized} for $r'$, then so do $\coef{b}{}{I}{J}$.
    \item We may write $f = h + \sum\limits_{\card{I} = \card{J}=t} \coef{b}{}{I}{J} 1_{I\rightarrow J}(\pi)$, where $h\in V_{t-1}$.
  \end{enumerate}
  \end{claim}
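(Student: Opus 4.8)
The plan is to verify the four items by bookkeeping directly with the coefficients. Fix the position $r$ as in~\eqref{eq:change_coef}; for an ordered set $I=\{i_1,\dots,i_t\}$ write $I^{(r\to i)}$ for the ordered set obtained by replacing the $r$-th entry of $I$ by $i$, and $I'$ (resp.\ $J'$) for the ordered $(t-1)$-set obtained by deleting the $r$-th entry of $I$ (resp.\ $J$). The single observation behind all four items is that the inner sum in~\eqref{eq:change_coef}, namely $\sum_{i\notin I'}\coef{a}{}{I^{(r\to i)}}{J}$, depends only on the pair $(I',J)$ and not on which entry $i_r$ we have deleted; call it $\beta_{I',J}$, so that~\eqref{eq:change_coef} reads $\coef{b}{}{I}{J}=\coef{a}{}{I}{J}-\tfrac{1}{n-t+1}\beta_{I',J}$, and note that for a fixed $I'$ there are exactly $n-t+1$ admissible values of $i_r$. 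Item~1 is then immediate: summing over these values with $I',J$ fixed gives $\sum_{i_r}\coef{b}{}{I}{J}=\beta_{I',J}-\tfrac{n-t+1}{n-t+1}\beta_{I',J}=0$, which is exactly the $r$-th relation of the first collection in Definition~\ref{def:normalized}.

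For item~4, the correction $\sum_{I,J}(\coef{a}{}{I}{J}-\coef{b}{}{I}{J})1_{I\to J}$ equals $\tfrac{1}{n-t+1}\sum_{I',J}\beta_{I',J}\sum_{i_r\notin I'}1_{I'_{+i_r}\to J}$, where $I'_{+i_r}$ denotes $I'$ with $i_r$ reinserted at position $r$. The key point is the collapse identity $\sum_{i_r\notin I'}1_{I'_{+i_r}\to J}(\pi)=1_{I'\to J'}(\pi)$: a permutation $\pi$ that respects $I'\to J'$ sends exactly one point to $j_r$, and that point cannot be among the points of $I'$, whose images are the remaining (distinct) $j_\ell$'s, so the left-hand side merely records whether $\pi$ respects $I'\to J'$. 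Since each $1_{I'\to J'}$ lies in $V_{t-1}$, the entire correction lies in $V_{t-1}$ and can be absorbed into $h$, giving $f=h+\sum_{I,J}\coef{b}{}{I}{J}1_{I\to J}$ with $h\in V_{t-1}$.

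Item~3 is easy because~\eqref{eq:change_coef} only modifies the dependence on the $I$-index, so the operation $\coef{a}{}{\cdot}{\cdot}\mapsto\coef{b}{}{\cdot}{\cdot}$ commutes with summing over a fixed coordinate of $J$; applying the $r'$-th relation of the second collection for the $a$'s to $\coef{a}{}{I}{J}$ and to each $\coef{a}{}{I^{(r\to i)}}{J}$ gives it for the $b$'s. Item~2 is the crux, and I expect it to be the only real difficulty. Fix all entries of $I$ except the one at position $r'$ --- say $i_r$ sits at position $r$ and a $(t-2)$-set $K$ fills the remaining positions --- and sum $\coef{b}{}{I}{J}$ over the free entry at position $r'$. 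The $\coef{a}{}{I}{J}$ part vanishes by the assumed $r'$-th relation of the first collection. In the correction, reverse the order of summation so that the replacement parameter $i$ is summed outermost; for each fixed $i$ the inner sum over the entry at position $r'$ runs over all admissible values \emph{except} the value $i_r$, so by the $r'$-th relation (whose sum over the full range vanishes) it collapses to the single omitted term $-\coef{a}{}{I^\star}{J}$, where $I^\star$ has $i$ at position $r$, $i_r$ at position $r'$ and $K$ elsewhere (and it is $0$ when $i=i_r$). Summing over $i$ then leaves $-\tfrac{1}{n-t+1}\sum_i\coef{a}{}{I^\star}{J}$, a sum over position $r$ with all other coordinates frozen. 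This is where one invokes the symmetry property (item~3 of Definition~\ref{def:normalized}): applying the transposition $(r\ r')$ to $(I,J)$ rewrites this as a sum over position $r'$ of the $a$'s with the other coordinates frozen, which is $0$ by the assumed $r'$-th relation; hence $\sum_{i_{r'}}\coef{b}{}{I}{J}=0$. Conceptually, item~2 says that the two operations ``subtract the average along coordinate $r$'' and ``subtract the average along coordinate $r'$'' fail to commute on the set $[n]_t$ of distinct tuples --- they would commute on $[n]^t$ --- and the symmetry of the coefficients is precisely what makes this commutator harmless.
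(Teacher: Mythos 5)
Your treatment of items 1, 3 and 4 is correct and matches the paper's proof: the $(n-t+1)$-fold overcounting in item 1, the observation in item 3 that the modification only touches the $I$-index and hence commutes with summation over a $J$-coordinate, and the collapse identity $\sum_{i_r\notin I'}1_{I\to J}=1_{I'\to J'}$ in item 4 (which the paper uses without justification and you correctly justify: a permutation respecting $I'\to J'$ sends exactly one point to $j_r$, and that point cannot lie in $I'$).

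For item 2 you take a genuinely different route from the paper, and yours is the sound one. The paper also interchanges the two sums, but then asserts that for each fixed $i$ the inner sum over $i_{r'}$ vanishes directly by the assumed relation. It does not: the relation for the modified tuple $I^{(r\to i)}$ sums $i_{r'}$ over the complement of $(I\setminus\set{i_r,i_{r'}})\cup\set{i}$, whereas the range actually occurring additionally excludes $i_r$; the inner sum therefore equals minus the single omitted term, exactly the residual $-\coef{a}{}{I^\star}{J}$ you identify, and disposing of $\sum_i\coef{a}{}{I^\star}{J}$ (a sum over position $r$, which is not one of the assumed relations) genuinely requires the symmetry of the coefficients. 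Item 2 is in fact false without symmetry: for $n=3$, $t=2$, $r'=1$, $r=2$, setting $\coef{a}{}{(1,2)}{J}=\coef{a}{}{(1,3)}{J}=1$, $\coef{a}{}{(3,2)}{J}=\coef{a}{}{(2,3)}{J}=-1$ and all other coefficients to $0$ satisfies the first-coordinate relations, yet $\coef{b}{}{(2,1)}{J}+\coef{b}{}{(3,1)}{J}=1$. So you have located and repaired a real gap in the paper's argument. The one caveat you should make explicit is that your item 2 uses a hypothesis (symmetry of the $\coef{a}{}{\cdot}{\cdot}$) which the claim does not state, and \eqref{eq:change_coef} does not obviously preserve symmetry; since Lemma~\ref{lem:normalized_rep_exists} applies the claim iteratively over $r=1,\ldots,t$ and only symmetrizes at the very end, one must additionally arrange for symmetry to be restored or preserved after each step (for instance by symmetrizing after every application of \eqref{eq:change_coef}, which is harmless for the other items) for the surrounding lemma to go through.
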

\begin{proof}
We prove each one of the items separately.
  \paragraph{Proof of the first item.}
  Fix $I = \set{i_1,\ldots,i_{r-1},i_{r+1}\ldots,i_t}$, $J = \set{j_1,\ldots,j_t}$, and calculate:
  \begin{align}
    \sum\limits_{i_r\not\in I} \coef{b}{}{\set{i_1,\ldots,i_t}}{J}
    &=
    \sum\limits_{i_r\not\in I}
    \Bigg(\coef{a}{}{\set{i_1,\ldots,i_t}}{J} -
    \frac{1}{n-t+1}\sum\limits_{i\not\in I}\coef{a}{}{\set{i_1,\ldots,i_{r-1}, i, i_{r+1},\ldots,i_t}}{J}
    \Bigg)\notag\\\label{eq:stupid4}
    &=
    \sum\limits_{i_r\not\in I} \coef{a}{}{\set{i_1,\ldots,i_t}}{J}
    -\frac{1}{n-t+1}
    \sum\limits_{\substack{i_r\not\in I\\i\not\in I}}
    \coef{a}{}{\set{i_1,\ldots,i_{r-1}, i, i_{r+1},\ldots,i_t}}{J}.
  \end{align}
  As in the second double sum, for each $i_r$ the coefficient
  $\coef{a}{}{\set{i_1,\ldots,i_{r-1}, i_r, i_{r+1},\ldots,i_t}}{J}$
  is counted $n-\card{I} = n-t+1$ times, we get that the above expression
  is equal to $0$.

  \paragraph{Proof of the second item.} Fix $r' \neq r$, and suppose $\coef{a}{}{\cdot}{\cdot}$ satisfy the first set of normalizing relations for $r'$.
  Without loss of generality, assume $r'<r$.
  Let $I = \set{i_1,\ldots,i_{r'-1},i_{r'+1},\ldots,i_t}$, $J = \set{j_1,\ldots,j_t}$. Below, we let $i, i_{r'}$ be summation indices and we
  denote $I' = \set{i_1,\ldots,i_{r'-1},i_{r'},i_{r'+1},\ldots,i_{r-1},i,i_r,\ldots,,i_t}$.
  Calculating as in~\eqref{eq:stupid4}:
  \begin{align}
    \sum\limits_{i_{r'}\not\in I} \coef{b}{}{\set{i_1,\ldots,i_t}}{J}
    &=
    \sum\limits_{i_{r'}\not\in I} \coef{a}{}{\set{i_1,\ldots,i_t}}{J} -
    \frac{1}{n-t+1}\sum\limits_{i\not\in I\setminus\set{i_r}}\coef{a}{}{I'}{J}\notag\\\label{eq:stupid5}
    &=
    \sum\limits_{i_{r'}\not\in I} \coef{a}{}{\set{i_1,\ldots,i_t}}{J}
    -\frac{1}{n-t+1}
    \sum\limits_{i_{r'}\not\in I}
    \sum\limits_{i\not\in I\setminus\set{i_{r}}}\coef{a}{}{I'}{J}.
  \end{align}
  The first sum is $0$ by the assumption of the second item. For the second sum, we interchange the order of summation to see that it is equal to
  $\sum\limits_{i\not\in I\setminus\set{i_r}}  \sum\limits_{i_{r'}\not\in I} \coef{a}{}{I'}{J}$, and note that for each $i$,
  the inner sum is $0$ again by the assumption of the second item.

  \paragraph{Proof of the third item.} Fix $r'$, and suppose $\coef{a}{}{\cdot}{\cdot}$ satisfy the second set of normalizing relations for $r'$. Fix $I = \{i_1,\ldots,i_t\}$, $J = \{j_1,\ldots,j_{r'-1},j_{r'+1},\ldots,j_t\}$, $I' = \{i_1,\ldots,i_{r-1},i,i_{r+1},\ldots,i_t\}$, $J' = \{j_1,\ldots,j_t\}$, and calculate:
  \begin{align}
  \sum_{j_r \notin J} b(I,J') &=
  \sum_{j_r \notin J} \left(a(I,J') - \frac{1}{n-t+1} \sum_{i \notin I \setminus \{i_r\}} a(I',J') \right) \notag \\ &=	
  \sum_{j_r \notin J} a(I,J') - \frac{1}{n-t+1} \sum_{i \notin I \setminus \{i_r\}} \sum_{j_r \notin J} a(I',J').
  \end{align}
  Once again, both sums vanish due to the assumption.


  \paragraph{Proof of the fourth item.} For $I = \set{i_1,\ldots,i_t}$, $J = \set{j_1,\ldots,j_t}$,
  denote
  \[
  \coef{c}{}{I}{J} = \frac{1}{n-t+1}\sum\limits_{i\not\in I\setminus\set{i_r}}\coef{a}{}{\set{i_1,\ldots,i_{r-1}, i, i_{r+1},\ldots,i_t}}{J},
  \]
  so that $\coef{a}{}{I}{J} = \coef{b}{}{I}{J} + \coef{c}{}{I}{J}$. Plugging this into the representation of $f$, we see that it is enough
  to prove that $h(\pi) = \sum\limits_{I,J} \coef{c}{}{I}{J} 1_{I\rightarrow J}(\pi)$ is in $V_{t-1}$. Writing
  $I' = I\setminus \set{i_r}$, $J' = J\setminus\set{j_r}$ and expanding, we see that
  \begin{align*}
  h(\pi)
  &= \frac{1}{n-t+1}\sum\limits_{I,J}1_{I\rightarrow J}(\pi)\sum\limits_{i\not\in I\setminus\set{i_r}}\coef{a}{}{\set{i_1,\ldots,i_{r-1}, i, i_{r+1},\ldots,i_t}}{J}\\
  &= \frac{1}{n-t+1}\sum\limits_{I',J'}
  \sum\limits_{i\not\in I', j_r\not\in J'}
  \coef{a}{}{\set{i_1,\ldots,i_{r-1}, i, i_{r+1},\ldots,i_t}}{J}
  \sum\limits_{i_r\not\in I'}
  1_{I\rightarrow J}(\pi).
  \end{align*}
  Noting that
  $\sum\limits_{i_r\not\in I'}
  1_{I\rightarrow J}(\pi) = 1_{I'\rightarrow J'}(\pi)$ is in the spanning set~\eqref{eq:spanning} for $t-1$, the proof is concluded.
\end{proof}

Applying Lemma~\ref{lem:normalized_rep_exists} iteratively,
we may write each $f\colon S_n\to\mathbb{R}$ of degree at most $t$
as $f = f_0+\ldots+f_d$, where for each $k=0,1,\ldots,d$, the function $f_k$ is in $V_k$, and is given by a list of coefficients satisfying the normalizing relations.

\subsection{Usefulness of normalized representations}
In this section we establish a claim that demonstrates the usefulness of the normalizing relations.
Informally, this claim often serves as a replacement for the orthogonality property that is so useful in
product spaces. Formally, it allows us to turn long sums into short sums, and is very helpful in various
computations arising in computations in norms of functions on $S_n$ that are given in a normalized representation.
\begin{claim}\label{claim:aux}
  Let $r\in\set{1,\ldots,d}$, $0\leq t<r$.
  Let $J$ be of size $r$, $I$ be of size {\it at least} $r$,
  and $R\subseteq I$ of size $r-t$. Then
  \[
  \sum\limits_{T \in ([n]\setminus I)_t}{\coef{a}{r}{R\circ T}{J}}
  =(-1)^t~\sum\limits_{T\in (I\setminus R)_t}\coef{a}{r}{R\circ T}{J}.
  \]
\end{claim}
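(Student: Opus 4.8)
The plan is to prove the identity by induction on $t$, peeling off one coordinate of $T$ at a time and using the normalizing relations from the first collection in Definition~\ref{def:normalized}. The base case $t=0$ is trivial: both sides equal $\coef{a}{r}{R}{J}$, since $R$ has size $r-0 = r$ and the only ``$0$-tuple'' over any set is the empty one. For the inductive step, suppose the claim holds for $t-1$ and all relevant configurations, and fix $J$ of size $r$, $I$ of size at least $r$, and $R \subseteq I$ of size $r-t$. Write $T = (u) \circ T'$, splitting off the first coordinate $u \notin I$ and leaving $T' \in ([n]\setminus(I \cup \{u\}))_{t-1}$.

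The key idea is to reorganize the sum over $u \notin I$ using a single normalizing relation. Consider the ordered set $R \circ T$ of size $r$; since $r \leq r$, for the coordinate position occupied by $u$ we have $\sum_{u' } \coef{a}{r}{R \circ (u') \circ T'}{J} = 0$, where the sum ranges over all $u'$ not in the other $r-1$ coordinates, i.e.\ not in $R \cup T'$. Splitting this complete sum according to whether $u' \in I \setminus R$ (finitely many terms, namely those indexed by $I \setminus (R \cup T')$) or $u' \notin I$, we get
\[
\sum_{u \notin I \cup T'} \coef{a}{r}{R \circ (u) \circ T'}{J} = -\sum_{u \in I \setminus (R \cup T')} \coef{a}{r}{R \circ (u) \circ T'}{J}.
\]
Now sum both sides over $T' \in ([n] \setminus I)_{t-1}$ — note that this is not quite the sum we want on the left (we need $u \notin I$, which is implied by $u \notin I \cup T'$ once $T'$ avoids... wait, here one must be slightly careful, since the inner sum over $u$ excludes $T'$ as well as $I$; but $T' \subseteq [n] \setminus I$, so requiring $u \notin I \cup T'$ is exactly requiring $(u)\circ T'$ to be a valid tuple in $([n]\setminus I)_t$ extending $T'$). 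Thus the left side, summed over $T'$, is exactly $\sum_{T \in ([n]\setminus I)_t} \coef{a}{r}{R \circ T}{J}$.

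For the right side, after swapping the order of summation we get $-\sum_{u \in I \setminus R} \sum_{T' \in ([n]\setminus I)_{t-1},\, T' \not\ni u} \coef{a}{r}{R \circ (u) \circ T'}{J}$; since $u \in I$ and $T' \subseteq [n]\setminus I$ the condition $T' \not\ni u$ is automatic, so the inner sum is $\sum_{T' \in ([n]\setminus I)_{t-1}} \coef{a}{r}{(R \cup \{u\}) \circ T'}{J}$ up to reordering the symmetric coefficient. Here $R \cup \{u\}$ has size $r-t+1 = r-(t-1)$ and is contained in $I$, so the inductive hypothesis (with $t-1$ in place of $t$ and $R \cup \{u\}$ in place of $R$) converts each inner sum into $(-1)^{t-1} \sum_{T' \in (I \setminus (R \cup \{u\}))_{t-1}} \coef{a}{r}{(R\cup\{u\}) \circ T'}{J}$. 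Reassembling, $-\sum_{u \in I\setminus R} (-1)^{t-1} \sum_{T' \in (I\setminus(R\cup\{u\}))_{t-1}}(\cdots) = (-1)^t \sum_{T \in (I\setminus R)_t}\coef{a}{r}{R \circ T}{J}$, where the outer-$u$-then-$T'$ sum reassembles into a sum over ordered $t$-tuples $T = (u)\circ T'$ drawn from $I \setminus R$. This is exactly the right-hand side of the claim.

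The main obstacle I anticipate is purely bookkeeping: keeping straight which index sets the various summation variables avoid (in particular the interplay between ``$u \notin I$'', ``$u \notin R \cup T'$'', and the reindexing $R \mapsto R \cup \{u\}$), and making sure the symmetry property of the coefficients $\coef{a}{r}{\cdot}{\cdot}$ is invoked correctly when reordering a tuple like $R \circ (u) \circ T'$ into $(R \cup \{u\}) \circ T'$. No genuinely hard estimate is involved; the only subtlety is that one must apply exactly one normalizing relation per inductive step and verify that the leftover finite correction terms match the target inductive configuration.
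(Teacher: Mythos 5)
Your proof is correct and follows essentially the same route as the paper's: induction on $t$, applying a single normalizing relation to convert the sum over the peeled coordinate from $[n]\setminus I$ to $I\setminus R$, and then invoking the inductive hypothesis with the enlarged set $R\cup\{u\}$. The only (cosmetic) difference is that you peel off the first coordinate of $T$ rather than the last, which has the minor advantage that $R\circ(u)$ is already a prefix, so the inductive hypothesis applies without the reordering-by-symmetry step the paper's version implicitly uses.
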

\begin{proof}
By symmetry, it suffices to prove the statement for $R$ that are prefixes of $I$.
We prove the claim by induction on $t$.
The case $t=0$ is trivial, so assume the claim holds for $t-1$, where $t\geq 1$, and prove for $t$.
The left hand side is equal to
\[
\sum\limits_{\substack{i_1,\ldots,i_t\not\in I\\ \text{distinct}}}{\coef{a}{r}{R\circ (i_1,\ldots,i_t)}{J}}.
\]
For fixed $i_1,\ldots,i_{t-1}\not\in I$, by the normalizing relations we have that
\[
\sum\limits_{i_t\not\in I\cup\set{i_1,\ldots,i_{t-1}}}{\coef{a}{r}{R\circ (i_1,\ldots,i_{t-1})\circ (i_t)}{J}}
= -\sum\limits_{i_t\in I\setminus R}{\coef{a}{r}{R\circ (i_1,i_2,\ldots,i_{t-1})\circ (i_t)}{J}},
\]
hence
\[
\sum\limits_{\substack{i_1,\ldots,i_t\not\in I\\ \text{distinct}}}{
\coef{a}{r}{R\circ (i_1,\ldots,i_t)}{J}}
=-\sum\limits_{i_t\in I\setminus R}{
\sum\limits_{\substack{~~~i_1,\ldots,i_{t-1}\not\in I\cup \set{i_t}\\ \text{distinct}}}{\coef{a}{r}{R\circ (i_1,i_2,\ldots,i_{t-1})\circ i_t}{J}}}.
\]
For fixed $i_t\in I\setminus R$, using the induction hypothesis, the inner sum is equal to
\[
(-1)^{t-1} \sum\limits_{T\in (I\setminus (R\cup\set{i_t}))_{t-1}}\coef{a}{r}{R\circ T\circ (i_t)}{J}.
\]
Plugging that in,
\begin{align*}
\sum\limits_{\substack{i_1,\ldots,i_t\not\in I\\ \text{distinct}}}{
\coef{a}{r}{R\circ (i_1,\ldots,i_t)}{J}}
&=(-1)^{t-1}~\sum\limits_{i_t\in I\setminus R}{-\sum\limits_{T\in (I\setminus (R\cup\set{i_t}))_{t-1}}\coef{a}{r}{R\circ T\circ (i_t)}{J}}\\
&=(-1)^t \sum\limits_{T'\in (I\setminus R)_t}\coef{a}{r}{R\circ T'}{J}.
\qedhere
\end{align*}
\end{proof}

\subsection{Analytic influences and the hypercontractive statement}
Key to the hypercontractive statement proved in this section is an analytic notion of influence. Given a fixed
representation of $f$ as $\sum\limits_{k=0}^{n}\sum\limits_{I,J\in [n]_k}{\coef{a}{k}{I}{J} 1_{I\rightarrow J}}$
where for each $k$ the coefficients $\coef{a}{k}{I}{J}$ satisfy the normalizing relations, we define the analytic
notion of influences as follows.
\begin{defn}
   For $S,T\subseteq [n]$ of the same size $s$, define
   \[
   I_{S,T}[f] =
   \sum\limits_{r\geq 0}
   \sum\limits_{\substack{I\in ([n]\setminus S)_r\\J\in ([n]\setminus T)_r}}
   {
   (r+s)!^2
   \frac{1}{n^{r+s}}
   \coef{a}{}{S\circ I}{T\circ J}^2}.
   \]
   Here, $S\circ I$ denotes the element in $[n]_r$ resulting from appending $I$ at the end of $S$.
\end{defn}



\begin{defn}
  A function $f$ is called $\eps$-analytically-global if for all $S,T$, $I_{S,T}[f]\leq \eps$.
\end{defn}
\begin{remark}
With some work it can be shown that for $d\ll n$, a degree $d$ function being $\eps$-analytically global is equivalent to
  $f$ being $(2d,\delta)$-global in the sense of Definition~\ref{def:bounded_global}, where $\delta = O_d(\eps)$. Thus, at least
  qualitatively, the hypercontractive statement below is in fact equivalent to Theorem~\ref{thm:Reasonability}.
\end{remark}

We can now state our variant of the hypercontractive inequality that uses analytic influences.
\begin{thm}\label{thm:hyp_sym}
  There exists an absolute constant $C>0$ such that for all $d,n\in\mathbb{N}$ for which $n\geq 2^{C\cdot d\log d}$, the following holds.
  If $f\in V_{d}$ is given by a list of coefficients satisfying the normalizing relations, say $f = \sum\limits_{I,J\in [n]_d} \coef{a}{d}{I}{J} 1_{I\rightarrow J}$,
  then
  \[
    \Expect{\pi}{f(\pi)^4}\leq
    \sum\limits_{\card{S}=\card{T}}\left(\frac{4}{n}\right)^{\card{S}} I_{S,T}[f]^2.
  \]
\end{thm}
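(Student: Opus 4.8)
The plan is to expand $\Expect{\pi}{f(\pi)^4}$ directly using the normalized representation $f = \sum_{I,J\in[n]_d}\coef{a}{d}{I}{J}1_{I\rightarrow J}$, and to organize the resulting sum over quadruples of pairs $(I_1,J_1),\dots,(I_4,J_4)$ according to the ``overlap pattern'' of the four domain-sets $I_1,\dots,I_4$ and the four range-sets $J_1,\dots,J_4$. The key point is that $\Expect{\pi}{1_{I_1\to J_1}(\pi)\cdots 1_{I_4\to J_4}(\pi)}$ is either $0$ (if the union of the four partial maps is inconsistent) or equal to $(n-|U|)!/n!$ where $U = I_1\cup\cdots\cup I_4$, which is $\Theta(n^{-|U|})$; in particular the weight of a quadruple decays exponentially in the size of the union of the domain-sets. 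So quadruples whose four $I$'s (and four $J$'s) share a large common part $S$ (resp.\ $T$) dominate, and this is exactly what is being measured by $I_{S,T}[f]$.

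First I would set up the combinatorial bookkeeping: for a quadruple with domain-sets $I_1,\dots,I_4$, let $S$ be (a canonical copy of) the common intersection structure — more precisely one should pass to the "core'' part of the configuration that is shared by all four tuples, and the "private'' parts $I_k' = I_k\setminus S$ — and similarly $T, J_k'$ on the range side. Writing $r = |I_k'|$ (these can be made equal across $k$ after accounting for which coordinates are shared, using the symmetry condition in Definition~\ref{def:normalized}), the contribution of all quadruples with this shared pattern is, up to combinatorial factors counting orderings, of the form
\[
\left(\tfrac{1}{n}\right)^{|S|}\;\cdot\;(\text{const})^{|S|}\;\cdot\;\Bigg(\sum_{r}(r+s)!^2 n^{-(r+s)}\!\!\sum_{\substack{I\in([n]\setminus S)_r\\ J\in([n]\setminus T)_r}}\!\!\coef{a}{}{S\circ I}{T\circ J}^2\Bigg)^2,
\]
which is precisely $(4/n)^{|S|} I_{S,T}[f]^2$ after one checks the constants work out to $4$. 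The normalizing relations enter through Claim~\ref{claim:aux}: when a private index $i$ ranges over all of $[n]\setminus(\text{fixed set})$ — which is what the naive expansion produces — Claim~\ref{claim:aux} lets us replace that long sum by a short sum over $i$ inside the small set $I\setminus R$, at the cost of a sign and a bounded multiplicative loss. This is the mechanism that converts the unwieldy full expansion into the clean, absolutely convergent bound in the statement; it is the exact analogue of orthogonality/Plancherel in the product-space proofs, as the authors note before the claim.

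The main obstacle I expect is the case analysis over overlap patterns of four tuples: unlike the degree-$1$ or pairwise case, four ordered sets can overlap in a genuinely complicated poset of intersection patterns (some indices shared by all four, some by three, some by two in various ways), and for each pattern one must (a) correctly identify which "short sum over a small set'' replacement to apply, (b) control the combinatorial prefactors — the number of ways to interleave private coordinates of size $\le d$ each into a universe, giving factors like $2^{O(d\log d)}$, which is why the hypothesis $n\ge 2^{Cd\log d}$ appears — and (c) apply Cauchy–Schwarz to reduce the cross terms $\coef{a}{}{\cdot}{\cdot}\coef{a}{}{\cdot}{\cdot}$ coming from different tuples to the diagonal squares $\coef{a}{}{\cdot}{\cdot}^2$ that make up $I_{S,T}[f]$. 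Making sure that after all these steps every configuration is charged to a single $(S,T)$ term with the correct constant $4$ (rather than some larger constant, or being double-counted) is the delicate accounting that the proof must carry out carefully; the repeated-index/sign cancellations from Claim~\ref{claim:aux} are what keep the constant from blowing up with $d$.
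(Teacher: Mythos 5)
Your overall strategy --- expand $\Expect{\pi}{f(\pi)^4}$ in the normalized representation, note that each consistent quadruple is weighted by roughly $n^{-\card{I_1\cup\dots\cup I_4}}$, and use Claim~\ref{claim:aux} to collapse sums over indices that appear in only one of the four tuples --- matches the first half of the paper's argument (this is exactly how Claim~\ref{claim:g_uppers_f} is proved: the ``$A_1\neq\emptyset$'' configurations are eliminated by two applications of Claim~\ref{claim:aux}, reducing to the family $\mathcal{H}$ where every index appears in at least two of the $I_r$'s). But the paper does \emph{not} then convert the remaining sum into $\sum_{S,T}(4/n)^{\card{S}}I_{S,T}[f]^2$ by hand. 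Instead it builds a comparison function $g$ on the $p$-biased cube $\power{n\times n}$ with $p=1/n$, whose Fourier coefficients are $\sqrt{p(1-p)}^{\,d}\card{\coef{a}{}{I}{J}}$, shows term-by-term over $\mathcal{H}$ that $\norm{f}_4^4\leq(1+o(1))\norm{g}_4^4$, bounds the generalized influences $I_T[g]$ by the analytic influences $I_{I',J'}[f]$ (Claim~\ref{claim:gen_inf_upper}), and then invokes the $p$-biased hypercontractivity theorem (Theorem~\ref{thm:p_biased}) as a black box; the constant $4$ comes from $3p=3/n$ plus the $(1+o(1))$ slack, not from a direct count.

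The genuine gap in your plan is precisely the step you defer to ``delicate accounting'': passing from the sum over overlap configurations in $\mathcal{H}$ to the squared-influence form. Your description of a configuration as a single core $S$ shared by \emph{all four} tuples together with private parts of a common size $r$ does not reflect the actual combinatorics: a typical configuration in $\mathcal{H}$ has indices shared by exactly two of the four tuples in crossing patterns (e.g.\ some indices in $I_1\cap I_2$ only and others in $I_3\cap I_4$ only, or in $I_1\cap I_3$, etc.), and the square $I_{S,T}[f]^2$ on the right-hand side arises from a \emph{pairing} of the four tuples into two pairs, with $S$ the part shared \emph{across} the two pairs and the two factors of $I_{S,T}[f]$ each absorbing one pair; getting there requires an AM--GM/Cauchy--Schwarz step that replaces $\card{a_1a_2a_3a_4}$ by sums of products of two squares, followed by a careful charge of each configuration to a unique $(S,T)$. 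This is exactly the combinatorial content that Theorem~\ref{thm:p_biased} (from~\cite{KLLM}) encapsulates and that the paper deliberately imports rather than redoes. Your plan could in principle be completed by reproving that statement directly on $S_n$, but as written it asserts the conclusion of that argument without supplying it, and the asserted intermediate form (all four tuples sharing one core) is not the form in which the configurations actually present themselves.
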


\paragraph{$p$-biased hypercontractivity.}
The last ingredient we use in our proof is a hypercontractive inequality on the $p$-biased cube from~\cite{KLLM}.
Let $g\colon\power{m}\to\mathbb{R}$ be a degree $d$ function, where we think of $\power{m}$ as equipped with the $p$-biased
product measure. Then, we may write $g$ in the basis of characters, i.e.\ as a liner combination of $\set{\chi_S}_{S\subseteq[m]}$,
where $\chi_S(x) = \prod\limits_{i\in S}\frac{x_i - p}{\sqrt{p(1-p)}}$. This is the $p$-biased Fourier transform of $f$:
\[
g(x) = \sum\limits_{S}{\widehat{g}(S)\chi_S(x)}.
\]
Next, we define the generalized influences of sets (which are very close in spirit to the analytic notion of influences considered herein).
For $T\subseteq[n]$, we denote
\[
I_T[g] = \sum\limits_{S\supseteq T}{\widehat{g}(S)^2}.
\]
The following results is an easy consequence of~\cite[Theorem 3.4]{KLLM} (the deduction of it from
this result is done in the same way as the proof of~\cite[Lemma 3.6]{KLLM}).
\begin{thm}\label{thm:p_biased}
  Suppose $g\colon\power{m}\to\mathbb{R}$. Then
  $\norm{g}_4^4\leq \sum\limits_{T\subseteq[n]}(3p)^{\card{T}} I_T[g]^2$.
\end{thm}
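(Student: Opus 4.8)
The plan is to obtain Theorem~\ref{thm:p_biased} as a black-box consequence of $p$-biased global hypercontractivity, following the deduction of \cite[Lemma 3.6]{KLLM} from \cite[Theorem 3.4]{KLLM}. That theorem is a fourth-moment inequality on the $p$-biased cube $\power{m}$ packaged in slightly different but equivalent language: it bounds $\norm{g}_4^4$ by a weighted sum over sets $T$ of (essentially) $\norm{\partial_T g}_2^4$, where $\partial_T g = \sum_{S\supseteq T}\widehat{g}(S)\chi_{S\setminus T}$ is the $p$-biased Laplacian/derivative, with a factor carrying a power of $p$ per element of $T$. The first step is the translation between the two languages: applying $p$-biased Parseval to $\partial_T g$ gives $\norm{\partial_T g}_2^2 = \sum_{S\supseteq T}\widehat{g}(S)^2 = I_T[g]$, so the ``derivative-norm'' form of \cite[Theorem 3.4]{KLLM} is literally a statement about the generalized influences $I_T[g]$ defined above; if \cite[Theorem 3.4]{KLLM} is instead phrased via restriction norms $\max_y\norm{g_{T\to y}}_2^2$, one uses the $p$-biased identity $\partial_i g = \sqrt{p(1-p)}\,(g_{i\to 1}-g_{i\to 0})$ to pass between the two.

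The substance of the argument, which mirrors the computation in \cite{KLLM}, is then to pin down the per-coordinate factor as exactly $3p$. One expands $\norm{g}_4^4 = \norm{g^2}_2^2 = \sum_{S_1,S_2,S_3,S_4}\widehat{g}(S_1)\widehat{g}(S_2)\widehat{g}(S_3)\widehat{g}(S_4)\,\E[\chi_{S_1}\chi_{S_2}\chi_{S_3}\chi_{S_4}]$, notes that $\E[\chi_{S_1}\chi_{S_2}\chi_{S_3}\chi_{S_4}]=\prod_i\E[\chi_i^{e_i}]$ where $e_i$ is the number of the four sets containing $i$, and computes $\E[\chi_i]=0$, $\E[\chi_i^2]=1$, $\E[\chi_i^3]=\tfrac{1-2p}{\sqrt{p(1-p)}}$, $\E[\chi_i^4]=\tfrac{1}{p(1-p)}-3$. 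Hence only tuples whose union $S_1\cup\cdots\cup S_4$ has every coordinate of multiplicity $0,2,3$ or $4$ contribute. Using $\chi_i^2 = 1+\tfrac{1-2p}{\sqrt{p(1-p)}}\chi_i$ to re-expand $\widehat{g^2}(V)$, one assigns each coordinate of the union to a pair of the four indices, applies Cauchy--Schwarz twice to collapse the four-fold sum onto a single ``common heavy set'' $T$ (with $T$ contained in each relevant $S_i$), and bounds the result by $I_T[g]^2$; the combinatorial count of how $T$ can be distributed among the four sets, combined with the accumulated cubic-moment factors, is what forces the clean constant $3$, after which a routine estimate yields $\norm{g}_4^4\le\sum_T (3p)^{\card{T}}I_T[g]^2$.

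The main obstacle is precisely this constant-tracking in the last step. The odd Fourier level contributes the coefficient $c_p=\tfrac{1-2p}{\sqrt{p(1-p)}}$, of size $\Theta\!\bigl(1/\sqrt{p(1-p)}\bigr)$, for every coordinate appearing in an odd number of the four sets, and the quartic level a factor $\tfrac{1}{p(1-p)}-3$; one must ``split'' each such factor between the two indices a coordinate is charged to, so that upon squaring through Cauchy--Schwarz these $1/\sqrt{p}$-type blowups are absorbed and only the advertised $(3p)^{\card{T}}$ survives, rather than a lossier $(Cp)^{\card{T}}$ or something growing with $1/p$. As in \cite{KLLM}, it is implicit that $g$ is of bounded degree (which the surrounding discussion supplies), so that the sum over $T$ is finite; this is harmless for the application to Theorem~\ref{thm:hyp_sym}.
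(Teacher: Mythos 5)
Your opening paragraph coincides with everything the paper itself offers for this statement: Theorem~\ref{thm:p_biased} is presented there purely as a consequence of \cite[Theorem 3.4]{KLLM}, ``deduced in the same way as the proof of \cite[Lemma 3.6]{KLLM}'', and your Parseval identification $\norm{\partial_T g}_2^2=\sum_{S\supseteq T}\widehat{g}(S)^2=I_T[g]$ is indeed the natural bridge between the derivative language and the generalized-influence language. To that extent you are on the paper's route.

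As a proof, however, the proposal has a genuine gap, and it sits exactly where all the content is. Your second and third paragraphs describe, but never carry out, the step that produces the weight $(3p)^{\card{T}}$: ``applies Cauchy--Schwarz twice to collapse the four-fold sum onto a single common heavy set $T$ \ldots after which a routine estimate yields'' is not an argument, and you yourself concede that the constant-tracking is ``the main obstacle''. This is not harmless bookkeeping. The casual remark that one may pass between the Laplacian form $\partial_T$ and the restriction form via $\partial_i g=\sqrt{p(1-p)}\,(g_{i\to 1}-g_{i\to 0})$ hides factors of $(p(1-p))^{\pm\card{T}}$, and those factors are precisely what decides whether the final weight sits at $(3p)^{\card{T}}$ against the Fourier-weight influences $I_T[g]=\sum_{S\supseteq T}\widehat{g}(S)^2$ used here, or at $(\Theta(1/p))^{\card{T}}$ against influences built from unnormalized derivatives as in \cite{KLLM}. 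A one-line sanity check shows this cannot be waved away: for $g=\chi_i$ one has $\norm{g}_4^4=\tfrac{1}{p(1-p)}-3\approx 1/p$, while $\sum_T(3p)^{\card{T}}I_T[g]^2=1+3p$ under the Fourier-weight convention, so the expansion you outline cannot terminate in the advertised bound unless you pin down exactly which normalization of generalized influences \cite[Theorem 3.4]{KLLM} is stated for and how your translation rescales it; ``routine estimate'' is exactly the wrong place to stop. In short: invoking KLLM as a black box, as the paper does, is legitimate, but your attempt to reconstruct the internal computation is incomplete at its only nontrivial point, and the closing worry about finiteness of the sum over $T$ is a non-issue since the cube is a finite domain.
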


\subsection{Proof of Theorem~\ref{thm:hyp_sym}}
  Write $f$ according to its normalized representation as $f(\pi) = \sum\limits_{I,J\in[n]_d} \coef{a}{}{I}{J} 1_{I\rightarrow J}$.
  We intend to define a function $g\colon \power{n\times n} \to \mathbb{R}$ that will behaves similary to $f$, as follows.
  We think of $\power{n\times n}$ as equipped with the $p$-biased
  measure for $p=1/n$, and think of an input $x\in \power{n\times n}$ as a matrix. The rationale is that the bit $x_{i,j}$ being $1$ will
  encode the fact that $\pi(i) = j$, but we will never actually think about it this way. Thus, we define $g$ as
  \[
  g(x) = \sum\limits_{I,J\in[n]_d} \coef{a}{}{I}{J} \prod\limits_{\ell=1}^{d}\left(1_{I_\ell\rightarrow J_{\ell}} - \frac{1}{n}\right).
  \]
  For $I,J$, we denote by $S_{I,J}\subseteq[n\times n]$ the set of coordinates $\sett{(I_{\ell},J_{\ell})}{\ell=1,\ldots,d}$,
  and note that with this notation,
  \[
   g(x) = \sum\limits_{I,J\in[n]_d} \sqrt{p(1-p)}^{d}\card{\coef{a}{}{I}{J}} \chi_{S_{I,J}}(x).
  \]
  To complete the proof, we first show (Claim~\ref{claim:g_uppers_f}) that $\norm{f}_4^4\leq (1+o(1))\norm{g}_4^4$,
  and then prove the desired upper bound on the $4$-norm of $g$, using Theorem~\ref{thm:p_biased}.
  \begin{claim}\label{claim:g_uppers_f}
    $\norm{f}_4^4\leq(1+o(1))\norm{g}_4^4$
  \end{claim}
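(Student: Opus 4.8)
}
The plan is to first show that, after invoking the normalizing relations, $f$ is \emph{literally} the polynomial $g$ evaluated at the permutation matrix of $\pi$, and then to argue that for degree-$d$ polynomials of this special ``partial-matching'' shape the distribution of a random permutation matrix is close enough to the $1/n$-biased product measure $\mu_p$ (here $p=1/n$) on $\power{n\times n}$ that the fourth moment can go up by at most a factor $1+o(1)$. Concretely, first I would \emph{center} $f$: writing $x_{ij}(\pi)=1_{\pi(i)=j}$ and expanding $1_{I\rightarrow J}(\pi)=\prod_\ell x_{I_\ell J_\ell}(\pi)=\sum_{A\subseteq[d]}p^{d-\card A}\prod_{\ell\in A}(x_{I_\ell J_\ell}(\pi)-p)$, the contribution of $\prod_{\ell\in A}(x_{I_\ell J_\ell}(\pi)-p)$ for a fixed $A\subsetneq[d]$ amounts to summing $\coef{a}{}{I}{J}$ over the positions of $I$ outside $A$ (with $J$ fixed), and iterating the first family of normalizing relations in Definition~\ref{def:normalized} makes every such partial sum vanish (sum the innermost free position first, as in the proof of Claim~\ref{claim:aux}). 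Hence $f(\pi)=\sum_{I,J\in[n]_d}\coef{a}{}{I}{J}\prod_{\ell=1}^d(x_{I_\ell J_\ell}(\pi)-p)$, so $\norm{f}_4^4=\Expect{\pi}{g(X_\pi)^4}$, where $X_\pi$ is the permutation matrix of $\pi$. (If $g$ is taken with $\card{\coef{a}{}{I}{J}}$ in place of $\coef{a}{}{I}{J}$, observe in addition that $\Expect{x\sim\mu_p}{\chi_{S_1}\chi_{S_2}\chi_{S_3}\chi_{S_4}}\ge 0$ on the $p$-biased cube, so replacing coefficients by their absolute values only increases $\norm{g}_4$; it therefore suffices to prove the bound for the signed version of $g$.)

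Next I would expand both fourth moments over configurations. Both $\norm{g}_4^4$ and $\norm{f}_4^4=\Expect{\pi}{g(X_\pi)^4}$ become sums over $4$-tuples $(I^k,J^k)_{k=1}^4$ of $\prod_k\coef{a}{}{I^k}{J^k}$ times a weight. For a tuple let $C$ be the set of distinct pairs $(i,j)$ occurring among the $(I^k_\ell,J^k_\ell)$, with multiplicities $(m_c)_{c\in C}$, $\sum_c m_c=4d$. Under $\mu_p$ the weight is $\prod_{c\in C}\mu_{m_c}$ with $\mu_m=\Expect{y\sim\mathrm{Bern}(p)}{(y-p)^m}$; this vanishes unless every $m_c\ge 2$ (call such a tuple \emph{paired}), and for paired tuples it equals $(1+o(1))p^{\card C}$ since $\mu_m=(1+O(d/n))p$ for $2\le m\le 4d$. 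Under the permutation matrix, writing $(x_c(\pi)-p)^{m_c}=(-p)^{m_c}+\beta_{m_c}x_c(\pi)$ with $\beta_m=(1-p)^m-(-p)^m$ and expanding multilinearly gives the weight $\sum_{S\subseteq C}\bigl(\prod_{c\in S}\beta_{m_c}\bigr)\bigl(\prod_{c\notin S}(-p)^{m_c}\bigr)\cdot 1_{S\text{ a partial matching}}/n_{\card S}$. On a paired tuple the $S=C$ term dominates when $C$ is a matching and, via $n_{\card C}^{-1}=(1+O(d^2/n))p^{\card C}$, matches $\prod_{c\in C}\mu_{m_c}$ up to $1+o(1)$, while the remaining terms are smaller by a factor $O(d/n)$; any conflict inside $C$ only shrinks the permutation weight further, by a factor $O(p)$. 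Thus the absolute value of the permutation weight is at most $(1+o(1))$ times the $\mu_p$-weight on every paired tuple, so the paired part of $\norm{f}_4^4$ is at most $(1+o(1))\norm{g}_4^4$.

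The hard part will be the \emph{unpaired} tuples, i.e.\ those with a singleton edge, which contribute to $\norm{f}_4^4$ but are invisible to $\mu_p$. A priori such a tuple can have permutation weight of order $p^{2d}$, i.e.\ as large as a diagonal paired tuple, so a termwise domination by $\norm{g}_4^4$ is impossible; the saving must come from resumming the coefficients. The key point I would use is that the multilinear expansion of the permutation weight \emph{relative to a singleton edge $c_0$} telescopes into an expression supported only on sub-tuples in which $c_0$ \emph{conflicts} (shares a row or a column) with one of the remaining edges — it is exactly the second difference $1_{S'\cup\{c_0\}}/n_{\card{S'}+1}-p\cdot 1_{S'}/n_{\card{S'}}$ that forces this conflict. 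Then summing $\coef{a}{}{\cdot}{\cdot}$ over the $\approx n$ possible values of the free endpoint of $c_0$ equals, by the normalizing relations (in the resummation form of Claim~\ref{claim:aux}), minus the sum over the $O(d)$ conflicting values, which costs a factor $O(d/n)$. Iterating over the (at least two) singleton edges and summing over the $2^{O(d\log d)}$ coincidence patterns of the $4d$ edges — absorbed by the hypothesis $n\ge 2^{Cd\log d}$ — bounds the total unpaired contribution by $o(1)\cdot\norm{g}_4^4$. Combining the two parts gives $\norm{f}_4^4\le(1+o(1))\norm{g}_4^4$, and organizing this cancellation so that each singleton edge produces a genuine $O(d/n)$ gain is where essentially all the work lies.
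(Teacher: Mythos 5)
Your plan is essentially the paper's proof: both expand the fourth moment over coincidence configurations of the $4d$ index pairs, use the normalizing-relations resummation (Claim~\ref{claim:aux}) to convert each sum over the $\approx n$ values of a singleton's free endpoint into a sum over $O(d)$ colliding values (gaining a factor $\mathrm{poly}(d)/n$ per singleton and eventually landing in the fully-paired class $\mathcal{H}$), and then compare the paired class termwise with the $p$-biased fourth moment of $g$, which is a sum of non-negative terms each of size $(1\pm o(1))p^{\card{C}}\prod_k\card{\coef{a}{}{I^k}{J^k}}$. Your preliminary centering identity $f(\pi)=\sum_{I,J}\coef{a}{}{I}{J}\prod_\ell(x_{I_\ell J_\ell}(\pi)-p)$ is a valid consequence of the normalizing relations and a pleasant reorganization, but it does not change the substance: the paper instead works with the raw indicators and applies the same resummation directly. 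Two small inaccuracies to fix when writing it up: the second difference $1_{S'\cup\{c_0\}}/n_{\card{S'}+1}-p\cdot 1_{S'}/n_{\card{S'}}$ does \emph{not} vanish on conflict-free $S'$ (it is merely suppressed by $O(\card{S'}/n)$, which still suffices), and an unpaired tuple can have exactly one singleton edge (e.g.\ multiplicities $1$ and $3$), though a single factor of $O(\mathrm{poly}(d)/n)$ is all you need.
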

  \begin{proof}
    Deferred to Section~\ref{sec:pf_g_uppers_f}.
  \end{proof}
  We now upper bound $\norm{g}_4^4$. Using Theorem~\ref{thm:p_biased},
  \begin{equation}\label{eq:direct_1}
  \norm{g}_4^4\leq\sum\limits_{T\subseteq[n\times n]}(3p)^{\card{T}} I_T[g]^2,
  \end{equation}
  and the next claim bounds the generalized influences of $g$ by the analytic influences of
  $f$.

  For two sets $I=\{i_1,\ldots,i_t\}$, $J=\{j_1,\ldots,j_t\}$ of the same size, let $S(I,J) = \{(i_1,j_1),\ldots,(i_t,j_t)\} \subseteq [n]\times[n]$.
  \begin{claim}\label{claim:gen_inf_upper}
  Let $T = S(I',J')$ be such that $I_T[g]\neq 0$. Then $I_T[g]\leq I_{I',J'}[f]$.
  \end{claim}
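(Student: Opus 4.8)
The plan is to make the $p$-biased Fourier expansion of $g$ completely explicit and then compare the two influence quantities term by term. First I would observe that, viewed as a function on $\power{n\times n}$ with the $p$-biased measure for $p=1/n$, the product $\prod_{\ell=1}^{d}\bigl(1_{I_\ell\to J_\ell}-\tfrac1n\bigr)$ equals $\sqrt{p(1-p)}^{\,d}\,\chi_{S_{I,J}}$: the $d$ coordinates $(I_\ell,J_\ell)$ are pairwise distinct because $I,J\in[n]_d$, so each factor contributes one distinct single-coordinate character. Hence $\widehat g(S)=0$ unless $S=S(I_0,J_0)$ is a size-$d$ matching, and in that case, since the $d!$ orderings $(I,J)$ with $S_{I,J}=S$ all carry the same coefficient by the symmetry property of the normalized representation, we get
\[
\widehat g(S)=\sqrt{p(1-p)}^{\,d}\;d!\;\coef{a}{}{I_0}{J_0}
\]
for any ordering $(I_0,J_0)$ of $S$.

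Next I would expand $I_T[g]=\sum_{S\supseteq T}\widehat g(S)^2$ for $T=S(I',J')$, writing $s=\card{I'}=\card{J'}$; if $s>d$ there is nothing to prove, so assume $s\le d$. Every size-$d$ matching $S\supseteq T$ is the disjoint union of $T$ with a size-$(d-s)$ matching on $([n]\setminus I')\times([n]\setminus J')$, and each such matching is described by exactly $(d-s)!$ pairs of ordered tuples $A\in([n]\setminus I')_{d-s}$, $B\in([n]\setminus J')_{d-s}$. Using the symmetry property once more to identify the coefficient attached to $S$ with $\coef{a}{}{I'\circ A}{J'\circ B}$ for any representative $(A,B)$, this gives
\[
I_T[g]=\frac{(d!)^2\,(p(1-p))^d}{(d-s)!}\sum_{\substack{A\in([n]\setminus I')_{d-s}\\ B\in([n]\setminus J')_{d-s}}}\coef{a}{}{I'\circ A}{J'\circ B}^2 .
\]

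Finally I would bound $I_{I',J'}[f]$ from below by keeping only the single summand with $r=d-s$ in its definition (for which $(r+s)!=d!$ and $n^{r+s}=n^{d}$), obtaining
\[
I_{I',J'}[f]\ \ge\ \frac{(d!)^2}{n^{d}}\sum_{\substack{A\in([n]\setminus I')_{d-s}\\ B\in([n]\setminus J')_{d-s}}}\coef{a}{}{I'\circ A}{J'\circ B}^2 .
\]
Since $p=1/n$ gives $(p(1-p))^d=(1-1/n)^d n^{-d}\le n^{-d}$ and $s\le d$ gives $(d-s)!\ge 1$, comparing the last two displays yields $I_T[g]\le I_{I',J'}[f]$, as claimed. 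The only delicate point in this argument is the bookkeeping in the second step — counting each matching $S\supseteq T$ exactly once (equivalently, tracking the $(d-s)!$-fold redundancy of the tuple parametrisation) and invoking the symmetry of the normalized coefficients at the right moment so that $\widehat g(S)$ acquires the clean factor $d!$; once the Fourier coefficients of $g$ are pinned down, the comparison with $I_{I',J'}[f]$ is a one-line estimate.
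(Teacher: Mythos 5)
Your proof is correct and follows essentially the same route as the paper's: both exploit the symmetry of the normalized coefficients to collapse the $d!$ orderings of each matching $S\supseteq T$, convert the set-sum into a sum over ordered tuples $A\in([n]\setminus I')_{d-s}$, $B\in([n]\setminus J')_{d-s}$, and compare against the single $r=d-s$ term of $I_{I',J'}[f]$ using $(p(1-p))^{d}\le n^{-d}$. The only cosmetic difference is that you compute $\widehat g(S)$ exactly and get an identity for $I_T[g]$, whereas the paper reaches the same bound via Cauchy--Schwarz over the $d!$ equal summands.
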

  \begin{proof}
  Take $T$ in this sum for which $I_T[g]\neq 0$,
  and denote $t=\card{T}$.  Then $T = \set{(i_1,j_1),\ldots,(i_t,j_t)} = S(I',J')$ for
  $I' = \set{i_1,\ldots,i_t}$, $J'=\set{j_1,\ldots,j_t}$ that are consistent. For $Q\subseteq [n]\times [n]$ of size $d$
  such that $T\subseteq Q$, let $S_{Q,T} = \sett{(I,J)}{T\subseteq S(I,J) = Q}$,
  and note that by the symmetry normalizing relation, $\coef{a}{}{I}{J}$ is constant on $(I,J)\in S_{Q,T}$.
  We thus get
  \[
  I_T[g]
  =\sum\limits_{Q}{\left(\sum\limits_{(I,J)\in S_{Q,T}}\sqrt{p(1-p)}^d\coef{a}{}{I}{J}\right)^2}
  \leq d! p^d\sum\limits_{Q}{\sum\limits_{(I,J)\in S_{Q,T}}\coef{a}{}{I}{J}^2},
  \]
  where we used the fact that the size of $S_{Q,T}$ is $d!$. Rewriting the sum by first choosing the locations of
  $T$ in $(I,J)$, we get that the last sum is at most
  \[
  d_t
  \sum\limits_{\substack{I\in ([n]\setminus I')_{d-t}\\J\in ([n]\setminus J')_{d-t}}}
  \coef{a}{}{I'\circ I}{J'\circ J}^2
  \]
  Combining all, we get that
  $I_T[g]\leq  \sum\limits_{\substack{I\in ([n]\setminus I')_{d-t}\\J\in ([n]\setminus J')_{d-t}}}
  d!^2\frac{1}{n^d}
  \coef{a}{}{I'\circ I}{J'\circ J}^2
  = I_{I',J'}[g]$.
  \end{proof}
  Plugging in Claim~\ref{claim:gen_inf_upper} into~\eqref{eq:direct_1} and using Claim~\ref{claim:g_uppers_f} finishes the proof of Theorem~\ref{thm:hyp_sym}.
  \subsubsection{Proof of Claim~\ref{claim:g_uppers_f}}\label{sec:pf_g_uppers_f}

  Let $I_r$ and $J_r$ be $d$-tuples of distinct indices from $[n]$. Then
  \[
  \Expect{\pi}{f(\pi)^4}
  =\sum\limits_{\substack{I_1,\ldots,I_4\\ J_1,\ldots,J_4}}{\coef{a}{}{I_1}{J_1}\cdots\coef{a}{}{I_4}{J_4} \Expect{\pi}{1_{\pi(I_1) = J_1}\cdots 1_{\pi(I_4) = J_4}}}.
  \]
  Consider the collection of constraints on $\pi$ in the product of the indicators.
  To be non-zero, the constraints should be consistent, so we only consider such tuples.
  Let $M$ be the number of different elements that appear in $I_1,\ldots,I_4$
  (which is at least $d$ and at most $4d$)
  We partition the outer sum according to $M$, and upper bound the contribution from each $M$ separately.
  Fix $M$; then the contribution from it is:
  \[
  \frac{1}{n_M}\sum\limits_{\substack{I_1,\ldots,I_4\\ J_1,\ldots,J_4\\ \text{type }M}}\coef{a}{}{I_1}{J_1}\cdots\coef{a}{}{I_4}{J_4}.
  \]
  We would like to further partition this sum according to the pattern in which the $M$ different elements of $I_1,\ldots,I_4$
  are divided between them (and by consistency, this determines the way the $M$ different elements of $J_1,\ldots,J_4$ are
  divided between them). There are at most $(2^4-1)^M\leq 2^{16d}$ different such configurations, thus we fix one such configuration
  and upper bound it (at the end multiplying the bound by $2^{16d}$). Thus, we have distinct $i_1,\ldots,i_M$ ranging over
   $[n]$, and the coordinate of each $I_r$ is composed of the $i_1,\ldots,i_M$ (and similarly $j_1,\ldots,j_M$ and the $J_r$'s),
  and our sum is
  \begin{equation}\label{eq:hyp_12}
  \frac{1}{n_M}\sum\limits_{\substack{i_1,\ldots,i_M \text{ distinct}\\ j_1,\ldots,j_M\text{ distinct}}}\coef{a}{}{I_1}{J_1}\cdots\coef{a}{}{I_4}{J_4}.
  \end{equation}

  We partition the $i_t$'s into the number of times they occur: let $A_1,\ldots,A_4$ be the sets of $i_t$ that appear in $1,2,3,$ or $4$ of the $I_r$'s.
  We note that $i_t$ and $j_t$ appear in the same $I_r$'s and always together (otherwise the constraints would be contradictory), and in particular
  $i_t\in A_j$ iff $j_t\in A_j$. Also, $M = \card{A_1} + \card{A_2}+\card{A_3}+\card{A_4}$.

  We consider contributions from configurations where $A_1=\emptyset$ and $A_1\neq \emptyset$ separately, and to control the latter group
  we show that the above sum may be upper bounded by $M^{2M}$ sums of in which $A_1 = \emptyset$.
  To do that, we show how to reduce the size of $A_1$ by allowing more sums, and then apply it iteratively.

  Without loss of generality, assume $i_1\in A_1$; then it is in exactly one of the $I_r$'s --- without loss of generality the last
  coordinate of $I_4$. We rewrite the sum as
  \begin{equation}\label{eq:hyp_9}
  \frac{1}{n_M}\sum\limits_{\substack{i_1,\ldots,i_M}}\coef{a}{}{I_1}{J_1}\coef{a}{}{I_2}{J_2}\coef{a}{}{I_3}{J_3}
  \sum\limits_{\substack{i_1\in [n]\setminus\set{i_2,\ldots,i_M}\\ j_1\in [n]\setminus\set{j_2,\ldots,j_M}}}{\coef{a}{}{I_4}{J_4}}.
  \end{equation}
  Consider the innermost sum. Applying Claim \ref{claim:aux} twice, we have
  \[
   \sum\limits_{\substack{i_1\in [n]\setminus\set{i_2,\ldots,i_M}\\ j_1\in [n]\setminus\set{j_2,\ldots,j_M}}}{\coef{a}{}{I_4}{J_4}}
   =\sum\limits_{\substack{i_1\in \set{i_2,\ldots,i_M}\setminus I_4\\ j_1\in \set{j_2,\ldots,j_M}\setminus J_4}}{\coef{a}{}{I_4}{J_4}}.
  \]
  Plugging that into \eqref{eq:hyp_9}, we are able to write the sum therein using $(M-r)^2$ sums (one for each choice of
  $i_1\in\set{i_2,\ldots,i_M}\setminus I_4$ and $j_1\in \set{j_2,\ldots,j_M}\setminus J_4$) on $i_2,\ldots,i_M,j_2,\ldots,j_M$,
  and thus we have reduced the size of $A_1$ by at least $1$, and have decreased $M$ by at least $1$. The last bit implies that the original normalizing factor is smaller by a factor of at least $1/n$ than the new one. Iteratively applying this procedure, we end up with $A_1=\emptyset$,
  and we assume that henceforth. Thus, letting $\mathcal{H}$ be the set of consistent $(I_1,\ldots,I_4,J_1,\ldots,J_4)$ in which
  each element in $I_1\cup\dots\cup I_4$ appears in at least two of the $I_i$'s, we get that
  \begin{align}
  \Expect{\pi}{f(\pi)^4}
  &\leq
  \left(1+\frac{d^{O(d)}}{n}\right)
  \sum\limits_{\substack{I_1,\ldots,I_4\\ J_1,\ldots,J_4\\ \text{from }\mathcal{H}}}
  {\card{\coef{a}{}{I_1}{J_1}}\cdots\card{\coef{a}{}{I_4}{J_4}} \Expect{\pi}{1_{\pi(I_1) = J_1}\cdots 1_{\pi(I_4) = J_4}}}\notag\\\label{eq:f_4_upper}
  &\leq (1+o(1))
  \sum\limits_{\substack{I_1,\ldots,I_4\\ J_1,\ldots,J_4\\ \text{from }\mathcal{H}}}
  {\frac{1}{n^{\card{I_1\cup\dots\cup I_4}}}\card{\coef{a}{}{I_1}{J_1}}\cdots\card{\coef{a}{}{I_4}{J_4}}},
  \end{align}
  where in the last inequality we used
  \[
  \Expect{\pi}{1_{\pi(I_1) = J_1}\cdots 1_{\pi(I_4) = J_4}} =
  \frac{1}{n\cdot(n-1)\cdots(n-\card{I_1\cup\ldots\cup I_4}+1)}
  \leq (1+o(1))\frac{1}{n^{\card{I_1\cup\ldots\cup I_4}}}.
  \]

  Next, we lower bound $\norm{g}_4^4$. Expanding as before,
  \[
  \Expect{x}{g(\pi)^4}
  =\sum\limits_{\substack{I_1,\ldots,I_4\\ J_1,\ldots,J_4}}{
  \sqrt{p(1-p)}^{4d}\card{\coef{a}{}{I_1}{J_1}}\cdots\card{\coef{a}{}{I_4}{J_4}}
  \Expect{x}{\chi_{S(I_1,J_1)}(x)\cdots\chi_{S(I_4,J_4)}(x)}}.
  \]
  A direct computation shows that the expectation of a normalized $p$-biased bit, i.e.\ $\frac{x_{i,j} - p}{\sqrt{p(1-p)}}$,
  is $0$, the expectation of its square is $1$, the expectation of its third power is $\frac{1+o(1)}{\sqrt{p(1-p)}}$,
  and the expectation of its fourth power is $\frac{1+o(1)}{p(1-p)}$. This tells us that all summands in
  the above formula are non-negative, and therefore we can omit all those that correspond to $(I_1,\ldots,I_4)$ and
  $(J_1,\ldots,J_4)$ not from $\mathcal{H}$, and only decrease the quantity. For $j=2,3,4$, denote by $h_j$ the number of elements
  that appear in $j$ of the $I_1,\ldots,I_4$. Then we get that the inner term is at least
  \[
  (1-o(1))
  \sqrt{p(1-p)}^{4d-h_3-2h_4}\card{\coef{a}{}{I_1}{J_1}}\cdots\card{\coef{a}{}{I_4}{J_4}}.
  \]
  Note that $2h_2+3h_3+4h_4 = 4d$, we get that
  $4d-h_3-2h_4 = 2(h_2+h_3+h_4) = 2\card{I_1\cup\dots\cup I_4}$. Combining everything, we get that
  \begin{align}
  \Expect{x}{g(\pi)^4}
  &\geq(1-o(1))\sum\limits_{\substack{I_1,\ldots,I_4\\ J_1,\ldots,J_4\\\text{from }\mathcal{H}}}{
  (p(1-p))^{\card{I_1\cup\dots\cup I_4}}\card{\coef{a}{}{I_1}{J_1}}\cdots\card{\coef{a}{}{I_4}{J_4}}}\notag\\\label{eq:g_4_upper}
  &\geq (1-o(1))\sum\limits_{\substack{I_1,\ldots,I_4\\ J_1,\ldots,J_4\\\text{from }\mathcal{H}}}{
  \frac{1}{n^{\card{I_1\cup\dots\cup I_4}}}\card{\coef{a}{}{I_1}{J_1}}\cdots\card{\coef{a}{}{I_4}{J_4}}}.
  \end{align}
  Combining~\eqref{eq:f_4_upper} and~\eqref{eq:g_4_upper} shows that $\norm{f}_4^4\leq(1+o(1))\norm{g}_4^4$.
  \qed

\subsection{Deducing hypercontractivity for low-degree functions}
With Theorem~\ref{thm:hyp_sym} in hand, one may deduce the following inequality as an easy corollary.
\begin{cor}\label{cor:hyp_sym}
  There exists an absolute constant $C>0$ such that for all $d,n\in\mathbb{N}$ for which $n\geq 2^{C\cdot d\log d}$, the following holds.
  If $f\in V_{d}(S_n)$ is $\epsilon$-analytically-global, then $\norm{f}_4^4\leq 2^{C \cdot d\log d} \eps^2$.
\end{cor}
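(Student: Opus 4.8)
The plan is to reduce to pure-degree functions, apply Theorem~\ref{thm:hyp_sym} level by level, and then combine the ``square trick'' $I_{S,T}[f]^2\le \eps\,I_{S,T}[f]$ with a summation identity for analytic influences.

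First I would decompose $f$ using the iterated version of Lemma~\ref{lem:normalized_rep_exists} described right after its proof: write $f=f_0+f_1+\dots+f_d$ where, for each $k$, the function $f_k\in V_k$ is given by a list of coefficients $\coef{a}{k}{I}{J}$ indexed by $I,J\in[n]_k$ and satisfying the normalizing relations. (This is exactly the ``fixed representation'' with respect to which the analytic influences of $f$ are defined.) By Minkowski's inequality $\norm{f}_4\le\sum_{k=0}^d\norm{f_k}_4$, so $\norm{f}_4^4\le (d+1)^3\sum_{k=0}^d\norm{f_k}_4^4$, and it suffices to bound each $\norm{f_k}_4^4$. The point is that each $f_k$ is again $\eps$-analytically-global: unwinding the definition of $I_{S,T}[f]$ for $|S|=|T|=s$, it is a sum over $r\ge 0$ of nonnegative quantities, and the summand with $r=k-s$ is precisely $I_{S,T}[f_k]$ (all other summands of $I_{S,T}[f_k]$ vanish, since $f_k$ only carries level-$k$ coefficients); hence $I_{S,T}[f_k]\le I_{S,T}[f]\le\eps$ for all $S,T$. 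Applying Theorem~\ref{thm:hyp_sym} to $f_k$, with $k$ playing the role of $d$ (the hypothesis $n\ge 2^{Ck\log k}$ follows from $n\ge 2^{Cd\log d}$), and then using $0\le I_{S,T}[f_k]\le\eps$, we get
\[
\norm{f_k}_4^4\le\sum_{|S|=|T|}\Bigl(\tfrac{4}{n}\Bigr)^{|S|}I_{S,T}[f_k]^2\le \eps\sum_{|S|=|T|}\Bigl(\tfrac{4}{n}\Bigr)^{|S|}I_{S,T}[f_k].
\]

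It remains to estimate $\sum_{|S|=|T|=s}I_{S,T}[f_k]$ for each $0\le s\le k$. Here the key observation is that concatenation $(S,I)\mapsto S\circ I$ is a bijection from $\{(S,I):S\in[n]_s,\ I\in([n]\setminus S)_{k-s}\}$ onto $[n]_k$; plugging the definition of $I_{S,T}[f_k]$ into the sum and re-indexing via this bijection shows that $\sum_{|S|=|T|=s}I_{S,T}[f_k]=I_{\emptyset,\emptyset}[f_k]$ (if the outer sum in Theorem~\ref{thm:hyp_sym} ranges over unordered index sets, a harmless symmetry factor $\le d^{O(d)}$ appears, which we can afford). By analytic-globalness this is at most $\eps$, so summing the geometric series $\sum_{s\ge0}(4/n)^s\le 2$ (using $n$ large) gives $\norm{f_k}_4^4\le 2\eps^2$ (up to the $d^{O(d)}$ safety factor). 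Hence $\norm{f}_4^4\le (d+1)^3\sum_{k=0}^d\norm{f_k}_4^4\le 2^{O(d\log d)}\eps^2$, which is the claim for a suitable absolute constant $C$; in fact this route even yields the stronger bound $\norm{f}_4^4\le \mathrm{poly}(d)\cdot\eps^2$ when the outer sum is over ordered index sets.

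The step I expect to be the main obstacle is the summation identity $\sum_{|S|=|T|=s}I_{S,T}[f_k]=I_{\emptyset,\emptyset}[f_k]$: one has to be careful about the ordered-versus-unordered convention for the index sets and keep track of the symmetry normalizing relation (which is what guarantees that $I_{S,T}[f_k]$ is insensitive to the common reordering of $S$ and $T$, so that the concatenation bijection applies cleanly). Everything else is bookkeeping: the level decomposition and the monotonicity $I_{S,T}[f_k]\le I_{S,T}[f]$ are immediate from the definitions, and the square trick together with the geometric series does the rest.
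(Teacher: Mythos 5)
Your proof is correct and follows the same route as the paper's (whose proof is only an outline): decompose $f$ into pure-level normalized pieces $f_k$, apply Theorem~\ref{thm:hyp_sym} to each, and use monotonicity of the analytic influences $I_{S,T}[f_k]\le I_{S,T}[f]\le\eps$. The one step the paper's outline elides --- why $\sum_{|S|=|T|}(4/n)^{|S|}I_{S,T}[f_k]^2$ is $O(\eps^2)$ rather than blowing up with the number of pairs $(S,T)$ --- is exactly the square trick plus the concatenation-bijection identity $\sum_{|S|=|T|=s}I_{S,T}[f_k]=I_{\emptyset,\emptyset}[f_k]\le\eps$ that you supply, and your treatment of it (including the ordered-versus-unordered bookkeeping) is right.
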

\begin{proof}
  Since the proof is straightforward, we only outline its steps.
  Writing $f = f_{0}+\dots+f_{d}$ for $f_{k}\in V_{k}$ given by normalizing relations, one bounds
  $\norm{f}_4^4\leq (d+1)^3 \sum\limits_{k=0}^{d} \norm{f_{k}}_4^4$, uses Theorem~\ref{thm:hyp_sym}
  on each $f_{k}$, and finally $I_{I',J'}[f_{k}]\leq I_{I',J'}[f]\leq \eps$.
\end{proof}
\begin{remark}
  Using the same techniques, one may prove statements analogous to Theorem~\ref{thm:hyp_sym}
  and Corollary~\ref{cor:hyp_sym} for all even $q\in\mathbb{N}$.
\end{remark}

\section{Applications}\label{sec:app}

\subsection{Global functions are concentrated on the high degrees}
The first application of our hypercontractive is the following level-$d$ inequality.
\begin{reptheorem}{thm:lvl_d}
There exists an absolute constant $C>0$ such that the following holds.
Let $d,n\in\mathbb{N}$ and $\eps>0$ such that $n\geq 2^{C d^3} \log(1/\eps)^{C d}$.
If $f\colon S_n\to \{0,1\}$ is $(2d,\eps)$-global, then
$\|f^{\le d}\|^2 \leq 2^{C \cdot d^4}\eps^4 \log^{C\cdot d}(1/\eps)$.
\end{reptheorem}
\begin{proof}
  Deferred to Section~\ref{sec:lvl_d}.
\end{proof}
This result is analogous to the level $d$ inequality on the Boolean
hypercube~\cite[Corollary 9.25]{Od}, however it is quantitatively weaker because
our dependence on $d$ is poorer; for instance, it remains meaningful only for $d\leq \log(1/\eps)^{1/4}$,
wherein the original statement on the Boolean hypercube remains effective up to $d\sim \log(1/\eps)$.
Still, we show in Section~\ref{sec:product_sets} that this statement suffices to recover
results regarding the size of the largest product-free sets in $S_n$.

It would be interesting to prove a quantitatively better version of Theorem~\ref{thm:lvl_d} in terms of
$d$, and in particular whether it is the case that for $d = c\log(1/\eps)$
it holds that $\|f^{=d}\|^2 = \eps^{2-o(1)}$ for sufficiently small (but constant) $c>0$.

We remark that once Theorem~\ref{thm:lvl_d} has been established (or more precisely, the slightly stronger statement in
Proposition~\ref{prop:lvl_d}), one can strengthen it at the expense of assuming that $n$ is larger, namely establish
Theorem~\ref{thm:lvl_d_strong} from the introduction. We defer its proof to Section~\ref{sec:stronger_lvl_d}.

\subsection{Global product-free sets are small}\label{sec:product_sets}
In this section we prove a strengthening of Theorem~\ref{thm:global_prod_free}.
Conceptually, the proof is very simple. Starting with Gowers' approach,
we convert this problem into an independent set in a Cayley graph associated with $F$,
and use a Hoffman-type bound to solve that problem.

Fix a global product-free set $F\subseteq A_n$, and construct the (directed) graph $G_F$ as follows. Its vertex set is $S_n$, and
$(\pi,\sigma)$ is an edge if $\pi^{-1}\sigma\in F$. Note that $G_F$ is a Cayley graph, and that if $F$ is product-free,
then $F$ is an independent set in $G_F$. Our plan is thus to (1) study the eigenvalues of $G_F$ and prove good upper bounds
on them, and then (2) bound the size of $F$ using a Hoffman-type bound.

Let $T_F$ be the adjacency operator of $G_F$, i.e.\ the random walk that from a vertex $\pi$ transitions to a random neighbour $\sigma$
in $G_F$. We may consider the action of $T_F$ on functions $f\colon S_n\to\mathbb{R}$ as
\[
(T_F f)(\pi)
= \Expect{\sigma:(\pi,\sigma)\text{ is an edge}}{f(\sigma)}
= \Expect{a\in F}{f(\pi a)}.
\]

We will next study the eigenspaces and eigenvalues of $T_F$, and for that we need some basic facts
regarding the representation theory of $S_n$. We will then study the fraction of edges between any
two global functions $\mathcal{A},\mathcal{B}$, and Theorem~\ref{thm:global_prod_free} will just be
the special case that $\mathcal{A} = \mathcal{B} = F$.

Throughout this section, we set $\delta = \frac{\card{F}}{\card{S_n}}$.

\subsubsection{Basic facts about representation theory of $S_n$}
We will need some basic facts about the representation theory of $S_n$, and our exposition will follow standard textbooks,
e.g.~\cite{fulton2013representation}.

A partition of $[n]$, denoted by $\lambda\vdash n$, is a sequence of integers
$\lambda = (\lambda_1,\ldots,\lambda_k)$ where $\lambda_1\geq\lambda_2\geq\dots\geq\lambda_k\geq 1$
sum up to $n$. It is well-known that partitions index equivalence classes of representations of
$S_n$, thus we may associate with each partition $\lambda$ a character $\chi_{\lambda}\colon S_n \to \mathbb{C}$,
which in the case of the symmetric group is real-valued. The dimension of $\lambda$ is $\mathsf{dim}(\lambda) = \chi_\lambda(e)$, where $e$ is the identity permutation.

Given a partition $\lambda$, a $\lambda$-tabloid is a partition of $[n]$ into sets $A_1,\ldots,A_k$
such that $\card{A_i} = \lambda_i$. Thus, for $\lambda$-tabloids $A = (A_1,\ldots,A_k)$ and $B = (B_1,\ldots,B_k)$,
we define $T_{A,B} = \sett{\pi\in S_n}{\pi(A_i) = B_i ~\forall i=1,\ldots,k}$,
and refer to any such $T_{A,B}$ as a $\lambda$-coset.

With these notations, we may define the space $V_{\lambda}(S_n)$, which is the linear span of
the indicator functions of all $\lambda$-cosets. We note that
$V_{\lambda}(S_n)$ is clearly a left $S_n$-module, where the action of $S_n$ is given as
$\prescript{\pi}{} f\colon S_n\to \mathbb{R}$ defined by $\prescript{\pi}{} f(\sigma) = f(\pi \sigma)$.

 Next, we need to define an ordering on partitions that will let us further refine the spaces $V_{\lambda}$.
\begin{defn}
  Let $\lambda = (\lambda_1,\ldots,\lambda_k)$, $\mu = (\mu_1,\ldots,\mu_{s})$ be partitions
  of $[n]$. We say that $\lambda$ dominates $\mu$, and denote $\lambda \trianglerighteq \mu$, if for all
  $j=1,\ldots,k$ it holds that $\sum\limits_{i=1}^{j} \lambda_i\geq \sum\limits_{i=1}^{j} \mu_i$.
\end{defn}

With this definition, one may easily show that $V_{\mu}\subseteq V_{\lambda}$ whenever $\mu\trianglerighteq \lambda$,
and furthermore that $V_{\mu} = V_{\lambda}$ if and only if $\mu = \lambda$. It thus makes sense to define the spaces
\[
V_{=\lambda} = V_{\lambda}\cap\bigcap_{\mu\vartriangleright \lambda} V_{\mu}^{\perp}.
\]
%
The spaces $V_{=\lambda}$ are orthogonal and their direct sum is $\set{f\colon S_n\to\mathbb{R}}$, so we may write any function
$f\colon S_n\to\mathbb{R}$ as $f = \sum\limits_{\lambda\vdash n}{ f^{=\lambda}}$ in a unique way.

\begin{defn}
  Let $\lambda = (\lambda_1,\ldots,\lambda_k)$ be a partition of $n$. The transpose partition, $\lambda^t$,
  is $(\mu_1,\ldots,\mu_{k'})$, where $k' = \lambda_1$ and $\mu_j = \card{\sett{i}{\lambda_i\geq j}}$.
\end{defn}

Alternatively, if we think of a partition as represented by top-left justified rows, then the transpose of a partition is obtained by reflecting the diagram across the main diagonal. For example, $(3,1)^t = (2,1,1)$:
\begin{align*}
(3,1) = \ydiagram{3,1} && (2,1,1) = \ydiagram{2,1,1}
\end{align*}

There are two partitions that are very easy
to understand: $\lambda = (n)$, and its transpose, $\lambda = (1^t)$. For $\lambda = (n)$, the space $V_{=\lambda}$ consists of constant functions, and one has $\chi_{\lambda} = 1$. Thus, $f^{=(n)}$ is just the average of $f$, i.e.\ $\mu(f) \defeq \Expect{\pi}{f(\pi)}$.
For $\lambda = (1^n)$, the space $V_{=\lambda}$ consists of multiples of the sign function of permutations,
${\sf sign}\colon S_n\to\set{-1,1}$, and $\chi_{\lambda} = {\sf sign}$. One therefore has $f^{=\lambda} = \inner{f}{{\sf sign}} {\sf sign}(f)$.

For general partitions $\lambda$, it is well-known that the dimensions of $\lambda$ and $\lambda^t$ are equal, and one has that
$\chi_{\lambda^{t}} = {\sf sign}\cdot \chi_{\lambda}$. We will need the following statement that generalizes this correspondence to
$f^{=\lambda}$ and $f^{=\lambda^t}$.

\begin{lem}\label{lem:multiplying_by_parity}
  Let $f\colon S_n\to\mathbb{R}$, and let $\lambda\vdash n$. Then
  $(f\cdot {\sf sign})^{=\lambda} = f^{=\lambda^t}{\sf sign}$.
\end{lem}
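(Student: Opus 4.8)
The plan is to exploit the uniqueness of the decomposition $f = \sum_{\mu \vdash n} f^{=\mu}$ together with the well-known fact that multiplication by $\mathsf{sign}$ is an involution on $L^2(S_n)$ that permutes the isotypic components according to transposition of partitions. First I would verify the key structural fact: for every partition $\lambda$, the map $g \mapsto g \cdot \mathsf{sign}$ sends $V_{=\lambda}$ onto $V_{=\lambda^t}$. The cleanest route is via characters: since $L^2(S_n)$ as an $S_n$-bimodule decomposes into isotypic blocks indexed by $\lambda$, and $V_{=\lambda}$ is exactly the $\lambda$-isotypic block (the space $V_{=d}$ from Section~\ref{sec:level decomposition} being refined by the representation theory, as the text notes), it suffices to recall that tensoring the irreducible $S^\lambda$ with the sign representation gives $S^{\lambda^t}$. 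Concretely, $\chi_{\lambda^t} = \mathsf{sign} \cdot \chi_\lambda$, and the projection onto the $\lambda$-isotypic component is given by convolution with $\tfrac{\mathsf{dim}(\lambda)}{n!}\chi_\lambda$; conjugating this projection by the (self-inverse) operator $M\colon g \mapsto g\cdot\mathsf{sign}$ turns $\chi_\lambda$ into $\mathsf{sign}\cdot\chi_\lambda = \chi_{\lambda^t}$, hence turns the projection onto $V_{=\lambda}$ into the projection onto $V_{=\lambda^t}$.

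Granting that, the lemma is immediate: write $f = \sum_{\mu} f^{=\mu}$, so that $f\cdot\mathsf{sign} = \sum_{\mu} (f^{=\mu}\cdot\mathsf{sign})$, and each summand $f^{=\mu}\cdot\mathsf{sign}$ lies in $V_{=\mu^t}$ by the structural fact. Since $\mu \mapsto \mu^t$ is a bijection on partitions of $n$, this is precisely the (unique) isotypic decomposition of $f\cdot\mathsf{sign}$, and comparing the $\lambda$-component on both sides gives $(f\cdot\mathsf{sign})^{=\lambda} = f^{=\lambda^t}\cdot\mathsf{sign}$, as claimed.

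Alternatively, if one prefers to avoid invoking the identification of $V_{=\lambda}$ with the isotypic block, one can argue directly with the defining description $V_{=\lambda} = V_\lambda \cap \bigcap_{\mu \vartriangleright \lambda} V_\mu^\perp$. One first checks that $M(V_\lambda) = V_{\lambda^t}$: a $\lambda$-coset indicator $1_{T_{A,B}}$ maps under $M$ to $\mathsf{sign}(\pi)1_{T_{A,B}}(\pi)$, and one verifies this lies in $V_{\lambda^t}$ by expressing it through the column-tabloid structure (this is the standard passage from the permutation module $M^\lambda$ to $M^{\lambda^t}\otimes\mathrm{sgn}$). Then, since $M$ is an isometry, it carries orthogonal complements to orthogonal complements, and since it reverses the dominance order under transposition ($\mu \vartriangleright \lambda \iff \mu^t \vartriangleleft \lambda^t$), it sends $V_\lambda \cap \bigcap_{\mu\vartriangleright\lambda}V_\mu^\perp$ to $V_{\lambda^t}\cap\bigcap_{\nu\vartriangleleft\lambda^t}V_\nu^\perp$. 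It remains to note this last space equals $V_{=\lambda^t}$: it is contained in it, and a dimension count (or the fact that the $V_{=\nu}$ partition the whole space) forces equality.

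The main obstacle is the structural fact $M(V_{=\lambda}) = V_{=\lambda^t}$ itself — everything else is bookkeeping around uniqueness of the decomposition and the order-reversing property $\mu\vartriangleright\lambda \iff \mu^t\vartriangleleft\lambda^t$. I expect to settle it by the character/isotypic-projection argument in the first paragraph, which is the shortest and most robust; the direct combinatorial verification that $M$ maps $\lambda$-coset indicators into $V_{\lambda^t}$ is elementary but slightly fiddly, so I would relegate it to a remark or a one-line citation to a standard reference such as~\cite{fulton2013representation}.
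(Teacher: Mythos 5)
Your primary route---conjugating the isotypic projection (convolution with $\tfrac{\mathsf{dim}(\lambda)}{n!}\chi_\lambda$) by the multiplication-by-$\mathsf{sign}$ operator and invoking $\chi_{\lambda^t}=\mathsf{sign}\cdot\chi_\lambda$ together with uniqueness of the decomposition---is exactly the paper's argument, which carries out the same computation directly via the inversion formula $f^{=\lambda}(\pi) = \mathsf{dim}(\lambda)\,\E_{\sigma}[f(\sigma)\chi_\lambda(\pi\sigma^{-1})]$ and two changes of variables. The proposal is correct; the alternative dominance-order argument you sketch is a valid but unnecessary detour.
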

\begin{proof}
  The statement follows directly from the inversion formula for $f^{=\lambda}$, which states that
  $f^{=\lambda}(\pi) = {\sf dim}(\lambda)\Expect{\sigma\in S_n}{f(\sigma)\chi_{\lambda}(\pi\sigma^{-1})}$.
  By change of variables, we see that
  \[
  (f\cdot {\sf sign})^{=\lambda}(\pi)
  ={\sf dim}(\lambda)\Expect{\sigma\in S_n}{f(\sigma^{-1}\pi){\sf sign}(\sigma^{-1}\pi)\chi_{\lambda}(\sigma)}
  ={\sf sign}(\pi){\sf dim}(\lambda)\Expect{\sigma\in S_n}{f(\sigma^{-1}\pi){\sf sign}(\sigma)\chi_{\lambda}(\sigma)},
  \]
  where we used the fact that ${\sf sign}$ is multiplicative and ${\sf sign}(\sigma^{-1}) = {\sf sign}(\sigma)$. Now, as
  ${\sf sign}(\sigma)\chi_{\lambda}(\sigma) = \chi_{\lambda^t}(\sigma)$, we get by changing variables again that
  \[
  (f\cdot {\sf sign})^{=\lambda}(\pi)=
  {\sf sign}(\pi){\sf dim}(\lambda)\Expect{\sigma\in S_n}{f(\sigma)\chi_{\lambda^t}(\pi\sigma^{-1})}
  ={\sf sign}(\pi){\sf dim}(\lambda^t)\Expect{\sigma\in S_n}{f(\sigma)\chi_{\lambda^t}(\pi\sigma^{-1})},
  \]
  which is equal to ${\sf sign}(\pi) f^{=\lambda^t}(\pi)$ by the inversion formula.
\end{proof}

Lastly, we remark that if $\lambda$ is a partition such that $\lambda = n-k$, then $V_{=\lambda}\subseteq V_k$. It follows
by Parseval that
\begin{equation}\label{eq:parseval_triviality}
\sum\limits_{\substack{\lambda\vdash n\\\lambda_1 = n-k}}\norm{f^{=\lambda}}_2^2\leq \norm{f^{\leq k}}_2^2.
\end{equation}
\subsubsection{The eigenvalues of $T_F^{*}T_F$}
\begin{claim}\label{claim:invariant}
  For all $\lambda\vdash n$ we have that $T_F V_{=\lambda}\subseteq V_{=\lambda}$; the same holds for
  $T_F^{*}$.
\end{claim}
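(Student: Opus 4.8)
The crucial observation is that $T_F$ is an average of right-translation operators: writing $R_a f = f^a$ for the right translation $R_a f(\pi) = f(\pi a)$, the formula $(T_F f)(\pi) = \E_{a\in F}[f(\pi a)]$ says exactly that $T_F = \E_{a\in F}[R_a]$. Consequently, once we know that $V_{=\lambda}$ is closed under every right translation, it is automatically closed under $T_F$, since $V_{=\lambda}$ is a linear subspace and hence closed under averages of its images under the $R_a$. So the plan is: (1) show that each $V_\lambda$ is a two-sided $S_n$-submodule of $\mathbb{R}[S_n]$; (2) deduce that $V_{=\lambda}$ is a two-sided submodule as well; (3) conclude for $T_F$, and then for $T_F^*$ via the identity $T_F^* = T_{F^{-1}}$, where $F^{-1} = \{a^{-1} : a \in F\}$.

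For step (1) I would check directly that both actions of $S_n$ permute the set of $\lambda$-cosets: for a $\lambda$-coset $T_{A,B}$ one has $\prescript{\pi}{}1_{T_{A,B}} = 1_{T_{A,\pi^{-1}B}}$ and $1_{T_{A,B}}^{\tau} = 1_{T_{\tau A,B}}$, where $\pi^{-1}B = (\pi^{-1}(B_1),\ldots,\pi^{-1}(B_k))$ and $\tau A$ are again $\lambda$-tabloids. Since $V_\lambda$ is by definition the span of the functions $1_{T_{A,B}}$, this shows $V_\lambda$ is closed under both the left and the right action. For step (2), I would use that the inner product $\langle f,g\rangle = \E_\pi[f(\pi)g(\pi)]$ is invariant under both left and right translation (substitute $\sigma\mapsto\pi\sigma$, resp.\ $\sigma\mapsto\sigma\pi$), so the orthogonal complement of any two-sided submodule is again a two-sided submodule; in particular each $V_\mu^\perp$ is one, and therefore $V_{=\lambda} = V_\lambda\cap\bigcap_{\mu\vartriangleright\lambda}V_\mu^\perp$ is a two-sided submodule of $\mathbb{R}[S_n]$. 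Step (3) is then immediate: $f\in V_{=\lambda}$ forces $f^a\in V_{=\lambda}$ for every $a$, hence $T_Ff = \E_{a\in F}[f^a]\in V_{=\lambda}$; and a one-line change of variables ($\sigma=\pi a$) in $\langle T_F f, g\rangle$ gives $T_F^* = T_{F^{-1}}$, i.e.\ $T_F^* f = \E_{a\in F}[f^{a^{-1}}]$, which is again an average of right translates of $f$, so the same argument yields $T_F^* V_{=\lambda} \subseteq V_{=\lambda}$.

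I do not anticipate a genuine obstacle here — the statement is essentially a structural/representation-theoretic bookkeeping fact, and no hypercontractivity or spectral input is needed. The only points requiring a little care are getting the coset identities in step (1) right (the left action moves the \emph{second} index of $T_{A,B}$, the right action moves the \emph{first}), and noting explicitly that $\pi^{-1}B$ and $\tau A$ remain valid $\lambda$-tabloids so that $V_\lambda$ really maps into itself rather than into a larger space. Everything else follows from the bi-invariance of the inner product, which is already implicit in the preliminaries.
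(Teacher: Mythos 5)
Your proposal is correct and follows essentially the same route as the paper: both arguments rest on the observation that right translation by $a\in F$ permutes the $\lambda$-coset indicators (the paper writes $1_{T_{A,B}}(\sigma a)=1_{T_{a(A),B}}(\sigma)$, matching your identity), so $T_F$ and $T_F^{*}=T_{F^{-1}}$ preserve each $V_\mu$, and the orthogonality defining $V_{=\lambda}$ is then handled via the adjoint / bi-invariance of the inner product exactly as you do. Your packaging of the second step as ``the orthogonal complement of a two-sided submodule is a two-sided submodule'' is just a slightly more structural phrasing of the paper's computation $\inner{T_F f}{g}=\inner{f}{T_F^{*}g}=0$.
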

\begin{proof}
  First, we show that $T_F V_{\lambda}\subseteq V_{\lambda}$, and for that it is enough to show that $T_F 1_{T_{A,B}}\in V_{\lambda}$
  for all $\lambda$-tabloids $A = (A_1,\ldots,A_k)$ and
  $B = (B_1,\ldots,B_k)$. Fix $a\in F$, and note that
  $1_{T_{A,B}}(\sigma a) = 1_{T_{a(A),B}}(\sigma)$ where $a(A) = (a(A_1),\ldots,a(A_k))$,
  so $1_{T_{A,B}}(\sigma a)$, as a function of $\sigma$, is also an indicator of a $\lambda$-coset. Since
  $T_F 1_{T_{A,B}}$ is a linear combination of such functions, it follows that $T_F 1_{T_{A,B}}\in V_{\lambda}$.
  A similar argument shows that the same holds for the adjoint operator of $T_F^{*} = T_{F^{-1}}$,  where $F^{-1} = \sett{a^{-1}}{a\in F}$.

  Thus, for $f\in V_{=\lambda}$ we automatically have that
  $f\in V_{\lambda}$, and we next show orthogonality to $V_{\mu}$ for all $\mu\triangleright \lambda$.
  Indeed, let $\mu$ be such partition and let $g\in V_{\mu}$; then by the above $T_F^{*} g\in V_{\mu}$ and so
  $\inner{T_F f}{g} = \inner{f}{T_F^{*} g} = 0$,
  and the proof is complete. The argument for $T_F^{*}$ is analogous.
\end{proof}
Thus, we may find a basis of each $V_{=\lambda}$ consisting of eigenvectors of $T_F^{*} T_F$.
The following claim shows that the multiplicity
of each corresponding eigenvalue is at least ${\sf dim}(\lambda)$.

\begin{claim}\label{claim:rotate_ev}
  Let $f\in V_{=\lambda}(S_n)$ be non-zero. Then ${\sf dim}({\sf Span}(\set{^{\pi} f}_{\pi\in S_n}))\geq {\sf dim}(\lambda)$.
\end{claim}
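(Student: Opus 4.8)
The plan is to recognize $U:=\mathsf{Span}\bigl(\{\prescript{\pi}{}f\}_{\pi\in S_n}\bigr)$ as a nonzero left $S_n$-submodule of $V_{=\lambda}$ and then to use the fact that $V_{=\lambda}$ is isotypic for the left action of $S_n$ to bound its dimension from below. Concretely, $U$ contains $f=\prescript{e}{}f$ and is closed under the left action since $\prescript{\tau}{}(\prescript{\pi}{}f)=\prescript{\tau\pi}{}f$, so it is exactly the cyclic submodule $\mathbb{R}[S_n]\cdot f$ generated by $f$, and it is nonzero as $f\neq 0$. Moreover $V_{=\lambda}$ is left-invariant: left translation permutes the indicators of $\lambda$-cosets among themselves (one has $\prescript{\pi}{}1_{T_{A,B}}=1_{T_{A,\pi^{-1}(B)}}$), so it preserves $V_\lambda$, and as it acts by orthogonal transformations with respect to the expectation inner product it also preserves every $V_\mu^\perp$, hence preserves $V_{=\lambda}=V_\lambda\cap\bigcap_{\mu\vartriangleright\lambda}V_\mu^\perp$. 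Thus $U\subseteq V_{=\lambda}$.

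The representation-theoretic heart of the matter is that, as a left $S_n$-module, $V_{=\lambda}$ is isotypic of type $\lambda$, i.e.\ a direct sum of copies of the irreducible module $W_\lambda$ with character $\chi_\lambda$; in particular $\dim W_\lambda=\chi_\lambda(e)=\mathsf{dim}(\lambda)$. To justify this I would observe that $V_{=\lambda}$ is in fact a sub-bimodule of $L^2(S_n)$ — right translation likewise permutes indicators of $\lambda$-cosets, so $V_\lambda$ is a sub-bimodule, and so is each $V_\mu^\perp$ — hence a two-sided ideal of $\mathbb{R}[S_n]$, and therefore a direct sum of full isotypic components $I_\nu\cong W_\nu^{\oplus\mathsf{dim}(\nu)}$ of the regular representation. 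Since $V_{=\lambda}\subseteq V_\lambda$ and the irreducible constituents of the coset modules $M^\lambda$ are the $W_\nu$ with $\nu\trianglerighteq\lambda$, only such $\nu$ can appear; and since $V_{=\lambda}$ is orthogonal to $V_\mu\supseteq I_\mu$ for every $\mu\vartriangleright\lambda$, in fact only $\nu=\lambda$ appears, so $V_{=\lambda}=I_\lambda$. This is exactly the classical relation between the permutation modules $M^\lambda$ and the Specht modules, and can be quoted from standard references such as~\cite{fulton2013representation} (and is consistent with the discussion of $V_{=d}$ in Section~\ref{sec:level decomposition}).

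Given all this, $U$ is a nonzero submodule of a semisimple module whose composition factors are all isomorphic to $W_\lambda$, so $U$ is itself a nonzero direct sum of copies of $W_\lambda$, whence $\dim U\geq\dim W_\lambda=\mathsf{dim}(\lambda)$, as claimed. One may work over $\mathbb{R}$ throughout, since the irreducible representations of $S_n$ are absolutely irreducible and defined over $\mathbb{Q}$; alternatively, complexify, which affects no dimension. I expect that the only step requiring genuine care — rather than formal module theory — is the representation-theoretic bookkeeping identifying $V_{=\lambda}$ with the block $I_\lambda$; but this is entirely standard, and in any event is robust to conventions, since $\mathsf{dim}(\lambda)=\mathsf{dim}(\lambda^t)$, so even a transposed labelling would give the same bound.
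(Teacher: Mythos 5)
Your proof is correct and follows essentially the same route as the paper's: both view the span of the left translates as a nonzero subrepresentation of $V_{=\lambda}$ and use that every irreducible constituent of $V_{=\lambda}$ has dimension ${\sf dim}(\lambda)$; you simply supply the isotypicity justification that the paper leaves implicit. (Minor note: with the paper's convention $\prescript{\pi}{}f(\sigma)=f(\pi\sigma)$ one gets $\prescript{\tau}{}(\prescript{\pi}{}f)=\prescript{\pi\tau}{}f$ rather than $\prescript{\tau\pi}{}f$, but the span is invariant either way.)
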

\begin{proof}

Let $\rho_{\lambda}\colon S_n \to V_{=\lambda}$ be a representation, and denote by $W$ the span of $\set{\prescript{\pi}{} f}_{\pi\in S_n}$.
Note that $W$ is a subspace of $V_{=\lambda}$, and it holds that $(\rho|_{W},W)$ is a sub-representation of $\rho$. Since each
irreducible representation $V\subseteq V_{=\lambda}$  of $S_n$ has dimension ${\sf dim}(\lambda)$, it follows that
${\sf dim}(W)\geq {\sf dim}(\lambda)$, and we're done.
\end{proof}

We can thus use the trace method to bound the magnitude of each eigenvalue.
\begin{lem}\label{lem:ev_up_TF}
Let $f\in V_{=\lambda}$ be an eigenvector of $T_F^{*} T_F$ with eigenvalue $\alpha_{\lambda}$. Then
\[
\alpha_{\lambda}\leq \frac{1}{{\sf dim}(\lambda)\delta}.
\]
\end{lem}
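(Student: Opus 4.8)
The plan is to use the trace method on $T_F^* T_F$. First I would observe that $T_F^* T_F$ preserves $V_{=\lambda}$ by Claim~\ref{claim:invariant}, so we may restrict attention to this space. Let $\alpha_\lambda$ be the eigenvalue, and let $f \in V_{=\lambda}$ be a corresponding eigenvector. By Claim~\ref{claim:rotate_ev}, the span $W$ of $\{\prescript{\pi}{}f\}_{\pi \in S_n}$ has dimension at least $\mathsf{dim}(\lambda)$; moreover, since $T_F^* T_F$ commutes with the left action of $S_n$ (because $T_F$ arises from right multiplication), every $\prescript{\pi}{}f$ is again an eigenvector of $T_F^* T_F$ with the same eigenvalue $\alpha_\lambda$. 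Hence $W$ is contained in the $\alpha_\lambda$-eigenspace of $T_F^* T_F$ inside $V_{=\lambda}$, and so the multiplicity of $\alpha_\lambda$ is at least $\mathsf{dim}(\lambda)$.

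Next I would bound the trace of $T_F^* T_F$ from above. Since $T_F^* T_F$ is positive semidefinite and has $\alpha_\lambda$ as an eigenvalue with multiplicity at least $\mathsf{dim}(\lambda)$, we get $\operatorname{tr}(T_F^* T_F) \geq \mathsf{dim}(\lambda) \cdot \alpha_\lambda$. On the other hand, $\operatorname{tr}(T_F^* T_F) = \sum_{\pi \in S_n} \langle T_F^* T_F \, 1_\pi, 1_\pi \rangle / \|1_\pi\|_2^2$ in a suitably normalized sense, or more directly $\operatorname{tr}(T_F^* T_F) = \|T_F\|_{HS}^2$, the squared Hilbert–Schmidt norm. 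Writing $T_F$ as the normalized adjacency operator of the Cayley graph $G_F$, the $(\pi,\sigma)$ entry is $\frac{1}{|F|} 1_{\pi^{-1}\sigma \in F}$, so with respect to the uniform-measure inner product one computes $\|T_F\|_{HS}^2 = \frac{1}{|F|}$ after accounting for the $|S_n|$ normalization: indeed, $\sum_{\pi,\sigma} (\text{entry})^2 = |S_n| \cdot |F| \cdot \frac{1}{|F|^2} = \frac{|S_n|}{|F|}$, and dividing by $|S_n|$ (the trace normalization in $L^2(S_n)$ with uniform measure) gives $\frac{1}{|F|} = \frac{1}{\delta |S_n|} \cdot \frac{1}{1}$; tracking the normalization carefully yields $\operatorname{tr}(T_F^* T_F) = 1/(|F|/ \ldots)$, and the upshot is $\operatorname{tr}(T_F^* T_F) = \frac{|S_n|}{|F|} \cdot \frac{1}{|S_n|}$-type expression equal to $\frac{1}{\delta |S_n|} \cdot |S_n| \cdot \frac{1}{|S_n|}$... — the clean statement being $\operatorname{tr}(T_F^* T_F) \leq 1/\delta$.

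Combining, $\mathsf{dim}(\lambda) \cdot \alpha_\lambda \leq \operatorname{tr}(T_F^* T_F) \leq 1/\delta$, which rearranges to $\alpha_\lambda \leq \frac{1}{\mathsf{dim}(\lambda)\delta}$, as desired. The main obstacle I anticipate is purely bookkeeping: getting the normalization of the Hilbert–Schmidt norm exactly right relative to the expectation inner product on $L^2(S_n)$, so that the trace bound comes out as $1/\delta$ and not off by a factor of $|F|$ or $|S_n|$. The clean way to see it: $\operatorname{tr}(T_F^* T_F)$ equals the expected value, over a uniformly random $\pi$, of $\langle T_F^* T_F \delta_\pi, \delta_\pi\rangle$ where $\delta_\pi$ is the normalized indicator with $\|\delta_\pi\|_2 = 1$; this equals $\mathbb{E}_\pi \|T_F \delta_\pi\|_2^2 = \mathbb{E}_\pi \mathbb{E}_{a \in F}[\text{mass of } \pi a^{-1}] \cdot |S_n|$, and a short computation collapses this to $1/\delta$. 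Everything else is immediate from the cited claims.
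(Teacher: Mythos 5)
Your proposal is correct and follows essentially the same route as the paper: multiplicity at least $\mathsf{dim}(\lambda)$ via Claim~\ref{claim:rotate_ev} together with the fact that $T_F^*T_F$ commutes with the left action (so each $\prescript{\pi}{}f$ is an $\alpha_\lambda$-eigenvector), followed by a trace bound. On the normalization you worried about: the trace is basis-free and requires no division by $|S_n|$ --- one simply has $\operatorname{tr}(T_F^*T_F)=\sum_{\pi}\sum_{\sigma}(T_F)_{\sigma,\pi}^2=n!\cdot|F|\cdot|F|^{-2}=n!/|F|=1/\delta$, which is exactly the paper's computation of the expected two-step return probability.
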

\begin{proof}
By Claim~\ref{claim:rotate_ev}, we may find a collection of ${\sf dim}(\lambda)$ permutations, call it
$\Pi$, such that $\set{\prescript{\pi}{} f}_{\pi\in \Pi}$ is linearly independent. Since $f$ is an eigenvector of $T_F^{*} T_F$,
it follows that each one of $\prescript{\pi}{} f$ is an eigenvector with eigenvalue $\alpha_{\lambda}$. It follows that
${\sf Tr}(T_{F}^{*}T_F) \geq \card{\Pi} \alpha_{\lambda} = {\sf dim}(\lambda)\alpha_{\lambda}$.

On the other hand, interpreting ${\sf Tr}(T_{F}^{*} T_F)$ probabilistically as the probability to return to the starting
vertex in $2$-steps,
\[
{\sf Tr}(T_{F}^{*} T_F)
=\sum\limits_{\pi}
\Prob{a_1\in F^{-1},a_2\in F}{\pi = \pi a_1 a_2}
=n!\Prob{a_1\in F^{-1},a_2\in F}{a_2 = a_{1}^{-1}}
= n! \frac{1}{\card{F}}
=\frac{1}{\delta}.
\]
Combining the two bounds on ${\sf Tr}(T_{F}^{*} T_F)$ completes the proof.
\end{proof}

To use this lemma effectively, we have the following bound on ${\sf dim}(\lambda)$ that follows from the hook length formula.
\begin{lem}[Claim 1, Theorem 19 in~\cite{EFP}]\label{lem:dim_lb}
  Let $\lambda\vdash n$ be given as $\lambda = (\lambda_1,\ldots,\lambda_k)$,
  and denote $d = \min(n-\lambda_1,k)$.
  \begin{enumerate}
    \item If $\lambda = (n)$, then ${\sf dim}(\lambda) = 1$.
    \item If $d>0$, then ${\sf dim}(\lambda)\geq \left(\frac{n}{d\cdot e}\right)^d$.
    \item If $d > n/10$, then ${\sf dim}(\lambda)\geq 1.05^n$.
  \end{enumerate}
\end{lem}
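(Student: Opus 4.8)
The plan is to derive all three parts from the hook length formula ${\sf dim}(\lambda)=n!/\prod_{c}h(c)$, where the product runs over the cells $c$ of the Young diagram of $\lambda$ and $h(c)$ denotes the hook length of $c$; in each case the real content is an upper bound on $\prod_c h(c)$. The first part is immediate, since the single‑row shape $(n)$ has hook lengths $n,n-1,\dots,1$. For the remaining parts I would freely use ${\sf dim}(\lambda)={\sf dim}(\lambda^t)$, so that it suffices to bound ${\sf dim}(\lambda)$ in terms of whichever of the two quantities defining $d$ is the smaller.

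For the second part, assume for concreteness that $d=t:=n-\lambda_1$, i.e.\ there are only $t$ cells outside the first row (the other case being symmetric after transposing). Two structural facts about hook lengths then do the work. First, deleting the first row leaves a diagram $\mu=(\lambda_2,\dots,\lambda_k)\vdash t$ whose hook lengths are precisely the hook lengths of the non‑first‑row cells of $\lambda$, so those cells contribute $t!/{\sf dim}(\mu)\le t!$ to the product. Second, the $\lambda_1$ hook lengths in the first row are distinct integers, and in fact they are exactly $\{1,\dots,\lambda_1+k-1\}$ with the $k-1$ values $\lambda_1-\lambda_i+i-1$ ($2\le i\le k$) deleted. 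Multiplying these estimates and simplifying (using Stirling's approximation and the fact that $t\ge k-1$ always) produces a bound of the shape ${\sf dim}(\lambda)\ge\binom{n-O(t)}{t}$ with a small explicit constant; since $\binom{m}{t}\ge(m/t)^t$, this is at least $(n/(te))^t=(n/(de))^d$.

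For the third part, $d>n/10$ means precisely that $\lambda$ is bounded away from both the trivial and the sign shape, so neither a single row nor a single column can be almost all of the diagram; feeding $d>n/10$ into the bound ${\sf dim}(\lambda)\ge\binom{n-O(d)}{d}$ of the previous paragraph, with its explicit constant tracked, already yields ${\sf dim}(\lambda)\ge 1.05^{n}$. The step I expect to be the main obstacle is obtaining the correct constant $e$ in $(n/(de))^d$ in the second part: the crudest bounds on the hook‑length product (replacing each $\lambda_1-\lambda_i+i-1$ by $i-1$, or ${\sf dim}(\mu)$ by $1$) lose a multiplicative factor of the form $c^{d}$, so one has to retain the contribution of these terms; this bookkeeping is exactly what is carried out in~\cite{EFP}, which is why I would ultimately cite it rather than reproduce the computation.
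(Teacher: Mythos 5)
The first thing to say is that the paper contains no proof of this lemma: it is imported wholesale from~\cite{EFP} (the bracketed attribution is the entire ``proof''), so there is nothing internal to compare your argument against, and ending with a citation is exactly what the paper does. Your outline is the standard hook-length route, and the two structural facts you invoke are correct: the cells below the first row carry precisely the hook lengths of $\mu=(\lambda_2,\dots,\lambda_k)\vdash t$, and the first-row hook lengths are $\{1,\dots,\lambda_1+k-1\}$ with the $k-1$ values $\lambda_1-\lambda_i+i-1$ deleted (and $t\ge k-1$ does hold, since each of the $k-1$ lower rows is nonempty).

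There are, however, two genuine gaps. First, the reduction ``WLOG $d=n-\lambda_1$, the other case being symmetric after transposing'' does not work for the parameter as literally defined: $k$ is the number of rows, i.e.\ the length of the first \emph{column}, so $\min(n-\lambda_1,k)$ is not transpose-invariant. If $d=k<n-\lambda_1$ (e.g.\ $\lambda=(n/2,n/2)$, where $d=2$), the transpose satisfies $n-(\lambda^t)_1=n-k>\lambda_1$, so $\lambda^t$ is not in the case you treat, and applying your first-row argument to $\lambda^t$ would yield a bound with exponent $n-k$ rather than $k$. Your reduction, and your part-3 assertion that $d>n/10$ keeps $\lambda$ away from the sign shape, are only valid if one reads $d=\min(n-\lambda_1,\,n-k)$; that is evidently the intended (and \cite{EFP}'s) parameter, since with the literal definition part 3 is outright false --- $\lambda=(1^n)$ has $d=n-1>n/10$ and ${\sf dim}(\lambda)=1$ --- but as written your step does not follow from the stated hypothesis. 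Second, part 3 cannot be obtained by ``feeding $d>n/10$ into $\binom{n-O(d)}{d}$'': that bound (equivalently $(n/(de))^d$) is increasing only up to $d\approx n/e$ and drops below $1$ beyond it, and $\binom{n-O(d)}{d}$ degenerates entirely for $d$ near $n/2$ or larger (which is attainable, e.g.\ for staircase shapes). So the regime $d\gtrsim n/e$ needs a separate argument, which is part of what \cite{EFP} supplies. Since you also defer the part-2 bookkeeping --- which, as you note, is where the crude estimates lose a $c^d$ factor and hence is the entire content of the claim --- your proof, like the paper's, ultimately \emph{is} the citation; the sketch surrounding it needs the two repairs above to be accurate.
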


\subsubsection{Applying Hoffman's bound}
With the information we have gathered regarding the representation theory of $S_n$ and
the eigenvalues of $T_F$, we can use the spectral method to prove lower bounds on
$\inner{T_F g}{h}$ for Boolean functions $g,h$ that are global, as in the following lemma.
\begin{lem}\label{lem:Hoffman}
  There exists $C>0$ such that the following holds.
  Let $n\in\mathbb{N}$ and $\eps>0$ be such that $n\geq \log(1/\eps)^C$,
  and suppose that $g,h\colon A_n\to\power{}$ are $(6,\eps)$-global. Then
  \[
  \inner{T_F g}{h} \geq \frac{\E[g]\E[h]}{4} - C\frac{\eps^4 \log^{C}(1/\eps)}{\sqrt{n\delta}} - \frac{C}{\sqrt{n^4\delta}}\sqrt{\E[g]\E[h]}.
  \]
\end{lem}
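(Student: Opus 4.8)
The plan is to run Hoffman's spectral bound on the Cayley operator $T_F$, using the $T_F$-invariant decomposition $g=\sum_{\lambda\vdash n}g^{=\lambda}$, $h=\sum_\lambda h^{=\lambda}$ from Claim~\ref{claim:invariant}. Since $T_F g^{=\lambda}\in V_{=\lambda}$ is orthogonal to $V_{=\mu}$ for $\mu\ne\lambda$,
\[
\inner{T_F g}{h}=\sum_{\lambda\vdash n}\inner{T_F g^{=\lambda}}{h^{=\lambda}} .
\]
First I would isolate $\lambda=(n)$ and $\lambda=(1^n)$: because $F\subseteq A_n$ every $a\in F$ is even, so $T_F$ fixes both the constant function and ${\sf sign}$, and these two summands equal $\mu(g)\mu(h)+\inner{g}{{\sf sign}}\inner{h}{{\sf sign}}$. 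Writing $g=1_{\mathcal A}$, $h=1_{\mathcal B}$ with $\mathcal A,\mathcal B\subseteq A_n$ we have $g\cdot{\sf sign}=g$, so both products are nonnegative and elementary normalization bookkeeping shows their sum is at least $\tfrac14\E[g]\E[h]$. Everything else is error: for $\lambda\notin\{(n),(1^n)\}$ I would apply Cauchy--Schwarz together with Lemma~\ref{lem:ev_up_TF}, which bounds every eigenvalue of $T_F^{*}T_F$ on $V_{=\lambda}$ by $1/({\sf dim}(\lambda)\delta)$:
\[
\bigl|\inner{T_F g^{=\lambda}}{h^{=\lambda}}\bigr|\le\|T_F g^{=\lambda}\|_2\,\|h^{=\lambda}\|_2\le\frac{1}{\sqrt{{\sf dim}(\lambda)\delta}}\,\|g^{=\lambda}\|_2\,\|h^{=\lambda}\|_2 .
\]

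Next I would group the remaining $\lambda$ by their ``distance'' $d_\lambda$ from the two extreme partitions (the smaller of $n-\lambda_1$ and $n$ minus the number of parts), noting $d_\lambda\ge1$. For the low range $1\le d_\lambda\le3$: here ${\sf dim}(\lambda)=\Omega(n)$ — the minimum $n-1$ being attained only at $(n-1,1)$ and its transpose — so the prefactor above is $O(1/\sqrt{n\delta})$. For the $L_2$-mass, if $\lambda_1\ge n-3$ then $V_{=\lambda}\subseteq V_3$, while the other $\lambda$ with $d_\lambda\le3$ are transposes of such partitions and satisfy $\|g^{=\lambda}\|_2=\|g^{=\lambda^t}\|_2$ by Lemma~\ref{lem:multiplying_by_parity} (again using $g\cdot{\sf sign}=g$), so $\sum_{1\le d_\lambda\le3}\|g^{=\lambda}\|_2^2\le2\|g^{\le3}\|_2^2$, and likewise for $h$. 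Since $g$ is $(6,\eps)$-global, i.e.\ $(2\cdot3,\eps)$-global, Theorem~\ref{thm:lvl_d} with $d=3$ (applicable for $n$ as in the hypothesis after enlarging $C$) gives $\|g^{\le3}\|_2^2\le2^{O(1)}\eps^4\log^{O(1)}(1/\eps)$. A Cauchy--Schwarz over $\lambda$ then bounds the low-range contribution by $O\!\bigl(\eps^4\log^{O(1)}(1/\eps)/\sqrt{n\delta}\bigr)$, the first error term.

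For the high range $d_\lambda\ge4$: by Lemma~\ref{lem:dim_lb} (applied, if necessary, to $\lambda^t$, using ${\sf dim}(\lambda)={\sf dim}(\lambda^t)$) we get ${\sf dim}(\lambda)\ge(n/(d_\lambda e))^{d_\lambda}$, and since $x\mapsto(n/(xe))^x$ increases for $x\le n/e^2$ while ${\sf dim}(\lambda)\ge1.05^n$ once $d_\lambda>n/10$, for $n$ large we get ${\sf dim}(\lambda)\ge(n/4e)^4$ for every such $\lambda$. Hence the prefactor is $O(1/\sqrt{n^4\delta})$ uniformly, and
\[
\sum_{d_\lambda\ge4}\bigl|\inner{T_F g^{=\lambda}}{h^{=\lambda}}\bigr|\le\frac{O(1)}{\sqrt{n^4\delta}}\sum_\lambda\|g^{=\lambda}\|_2\|h^{=\lambda}\|_2\le\frac{O(1)}{\sqrt{n^4\delta}}\,\|g\|_2\|h\|_2=O\!\Bigl(\frac{\sqrt{\E[g]\E[h]}}{\sqrt{n^4\delta}}\Bigr),
\]
using Cauchy--Schwarz, Parseval, and $\|g\|_2^2=\E[g]$ for Boolean $g$; this is the last error term. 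Adding the main term and the two error estimates gives the claimed inequality.

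The step I expect to be the main obstacle is the low-range bookkeeping: the dangerous partitions are those of small dimension that sit near the \emph{sign} side (e.g.\ $(2,1^{n-2})$, of dimension $n-1$ but contained in $V_{n-2}$ rather than in $V_3$), and the point is that the hypothesis $\mathcal A,\mathcal B\subseteq A_n$ together with Lemma~\ref{lem:multiplying_by_parity} lets one transfer their $L_2$-weight onto $\|g^{\le3}\|_2$ and $\|h^{\le3}\|_2$, where the level-$3$ inequality controls it. A minor point is matching the constant $1/4$ with the true main-term value, which the slack in the statement absorbs.
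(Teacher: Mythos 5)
Your proposal is correct and follows essentially the same route as the paper: the same $V_{=\lambda}$ decomposition, the same isolation of $(n)$ and $(1^n)$, the same trace-method eigenvalue bound $1/({\sf dim}(\lambda)\delta)$ combined with Lemma~\ref{lem:dim_lb}, the level-$3$ inequality for the near-trivial partitions, and Cauchy--Schwarz plus Parseval for the rest. Your observation that $g\cdot{\sf sign}=g$ (since $g$ is supported on $A_n$) lets you transfer the weight of the many-part partitions directly via Lemma~\ref{lem:multiplying_by_parity}, which is a mild streamlining of the paper's equivalent step of passing to $\tilde g=g\cdot{\sf sign}$ and re-running the argument.
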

\begin{proof}
  Extend $g,h$ to $S_n$ by defining them to be $0$ outside $A_n$.

  Recall that $T_F$ preserves each $V_{=\lambda}$. Decomposing
  $g = \sum_{\lambda\vdash n} g^{=\lambda}$ where $g^{=\lambda}\in V_{=\lambda}$
  and $h$ similarly, we have by Plancherel that
  $\inner{T_F g}{h}
  =\sum\limits_{\lambda,\theta} \inner{T_F  g^{=\lambda}}{h^{=\lambda}}$.
  For the trivial partition $\lambda = (n)$ we have
  that $g^{=\lambda} \equiv\mu(g)=\E[g]/2$, $h^{=\lambda} \equiv\mu(h)=\E[h]/2$. For $\lambda = (1^n)$,
  since $F\subseteq A_n$ it follows that $T_F {\sf sign} = {\sf sign}$, and so
  $T_F g^{=\lambda} = \beta_{\lambda}{\sf sign}$,
  $h^{=\lambda} = \gamma_{\lambda}{\sf sign}$ for $\beta_{\lambda},\gamma_{\lambda}\geq 0$, so the term corresponding to
  $\lambda$ in the above is non-negative. Thus, denoting $\lambda = (\lambda_1,\ldots,\lambda_k)$ we have that
  \begin{equation}\label{eq:Hoffman}
   \inner{T_F g}{h}\geq \mu(g)\mu(h) -
   \sum\limits_{\substack{\lambda\vdash n\\\lambda\neq (n),(1^n)\\ \lambda_1\geq n-3\text{ or }k\geq n-3}}
   \norm{T_F g^{=\lambda}}_{2}\norm{h^{=\lambda}}_2
   -
   \sum\limits_{\substack{\lambda\neq (n),(1^n)\\ \lambda_1\leq n-4\text{ and }k\leq n-4}} \norm{T_F g^{=\lambda}}_{2}\norm{h^{=\lambda}}_2.
  \end{equation}
  We upper-bound the second and third terms on the right-hand side, from which the lemma follows.
  We begin with the second term, and handle separately $\lambda$'s such that $\lambda_1\geq n-3$, and $\lambda$'s such that
  $k\geq n-3$.

  \paragraph{$\lambda$'s such that $\lambda\neq (n), (1^n)$ and $\lambda_1\geq n-3$.}
  We first upper bound $\norm{T_F g^{=\lambda}}_{2}$.
  As $T_{F}^{*} T_F$ preserves each space $V_{=\lambda}$ and is symmetric, we may write this space
  as a sum of eigenspaces of $T_{F}^{*} T_F$, say $\bigoplus_{\theta} V_{=\lambda}^{\theta}$.
  Writing $g^{=\lambda} = \sum_{\theta} g^{=\lambda,\theta}$ where $g^{=\lambda,\theta}\in V_{=\lambda}^{\theta}$,
  we have that
  \[
  \norm{T_F g^{=\lambda}}_{2}^2
  =\inner{g^{=\lambda}}{T_F^{*}T_F g^{=\lambda}}
  =\sum\limits_{\theta}{\inner{g^{=\lambda,\theta}}{T_F^{*}T_F g^{=\lambda,\theta}}}
  =\sum\limits_{\theta}{\theta\norm{g^{=\lambda,\theta}}_2^2}.
  \]
  By Lemma~\ref{lem:ev_up_TF} we have $\theta\leq \frac{1}{{\sf dim}(\lambda)\delta}$,
  which by Fact~\ref{lem:dim_lb} is at most $O\left(\frac{1}{n\delta}\right)$. We thus get that
  \[
  \norm{T_F g^{=\lambda}}_{2}^2
  \leq O\left(\frac{1}{n\delta}\right)\sum\limits_{\theta}{\norm{g^{=\lambda,\theta}}_2^2}
  \leq O\left(\frac{1}{n\delta}\right)\norm{g^{=\lambda}}_2^2.
  \]
  Plugging this into the second sum in~\eqref{eq:Hoffman}, we get that the contribution from $\lambda$ such that $\lambda_1\geq n-3$ is
  at most
  \[
  O\left(\frac{1}{\sqrt{n\delta}}\right)
  \sum\limits_{\substack{\lambda\vdash n\\\lambda\neq (n),(1^n)\\ \lambda_1\geq n-3}}
   \norm{g^{=\lambda}}_{2}\norm{h^{=\lambda}}_2
  \leq O\left(\frac{1}{\sqrt{n\delta}}\right) \norm{g^{\leq 3}}_2\norm{h^{\leq 3}}_2,
  \]
  where we used Cauchy-Schwarz and~\eqref{eq:parseval_triviality}.
  By Theorem~\ref{thm:lvl_d}, $\norm{g^{\leq 3}}_2^2,\norm{h^{\leq 3}}_2^2\leq C\cdot \eps^4 \log^{C}(1/\eps)$ for some absolute constant $C$.
  We thus get that
  \[
   \sum\limits_{\substack{\lambda\vdash n\\\lambda\neq (n),(1^n)\\ \lambda_1\geq n-3}}
   \norm{T_F g^{=\lambda}}_{2}\norm{h^{=\lambda}}_2\leq \frac{1}{\sqrt{n\delta}}C'\cdot \eps^4 \log^{C}(1/\eps).
  \]

  \paragraph{$\lambda$'s such that $k\geq n-3$.}
  The treatment here is pretty much identical to the previous case, except that we look at the functions $\tilde{g} = g\cdot {\sf sign}$
  and $\tilde{h} = h\cdot {\sf sign}$. That is, first note that the globalness of $g,h$ implies that $\tilde{g},\tilde{h}$ are also global with the same parameters,
  and since $g,h$ are Boolean, $\tilde{g},\tilde{h}$ are integer valued. Moreover, by Lemma~\ref{lem:multiplying_by_parity} we have that
  \[
  \sum\limits_{\substack{\lambda\vdash n\\\lambda\neq (n),(1^n)\\ k\geq n-3}}
  \norm{T_F g^{=\lambda}}_{2}\norm{h^{=\lambda}}_{2}
  =
  \sum\limits_{\substack{\lambda\vdash n\\\lambda\neq (n),(1^n)\\ k\geq n-3}}
  \norm{T_F \tilde{g}^{=\lambda^t}}_{2}\norm{\tilde{h}^{=\lambda^t}}_{2}
  =
  \sum\limits_{\substack{\lambda\vdash n\\\lambda\neq (n),(1^n)\\ \lambda_1\geq n-3}}
  \norm{T_F \tilde{g}^{\lambda}}_{2}\norm{\tilde{h}^{\lambda}}_{2},
  \]
  and from here the argument is identical.

  \paragraph{Bounding the third term in~\eqref{eq:Hoffman}.}
  Repeating the eigenspace argument from above, for all $\lambda\vdash n$ such that
  $\lambda_1\leq n-4$ and $k\leq n-4$ we have
  \[
  \norm{T_F g^{=\lambda}}_{2}\leq O\left(\frac{1}{\sqrt{n^4\delta}}\right) \norm{g^{=\lambda}}_{2}.
  \]
  Thus, the third sum in~\eqref{eq:Hoffman} is at most
  \[
  O\left(\frac{1}{\sqrt{n^4\delta}}\right)\sum\limits_{\lambda\vdash n}\norm{g^{=\lambda}}_{2}\norm{h^{=\lambda}}_{2}
  \leq  O\left(\frac{1}{\sqrt{n^4\delta}}\right)\norm{g}_2\norm{h}_2,
  \]
  where we used Cauchy--Schwarz and Parseval.
\end{proof}

We can now prove the strengthening of Theorem~\ref{thm:global_prod_free}, stated below.
\begin{cor}\label{cor:global_prod_free_strong}
  There exists $K\in\mathbb{N}$ such that the following holds for all $\eps>0$ and $n\geq \log^K(1/\eps)$.
  If $\mathcal{A},\mathcal{B}\subseteq A_n$
  are $(6,\eps)$-global, and $\mu(\mathcal{A})\mu(\mathcal{B})\geq K\max(n^{-4}\delta^{-1}, (n\delta)^{-1/2}\eps^{4}\log^{K}(1/\eps))$,
  then
  \[
  \inner{T_F g}{h}\geq \frac{1}{5}\mu(\mathcal{A})\mu(\mathcal{B}).
  \]
\end{cor}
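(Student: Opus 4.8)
The plan is to obtain the corollary as an essentially immediate consequence of Lemma~\ref{lem:Hoffman}, the only work being the choice of the constant $K$. First I would take $g = 1_{\mathcal{A}}$ and $h = 1_{\mathcal{B}}$; since $\mathcal{A},\mathcal{B}\subseteq A_n$ are $(6,\eps)$-global and $\set{0,1}$-valued, Lemma~\ref{lem:Hoffman} applies, its hypothesis $n\geq\log^{C}(1/\eps)$ being implied by our assumption $n\geq\log^{K}(1/\eps)$ once $K\geq C$. Writing $\mu(\mathcal{A})=\E[g]$ and $\mu(\mathcal{B})=\E[h]$, the lemma gives
\[
\inner{T_F g}{h}\;\geq\;\frac{\mu(\mathcal{A})\mu(\mathcal{B})}{4}\;-\;C\,\frac{\eps^{4}\log^{C}(1/\eps)}{\sqrt{n\delta}}\;-\;\frac{C}{\sqrt{n^{4}\delta}}\,\sqrt{\mu(\mathcal{A})\mu(\mathcal{B})},
\]
so the entire task reduces to showing that each of the two subtracted terms is at most $\tfrac{1}{40}\mu(\mathcal{A})\mu(\mathcal{B})$, which then yields $\inner{T_F g}{h}\geq\bigl(\tfrac{1}{4}-\tfrac{1}{40}-\tfrac{1}{40}\bigr)\mu(\mathcal{A})\mu(\mathcal{B})=\tfrac{1}{5}\mu(\mathcal{A})\mu(\mathcal{B})$.

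For the first subtracted term, the hypothesis $\mu(\mathcal{A})\mu(\mathcal{B})\geq K(n\delta)^{-1/2}\eps^{4}\log^{K}(1/\eps)$, together with $K\geq C$ (so that $\log^{C}(1/\eps)\leq\log^{K}(1/\eps)$ in the relevant range $\log(1/\eps)\geq 1$), bounds it by $(C/K)\,\mu(\mathcal{A})\mu(\mathcal{B})$, which is $\leq\tfrac{1}{40}\mu(\mathcal{A})\mu(\mathcal{B})$ as soon as $K\geq 40C$. For the second subtracted term it suffices to have $\sqrt{\mu(\mathcal{A})\mu(\mathcal{B})}\geq 40C/\sqrt{n^{4}\delta}$, i.e.\ $\mu(\mathcal{A})\mu(\mathcal{B})\geq(40C)^{2}n^{-4}\delta^{-1}$, which follows from the hypothesis $\mu(\mathcal{A})\mu(\mathcal{B})\geq Kn^{-4}\delta^{-1}$ once $K\geq(40C)^{2}$. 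Thus $K=(40C)^{2}$ (which dominates all earlier requirements) finishes the proof; the complementary, ``large $\eps$'' range is vacuous or trivial and I would dispose of it in a line. As a sanity check, taking $\mathcal{A}=\mathcal{B}=F$ with $F$ product-free gives $\inner{T_F 1_F}{1_F}=0$, forcing the hypothesis to fail, and unwinding that inequality with $\eps$ a constant multiple of $\sqrt{\delta}$ recovers $\delta\leq\log^{O(1)}(n)/n$, i.e.\ Theorem~\ref{thm:global_prod_free}.

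I expect the only genuine obstacle to be bookkeeping rather than mathematics: one must line up the normalization conventions for $\mu$, $\E[g]$ and $\inner{\cdot}{\cdot}$ in Lemma~\ref{lem:Hoffman} with those used in the statement of the corollary — in particular the factors of $2$ coming from $|S_n|=2|A_n|$ and from $g,h$ being supported on $A_n$ — and track which polylogarithmic exponent $K$ is needed to absorb the $\log^{C}(1/\eps)$ factor inherited from the level-$d$ inequality (Theorem~\ref{thm:lvl_d}). If the normalization forces the main term to be $c\cdot\mu(\mathcal{A})\mu(\mathcal{B})$ for some $c<1/4$, one simply relaxes the target constant accordingly and re-chooses $K$; no new idea beyond Lemma~\ref{lem:Hoffman} is required.
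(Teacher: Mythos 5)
Your proposal is correct and is exactly the paper's argument: the paper also sets $g=1_{\mathcal{A}}$, $h=1_{\mathcal{B}}$, invokes Lemma~\ref{lem:Hoffman}, and observes that the hypothesis on $\mu(\mathcal{A})\mu(\mathcal{B})$ makes the main term dominate the two error terms. Your constant-tracking (choosing $K\geq(40C)^2$) just makes explicit what the paper leaves as "the conditions on the parameters imply."
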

\begin{proof}
Taking
$g = 1_{\mathcal{A}}$, $h = 1_{\mathcal{B}}$, by Lemma~\ref{lem:Hoffman} we have
\[
\inner{T_F g}{h}
\geq \frac{1}{4} \mu(\mathcal{A})\mu(\mathcal{B}) - C'\frac{\eps^4\log^{C'}(1/\eps)}{\sqrt{n\delta}} - \frac{C'}{n^2\sqrt{\delta}}\sqrt{\mu(\mathcal{A})\mu(\mathcal{B})},
\]
where $C'$ is an absolute constants. Now the conditions on the parameters implies that the first term dominates the other two.
\end{proof}
We note that Theorem~\ref{thm:global_prod_free} immediately follows, since there one has $g = h = 1_F$ and $\inner{T_F g}{h} = 0$,
so one gets that the condition on the parameters fail, and therefore the lower bound on $\mu(\mathcal{A})\mu(\mathcal{B})$ (which in
this case is just $\delta^2$) fails; plugging in $\eps = C\cdot \sqrt{\delta}$ and rearranging finishes the proof.

\subsubsection{Improving on Theorem~\ref{thm:global_prod_free}?}\label{sec:improve_prod_free}
We remark that it is within reason to expect that global, product-free families in $A_n$ must in fact be much smaller.
To be more precise, one may expect that for all $t\in\mathbb{N}$, there is $j\in\mathbb{N}$ such that for
$n\geq n_0(t)$, if $F$ is $(j,O(\sqrt{\delta}))$-global (where $\delta = \card{F}/\card{S_n}$), then $\delta\leq O_t(n^{-t})$.
The bottleneck in our approach comes from the use of the trace method (which doesn't use the globalness of $F$ at all),
and the bounds it gives on the eigenvalues of $T_F^{*} T_F$ corresponding to low-degree functions: they become meaningless as soon as $\delta\geq 1/n$.

Inspecting the above proof, our approach only requires a super-logarithmic upper bound on the eigenvalues to go through.
More precisely, we need that the first few non-trivial eigenvalues of $T_F^{*} T_F$ are at most
$(\log n)^{-K(t)}$, for sufficiently large $K(t)$. We feel that something like that should follow
in greater generality from the fact that the set of generators in the Cayley graph, namely $F$, is global. To support that,
note that if we were dealing with Abelian groups, then the eigenvalue $\alpha$ of $T_F$ corresponding to a
character $\chi$ could be computed as
$\lambda = \frac{1}{\card{F}}\sum\limits_{a\in F}{\chi(a)}$, which by rewriting
is nothing but a (normalized) Fourier coefficient of $F$, i.e. $\frac{1}{\delta} \widehat{1_F}(\chi)$,
which we expect to be small by the globalness of $F$.

\subsection{Isoperimetric inequalities in the transpositions Cayley graph} \label{sec:isoperimetric}
In this section, we consider $\mathrm{T}$ which is the adjacency operator of
the transpositions graph. That is, it is the transition matrix of the (left) Cayley
graph $(S_n,A)$, where $A$ is the set of transpositions (and the multiplication happens
from the left). We show that for a
global set $S$, starting a walk from a vertex in $S$ and performing
$\approx c n$ steps according to $\mathrm{T}$ escapes $S$ with probability
close to $1$.

\paragraph{Poisson process random walk.}
To be more precise, we consider the following random walk: from a permutation $\pi\in S$,
choose a number $k\sim {\sf Poisson}(t)$, take $\tau$ which is a product of $k$ random transpositions,
and go to $\sigma = \tau\circ \pi$. We show that starting with a random $\pi \in S$, the probability that
we escape $S$, i.e.\ that $S\sigma\not\in S$, is close to $1$.

To prove this result, we first note that the distribution of an outgoing neighbour from $\pi$ is exactly
$e^{-t(I-\mathrm{T})} 1_{\pi}$, where $1_{\pi}$ is the indicator vector of $\pi$. Therefore, the distribution
of $\sigma$ where $\pi\in S$ is random is $e^{-t(I-\mathrm{T})} \frac{1_S}{\card{S}}$, where $1_S$ is the indicator
vector of $S$. Thus, the probability that $\sigma$ is in $S$ (i.e.\ of the complement event) is
\[
\frac{1}{\mu(S)} \inner{1_S}{ e^{-t(I-\mathrm{T})} 1_S},
\]
where $\mu(S)$ is the measure of $S$. We upper-bound this quantity using spectral considerations.
We will only need our hypercontractive inequality and basic knowledge of the eigenvalues of
$\mathrm{T}$, which can be found, for example, in~\cite[Corollary 21]{FOW}. This is the content
of the firs three items in the lemma below (we also prove a fourth item, which will be useful for us later on).

\begin{lem}\label{lem:FOW}
  Let $\lambda\in \mathbb{R}$ be an eigenvalue of $\mathrm{T}$, and $f\in V_{d}(S_n)$ be a corresponding eigenvector.
  \begin{enumerate}
  \item $\mathrm{T} V_{=d}(S_n) \subseteq V_{=d}(S_n)$.
  \item $1-\frac{2d}{n-1}\leq \lambda\leq 1-\frac{d}{n-1}$.
  \item If $d\leq n/2$, then we have the stronger bound $1-\frac{2d}{n-1}\leq \lambda\leq 1-\left(1-\frac{d-1}{n}\right)\frac{2d}{n-1}$.
  \item If $\mathrm{L}$ is a Laplacian of order $1$, then $\mathrm{L}$ and $\mathrm{T}$ commute. Thus, $\mathrm{T}$ commutes with all Laplacians.
\end{enumerate}
\end{lem}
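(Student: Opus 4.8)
The plan is to prove the four items in order, relying on the level decomposition of $L^2(S_n)$ and standard facts about the transpositions random walk.

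\textbf{Item 1.} The key point is that $\mathrm{T}$ commutes with the action of $S_n$ as a bimodule: indeed, $\mathrm{T}$ is a left Cayley graph operator, so $(\mathrm{T}f)(\pi) = \Expect{\tau}{f(\tau\pi)}$ where $\tau$ ranges over transpositions; acting on the left by $\sigma_1$ and on the right by $\sigma_2$ and using that the set of transpositions is closed under conjugation by $\sigma_1$, one checks $\prescript{\sigma_1}{}{(\mathrm{T}f)}^{\sigma_2} = \mathrm{T}(\prescript{\sigma_1}{}{f}^{\sigma_2})$. Since $\mathrm{T}$ is a bimodule endomorphism, by Lemma~\ref{lem:abjuntas} it preserves each $V_d$; as $\mathrm{T}$ is also self-adjoint (the transpositions set is symmetric), it preserves the orthogonal complement $V_{d-1}^\perp$ as well, hence preserves $V_{=d} = V_d \cap V_{d-1}^\perp$.

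\textbf{Items 2 and 3.} Here I invoke the known formula for the eigenvalue of $\mathrm{T}$ on the isotypic component indexed by a partition $\lambda \vdash n$ (see~\cite[Corollary 21]{FOW}): it equals $\frac{1}{\binom{n}{2}}\sum_i \left[\binom{\lambda_i}{2} - \binom{\lambda_i^t}{2}\right]$ (up to normalization), i.e.\ a content-sum expression. The space $V_{=d}$ corresponds (as noted in Section~\ref{sec:level decomposition}) precisely to partitions $\lambda$ with $\lambda_1 = n-d$. So it suffices to show that for every such $\lambda$ the corresponding eigenvalue lies in $[1 - \tfrac{2d}{n-1}, \, 1 - \tfrac{d}{n-1}]$, and in the sharper range of item 3 when $d \le n/2$. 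This is a direct optimization: writing $\lambda = (n-d, \mu)$ where $\mu$ is a partition of $d$, the normalized eigenvalue becomes $1 - \frac{2}{n(n-1)}\big(d(n-d) + \sum_i [\binom{\mu_i}{2} - \binom{\mu_i^t}{2}] + \binom{d}{2}\big)$ or similar; the extremes over $\mu \vdash d$ are attained at $\mu = (d)$ (giving one endpoint) and $\mu = (1^d)$ (giving the other). I would carry out this elementary bookkeeping, checking that the extra content term $\sum_i[\binom{\mu_i}{2}-\binom{\mu_i^t}{2}]$ ranges over an interval whose endpoints translate into exactly the claimed bounds, and that for $d \le n/2$ the factor $(1 - \tfrac{d-1}{n})$ appears from a more careful accounting of the $n - d \ge d$ regime.

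\textbf{Item 4.} A Laplacian of order $1$ is $\mathrm{L}_{(i_1,i_2)}[f] = f - f^{(i_1 i_2)}$, which is a right-translation operator. Right translations commute with the left Cayley operator $\mathrm{T}$: $(\mathrm{T}(f^{\tau}))(\pi) = \Expect{\tau'}{f(\tau'\pi\tau)} = ((\mathrm{T}f)^{\tau})(\pi)$ for any fixed $\tau$. Hence $\mathrm{T}\mathrm{L}_{(i_1,i_2)} = \mathrm{L}_{(i_1,i_2)}\mathrm{T}$, and since a general Laplacian $\mathrm{L}_S$ is a composition of order-$1$ Laplacians (Definition~\ref{def:derivative}), $\mathrm{T}$ commutes with all of them.

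\textbf{Main obstacle.} The bulk of the work — and the only place requiring genuine care — is items 2 and 3: translating the content-sum eigenvalue formula for partitions $\lambda$ with $\lambda_1 = n-d$ into the stated two-sided bounds, and in particular extracting the refined upper bound $1 - (1 - \tfrac{d-1}{n})\tfrac{2d}{n-1}$ in the regime $d \le n/2$. Items 1 and 4 are short commutation arguments.
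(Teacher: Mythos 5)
Your items 1 and 4 match the paper's proof in substance: the paper verifies that $\mathrm{T}$ commutes with the right action of $S_n$ (exactly your translation computation), uses self-adjointness, and concludes item 1 by the same adjoint/orthogonality argument you invoke via Lemma~\ref{lem:abjuntas}; item 4 is the identical one-line computation. The difference is in items 2 and 3: the paper simply cites \cite[Corollary~21]{FOW} as stating those bounds verbatim, whereas you propose to rederive them from the Diaconis--Shahshahani content-sum formula $\beta_\lambda = \tfrac{2}{n(n-1)}\sum_i\bigl[\binom{\lambda_i}{2}-\binom{\lambda_i^t}{2}\bigr]$ by optimizing over partitions with $\lambda_1 = n-d$. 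Your sketch is on the right track --- one can check that $\lambda=(n-d,1^d)$ gives content sum $\binom{n-d}{2}-\binom{d+1}{2}+\binom{1}{2}\cdots$, yielding exactly $1-\tfrac{2d}{n-1}$, and $\lambda=(n-d,d)$ (valid for $d\le n/2$) gives exactly $1-\bigl(1-\tfrac{d-1}{n}\bigr)\tfrac{2d}{n-1}$ --- so the two extremal shapes do produce the claimed endpoints. What your route buys is self-containedness; what it costs is that the ``elementary bookkeeping'' you defer is the entire content of those two items, and you still need (i) a monotonicity argument showing these shapes are actually extremal among all $\mu\vdash d$, and (ii) a separate treatment of the upper bound in item 2 when $d>n/2$, where the row shape $(n-d,d)$ is not a valid partition and the maximizer changes. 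Since the cited corollary already states both items, either route is acceptable, but as written your proof of items 2--3 is a plan rather than a proof.
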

\begin{proof}
  For the first item, we first note that $\mathrm{T}$ commutes with the right action of $S_n$ on functions:
  \[
  (\mathrm{T} (f^{\pi}))(\sigma)
  =\E_{\pi'\text{ a transposition}}\left[f^{\pi}(\pi'\circ \sigma)\right]
  =\E_{\pi'\text{ a transposition}}\left[f(\pi'\circ \sigma\circ \pi)\right]
  =\mathrm{T}f(\sigma\circ \pi)
  =(\mathrm{T}f)^{\pi}(\sigma).
  \]
  Also, $\mathrm{T}$ is self adjoint, so $\mathrm{T}^{*}$ also commutes with the action of $S_n$. The first item now follows as
  in the proof of Claim~\ref{claim:invariant}. 

  The second and third items are exactly~\cite[Corollary 21]{FOW}. For the last item, for any function $f$ and
  an order $1$ Laplacian $\mathrm{L} = \mathrm{L}_{(i,j)}$,
  \[
  \mathrm{T} \mathrm{L} f
  =\mathrm{T}\left(f - f^{(i,j)}\right)
  =\mathrm{T} f - \mathrm{T}\left(f^{(i,j)}\right)
  =\mathrm{T} f - \left(\mathrm{T}f\right)^{(i,j)}
  =\mathrm{L} \left(\mathrm{T} f\right),
  \]
  where in the third transition we used the fact that $\mathrm{T}$ commutes with the right action of $S_n$.
\end{proof}
We remark that the first item above implies that we may find a basis of the space of real-valued functions
consisting of eigenvectors of $\mathrm{T}$, where each function is from $V_{=d}(S_n)$ for some $d$.
Lastly, we need the following (straightforward) fact.
\begin{fact}\label{fact:exp_ev}
  If $f\in V_{d}(S_n)$ is an eigenvector of $\mathrm{T}$ with eigenvalue $\lambda$,
  then $f$ is an eigenvector of $e^{-t(I-\mathrm{T})}$ with eigenvalue $e^{-t(1-\lambda)}$.
\end{fact}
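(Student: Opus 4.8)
The plan is to use the standard functional-calculus argument. First I would recall that $e^{-t(I-\mathrm{T})}$ is by definition the operator given by the power series $\sum_{k\geq 0}\frac{(-t)^k}{k!}(I-\mathrm{T})^k$ (equivalently, the time-$t$ solution operator of the heat-type flow $\frac{d}{dt}u = -(I-\mathrm{T})u$), which converges since $\mathrm{T}$ is a bounded — indeed stochastic — operator on the finite-dimensional space $\mathbb{R}[S_n]$. Hence it suffices to understand how each power $(I-\mathrm{T})^k$ acts on $f$.

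Next, since $\mathrm{T}f = \lambda f$, we have $(I-\mathrm{T})f = f - \lambda f = (1-\lambda)f$, and then by an immediate induction on $k$ we get $(I-\mathrm{T})^k f = (1-\lambda)^k f$ for every $k\geq 0$. Substituting this into the series termwise — legitimate because it is a finite-dimensional linear-algebra identity, or alternatively because the series converges absolutely — yields
\[
e^{-t(I-\mathrm{T})}f = \sum_{k\geq 0}\frac{(-t)^k}{k!}(I-\mathrm{T})^k f = \left(\sum_{k\geq 0}\frac{(-t)^k}{k!}(1-\lambda)^k\right) f = e^{-t(1-\lambda)}f,
\]
which is exactly the claim. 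The hypothesis $f\in V_{d}(S_n)$ plays no essential role beyond ensuring $f\neq 0$, so that ``eigenvector'' is meaningful; it is recorded only because this is the context in which the fact will be applied (together with the eigenvalue bounds of Lemma~\ref{lem:FOW}).

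There is essentially no obstacle here. The only minor point of care is making explicit that $e^{-t(I-\mathrm{T})}$ is defined via the exponential series; once that is granted, the computation above is a one-liner, and diagonalizability of $\mathrm{T}$ (which follows from self-adjointness, cf.\ the first item of Lemma~\ref{lem:FOW}) is not even needed. Consequently I would state this as a short proof rather than deferring it.
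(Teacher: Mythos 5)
Your proof is correct and is exactly the standard argument the paper has in mind (the paper states this as a "straightforward fact" and omits the proof entirely): expand the exponential series, use $(I-\mathrm{T})^k f = (1-\lambda)^k f$, and resum. Your remark that the hypothesis $f \in V_d(S_n)$ is not actually needed for the computation is also accurate.
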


\begin{thm}\label{thm:SSE}
  There exists $C>0$ such that the following holds for all $d\in\mathbb{N}$, $t,\eps>0$ and
  $n\in\mathbb{N}$ such that $n\geq 2^{C\cdot d^3} \log^{C\cdot d}(1/\eps)$.
  If $S\subseteq S_n$ is a set of vertices such that $1_S$ is $(2d,\eps)$-global, then
  \[
  \Prob{\substack{\pi\in S \\ \sigma\sim e^{-t(I-T)} \pi}}{\sigma\not\in S}\geq
  1-\left(2^{C\cdot d^4}\eps\log^{C\cdot d}(1/\eps) + e^{-\frac{(d+1)t}{n-1}}\right).
  \]
\end{thm}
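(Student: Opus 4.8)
The plan is to bound the probability of \emph{not} escaping, i.e. the quantity $\frac{1}{\mu(S)}\inner{1_S}{e^{-t(I-\mathrm{T})}1_S}$ identified in the paragraph preceding the theorem, by a spectral decomposition that isolates the low-degree part of $1_S$ (controlled by the level-$d$ inequality) from its high-degree part (controlled by the spectral gap of $\mathrm{T}$).

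\textbf{Step 1 (spectral splitting).} I would decompose $1_S=\sum_{k=0}^{n-1}(1_S)^{=k}$ into pure-degree parts. By the first item of Lemma~\ref{lem:FOW} the operator $\mathrm{T}$, hence also $e^{-t(I-\mathrm{T})}$, preserves each $V_{=k}(S_n)$, and these spaces are mutually orthogonal, so the cross-terms vanish and $\inner{1_S}{e^{-t(I-\mathrm{T})}1_S}=\sum_{k}\inner{(1_S)^{=k}}{e^{-t(I-\mathrm{T})}(1_S)^{=k}}$. By the second item of Lemma~\ref{lem:FOW} every eigenvalue $\lambda$ of $\mathrm{T}$ on $V_{=k}(S_n)$ satisfies $-1\le\lambda\le 1-\frac{k}{n-1}$, so by Fact~\ref{fact:exp_ev} every eigenvalue of $e^{-t(I-\mathrm{T})}$ on $V_{=k}(S_n)$ lies in $[0,e^{-tk/(n-1)}]$; in particular it is at most $1$ for $k\le d$ and at most $e^{-(d+1)t/(n-1)}$ for $k>d$. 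Therefore
\[
  \inner{1_S}{e^{-t(I-\mathrm{T})}1_S}\le\norm{(1_S)^{\le d}}_2^2+e^{-\frac{(d+1)t}{n-1}}\sum_{k>d}\norm{(1_S)^{=k}}_2^2\le\norm{(1_S)^{\le d}}_2^2+e^{-\frac{(d+1)t}{n-1}}\mu(S),
\]
using $\sum_{k>d}\norm{(1_S)^{=k}}_2^2\le\norm{1_S}_2^2=\mu(S)$. Dividing by $\mu(S)$ gives $\Prob{\pi\in S,\ \sigma\sim e^{-t(I-\mathrm{T})}\pi}{\sigma\in S}\le\frac{\norm{(1_S)^{\le d}}_2^2}{\mu(S)}+e^{-(d+1)t/(n-1)}$.

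\textbf{Step 2 (the low-degree mass).} It remains to show $\frac{\norm{(1_S)^{\le d}}_2^2}{\mu(S)}\le 2^{Cd^4}\eps\log^{Cd}(1/\eps)$. If $2^{Cd^4}\eps\log^{Cd}(1/\eps)\ge 1$ this is immediate from $\norm{(1_S)^{\le d}}_2^2\le\norm{1_S}_2^2=\mu(S)$, so I may assume $\eps$ small; I may also assume $\eps^2=\max_{|T|\le 2d}\norm{(1_S)_{\rightarrow T}}_2^2$ is the exact globalness, since for larger $\eps$ the target is weaker in this regime. Then $\mu(S)=\norm{(1_S)_{\rightarrow\emptyset}}_2^2\le\eps^2$, and $\norm{(1_S)_{\rightarrow T}}_2^2\le|S|/(n-2d)!=\mu(S)\,n!/(n-2d)!\le\mu(S)\,n^{2d}$ gives $\mu(S)\ge\eps^2/n^{2d}$. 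If $\mu(S)\ge\eps^3$, apply Theorem~\ref{thm:lvl_d} to get $\norm{(1_S)^{\le d}}_2^2\le 2^{C_0d^4}\eps^4\log^{C_0d}(1/\eps)$, so $\frac{\norm{(1_S)^{\le d}}_2^2}{\mu(S)}\le 2^{C_0d^4}\eps\log^{C_0d}(1/\eps)$, of the required form for $C\ge C_0$. If $\mu(S)<\eps^3$, use Theorem~\ref{thm:lvl_d_strong} instead: from $\mu(S)\ge\eps^2/n^{2d}$ we get $\log(1/\mu(S))\le 2\log(1/\eps)+2d\log n$, so the hypothesis $n\ge 2^{Cd^3}\log^{Cd}(1/\eps)$ (with $C$ large) guarantees the condition $n\ge 2^{C_1d^3}\log(1/\norm{1_S}_2)^{C_1d}$; then $\norm{(1_S)^{\le d}}_2^2\le 2^{C_1d^4}\mu(S)\eps^2\log^{C_1d}(1/\mu(S))$, so $\frac{\norm{(1_S)^{\le d}}_2^2}{\mu(S)}\le 2^{C_1d^4}\eps^2\log^{C_1d}(1/\mu(S))$, and using $\eps^2\le\eps$ together with $\log(1/\mu(S))\le 2\log(1/\eps)+2d\log n$ and the bound on $n$ this is absorbed into $2^{Cd^4}\eps\log^{Cd}(1/\eps)$ for $C$ large enough. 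This closes the argument.

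\textbf{Expected main obstacle.} Step 1 is routine given Lemma~\ref{lem:FOW} and Fact~\ref{fact:exp_ev}. The real difficulty is entirely in Step 2 in the regime where $\mu(S)$ is much smaller than $\eps^3$: there both the trivial bound $\norm{(1_S)^{\le d}}_2^2\le\mu(S)$ and the plain level-$d$ inequality are too weak, and one must invoke the $\norm{1_S}_2$-sensitive level-$d$ inequality (Theorem~\ref{thm:lvl_d_strong}, or the sharper Proposition~\ref{prop:lvl_d}) and verify that its condition on $n$ is not violated — which is precisely where the elementary inequality $\mu(S)\ge\eps^2/n^{2d}$, valid once $\eps$ is the true globalness, is essential, since it keeps $\log(1/\mu(S))$ comparable to $d\log n+\log(1/\eps)$.
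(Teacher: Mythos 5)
Your Step 1 is correct and is essentially the paper's own argument (the paper reaches the same bound by applying Cauchy--Schwarz to each pure-degree piece). The gap is in Step 2, specifically in the reduction to ``exact globalness''. Replacing $\eps$ by the true globalness $\eps_0=\max_{|T|\le 2d}\norm{(1_S)_{\rightarrow T}}_2$ does make the \emph{conclusion} monotonically weaker, but it also tightens every hypothesis of the form $n\ge 2^{Cd^3}\log^{Cd}(1/\cdot)$: to run your first case you now need $n\ge 2^{Cd^3}\log^{Cd}(1/\eps_0)$ in order to invoke Theorem~\ref{thm:lvl_d}, and this can fail badly. Concretely, take $S$ a single permutation and $\eps=1/2$: the theorem's hypotheses hold, but $\eps_0^2=1/(n-2d)!$, so $\log(1/\eps_0)=\Theta(n\log n)$; moreover $\mu(S)\ge\eps_0^3$, so you land in exactly the case where Theorem~\ref{thm:lvl_d} is inapplicable. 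Without the reduction you are equally stuck: in the regime $\mu(S)\ll\eps^3$ there is no lower bound on $\mu(S)$ in terms of the \emph{given} $\eps$, so the hypothesis $n\ge 2^{Cd^3}\log(1/\norm{1_S}_2)^{Cd}$ of Theorem~\ref{thm:lvl_d_strong} can also fail. Any route that produces the needed factor of $\mu(S)$ by dividing a level-$d$ bound by $\mu(S)$ is forced to relate $\mu(S)$ to $\eps$, which is impossible in general.

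The paper circumvents this entirely. Instead of bounding $\norm{(1_S)^{\le j}}_2^2$ by a level-$d$ inequality and then dividing by $\mu(S)$, it writes $\norm{(1_S)^{\le j}}_2^2=\inner{1_S}{(1_S)^{\le j}}\le\norm{1_S}_{4/3}\norm{(1_S)^{\le j}}_4=\mu(S)^{3/4}\norm{(1_S)^{\le j}}_4$, then bounds $\norm{(1_S)^{\le j}}_4$ via the hypercontractive inequality (Theorem~\ref{thm:Reasonability}) applied to $(1_S)^{\le j}$, whose globalness $2^{O(j^4)}\eps^2\log^{O(j)}(1/\eps)$ is supplied by Claim~\ref{claim:globalness_of_low_deg_part}, and rearranges to get $\norm{(1_S)^{\le j}}_2^2\le 2^{O(j^4)}\mu(S)\,\eps\log^{O(j)}(1/\eps)$. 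The factor $\mu(S)$ comes out of the H\"older step for free, with no lower bound on $\mu(S)$ needed. To repair your proof, keep Step 1 and replace Step 2 by this H\"older-plus-hypercontractivity argument.
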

\begin{proof}
  Consider the complement event that $\sigma\in S$,
  and note that the desired probability can be written analytically as
  $\frac{1}{\mu(S)} \inner{1_S}{ e^{-t(I-\mathrm{T})} 1_S}$, where $\mu(S)$ is the measure of $S$.
  Now, writing $f = 1_S$ and expanding $f = f_{=0} + f_{=1}+\cdots$, we consider
  each one of $e^{-t(I-\mathrm{T})} f^{=j}$ separately. We claim that
  \begin{equation}\label{eq:isop}
  \norm{e^{-t(I-\mathrm{T})} f^{=j}}_2\leq e^{-\frac{jt}{n-1}} \norm{f^{=j}}_2.
  \end{equation}
  Indeed, note that we may write $f^{=j} = \sum\limits_{}{a_r f_{j,r}}$, where $f_{j,r}\in V_{=j}(S_n)$ are orthogonal
  and eigenvectors of $\mathrm{T}$ with eigenvalue $\lambda_{j,r}$, and so by Fact~\ref{fact:exp_ev},
  $e^{-t(I-\mathrm{T})} f^{=j}
  =\sum\limits_{r} e^{-t(1-\lambda_{j,r})} f_{j,r}$.
  By Parseval we deduce that
  \[
  \norm{e^{-t(I-\mathrm{T})} f^{=j}}_2^2
  \leq \sum\limits_{r} e^{-t(1-\lambda_{j,r})} \norm{f_{j,r}}_2^2
  \leq \max_{r}e^{-t(1-\lambda_{j,r})}\sum\limits_{r}  \norm{f_{j,r}}_2^2
  =\max_{r}e^{-t(1-\lambda_{j,r})}\norm{f^{=j}}_2^2.
  \]
  Inequality~\eqref{eq:isop} now follows from the second item in Lemma~\ref{lem:FOW}.

  We now expand out the expression we have for the probability of the complement event using Plancherel:
  \begin{align}
  \frac{1}{\mu(S)} \inner{1_S}{ e^{-t(I-\mathrm{T})} 1_S}
  =\frac{1}{\mu(S)}\sum\limits_{j} \inner{f^{=j}}{e^{-t(I-\mathrm{T})} f^{=j}}
  &\leq \frac{1}{\mu(S)}\sum\limits_{j} \norm{f^{=j}}_2\norm{e^{-t(I-\mathrm{T})} f^{=j}}_2 \notag\\\label{eq:isop_2}
  &\leq\frac{1}{\mu(S)}\sum\limits_{j} e^{-\frac{jt}{n-1}}\norm{f^{=j}}_2^2,
  \end{align}
  where in the last two transitions we used Cauchy--Schwarz and inequality~\eqref{eq:isop}.
  Lastly, we bound $\norm{f^{=j}}_2^2$. For $j > d$ we have that $\sum\limits_{j > d}\norm{f^{=j}}_2^2\leq \mu(S)$ by Parseval,
  and for $j\leq d$ we use hypercontractivity.

  First, bound $\norm{f^{=j}}_2\leq \norm{f^{\leq j}}_2$, and note
  that the function $f^{\leq j}$ is $(2j, 2^{O(j^4)}\eps^2 \log^{O(j)}(1/\eps))$-global by Claim~\ref{claim:globalness_of_low_deg_part}.
  Thus, using H\"{o}lder's inequality and Theorem~\ref{thm:Reasonability} we get that
  \begin{align*}
  \norm{f^{\leq j}}_2^2
  =\inner{f}{f^{\leq j}}
  \leq \norm{f}_{4/3} \norm{f^{\leq j}}_4
  \leq \mu(S)^{3/4} 2^{O(j^3)}\sqrt{2^{O(j^4)}\eps^2 \log^{O(j)}(1/\eps)}\norm{f^{\leq j}}_2^{1/2}.
  \end{align*}
  Rearranging gives $\norm{f^{\leq j}}_2^2 \leq 2^{O(j^4)} \mu(S) \eps\log^{O(j)}(1/\eps)$.

  Plugging our estimates into~\eqref{eq:isop_2} we get
  \begin{align*}
  \frac{1}{\mu(S)} \inner{1_S}{ e^{-t(I-\mathrm{T})} 1_S}
  \leq
  \sum\limits_{j=0}^{d} 2^{O(j^4)}e^{-\frac{jt}{n-1}} \eps\log^{O(j)}(1/\eps)
  +e^{-\frac{(d+1)t}{n-1}}
  \leq 2^{O(d^4)}\eps\log^{O(d)}(1/\eps) + e^{-\frac{(d+1)t}{n-1}}.
  \end{align*}
  \end{proof}

  Using exactly the same technique, one can prove a lower bound on the probability of escaping a global
  set in a single step, as stated below. This result is similar in spirit to a variant of the KKL Theorem
  over the Boolean hypercube~\cite{KKL}, and therefore we modify the formulation slightly. Given a function
  $f\colon S_n\to \mathbb{R}$, we define the influence of coordinate $i\in[n]$ to be
  \[
  I_i[f] = \Expect{j\neq i}{\norm{L_{(i,j)} f}_2^2},
  \]
  and define the total influence of $f$ to be $I[f] = I_1[f]+\dots+I_n[f]$.
  \begin{thm}\label{thm:KKL_analog}
  There exists $C>0$ such that the following holds for all $d\in\mathbb{N}$ and
  $n\in\mathbb{N}$ such that $n\geq 2^{C\cdot d^3}$.
  Suppose $S\subseteq S_n$ is such that for all derivative operators $\mathrm{D}\neq I$ of order at most $d$,
  it holds that  $\norm{\mathrm{D}1_S}_2\leq 2^{-C\cdot d^4}$. Then
  \[
  I[1_S]\geq \frac{1}{4} d\cdot {\sf var}(1_S).
  \]
  \end{thm}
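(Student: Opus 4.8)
The plan is to reduce the total influence to the Dirichlet form of the transpositions walk and then argue spectrally. First I would record the identity $I[1_S]=2n\,\langle 1_S,(\mathrm I-\mathrm T)1_S\rangle$, where $\mathrm T$ denotes the (right) transpositions walk. Writing $f=1_S$, one has $\langle f,(\mathrm I-\mathrm T)f\rangle=\tfrac{1}{n(n-1)}\sum_{i<j}\|\mathrm L_{(i,j)}f\|_2^2$, because $\mathrm I-\mathrm T$ is the average of the order-$1$ Laplacians $\mathrm L_{(i,j)}=\mathrm I-R_{(i,j)}$ and $\langle f,\mathrm L_{(i,j)}f\rangle=\tfrac12\|\mathrm L_{(i,j)}f\|_2^2$; on the other hand $I[f]=\tfrac{2}{n-1}\sum_{i<j}\|\mathrm L_{(i,j)}f\|_2^2$ by definition, and comparing gives the identity. (Lemma~\ref{lem:FOW} is stated for the left Cayley operator, but the left and right transposition walks are conjugate under $\pi\mapsto\pi^{-1}$, which preserves each $V_{=j}$, so they are isospectral and share the same invariant subspaces.) Decomposing $f=\sum_j f^{=j}$ and using that $\mathrm T$ preserves each $V_{=j}$ with all eigenvalues at most $1-\tfrac{j}{n-1}$ (Lemma~\ref{lem:FOW}),
\[
I[f]=2n\sum_{j\ge1}\langle f^{=j},(\mathrm I-\mathrm T)f^{=j}\rangle\ \ge\ 2\sum_{j\ge1} j\,\|f^{=j}\|_2^2 .
\]
Since $\operatorname{var}(f)=\sum_{j\ge1}\|f^{=j}\|_2^2$, the theorem follows once I show that the weight of $f$ on degrees $1,\dots,m$ is at most $\tfrac12\operatorname{var}(f)$ for $m=\lfloor d/2\rfloor$: then $\sum_{j>m}\|f^{=j}\|_2^2\ge\tfrac12\operatorname{var}(f)$, hence $I[f]\ge 2(m+1)\sum_{j>m}\|f^{=j}\|_2^2\ge (m+1)\operatorname{var}(f)\ge\tfrac d2\operatorname{var}(f)$, with room to spare.

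It remains to bound $\sum_{j=1}^m\|f^{=j}\|_2^2=\|f^{\le m}\|_2^2-\E[f]^2$. Replacing $1_S$ by $1_{S^c}$ if necessary — which changes neither $I[\cdot]$, nor $\operatorname{var}(\cdot)$, nor the $2$-norms of the order-$\ge1$ derivatives, since a derivative of order $\ge1$ kills constants — I may assume $\E[f]\le\tfrac12$, so $\operatorname{var}(f)\ge\tfrac12\|f\|_2^2$ and it suffices to prove $\|f^{\le m}\|_2^2\le\tfrac34\|f\|_2^2$. Here I would run the argument from the proof of Theorem~\ref{thm:SSE} essentially verbatim. The hypothesis bounds every order-$\ell$ derivative of $1_S$ with $1\le\ell\le d$ by $2^{-Cd^4}$; by Claim~\ref{claim:globalness equivalence} (applicable as $d\le n/2$), together with the observation that control of the order-$1$ derivatives already forces all single-coordinate restrictions of $1_S$ to lie within $\|1_S\|_2+2^{-Cd^4}$ of one another (this substitutes for the order-$0$ case, which is excluded from the hypothesis and is responsible for the $\|1_S\|_2$ term), $1_S$ is $(d,\eps_1)$-global with $\eps_1=2^{O(d)}\bigl(\|1_S\|_2+2^{-Cd^4}\bigr)$. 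Then Claim~\ref{claim:globalness_of_low_deg_part} makes $f^{\le m}$ a $(2m,\delta)$-global function of degree $m$ with $\delta=2^{O(m^4)}\eps_1^{2}\log^{O(m)}(1/\eps_1)$, and applying $q=4$ hypercontractivity (Theorem~\ref{thm:Reasonability}, whose hypothesis $n\ge 4^{O(m^2)}$ holds) with Hölder,
\[
\|f^{\le m}\|_2^2=\langle f^{\le m},1_S\rangle\le\|f^{\le m}\|_4\,\|1_S\|_{4/3}\le 2^{O(m^3)}\delta^{1/2}\,\|f^{\le m}\|_2^{1/2}\,\|1_S\|_2^{3/2},
\]
which rearranges to $\|f^{\le m}\|_2^2\le 2^{O(m^4)}\eps_1^{\Omega(1)}\log^{O(m)}(1/\eps_1)\,\|1_S\|_2^2$. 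When $\|1_S\|_2\le 2^{-Cd^4}$ this is $\ll\|1_S\|_2^2$ for $C$ large (then $\eps_1\le 2^{-Cd^4+O(d)}$ and $\log(1/\eps_1)\le Cd^4$), completing the argument in this case.

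The one step I expect to need genuine care is the remaining regime, where $\E[1_S]=\|1_S\|_2^2$ is bounded away from $0$ (hence, after the normalization above, $\Theta(1)$). There the globalness parameter $\eps_1\approx2^{O(d)}\|1_S\|_2$ exceeds $1$, the displayed bound on $\|f^{\le m}\|_2^2$ becomes vacuous, and indeed $\|f^{\le m}\|_2^2$ need \emph{not} be small relative to $\|1_S\|_2^2$ — for instance a median threshold of a degree-$1$ function with a sufficiently ``flat'' coefficient matrix already has a constant fraction of its variance on level $1$, yet satisfies the derivative hypothesis once $n$ is large. For such $f$ the conclusion $I[f]\ge\tfrac14 d\operatorname{var}(f)$ holds because $\sum_j j\|f^{=j}\|_2^2$ is large owing to a heavy Fourier tail, not because the low-degree part is small. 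I would handle this case separately: either by showing that the order-$1$ derivative bound $\|\mathrm L_{(i,j)}1_S\|_2\le 2^{-Cd^4}$ for all transpositions forces $I[1_S]\le n\,2^{-2Cd^4}$, which together with a matching lower bound on the tail weight $\sum_{j>d}j\|f^{=j}\|_2^2$ still yields the claim, or — consistently with the paper's stated focus on $n$ large but not astronomically large compared to $d$ — under a mild restriction on $n$ that rules this regime out entirely. The rest of the proof is routine bookkeeping, so this dichotomy is the main obstacle.
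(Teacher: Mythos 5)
Your spectral reduction $I[1_S]\ge 2\sum_j j\|f^{=j}\|_2^2$ is correct and matches the paper's first step, and your small-measure case goes through. But the gap you flag at the end is real and is precisely the main case of the theorem: the hypothesis excludes only derivatives $\mathrm{D}\neq I$, so $\mu(S)$ may be $\Theta(1)$ (this is the KKL regime), and there your Hölder argument $\|f^{\le m}\|_2^2\le\|f^{\le m}\|_4\|1_S\|_{4/3}$ is vacuous because the globalness parameter you feed into hypercontractivity is of order $\|1_S\|_2=\Theta(1)$. Neither of your proposed escapes closes this: the observation that the derivative hypothesis forces $I[1_S]\le n\,2^{-2Cd^4}$ is an \emph{upper} bound on the quantity you must bound from below, and the "matching lower bound on the tail weight" you invoke is exactly the statement still to be proven; restricting $n$ changes the theorem.

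The missing idea in the paper's proof is to never compare $\mathrm{var}(f^{\le d})$ to $\mathrm{var}(f)$ via the $2$-norm of $f$ at all, but instead to compare it to $I[f]$ itself, Laplacian by Laplacian. Setting $g=f^{\le d}$, the Poincar\'e inequality gives $\mathrm{var}(g)\le\frac{1}{n}\sum_{\mathrm{L}_1}\|\mathrm{L}_1 g\|_2^2$, and for each order-$1$ Laplacian one writes $\|\mathrm{L}_1 g\|_2^2=\langle\mathrm{L}_1 g,\mathrm{L}_1 f\rangle\le\|\mathrm{L}_1 g\|_4\,\|\mathrm{L}_1 f\|_{4/3}$ with $\|\mathrm{L}_1 f\|_{4/3}=\|\mathrm{L}_1 f\|_2^{3/2}$ (the Laplacian of a Boolean function is $\{-1,0,1\}$-valued). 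Hypercontractivity is then applied to the \emph{derivatives} $\mathrm{D}_1 g$, whose globalness follows from the hypothesis on the derivatives of $f$ (via Lemma~\ref{lem:upper bound on restriction  of truncation }) and is entirely insensitive to $\mu(S)$, since differentiation kills the constant. This yields $\|\mathrm{L}_1 g\|_2^2\le 2^{O(d^3)}\delta^{1/2}\|\mathrm{L}_1 f\|_2^2$ termwise, hence $\mathrm{var}(f^{\le d})\le 2^{O(d^3)}\delta^{1/2} I[f]$, and the proof finishes by dichotomy: either $I[f]$ is small enough that $\mathrm{var}(f^{\le d})\le\mathrm{var}(f)/2$, so the tail carries half the variance and your spectral bound applies, or $I[f]$ is already large enough to give the conclusion directly. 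This is the mechanism that handles your "median of a degree-$1$ function" example, and without something of this kind your argument does not yield the theorem as stated.
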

  \begin{proof}
    Deferred to Appendix~\ref{sec:KKL}.
  \end{proof}

\subsection{Deducing results for the multi-cube}\label{sec:deduce_other_domains}
Our hypercontractive inequalities also imply similar hypercontractive inequalities on different non-product domains. One example
from~\cite{BKM} is the domain of $2$-to-$1$ maps, i.e. $\sett{\pi\colon [2n]\to[n]}{\card{\pi^{-1}(i)} = 2~\forall i\in [n]}$.
A more general domain, which we consider below, is the multi-slice.

\begin{defn}
  Let $m,n\in\mathbb{N}$ such that $n \ge m$, and let $k_1,\ldots,k_m\in\mathbb{N}$ sum up to $n$.
  The multi-slice $\mathcal{U}_{k_1,\ldots,k_m}$ of dimension $n$ consists of all vectors $x\in [m]^n$
  that, for all $j\in[m]$, have exactly $k_j$ of their coordinates equal to $j$.

  We consider the multi-slice as a probability space with the uniform measure.
\end{defn}
In exactly the same way one defines the degree decomposition over $S_n$, one may consider the degree decomposition over
the mutli-slice. A function $f\colon \mathcal{U}_{k_1,\ldots,k_m}\to\mathbb{R}$ is said to be a $d$-junta if there are
$A\subseteq[n]$ of size at most $d$ and $g\colon [m]^d\to\mathbb{R}$ such that $f(x) = g(x_A)$. We then define the
space $V_d(\mathcal{U}_{k_1,\ldots,k_m})$ spanned by $d$-juntas. Also, one may analogously define globalness of functions over the
multi-slice. A $d$-restriction consists of a set $A\subseteq[n]$ of size $d$ and $\alpha\in [m]^A$, and the corresponding
restriction is the function $f_{A\rightarrow \alpha}(z) = f(x_A = \alpha, x_{\bar{A}} = z)$ (whose domain is a different multi-slice).

\begin{defn}
  We say $f\colon \mathcal{U}_{k_1,\ldots,k_m}\to\mathbb{R}$ is $(d,\eps)$-global if for any $d$-restriction $(A,\alpha)$
  it holds that $\norm{f_{A\rightarrow \alpha}}_2\leq \eps$.
\end{defn}

\subsubsection{Hypercontractivity}
Our hypercontractive inequality for the multi-slice reads as follows.
\begin{thm}\label{thm:hypercontract_multi}
  There exists an absolute constant $C>0$ such that the following holds.
  Let $d,q,n\in\mathbb{N}$ be such that $n\geq q^{C\cdot d^2}$, and let $f\in V_d(\mathcal{U}_{k_1,\ldots,k_m})$.
  If $f$ is $(2d,\eps)$-global, then
  \[
   \|f\|_{q}\le q^{O\left(d^{3}\right)}\epsilon^{\frac{q-2}{q}}\|f\|_{2}^{\frac{2}{q}}.
  \]
\end{thm}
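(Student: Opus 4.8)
}
The plan is to deduce the statement from Theorem~\ref{thm:Reasonability} by pulling functions back along a measure‑preserving covering map $S_n\to\mathcal{U}_{k_1,\ldots,k_m}$. Fix blocks $B_1,\ldots,B_m$ partitioning $[n]$ with $\card{B_j}=k_j$, and define $\Phi\colon S_n\to\mathcal{U}_{k_1,\ldots,k_m}$ by letting $\Phi(\pi)$ be the word $w\in[m]^n$ with $w_i=j$ whenever $\pi^{-1}(i)\in B_j$ (equivalently $i\in\pi(B_j)$). Since $\pi(B_1),\ldots,\pi(B_m)$ partition $[n]$ with $\card{\pi(B_j)}=k_j$, indeed $\Phi(\pi)\in\mathcal{U}_{k_1,\ldots,k_m}$; moreover $\Phi(\pi)=\Phi(\pi')$ iff $\pi^{-1}\pi'$ stabilises each block setwise, so $\Phi$ is surjective with all fibres of the same size $\prod_j k_j!$, hence pushes the uniform measure on $S_n$ to the uniform measure on $\mathcal{U}_{k_1,\ldots,k_m}$. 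Consequently the pullback $g\mapsto g\circ\Phi$ is an $L^p$‑isometry for every $p\ge1$. So it suffices to show that for $f\in V_d(\mathcal{U}_{k_1,\ldots,k_m})$ which is $(2d,\eps)$‑global, the function $f\circ\Phi$ is a degree‑$d$, $(2d,\eps)$‑global function on $S_n$; then applying Theorem~\ref{thm:Reasonability} to $f\circ\Phi$ (the hypothesis $n\ge q^{Cd^2}$ is exactly the one it needs) and using the isometry yields the claimed inequality for $f$.

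\paragraph{Preservation of degree.}
For $i\in[n]$ and $j\in[m]$ one has $1_{w_i=j}\circ\Phi=\sum_{b\in B_j}1_{\pi(b)=i}$, a sum of indicators of singleton‑constraint cosets, hence an element of $V_1(S_n)$. Any $d$‑junta on the multi‑slice, say a function of $x_A$ with $\card A\le d$, equals $\sum_{\alpha}g(\alpha)\prod_{i\in A}1_{x_i=\alpha_i}$, a polynomial of degree at most $d$ in the variables $1_{x_i=j}$; its pullback is therefore a polynomial of degree at most $d$ in the $1_{\pi(b)=i}$, and expanding it, each monomial is a product of at most $d$ of these indicators, i.e.\ a function $1_T$ with $\card T\le d$, which lies in $V_d(S_n)$ by Definition~\ref{def:level_space}. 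Since $V_d(\mathcal{U}_{k_1,\ldots,k_m})$ is spanned by $d$‑juntas, linearity gives $f\circ\Phi\in V_d(S_n)$. Note that, in contrast to the degree‑truncation issues stressed earlier in the paper, here we only ever pull back and never truncate, so this step is unproblematic.

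\paragraph{Preservation of globalness.}
This is the one step requiring care: one must check that restrictions of $f\circ\Phi$ to double cosets $S_n^T$ correspond \emph{exactly}, and measure‑preservingly, to restrictions of $f$ to sub‑multi‑slices. Fix a consistent $T=\{(i_1,j_1),\ldots,(i_t,j_t)\}$ with $t\le 2d$, and let $c_k\in[m]$ be the index of the block containing $i_k$. For $\pi\in S_n^T$ we have $\pi^{-1}(j_k)=i_k$, so $\Phi(\pi)_{j_k}=c_k$ for all $k$; thus $\Phi$ maps $S_n^T$ into the sub‑multi‑slice $W=\{w\in\mathcal{U}_{k_1,\ldots,k_m}:w_{j_k}=c_k\ \text{for all }k\}$, which is precisely the domain of the restriction $f_{\{j_1,\ldots,j_t\}\to(c_1,\ldots,c_t)}$. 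Repeating the fibre count inside $S_n^T$ and $W$ shows that $\Phi|_{S_n^T}\colon S_n^T\to W$ is surjective with all fibres of the common size $\prod_{c}\bigl(k_c-\#\{k:c_k=c\}\bigr)!$, hence measure‑preserving. Therefore $\norm{(f\circ\Phi)_{\to T}}_2=\norm{f|_W}_2=\norm{f_{\{j_1,\ldots,j_t\}\to(c_1,\ldots,c_t)}}_2\le\eps$, using $(2d,\eps)$‑globalness of $f$ (for $t<2d$ one first reduces to restrictions of size exactly $2d$ by averaging, which does not increase the $L^2$‑norm). Hence $f\circ\Phi$ is $(2d,\eps)$‑global on $S_n$ in the sense of Definition~\ref{def:bounded_global}.

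\paragraph{Conclusion, and the main obstacle.}
Combining the previous two steps, $f\circ\Phi$ is a $(2d,\eps)$‑global function of degree $d$ on $S_n$ with $n\ge q^{Cd^2}$, so Theorem~\ref{thm:Reasonability} gives $\norm{f\circ\Phi}_q\le q^{O(d^3)}\eps^{(q-2)/q}\norm{f\circ\Phi}_2^{2/q}$, and since $g\mapsto g\circ\Phi$ preserves all $L^p$‑norms this reads $\norm{f}_q\le q^{O(d^3)}\eps^{(q-2)/q}\norm{f}_2^{2/q}$, as desired. The deduction is otherwise routine; the only genuinely substantive point is the globalness transfer — verifying that $\Phi$ restricts to a measure‑preserving map between each double coset $S_n^T$ and the corresponding sub‑multi‑slice. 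This is exactly what underlies the claim that the correspondence is "degree‑, globalness‑, and norm‑preserving", and it is where a careful fibre‑counting argument is needed rather than a bare symmetry appeal.
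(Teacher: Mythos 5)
Your proposal is correct and is essentially the paper's own proof: the paper also defines a deterministic measure-preserving map from $S_n$ onto the multi-slice (with the convention $x_i=j$ iff $\pi(i)\in K_j$ rather than your $\pi^{-1}$ version, an immaterial difference), checks that the pullback preserves degree, $(2d,\eps)$-globalness and all $L^p$ norms, and then invokes Theorem~\ref{thm:Reasonability}. The only difference is that you spell out the fibre-counting behind the measure-preservation of $\Phi$ restricted to each double coset, which the paper asserts as an immediate symmetry fact.
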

\begin{proof}
  We construct a simple deterministic coupling $\mathcal{C}$ between $S_n$ and $\mathcal{U}_{k_1,\ldots,k_m}$.

  Fix a partition of $[n]$ into sets $K_1,\ldots,K_m$ such that $\card{K_j} = k_j$ for all $j$.
  Given a permutation $\pi$, we define $\mathcal{C}(\pi) = x$ as follows: for all $i\in[n]$, $j\in[m]$,
  we set $x_i = j$ if $\pi(i)\in K_j$. Define the mapping $M\colon L_2(\mathcal{U}_{k_1,\ldots,k_m})\to L_2(S_n)$
  that maps a function $h\colon \mathcal{U}_{k_1,\ldots,k_m}\to\mathbb{R}$ to $M h\colon S_n\to\mathbb{R}$ defined by
  $(M h)(\pi) = h(\mathcal{C}(\pi))$.

  Let $g = Mf$.
  We claim that $g$ has degree at most $d$ and is global. To see that $g\in V_d(S_n)$, it is enough to show that the
  mapping $f\rightarrow g$ is linear (which is clear), and maps a $d$-junta into a $d$-junta, which is also straightforward.
  To see that $g$ is global, let $T = \set{(i_1,r_1),\ldots,(i_\ell,r_\ell)}$ be consistent, and define the $r$-restriction
  $(A,\alpha)$ as: $A = \set{i_1,\ldots,i_{\ell}}$, and $\alpha_{i_s} = j$ if $r_s\in K_j$. Note that the distribution
  of $x\in\mathcal{U}_{k_1,\ldots,k_m}$ conditioned on $x_A$ is exactly the same as of $\mathcal{C}(\pi)$ conditioned on
  $\pi$ respecting $T$, so if $r\leq 2d$ we get that
  \[
  \norm{g_{A\rightarrow \alpha}}_2 = \norm{f_{\rightarrow T}}_2\leq \eps,
  \]
  and $g$ is $(2d,\eps)$-global. The result thus follows from Theorem~\ref{thm:Reasonability} and the fact that $M$ preserves $L_p$ norms
  for all $p\geq 1$.
\end{proof}

The coupling in the proof of Theorem~\ref{thm:hypercontract_multi} also implies
in the same way a level-$d$ inequality over $\mathcal{U}_{k_1,\ldots,k_m}$ from the corresponding result in $S_n$, Theorem~\ref{thm:lvl_d},
as well as isoperimetric inequalities, as we describe next.

\subsubsection{Level-$d$ inequality}
As on $S_n$, for $f\colon\mathcal{U}_{k_1,\ldots,k_m}\to\mathbb{R}$ we let $f^{\leq d}$ be the projection of $f$
onto $V_d(\mathcal{U}_{k_1,\ldots,k_m})$. Our level-$d$ inequality for the multi-slice thus reads:
\begin{cor}\label{cor:lvl_d_multi}
  There exists an absolute constant $C>0$ such that the following holds.
Let $d,n\in\mathbb{N}$ and $\eps>0$ such that $n\geq 2^{C d^3} \log(1/\eps)^{C d}$.
If $f\colon \mathcal{U}_{k_1,\ldots,k_m}\to \{0,1\}$ is $(2d,\eps)$-global, then
$\norm{f^{\leq d}}_2^2\leq 2^{C \cdot d^4}\eps^4 \log^{C\cdot d}(1/\eps)$.
\end{cor}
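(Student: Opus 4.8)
The plan is to reduce Corollary~\ref{cor:lvl_d_multi} to Theorem~\ref{thm:lvl_d} via the same deterministic coupling used to prove Theorem~\ref{thm:hypercontract_multi}. Fix a partition of $[n]$ into sets $K_1,\ldots,K_m$ with $\card{K_j}=k_j$, let $\mathcal{C}\colon S_n\to\mathcal{U}_{k_1,\ldots,k_m}$ send $\pi$ to the vector $x$ with $x_i=j$ whenever $\pi(i)\in K_j$, and let $M\colon L^2(\mathcal{U}_{k_1,\ldots,k_m})\to L^2(S_n)$ be the pullback $(Mh)(\pi)=h(\mathcal{C}(\pi))$. Given a $(2d,\eps)$-global $f\colon\mathcal{U}_{k_1,\ldots,k_m}\to\set{0,1}$, put $g=Mf$. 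As shown in the proof of Theorem~\ref{thm:hypercontract_multi}, $g$ is again $(2d,\eps)$-global, and it is clearly $\set{0,1}$-valued; since the dimension parameter $n$ is the same for the multi-slice and the symmetric group, the hypothesis $n\geq 2^{Cd^3}\log(1/\eps)^{Cd}$ transfers verbatim, so Theorem~\ref{thm:lvl_d} gives $\norm{g^{\leq d}}_2^2\leq 2^{Cd^4}\eps^4\log^{Cd}(1/\eps)$.

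It remains to compare $\norm{f^{\leq d}}_2$ with $\norm{g^{\leq d}}_2$, for which we use two properties of $M$ recorded in the proof of Theorem~\ref{thm:hypercontract_multi}: $M$ is an $L^2$-isometry, and $M$ maps $d$-juntas to $d$-juntas, hence $M(V_d(\mathcal{U}_{k_1,\ldots,k_m}))\subseteq V_d(S_n)$. Decompose $f=f^{\leq d}+f^{>d}$ orthogonally in $L^2(\mathcal{U}_{k_1,\ldots,k_m})$. Then $g=M(f^{\leq d})+M(f^{>d})$ with $M(f^{\leq d})\in V_d(S_n)$, so applying the orthogonal projection onto $V_d(S_n)$ yields $g^{\leq d}=M(f^{\leq d})+w$, where $w$ is the projection of $M(f^{>d})$ onto $V_d(S_n)$. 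Since $M(f^{\leq d})\in V_d(S_n)$, self-adjointness of the projection together with the isometry of $M$ give $\inner{M(f^{\leq d})}{w}=\inner{M(f^{\leq d})}{M(f^{>d})}=\inner{f^{\leq d}}{f^{>d}}=0$, whence $\norm{g^{\leq d}}_2^2=\norm{M(f^{\leq d})}_2^2+\norm{w}_2^2\geq\norm{M(f^{\leq d})}_2^2=\norm{f^{\leq d}}_2^2$. Combining with the bound of the previous paragraph completes the proof.

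The only point that needs care — and the closest thing to an obstacle — is that $M$ does not exactly intertwine the two degree-truncation operators: $M$ is far from surjective, and $M$ applied to the high-degree part of $f$ need not land in $V_d(S_n)^\perp$. However, only the one-sided estimate $\norm{f^{\leq d}}_2\leq\norm{g^{\leq d}}_2$ is required in order to transport an \emph{upper} bound on $\norm{f^{\leq d}}_2^2$, and the short orthogonality computation above shows this inequality holds unconditionally. Everything else is a direct transcription of the argument already used for Theorem~\ref{thm:hypercontract_multi}.
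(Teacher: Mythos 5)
Your proof is correct, and the overall strategy is the same as the paper's: pull $f$ back to $S_n$ via the deterministic coupling $M$ from the proof of Theorem~\ref{thm:hypercontract_multi}, observe that $Mf$ is Boolean and $(2d,\eps)$-global, and invoke Theorem~\ref{thm:lvl_d}. The only place the two arguments diverge is in comparing $\norm{f^{\leq d}}_2$ with $\norm{(Mf)^{\leq d}}_2$. The paper proves the exact intertwining $(Mf)^{=i}=M(f^{=i})$ for every $i$ (using that $M$ sends restrictions of size $r<i$ to restrictions of size $r$, so $Mf^{=i}$ is orthogonal to $V_{i-1}(S_n)$), which yields the equality $\norm{f^{\leq d}}_2=\norm{(Mf)^{\leq d}}_2$; in particular the quantity $w$ in your argument is actually zero. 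You instead settle for the one-sided inequality $\norm{f^{\leq d}}_2\leq\norm{(Mf)^{\leq d}}_2$, obtained from the self-adjointness of the projection onto $V_d(S_n)$ together with the facts that $M$ is an $L^2$-isometry and $M(V_d(\mathcal{U}_{k_1,\ldots,k_m}))\subseteq V_d(S_n)$. Since only an upper bound on $\norm{f^{\leq d}}_2^2$ is being transported, this suffices, and it is slightly more robust (it would survive even if $M$ failed to respect the orthogonal complement of $V_d$); the paper's version gives the sharper structural fact, which it reuses implicitly when transporting other degree-based statements to the multi-slice.
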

\begin{proof}
  The proof relies on an additional easy property of the mapping $M$ from the proof of Theorem~\ref{thm:hypercontract_multi}.
  As in $S_n$, we define the space of pure degree $d$ functions over $\mathcal{U}_{k_1,\ldots,k_m}$ as
  $V_{=d}(\mathcal{U}_{k_1,\ldots,k_m}) = V_{d}(\mathcal{U}_{k_1,\ldots,k_m})\cap V_{d-1}(\mathcal{U}_{k_1,\ldots,k_m})^{\bot}$,
  and let $f^{=d}$ be the projection of $f$ onto $V_{=d}(\mathcal{U}_{k_1,\ldots,k_m})$. We thus have
  $f^{\leq d} = f^{=0}+f^{=1}+\dots+f^{=d}$, and so $f^{=d} = f^{\leq d} - f^{\leq d-1}$.

  Write $h_i = M f^{=i}$, and note that $h_i$ is of degree at most $i$. Also, we note that as restrictions of size $r<i$ over $S_n$
  are mapped to restrictions of size $r$ over $\mathcal{U}_{k_1,\ldots,k_m}$, it follows that $h_i$ is perpendicular to degree $i-1$
  functions, and so $h_i\in V_{=i}(S_n)$. By linearity of $M$, $M f = h_0+h_1+\dots + h_n$, and by uniqueness of the pure degree
  decomposition, it follows that $h_i = (M f)^{=i}$. We therefore have that
  \[
  \norm{f^{\leq d}}_2^2
  = \sum\limits_{i\leq d} \norm{f^{=i}}_2^2
  = \sum\limits_{i\leq d} \norm{h_i}_2^2
  = \sum\limits_{i\leq d} \norm{(M f)^{=i}}_2^2
  =  \norm{(Mf)^{\leq d}}_2^2
  \leq 2^{C\cdot d^4}\eps^4\log^{C\cdot d}(1/\eps),
  \]
  where the last inequality is by Theorem~\ref{thm:lvl_d}.
\end{proof}

%
%

\subsubsection{Isoperimetric inequalities}\label{sec:isop_multi}
One can also deduce the obvious analogs of Theorems~\ref{thm:SSE},~\ref{thm:KKL_analog} for the multi-slice.
Since we use it for our final application, we include here the statement of the analog of Theorem~\ref{thm:KKL_analog}.

For $f\colon\mathcal{U}_{k_1,\ldots,k_m}\to\mathbb{R}$, consider the Laplacians
$\mathrm{L}_{i,j}$ that map a function $f$ to a function $L_{i,j} f$ defined as
$\mathrm{L}_{i,j} f(x) = f(x) - f(x^{(i,j)})$, and define
$I_i[f] = \Expect{j\neq i}{\norm{L_{i,j} f}_2^2}$
and $I[f] = \sum\limits_{i=1}^{n} I_i[f]$. Similarly to Definition~\ref{def:derivative}, we define
a derivative of $f$ as a restriction of the corresponding Laplacian, i.e.\ for $i,j\in [n]$, $a,b\in [m]$
we define $\mathrm{D}_{(i,j)\rightarrow (a,b)} f = (\mathrm{L}_{i,j} f(x))_{(i,j)\rightarrow (a,b)}$.
\begin{thm}\label{thm:KKL_multi_slice}
  There exists $C>0$ such that the following holds for all $d\in\mathbb{N}$ and
  $n\in\mathbb{N}$ such that $n\geq 2^{C\cdot d^3}$.
  Suppose $S\subseteq \mathcal{U}_{k_1,\ldots,k_m}$ such that for all derivative operators
  $\mathrm{D}\neq I$ of order at most $d$ it holds that  $\norm{\mathrm{D}1_S}_2\leq 2^{-C\cdot d^4}$. Then
  $I[1_S]\geq \frac{1}{4} d\cdot {\sf var}(1_S)$.
\end{thm}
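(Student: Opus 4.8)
The plan is to deduce Theorem~\ref{thm:KKL_multi_slice} from its symmetric-group counterpart, Theorem~\ref{thm:KKL_analog}, by transporting everything through the coupling already used in the proof of Theorem~\ref{thm:hypercontract_multi}. Recall from that proof the deterministic coupling $\mathcal{C}\colon S_n\to\mathcal{U}_{k_1,\ldots,k_m}$: after fixing a partition $[n]=K_1\cup\cdots\cup K_m$ with $|K_j|=k_j$, set $\mathcal{C}(\pi)_i=j$ whenever $\pi(i)\in K_j$, and let $M\colon L_2(\mathcal{U}_{k_1,\ldots,k_m})\to L_2(S_n)$ be $(Mh)(\pi)=h(\mathcal{C}(\pi))$. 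As noted there, $\mathcal{C}$ pushes the uniform measure on $S_n$ to the uniform measure on $\mathcal{U}_{k_1,\ldots,k_m}$, so $M$ is an $L_p$-isometry for every $p\ge 1$ and preserves means; hence it preserves variances, and maps Boolean functions to Boolean functions, with $M1_S=1_{S'}$ where $S'=\mathcal{C}^{-1}(S)\subseteq S_n$.

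First I would check that $M$ intertwines the swap structures on the two domains. For a transposition $(i\,j)$ one has $\mathcal{C}(\pi\circ(i\,j))=\mathcal{C}(\pi)^{(i,j)}$ (swapping the $i$-th and $j$-th letters of the string), whence $M(\mathrm{L}^{\mathcal{U}}_{i,j}h)=\mathrm{L}^{S_n}_{(i\,j)}(Mh)$, where $\mathrm{L}^{S_n}_{(i\,j)}g=g-g^{(i\,j)}$ is the order-$1$ Laplacian of Definition~\ref{def:derivative}. Moreover $M$ carries a restriction of $h$ to $x_i=a,\ x_j=b$ to the restriction of $Mh$ to $S_n^{\{(i,j_a),(j,j_b)\}}$ for any $j_a\in K_a,\ j_b\in K_b$ (the conditional distributions match under $\mathcal{C}$, exactly as in the globalness step of Theorem~\ref{thm:hypercontract_multi}). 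Composing, $M$ carries every derivative operator on $\mathcal{U}_{k_1,\ldots,k_m}$ to the corresponding derivative operator on $S_n$ of the same order, and therefore $I^{\mathcal{U}}_i[h]=I^{S_n}_i[Mh]$ for every $i$, so $I^{\mathcal{U}}[h]=I^{S_n}[Mh]$.

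Given these correspondences the deduction is short. For $S$ as in the statement put $f=1_{S'}=M1_S$, a Boolean function on $S_n$. Then $f$ satisfies the hypothesis of Theorem~\ref{thm:KKL_analog}: any derivative operator $\mathrm{D}\neq I$ of order $\le d$ on $S_n$ prescribes the images of some coordinates, and $\mathrm{D}f=M(\mathrm{D}'1_S)$ for the corresponding multi-slice derivative $\mathrm{D}'$ of the same order; here $\mathrm{D}'\neq I$ unless two prescribed images happen to lie in the same block $K_a$, in which case the relevant Laplacian already annihilates $1_S$ and $\mathrm{D}f=0$. Either way $\|\mathrm{D}f\|_2=\|\mathrm{D}'1_S\|_2\le 2^{-Cd^4}$. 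Since the size hypothesis $n\ge 2^{Cd^3}$ is exactly the one required by Theorem~\ref{thm:KKL_analog}, that theorem gives $I^{S_n}[f]\ge\frac14 d\cdot{\sf var}(f)$, and pulling both sides back through $M$ yields $I^{\mathcal{U}}[1_S]=I^{S_n}[f]\ge\frac14 d\cdot{\sf var}(f)=\frac14 d\cdot{\sf var}(1_S)$.

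The only step demanding genuine (though entirely routine) care is the bookkeeping of the previous paragraph: exhibiting for each relevant $S_n$-derivative the matching multi-slice derivative, and confirming that it is either a legitimate derivative operator $\neq I$ of the same order or the zero operator, so that the quantitative hypothesis transfers with no loss in parameters. Everything else follows mechanically from $M$ being a measure-preserving $L_p$-isometry that commutes with the combinatorial swap actions, in the same spirit as Corollary~\ref{cor:lvl_d_multi}. (Alternatively, one could replay the appendix proof of Theorem~\ref{thm:KKL_analog} verbatim on the multi-slice, using Theorem~\ref{thm:hypercontract_multi} in place of Theorem~\ref{thm:Reasonability} and the known spectral gap of the multi-slice swap Laplacian, but the transfer above avoids repeating that argument.)
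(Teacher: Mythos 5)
Your proposal is correct and is exactly the derivation the paper has in mind: the paper states that it ``omits the straightforward derivation from Theorem~\ref{thm:KKL_analog},'' and that derivation is precisely your transfer through the measure-preserving map $M$ from the proof of Theorem~\ref{thm:hypercontract_multi}, including the check that $M$ intertwines Laplacians, restrictions, derivatives, influences and variance, and the observation that degenerate derivatives (two prescribed images in the same block $K_a$) become the zero operator. Nothing further is needed.
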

We omit the straightforward derivation from Theorem~\ref{thm:KKL_analog}.

\subsection{Stability result for the Kruskal--Katona theorem on the slice}
Our final application is the following sharp threshold result for the slice, which can
be also seen as a stability version of the Kruskal--Katona theorem (see~\cite{OW,Keevash} for other, incomparable stability versions). For a
family of subsets $\mathcal{F} \subseteq{[n]\choose k}$, we denote $\mu(\mathcal{F}) = \card{\mathcal{F}}/{n\choose k}$.
and define the upper shadow of $\mathcal{F}$ as
\[
\mathcal{F}\uparrow = \sett{X\in{n\choose k+1}}{\exists A\subseteq X, A\in \mathcal{F}}.
\]
The Kruskal--Katona theorem is a basic result in combinatorics that gives a lower bound on the measure of the upper shadow
of a family $\mathcal{F}$ in terms of the measure of the family itself. Below we state a convenient, simplified version
of it due to Lov\'{a}sz', which uses the generalized binomial coefficients.
\begin{thm}\label{thm:KK}
  Let $\mathcal{F}\subseteq {[n]\choose k}$ and suppose that $\card{\mathcal{F}} = {n\choose x}$. Then
  $\card{\mathcal{F}\uparrow}\geq {n\choose x+1}$.
\end{thm}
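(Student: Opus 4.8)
\emph{Proof sketch.} The plan is to derive Theorem~\ref{thm:KK} from the classical Kruskal--Katona theorem in its combinatorial (colexicographic) form, together with an elementary convexity argument that produces the clean closed form. The first step is to replace the upper shadow by a lower shadow: complementation $A\mapsto[n]\setminus A$ is a bijection $\binom{[n]}{k}\to\binom{[n]}{n-k}$ carrying $\mathcal{F}\uparrow$ onto the lower shadow $\partial(\mathcal{F}^{c})$, where $\mathcal{F}^{c}=\{[n]\setminus A : A\in\mathcal{F}\}$ and $\partial\mathcal{G}=\{B : |B|=r-1,\ B\subseteq A\text{ for some }A\in\mathcal{G}\}$ for $\mathcal{G}\subseteq\binom{[n]}{r}$. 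It therefore suffices to show that among all $\mathcal{G}\subseteq\binom{[n]}{r}$ of a prescribed size, $|\partial\mathcal{G}|$ is minimized by an initial segment of the colex order, and to evaluate this minimum.

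For the combinatorial core I would run the compression (shifting) argument. For $1\le i<j\le n$ let $C_{ij}$ be the $(i,j)$-compression that, for every $A\in\mathcal{G}$ with $j\in A$, $i\notin A$, and $(A\setminus\{j\})\cup\{i\}\notin\mathcal{G}$, replaces $A$ by $(A\setminus\{j\})\cup\{i\}$. A routine case analysis gives $|C_{ij}\mathcal{G}|=|\mathcal{G}|$ and $|\partial(C_{ij}\mathcal{G})|\le|\partial\mathcal{G}|$, and since any non-trivial compression strictly decreases $\sum_{A\in\mathcal{G}}\sum_{a\in A}a$, iterating the $C_{ij}$ terminates at a family $\mathcal{G}^{\ast}$ with $|\mathcal{G}^{\ast}|=|\mathcal{G}|$, $|\partial\mathcal{G}^{\ast}|\le|\partial\mathcal{G}|$, and $\mathcal{G}^{\ast}$ stable under every $C_{ij}$. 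A further short step --- a localized induction on $r$, or a direct move that pushes the remaining sets onto the colex prefix without enlarging the shadow --- reduces to the case where $\mathcal{G}^{\ast}$ is an initial segment of colex; writing $|\mathcal{G}|=\binom{x}{r}$ for the real $x\ge r-1$ picked out by the cascade ($r$-binomial) representation of $|\mathcal{G}|$, one then reads off $|\partial\mathcal{G}^{\ast}|\ge\binom{x}{r-1}$. A Frankl--Daykin-style induction on $n$ and $r$, splitting $\mathcal{G}$ by whether a member contains the element $n$, would serve as a backup route.

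Finally, undoing the complementation converts this into the asserted upper-shadow bound, which in the usual normalization reads $|\mathcal{F}\uparrow|\ge\binom{x+1}{k+1}$ whenever $|\mathcal{F}|=\binom{x}{k}$; and the passage between integer and real parameters is justified because $x\mapsto\binom{x}{r}$, defined on $[r-1,\infty)$ via the falling factorial, is continuous, strictly increasing and convex, which is exactly what makes the cascade estimate collapse to a single-term closed form. The only genuinely delicate point is the bookkeeping in the compression step --- checking that a family stable under all $C_{ij}$ can be transported to a colex initial segment of the same size without enlarging its shadow --- but this is completely standard, being the heart of every textbook proof of Kruskal--Katona, and in the present paper it would simply be cited rather than reproduced.
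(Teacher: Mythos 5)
First, a point of reference: the paper does not prove Theorem~\ref{thm:KK} at all --- it is quoted as Lov\'asz's classical simplification of the Kruskal--Katona theorem, with no proof supplied --- so there is no internal argument to compare yours against; your task here is really to reconstruct (or correctly invoke) the classical result \emph{in the exact form stated}. The combinatorial core of your sketch is fine and standard: complementing $A \mapsto [n]\setminus A$ to turn the upper shadow into a lower shadow, running the $(i,j)$-compressions to reduce to a colexicographic initial segment, and reading off the Lov\'asz bound via the cascade representation is the textbook route, and I have no objection to that part.

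The genuine gap is in the final step, ``undoing the complementation.'' What complementation actually yields is: if $\card{\mathcal{F}} = \binom{y}{n-k}$ for a real $y$, then $\card{\mathcal{F}\uparrow} = \card{\partial(\mathcal{F}^{c})} \ge \binom{y}{n-k-1}$; equivalently, writing $y = n-a$, if $\card{\mathcal{F}} = \binom{n-a}{k-a}$ then $\card{\mathcal{F}\uparrow} \ge \binom{n-a}{k+1-a}$ (the extremal configurations being the stars $\{X : X \supseteq A\}$). You instead assert that the result ``in the usual normalization reads $\card{\mathcal{F}\uparrow} \ge \binom{x+1}{k+1}$ whenever $\card{\mathcal{F}} = \binom{x}{k}$'' and that this is the asserted bound. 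Neither of these is Theorem~\ref{thm:KK} as stated, which is parametrized as $\card{\mathcal{F}} = \binom{n}{x} \Rightarrow \card{\mathcal{F}\uparrow} \ge \binom{n}{x+1}$, with $n$ in the \emph{upper} index and the real parameter $x$ below. The passage between these parametrizations is not bookkeeping: it is exactly where the content of this particular formulation lies, and the $\binom{n}{x}$ form is false without restrictions on the parameters. For instance, with $n=5$, $k=2$ and $\card{\mathcal{F}} = 5 = \binom{5}{1}$ the statement would force $\card{\mathcal{F}\uparrow} \ge \binom{5}{2} = 10$, i.e.\ all of $\binom{[5]}{3}$, yet the colex initial segment $\{12,13,23,14,24\}$ has upper shadow of size $9$. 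The statement is only correct (and only invoked in the paper) in a regime such as $k=o(n)$ and $x \le k$, where one can relate $a$ to $k-x$ and check that $\binom{n-a}{k+1-a} \ge \binom{n}{x+1}$ up to the relevant error terms; your proposal contains no such comparison. To close the gap you should either prove the theorem in the star parametrization $\binom{n-a}{k-a} \to \binom{n-a}{k+1-a}$ and add the explicit binomial-coefficient comparison in the stated range of parameters, or restate the theorem with the correct normalization.
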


In general, Theorem~\ref{thm:KK} is tight, as can be shown by considering ``subcubes'', i.e.\ families
of the form $\mathcal{H} = \sett{ X\in{[n]\choose k}}{X\supseteq A}$ for some $A\subseteq [n]$. This raises
the question of whether a stronger version of Theorem~\ref{thm:KK} holds for families that are ``far from
having a structure such as $\mathcal{H}$''. Alternatively, this question can be viewed as a stability version
of Theorem~\ref{thm:KK}: must a family for which Theorem~\ref{thm:KK} is almost tight be of a similar structure
to $\mathcal{H}$?

Below, we mainly consider the case that $k=o(n)$, and show in improved version
of Theorem~\ref{thm:KK} for families that are ``far from $\mathcal{H}$''. To formalize
this, we consider the notion of restrictions: for $A\subseteq I\subseteq [n]$, we define
\[
\mathcal{F}_{I\rightarrow A}
=\sett{X\subseteq [n]\setminus I}{ X\cup A\in \mathcal{F}},
\]
and also define its measure $\mu(\mathcal{F}_{I\rightarrow A})$ appropriately. We say a family
$\mathcal{F}$ is $(d,\eps)$-global if for any $\card{I}\leq d$ and $A\subseteq I$ it holds that
$\mu(\mathcal{F}_{I\rightarrow A})\leq \eps$.

\begin{thm}\label{thm:KK_stability}
  There exists $C>0$, such that the following holds for all $d,n\in\mathbb{N}$ such that $n\geq 2^{C\cdot d^4}$.
  Let $\mathcal{F}\subseteq {[n]\choose k}$, and suppose that $\mathcal{F}$ is $(d,2^{-C\cdot d^4})$-global. Then
  $\mu(\mathcal{F}\uparrow)\geq \left(1+\frac{d}{64k}\right) \mu(\mathcal{F})$.
\end{thm}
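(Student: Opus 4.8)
The plan is to transfer the problem from the slice $\binom{[n]}{k}$ to the multi-slice $\mathcal{U}_{k,n-k}$ (which is the same object), and then run a spectral/isoperimetric argument using the KKL-type inequality already established for the multi-slice, Theorem~\ref{thm:KKL_multi_slice}. The key observation connecting the upper shadow to isoperimetry is the following: consider the bipartite inclusion graph between $\binom{[n]}{k}$ and $\binom{[n]}{k+1}$. Write $g = 1_{\mathcal{F}}$ on the slice $\binom{[n]}{k}$ and let $\mathcal{G} = \mathcal{F}{\uparrow}$. The number of edges from $\mathcal{F}$ into $\mathcal{G}$, counted appropriately, equals $(k+1)|\mathcal{F}|$ on the $\mathcal{F}$-side and at most $(k+1)|\mathcal{G}|$ on the $\mathcal{G}$-side only if \emph{every} superset is used; the slack in this inequality is exactly governed by how many pairs $(X, X\cup\{j\})$ with $X \in \mathcal{F}$ have $X \cup \{j\} \notin \mathcal{F}{\uparrow}$ --- but this never happens. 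Instead, the right quantity to track is the number of edges \emph{inside} the family after identifying $\binom{[n]}{k}$ with a subset of the multi-slice: an element of $\mathcal{U}_{k,n-k}$ is a $\{0,1\}$-vector with $k$ ones, a transposition $(i,j)$ swapping a $0$ and a $1$ corresponds to moving between $X$ and $X \triangle \{i,j\}$, and $I[1_S]$ for $S$ the indicator of $\mathcal{F}$ on the multi-slice counts, up to normalization, the number of boundary edges of $\mathcal{F}$ under such swaps. Since $\mathcal{F}{\uparrow}$ is precisely the set of $(k+1)$-sets reachable from $\mathcal{F}$ by adding one element, one shows that $\mu(\mathcal{F}{\uparrow}) \ge \mu(\mathcal{F}) + c\,I[1_S]/k$ for an appropriate constant --- i.e., the measure increment of the shadow is controlled from below by the total influence divided by $k$.

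Granting that reduction, the proof proceeds as follows. First I would set $S \subseteq \mathcal{U}_{k,n-k}$ to be the image of $\mathcal{F}$ under the natural identification, observe that a $(d,\eps)$-global family $\mathcal{F}$ maps to a $(d,\eps)$-global set $S$ (restrictions of families correspond to multi-slice restrictions), and check the derivative hypothesis of Theorem~\ref{thm:KKL_multi_slice}: for every derivative operator $\mathrm{D} \ne I$ of order at most $d$, $\norm{\mathrm{D} 1_S}_2 \le 2^{-C d^4}$ follows from $(d, 2^{-C d^4})$-globalness via Claim~\ref{claim:globalness equivalence} (the first item, relating bounded globalness to $2$-norms of derivatives). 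Then Theorem~\ref{thm:KKL_multi_slice} gives $I[1_S] \ge \tfrac14 d \cdot \mathrm{var}(1_S)$. Now $\mathrm{var}(1_S) = \mu(\mathcal{F})(1-\mu(\mathcal{F})) \ge \tfrac12 \mu(\mathcal{F})$ as long as $\mu(\mathcal{F}) \le 1/2$ (and if $\mu(\mathcal{F}) > 1/2$ the conclusion is trivial since $\mu(\mathcal{F}{\uparrow}) \ge \mu(\mathcal{F})$ always and one can handle that case separately, or the globalness hypothesis already forces $\mu(\mathcal{F})$ small). Combining with the shadow-versus-influence inequality $\mu(\mathcal{F}{\uparrow}) \ge \mu(\mathcal{F}) + c\,I[1_S]/k$ yields $\mu(\mathcal{F}{\uparrow}) \ge \mu(\mathcal{F}) + \tfrac{c d}{8 k}\mu(\mathcal{F}) = \bigl(1 + \tfrac{cd}{8k}\bigr)\mu(\mathcal{F})$; choosing the constant in the statement (the $1/64$) absorbs the precise value of $c$.

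The main obstacle I expect is proving the shadow-versus-influence inequality $\mu(\mathcal{F}{\uparrow}) \ge \mu(\mathcal{F}) + c\,I[1_S]/k$ cleanly --- i.e., translating "boundary edges under $0$-$1$ swaps" into "mass gained by the upper shadow". The subtlety is that influence counts pairs $(X, X')$ on the \emph{same} slice (connected by a transposition), whereas the upper shadow lives on the slice above; one bridges this by a double-counting over triples $(X, X\cup\{j\}, X')$ where $X \in \mathcal{F}$, $X' = (X \cup \{j\}) \setminus \{i\} \notin \mathcal{F}$, relating the count of such triples both to $\sum_j \mu(\mathcal{F}{\uparrow}) - \mu(\mathcal{F})$-type terms and to $k \cdot I[1_S]$. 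One must be careful about the normalization of $I_i$ (which averages over $j \ne i$, giving a $1/(n-1)$ factor) versus the combinatorial edge counts; for $k = o(n)$ these factors work out so that the relevant combinatorial boundary is $\asymp k \cdot I[1_S]$ rather than $n \cdot I[1_S]$, because an element $X$ of the $k$-slice has only $k$ ones to swap out. The remaining steps --- the globalness transfer, the variance bound, and the case split on $\mu(\mathcal{F}) \gtrless 1/2$ --- are routine.
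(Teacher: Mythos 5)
Your proposal is correct in outline and its skeleton matches the paper's proof: identify $\binom{[n]}{k}$ with the multi-slice $\mathcal{U}_{k,n-k}$, verify the derivative hypothesis from $(d,2^{-Cd^4})$-globalness, invoke Theorem~\ref{thm:KKL_multi_slice} to get $I[1_{\mathcal{F}}]\geq \tfrac{d}{4}{\sf var}(1_{\mathcal{F}})\geq\tfrac{d}{8}\mu(\mathcal{F})$ (the empty restriction already forces $\mu(\mathcal{F})\leq 2^{-2Cd^4}$, so the case split on $\mu(\mathcal{F})>1/2$ is vacuous), and then convert total influence into a gain for the upper shadow. The one place you genuinely diverge is the step you correctly flag as the obstacle, the shadow-versus-influence inequality $\mu(\mathcal{F}\!\uparrow)\geq\mu(\mathcal{F})+c\,I[1_{\mathcal{F}}]/k$. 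The paper gets this without any triple-counting: writing $M$ for the up operator, Cauchy--Schwarz gives $\mu(\mathcal{F})^2=\langle 1_{\mathcal{F}\uparrow},M1_{\mathcal{F}}\rangle^2\leq\mu(\mathcal{F}\!\uparrow)\,\langle 1_{\mathcal{F}},M^*M1_{\mathcal{F}}\rangle$, and the denominator is $\mu(\mathcal{F})$ minus the escape probability of the up-down walk, which a direct computation identifies as $\frac{k}{k+1}\frac{n(n-1)}{2k(n-k)}$ times the transposition-walk boundary, i.e.\ at least $\frac{1}{8k}I[1_{\mathcal{F}}]$; this automatically handles the degree bookkeeping (how many $X\in\mathcal{F}$ each $Y\in\mathcal{F}\!\uparrow$ covers) that your count over triples $(X,X\cup\{j\},X')$ would have to do by hand. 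Your combinatorial route does close --- each ordered boundary pair $(X,X')$ with $|X\triangle X'|=2$ determines $Y=X\cup X'\in\mathcal{F}\!\uparrow$, and each $Y$ absorbs at most $k+1$ such pairs per subset $X'\notin\mathcal{F}$, yielding $(k+1)^2\bigl(|\mathcal{F}\!\uparrow|-\tfrac{n-k}{k+1}|\mathcal{F}|\bigr)\geq\#\{\text{boundary pairs}\}$ --- but the operator/Cauchy--Schwarz formulation is the cleaner way to nail the constants, and I would recommend adopting it.
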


\begin{proof}
  Let $f = 1_\mathcal{F}$, $g = 1_{\mathcal{F}\uparrow}$, and consider the operator $M\colon {[n]\choose k} \to {[n]\choose k+1}$ that from
  a set $A\subseteq[n]$ of size $k$ moves to a random set of size $k+1$ containing it. We also consider $M$
  as an operator $M\colon L_2\left({[n]\choose k}\right)\rightarrow L_2\left({[n]\choose k+1}\right)$ defined
  as $M f(B)=\Expect{A\subseteq B}{f(A)}$ (this operator is sometimes known as the \emph{raising} or \emph{up} operator). Note that for all $B\in{[n]\choose k+1}$, it holds that
  $g(B) M f(B) = M f(B)$, and that the average of $M f$ is the same as the average of $f$, i.e.\ $\mu(\mathcal{F})$.
  Thus,
  \[
  \mu(\mathcal{F})^2 = \inner{g}{M f}^2\leq \norm{g}_2^2\norm{M f}_2^2 = \norm{g}_2^2 \inner{f}{M^{*} M f}.
  \]
  Using the fact that the $2$-norm of $g$ squared is the measure of $\mathcal{F}\uparrow$ and rearranging,
  we get that
  \begin{equation}\label{eq:KK_stab}
  \mu(\mathcal{F}\uparrow)\geq\frac{\mu(\mathcal{F})^2}{\inner{f}{M^{*} M f}}
  = \frac{\mu(\mathcal{F})^2}{\Prob{\substack{x\in_R{[n]\choose k}\\ y\sim MM^{*} x}}{x\in \mathcal{F},y\in \mathcal{F}}}.
  \end{equation}
  We next lower bound $\Prob{\substack{x\in_R{[n]\choose k}\\ y\sim MM^{*} x}}{x\in \mathcal{F},y\not\in \mathcal{F}}$,
  which will give us an upper bound on the denominator. Towards this end, we relate this probability to
  the total influence of $1_{\mathcal{F}}$ as defined in Section~\ref{sec:isop_multi}.
  Note that the distribution of $y$ conditioned on $x$ is:
  with probability $1/(k+1)$ we have $y=x$, and otherwise $y = x^{(i,j)}$, where $i,j$ are random coordinates such that
  $x_i \neq x_j$. Consider $z\sim \mathrm{T} x$, where $\mathrm{T}$ is the operator of applying a random transposition;
  the probability that it interchanges two coordinates $i,j$ such that $x_i\neq x_j$ is $k(n-k)/\binom{n}{2}$, and so we get
  \begin{multline*}
  \Prob{\substack{x\in_R{[n]\choose k}\\ y\sim MM^{*} x}}{x\in \mathcal{F},y\not\in \mathcal{F}}
  =\frac{k}{k+1}\frac{n(n-1)}{2k(n-k)}\Prob{\substack{x\in_R{[n]\choose k}\\ y\sim \mathrm{T} x}}{x\in \mathcal{F},y\not\in \mathcal{F}} \\
  =\frac{k}{k+1}\frac{n(n-1)}{2k(n-k)} \frac{1}{2} \Prob{\substack{x\in_R{[n]\choose k}\\ y\sim \mathrm{T} x}}{1_{\mathcal{F}}(x) \neq 1_{\mathcal{F}}(y)}
  =\frac{k}{k+1}\frac{n(n-1)}{2k(n-k)}\frac{1}{2n} I[1_{\mathcal{F}}]
  \geq \frac{1}{8k}I[1_{\mathcal{F}}],
  \end{multline*}
  which is at least $\frac{d}{64k}\mu(f)$ by Theorem~\ref{thm:KKL_multi_slice}
  (and the fact that ${\sf var}(f) = \mu(f)(1-\mu(f))\geq \mu(f)/2$).
  It follows that the denominator in~\eqref{eq:KK_stab} is at most
  $\mu(f)\left(1-\frac{d}{64k}\right)$, and plugging this into~\eqref{eq:KK_stab} we get that
  \[
  \mu(\mathcal{F}\uparrow)
  \geq \left(1+\frac{d}{64k}\right) \mu(\mathcal{F}).\qedhere
  \]
\end{proof}
We finish this section by noting that Theorem~\ref{thm:KK_stability} indeed improves on Theorem~\ref{thm:KK} in some range of parameters.
Namely, in the case that $x = \Theta(k)$, $x\leq k-2$ and $n\geq 2^{C\cdot k^3}$.
Normalizing the inequality in Theorem~\ref{thm:KK}, we get that
\[
\mu(\mathcal{F}\uparrow)\geq \frac{{n\choose k}}{{n\choose k+1}} \frac{{n\choose x+1}}{{n\choose x}}\mu(\mathcal{F})
=\frac{k+1}{n-k}\frac{n-x}{x+1} = \left(1 + \Theta\left(\frac{k-x}{k}\right)\right)\mu(\mathcal{F}),
\]
so it is enough to note that $\mathcal{F}$ is $(d,2^{-C\cdot d^4})$-global for $d= \left\lfloor\frac{k-x}{2}\right\rfloor$. Indeed, if
$\card{I}=d$ and $A\subseteq I$, then
\[
\mu(\mathcal{F_{I\rightarrow A}})
\leq \frac{{n \choose x}}{{n-d \choose k-\card{A}}}
\leq \frac{{n \choose x}}{{n-d \choose k-d}}
=\frac{n(n-1)\cdots(n-x+1)}{(n-d)(n-d-1)\cdots(n-k+1)}\frac{(k-d)!}{x!}
\leq k^{d+1}
\frac{n^x}{n^{k-d}}
\leq k^{d+1} n^{-d},
\]
which at most $2^{-C\cdot d^4}$ provided that $n$ is large enough.

\section{Proof of the level-$d$ inequality}\label{sec:lvl_d}
The goal of this section is to prove Theorem~\ref{thm:lvl_d}.

\subsection{Proof overview}
\paragraph{Proof overview in an idealized setting.}
We first describe the proof idea in an idealized setting in which derivative operators, and truncations, interact well. By that, we mean that
if $\mathrm{D}$ is an order $\ell$ derivative, and $f$ is a function, then $\mathrm{D} (f^{\leq d}) = (\mathrm{D} f)^{\leq d-\ell}$. We remark
that this property holds in product spaces, but may fail in non-product domains such as $S_n$.

Adapting the proof of the level-$d$ inequality from the hypercube (using Theorem~\ref{thm:Reasonability} instead of standard hypercontractivity),
one may easily establish a weaker version of Theorem~\ref{thm:lvl_d}, wherein $\eps^2$ is replaced by $\eps^{3/2}$, as follows. Take
$q =\log(1/\eps)$, then
\[
\norm{f^{\leq d}}_2^2
= \langle f^{\leq d}, f\rangle
\leq \norm{f^{\leq d}}_q \norm{f}_{1+1/(q-1)}.
\]
Since $f$ is integer-valued, we have that $\norm{f}_{1+1/(q-1)}$ is at most
$\norm{f}_2^{2(q-1)/q}\leq \eps^{2(q-1)/q}$.
Using the assumption of our idealized setting and Parseval, we get that for every derivative $\mathrm{D}$ of order $\ell$ we have
that $\norm{\mathrm{D} (f^{\leq d})}_2 = \norm{(\mathrm{D} f)^{\leq d-\ell}}_2\leq \norm{\mathrm{D} f}_2$. Thus, using the globalness
of $f$ and both items of Claim~\ref{claim:globalness equivalence}, we get that $f^{\leq d}$ is $(d,2^{O(d)}\eps)$-global, and so
by Theorem~\ref{thm:Reasonability} we get that $\norm{f^{\leq d}}_q\leq (2q)^{O(d^3)}\eps$. All in all, we get that
$\norm{f^{\leq d}}_2^2\leq (2q)^{O(d^3)}\eps^{3}$, which falls short of Theorem~\ref{thm:lvl_d} by a factor of $\eps$.

The quantitative deficiency in this argument stems from the fact that $f^{\leq d}$ in fact is much more global than what
the simplistic argument above establishes, and to show that we prove things by induction on $d$. This induction is also the reason
we have strengthened Theorem~\ref{thm:lvl_d} from the introduction to the statement above.

\paragraph{Returning to the real setting.} To lift the assumption of the ideal setting, we return to discuss restrictions (as opposed to derivatives).
Again, we would have been in good shape if restrictions were to commute with degree truncations, but this again fails, just like derivatives.
Instead, we use the following observation (Claim~\ref{claim:Stupidtrivial}).
Suppose $k\geq d+\ell+2$, and let $g$ be a function of pure degree $k$, and $S$ be a restriction of size at most $\ell$. Then the restricted function $g_S$ is perpendicular
to degree $k-\ell-1 > d$ functions, and so $(g_S)^{\leq d} = ((g^{\leq k})_S)^{\leq d}$.

Note that for $k=d$, this statement exactly corresponds to truncations and restrictions commuting, but the conditions of the statement always require that $k>d$ at the
very least. In fact, in our setting we will have $\ell = 2d$, so we would need to use the statement with $k = 3d+2$. Thus, to use this statement effectively we cannot
apply it on our original function $f$, and instead have to find an appropriate choice of $g$ such that $g^{\leq k}, g^{\leq d}\approx f^{\leq d}$,
and moreover that they remain close under restrictions (so in particular we preserve our globalness). Indeed, we are able to design such $g$
by applying appropriate sparse linear combinations of powers of the natural transposition operator of $S_n$ on $f$.

\subsection{Constructing the auxiliary function $g$}

In this section we construct the function $g$.
\begin{lem}\label{lem:operator that makes everything nice}
There is an absolute constant $C>0$, such that the following holds.
Suppose $n\ge2^{C\cdot d^{3}}$, and let $\mathrm{T}$ be the adjacency operator of the transpositions graph (see Section~\ref{sec:isoperimetric}).
There exists a polynomial $P$ with $\norm{P}\le 2^{C\cdot d^4}$
such that
\[
\norm{P(\mathrm{T}) (f^{\leq 4d})-f^{\le d}}_{2}\le\left(\frac{1}{n}\right)^{19d}\norm{f^{\le 4d}}_{2}.
\]
\end{lem}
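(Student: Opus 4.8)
The plan is to use the spectral structure of $\mathrm{T}$ given by Lemma~\ref{lem:FOW}: $\mathrm{T}$ preserves each $V_{=k}$, and on $V_{=k}$ its eigenvalues all lie in the interval $[1-\tfrac{2k}{n-1},\,1-\tfrac{k}{n-1}]$. Thus every eigenvalue corresponding to a function of pure degree $k$ is roughly $1 - \Theta(k/n)$, and crucially the eigenvalues of distinct pure degrees $k \in \{0,1,\dots,4d\}$ live in $4d+1$ essentially disjoint windows of width $O(d/n)$, separated by gaps of order $1/n$. The target polynomial $P$ should satisfy $P(\mu) \approx 1$ for every eigenvalue $\mu$ of $\mathrm{T}$ on $V_{\le d}$, and $P(\mu) \approx 0$ for every eigenvalue $\mu$ of $\mathrm{T}$ on $V_{=k}$ with $d < k \le 4d$, since then, writing $f^{\le 4d} = \sum_{k=0}^{4d} f^{=k}$ and decomposing each $f^{=k}$ into $\mathrm{T}$-eigenvectors, Parseval gives
\[
\norm{P(\mathrm{T})(f^{\le 4d}) - f^{\le d}}_2^2 \le \max_{\mu}\bigl(\text{error at }\mu\bigr)^2 \,\norm{f^{\le 4d}}_2^2 .
\]

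First I would fix, for each $k = 0,1,\dots,4d$, a representative value $\mu_k = 1 - \tfrac{3k}{2(n-1)}$ at the center of the $k$-th window, and note (using the bound from Lemma~\ref{lem:FOW}, third item, which is sharper for $k \le 4d \le n/2$ once $n$ is large) that every eigenvalue of $\mathrm{T}$ on $V_{=k}$ lies within $O(d/n)\cdot\tfrac{1}{n}$... more precisely within an interval $I_k$ of length $O(d/n^2)$ around a point at distance $\Theta(k/n)$ from $1$; here the key quantitative input is that consecutive windows $I_k, I_{k+1}$ are separated by a gap of order $1/n$ while each window has width only $O(d/n^2) = o(1/n)$. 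Then I would take the interpolation-type polynomial
\[
P(z) = 1 - \prod_{k=d+1}^{4d}\left(\frac{z - 1}{\mu_k - 1}\right)^{m},
\]
for a suitable exponent $m = \Theta(d)$ (e.g. $m = 40d$). On any eigenvalue $\mu$ of degree $\le d$, each factor $\tfrac{\mu-1}{\mu_k-1}$ has absolute value at most $1 - \mathrm{const}$ (since $|\mu - 1| \le \tfrac{2d}{n-1}$ while $|\mu_k - 1| \ge \tfrac{3(d+1)}{2(n-1)}$), raised to a large power over $3d$ factors, giving $|1 - P(\mu)| = |{\prod}| \le n^{-20d}$ after choosing $m$ large enough. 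On any eigenvalue $\mu$ of degree $k_0$ with $d < k_0 \le 4d$, the single factor with $k = k_0$ satisfies $|\tfrac{\mu - 1}{\mu_{k_0} - 1} - 1| = O(d/n)$ raised to the $m$-th power, i.e. $\le (O(d/n))^{m} \le n^{-20d}$, and the remaining factors are bounded by $\mathrm{poly}(d)$ each, for a total of at most $n^{-19d}$. Taking the worst of the two cases gives the claimed $L^2$ bound.

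Finally I would bound $\norm{P}$: using $\norm{P_1 P_2} \le \norm{P_1}\norm{P_2}$ and $\norm{(z-1)/(\mu_k - 1)} = 1 + \tfrac{1}{|\mu_k-1|} \le 1 + n$ (since $|\mu_k - 1| \ge 1/n$ roughly, up to constants), we get $\norm{P} \le 1 + (1+n)^{3dm} \le 2^{O(d^2 \log n)}$, which is... too large. To fix this I would instead expand each factor $\tfrac{z-1}{\mu_k-1}$ not in powers of $z$ but observe we only need $\norm{P}$ polynomial in $n$ raised to $O(d)$ is acceptable only if it beats nothing — rereading the statement, $\norm{P} \le 2^{C d^4}$ is required, so I must be more careful: rescale by writing the factors in terms of $\tfrac{1-z}{1-\mu_k}$ where $1 - \mu_k = \Theta(k/n)$, so $\norm{(1-z)/(1-\mu_k)} = 1 + \tfrac{n}{\Theta(k)} \le O(n/d)$... still gives $n^{O(d^2)}$. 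The resolution is that the lemma only needs $\norm{P} \le 2^{Cd^4}$, and since the relevant regime has $n \le 2^{C' d^3}$ fail and $n \ge 2^{Cd^3}$ hold — wait, that makes $n^{d^2}$ even bigger. \textbf{The main obstacle}, then, is exactly this: controlling $\norm{P}$. I expect one must not use a single interpolating polynomial with huge coefficients but rather iterate a bounded-degree, bounded-norm operator (as in the proof of Lemma~\ref{lem:2-approximation} and Lemma~\ref{lem:Approximation}): build $P$ as a composition/product of $O(d)$ polynomials each of degree $O(d)$ and spectral norm $2^{O(d)}$, designed so that applying them successively kills one pure-degree layer $k \in \{d+1,\dots,4d\}$ at a time while leaving $V_{\le d}$ almost fixed, exploiting that a degree-$O(1)$ polynomial can separate one window from the rest with only constant-size coefficients because the windows are at distances $\Theta(k/n)$ with $n$ controlled. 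Multiplying $O(d)$ such polynomials gives $\norm{P} \le 2^{O(d)\cdot O(d)} = 2^{O(d^2)} \le 2^{Cd^4}$, and the product of the per-layer error factors, each $\le n^{-21d}$ say, comfortably beats $n^{-19d}$. I would carry out the argument in this iterated form, deferring the (routine) verification that each layer-killing polynomial exists by Lagrange interpolation on $O(d)$ points with the prescribed separation, and that its coefficients are bounded by $2^{O(d)}$ using the standard bound on Lagrange basis polynomials when the nodes are $\Omega(1/n)$-separated and $n \ge 2^{Cd^3}$.
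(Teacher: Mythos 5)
Your setup is right: you correctly reduce the lemma to finding a polynomial with $P(\lambda)\approx 1$ on the eigenvalues of $\mathrm{T}$ coming from $V_{\le d}$ and $P(\lambda)\approx 0$ on those coming from $V_{=k}$, $d<k\le 4d$, and you correctly diagnose that the whole difficulty is controlling $\norm{P}$. But your proposed resolution of that difficulty does not work. The eigenvalue clusters for consecutive pure degrees are separated by only $\Theta(1/n)$ inside $[0,1]$, and any polynomial $R(z)=\sum_i a_iz^i$ of degree $D$ and spectral norm $M=\sum_i|a_i|$ satisfies $|R'(z)|\le DM$ on $[-1,1]$; hence a polynomial that moves from $\approx 1$ to $\approx 0$ across a gap of width $\Theta(1/n)$ must have $D\cdot M\gtrsim n$. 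A single ``layer-killing'' factor of degree $O(1)$ or $O(d)$ therefore needs norm $\Omega(n/d)$, not $2^{O(d)}$, and since the lemma must hold for \emph{all} $n\ge 2^{Cd^3}$ (in particular $n$ arbitrarily large relative to $d$), no product or composition of $O(d)$ polynomials of degree $\mathrm{poly}(d)$ can have norm bounded by $2^{Cd^4}$: the total degree is still $\mathrm{poly}(d)$ or $d^{O(d)}\ll n$, so the total norm is forced to be $\ge n^{1-o(1)}$. Your first attempt, which you discarded for having norm $n^{O(d^2)}$, and your second attempt both run into this same obstruction; the Lagrange-interpolation step you defer as ``routine'' is exactly the step that fails.

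The missing idea is to build $P$ as a polynomial in $z^{n}$ rather than in $z$. Since an eigenvalue of pure degree $\ell\le 4d$ satisfies $\lambda=1-\tfrac{2\ell}{n}+O(\ell^2/n^2)$, one has $\lambda^{n}=e^{-2\ell}\pm O(d^{2}/n)$: the substitution $z\mapsto z^{n}$ maps the $1/n$-separated clusters to neighborhoods of the constants $e^{-2\ell}$, which are mutually separated by $2^{-O(d)}$ --- while $\norm{z^{n}}=1$, so the substitution costs nothing in spectral norm (it evades the derivative obstruction by taking degree $\Theta(n)$). The paper then takes $Q(z)=\sum_{i=1}^{d}\prod_{j\in[4d]\setminus\{i\}}\bigl(\tfrac{z^{n}-e^{-2j}}{e^{-2i}-e^{-2j}}\bigr)^{20d}$ and $P=1-(1-Q)^{20d}$; each factor has norm $\tfrac{1+e^{-2j}}{e^{-2i}-e^{-2j}}=2^{O(d)}$, giving $\norm{Q}\le 2^{O(d^3)}$ and $\norm{P}\le 2^{O(d^4)}$, with the error bounds following from $\lambda^{n}=e^{-2\ell}\pm O(d^2/n)$ exactly as in your window analysis. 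Without this (or an equivalent) device your argument cannot be completed.
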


\begin{proof}
Let
\[
Q(z) = \sum_{i=1}^{d}\prod_{j\in\left[4d\right]\setminus\left\{ i\right\} }\left(\frac{z^{n}-e^{-2j}}{e^{-2i}-e^{-2j}}\right)^{20d},
\]
and define $P(z) = 1 - (1-Q(z))^{20d}$. We first prove the upper bound on $\norm{P}$; note that
\[
\norm{Q} \le\sum_{i=1}^{d}\prod_{j\in\left[4d\right]\setminus\left\{ i\right\} }\norm{\frac{z^{n}-e^{-2j}}{e^{-2i}-e^{-2j}}}^{20d}
  =\sum_{i=1}^{d}\prod_{j\in\left[4d\right]\setminus\left\{ i\right\} }\left(\frac{1+e^{-2j}}{e^{-2i}-e^{-2j}}\right)^{20d}=2^{O\left(d^{3}\right)},
\]
so $\norm{P}\leq (1+2^{O(d^3)})^{20d} = 2^{O(d^4)}$.

Next, we show that for $g = P(\mathrm{T}) f$, it holds that
$\norm{g^{\le 4d}-f^{\le d}}_{2}\le\left(\frac{1}{n}\right)^{10d}\norm{f^{\le 4d}}_{2}$, and we do so by eigenvalue considerations.
Let $d<\ell\leq 4d$, and let $\lambda$ be an eigenvalue of $\mathrm{T}$ corresponding to a function of pure degree $\ell$. Since $\ell\leq n/2$, Lemma~\ref{lem:FOW} implies that
$\lambda = 1-\frac{2\ell}{n} + O\left(\frac{\ell^2}{n^2}\right)$, and so
$\lambda^{n}=e^{-2\ell}\pm O\left(\frac{d^{2}}{n}\right)$. Thus, as each one of the products in $Q(\lambda)$ contains a term for $\ell$, we get that
\[
\card{Q(\lambda)}\le d\cdot\left(\frac{2^{O(d^2)}}{n}\right)^{20d}
\leq \frac{2^{O(d^3)}}{n^{20d}},
\]
so $\card{P(\lambda)} = 1-(1-\frac{2^{O(d^3)}}{n^{20d}})^d \leq \frac{1}{n^{19d}}$. Next, let $\ell\leq d$, and let $\lambda$ be an eigenvalue of $\mathrm{T}$ corresponding to a function of
pure degree $\ell$. As before, $\lambda^n = e^{-2\ell}\pm O\left(\frac{d^{2}}{n}\right)$, but now in $Q(\lambda)$ there is one product that omits the term for $\ell$.
A direct computation gives that
\[
Q(\lambda)
= \prod\limits_{j\in [4d]\setminus\{\ell\}}
\left(\frac{\lambda^n-e^{-2j}}{e^{-2\ell}-e^{-2j}}\right)^{20d} + \frac{2^{O(d^3)}}{n^{20d}}
=\prod\limits_{j\in [4d]\setminus\{\ell\}}\left(1-O\left(\frac{2^{O(d)}}{n}\right)\right)+ \frac{2^{O(d^3)}}{n^{20d}},
\]
so $Q(\lambda) = 1- O\left(\frac{2^{O(d)}}{n}\right)$. Thus,
\[
\card{P(\lambda)-1} = O\left(\frac{2^{O(d^2)}}{n^{20d}}\right) \leq \frac{1}{n^{19d}}.
\]
It follows that $g^{\leq 4d} - f^{\leq d} = \sum\limits_{\ell = 0}^{4d} c_{\ell} f^{=\ell}$ for $\card{c_{\ell}} \leq \frac{1}{n^{19d}}$,
and the result follows from Parseval.
\end{proof}

\subsection{Properties of Cayley operators and restrictions}
In this section, we study random walks along Cayley graphs on $S_n$. The specific transition operator we will later be concerned with
is the transposition operator from Lemma~\ref{lem:FOW} and its powers, but we will present things in greater generality.
\subsubsection{Random walks}
\begin{defn}
  A Markov chain $\mathrm{M}$ on $S_n$ is called a Cayley random walk if for any $\sigma,\tau,\pi\in S_n$,
  the transition probability from $\sigma$ to $\tau$ is the same as the transition probability from
  $\sigma \pi$ to $\tau \pi$.
\end{defn}
In other words, a Markov chain $\mathrm{M}$ is called Cayley if the transition probability from $\sigma$ to
$\tau$ is only a function of $\sigma\tau^{-1}$. We will be interested in the interaction between random walks
and restrictions, and towards this end we first establish the following claim, asserting that a Cayley random
walk either never transitions between two restrictions $T$ and $T'$, or can always transition between the two.
\begin{claim}
  Suppose $\mathrm{M}$ is a Cayley random walk on $S_n$, let $i_1,\ldots,i_t\in[n]$ be distinct, and
  let $T=\left\{ \left(i_{1},j_{1}\right),\ldots,\left(i_{t},j_{t}\right)\right\}$,
  $T'=\left\{ \left(i_{1},j_{1}'\right),\ldots,\left(i_{t},j_{t}'\right)\right\}$
  be consistent sets. Then one of the following two must hold:
  \begin{enumerate}
    \item $\Pr_{\substack{u\in S_{n}^{T'}\\ v\sim \mathrm{M}v}}\big[v\in S_{n}^{T}\big] = 0$.
    \item For all $\pi\in S_{n}^{T}$, it holds that
    $\Pr_{\substack{u\in S_{n}^{T'}\\ v\sim \mathrm{M}v}}\big[v = \pi\big] > 0$.
  \end{enumerate}
\end{claim}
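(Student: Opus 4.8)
The claim is essentially a transitivity/homogeneity statement for Cayley walks restricted to double cosets, so the natural approach is to reduce everything to left-translation symmetry together with the symmetry of $S_n^{T'}$ under the right action of the pointwise stabilizer of the relevant coordinates. First I would unpack the two events. The quantity $\Pr_{u \in S_n^{T'},\, v \sim \mathrm{M}u}[v = \pi]$ equals $\frac{1}{|S_n^{T'}|}\sum_{u \in S_n^{T'}} p(\pi u^{-1})$, where $p(\cdot)$ is the Cayley transition kernel (a function of $\sigma\tau^{-1}$ only). The key point is that both $S_n^{T}$ and $S_n^{T'}$ share the same set of source coordinates $\{i_1,\dots,i_t\}$, so they are both right cosets of the subgroup $H = \{\sigma : \sigma(i_\ell) = i_\ell \text{ for all } \ell\} \cong S_{n-t}$: indeed $S_n^{T'} = \pi' H$ for any fixed $\pi' \in S_n^{T'}$, and similarly $S_n^{T} = \pi H$ for any $\pi \in S_n^{T}$.

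**Main steps.** With this in hand, I would argue as follows. Fix $\pi \in S_n^{T}$ and $\pi' \in S_n^{T'}$. Then
\[
\sum_{u \in S_n^{T'}} p(\pi u^{-1}) = \sum_{h \in H} p(\pi h^{-1} (\pi')^{-1}).
\]
Now I claim this sum does not depend on the choice of $\pi \in S_n^{T}$. Any other element of $S_n^{T}$ is $\pi h_0$ for some $h_0 \in H$, and replacing $\pi$ by $\pi h_0$ re-indexes the sum over $H$ (substitute $h \mapsto h_0 h$), leaving it invariant. Hence the function $\pi \mapsto \Pr_{u \in S_n^{T'},\, v \sim \mathrm{M}u}[v = \pi]$ is \emph{constant} on $S_n^{T}$; call this common value $c \geq 0$. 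But then $\Pr_{u \in S_n^{T'},\, v \sim \mathrm{M}u}[v \in S_n^{T}] = |S_n^{T}| \cdot c$. If $c = 0$ we are in the first alternative; if $c > 0$ we are in the second. This dichotomy is exactly the statement (noting that ``$\mathrm{M}v$'' in the displayed events should read ``$\mathrm{M}u$'', a typo in the source).

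**Where the subtlety lies.** The one thing to be careful about is that the re-indexing argument genuinely uses that $T$ and $T'$ have the \emph{same} source set $\{i_1,\dots,i_t\}$ — this is what makes both cosets cosets of the \emph{same} subgroup $H$ on the right, which is precisely why one can slide an element of $H$ from $\pi$ into the summation variable. If the source sets differed, $S_n^{T}$ and $S_n^{T'}$ would be right cosets of different (though conjugate) subgroups and the argument would break; the hypothesis is stated with matching $i_\ell$'s exactly for this reason. So I do not expect a real obstacle here — the content is entirely the observation that "average over a left-invariant kernel of an indicator of a right coset, evaluated along another right coset of the same subgroup" is a constant function, and then reading off the two cases according to whether that constant is zero. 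I would present it in two or three lines after setting up the coset notation.
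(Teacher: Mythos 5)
Your proof is correct and rests on the same mechanism as the paper's: the right-invariance of the Cayley kernel, exploited via the fact that $S_n^T$ and $S_n^{T'}$ are cosets of the same stabilizer subgroup $H$ of $\{i_1,\ldots,i_t\}$. The paper phrases it pointwise --- given one positive transition $u\to v$ it left-translates by $\tau=uv^{-1}$ to produce a positive transition $\tau\pi\to\pi$ for every $\pi\in S_n^T$ --- whereas you prove the marginally stronger fact that $\pi\mapsto\Pr[v=\pi]$ is constant on $S_n^T$ by re-indexing the sum over the coset (modulo the harmless inversion $p(\pi u^{-1})$ vs.\ $p(u\pi^{-1})$); both are one-line symmetry arguments and either yields the dichotomy.
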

\begin{proof}
  If the first item holds then we're done, so let us assume otherwise. Then there are $u\in S_{n}^{T'}$,
  $v\in S_{n}^{T}$ such that $\mathrm{M}$ has positive probability of transitioning from $u$ to $v$. Denoting
  $\tau = u v^{-1}$, we note that $\tau(j_{\ell}) = j_{\ell}'$ for all $\ell=1,\ldots, t$. Fix $\pi\in S_n^{T}$.
  Since $\mathrm{M}$ is a Cayley operator, the transition probability from $\tau\pi$ to $\pi$ is positive, and since
  $\tau\pi$ is in $S_{n}^{T'}$, the proof is concluded.
\end{proof}
If $\mathrm{M}$ satisfies the second item of the above claim with $T$ and $T'$, we say that $\mathrm{M}$ is compatible with $(T,T')$.

\subsubsection{Degree decomposition on restrictions}
Let $T=\left\{ \left(i_{1},j_{1}\right),\ldots,\left(i_{t},j_{t}\right)\right\}$ be consistent. A function $f\in L^2(S_n^{T})$ is called
a $d$-junta, if there is $S\subseteq [n]\setminus\{i_1,\ldots,i_t\}$ of size $d$ such that $f(\pi)$ only depends on $\pi(i)$ for $i \in S$ (we say that $f(\pi)$ only depends on $\pi(S)$). With this definition in
hand, we may define the space of degree $d$ functions on $S_n^{T}$, denoted by $V_d(S_n^T)$, as the span of all $d$-juntas, and subsequently
define projections onto this subspaces. That is, for each $f\in L^2(S_n^{T})$ we denote by $f^{\leq d}$ the projection of
$f$ onto $V_d(S_n^T)$. Finally, we define the pure degree $d$ part of $f$ as $f^{=d} = f^{\leq d} - f^{\leq d-1}$.

We have the following basic property of pure degree $d$ functions.
\begin{claim}\label{claim:Stupidtrivial}
Suppose that $f\colon S_n\to\mathbb{R}$ is of pure degree $d$.
Let $T$ be a set of size $\ell<d$. Then $f_{T}$ is orthogonal to all
functions in $V_{d-1-\ell}$.
\end{claim}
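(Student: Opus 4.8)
The plan is to reduce the statement for $S_n^T$ to the basic fact that restricting a pure-degree-$d$ function by a single coordinate lowers its degree by at most one, together with the observation that degree and orthogonality are inherited along the tower of restrictions. Recall that $S_n^T$ is isomorphic (as a bimodule) to $S_{n-\ell}$, so the level decomposition on $S_n^T$ from Definition~\ref{def:level_space} makes sense, and the $d$-juntas in $V_d(S_n^T)$ are exactly restrictions of $d$-juntas on $S_n$ supported on coordinates disjoint from $T$. The key structural input is: if $g$ is supported on $S_n^{T'}$ where $T'$ has size $\ell-1$, and $(i,j)\in T\setminus T'$, then writing $g = \sum_j g_j$ where $g_j$ is the part of $g$ on $\pi(i)=j$, each $g_j$ lives on a restriction of size $\ell$ and is the average-subtracted piece; one checks that $g - \E_j g_j$ raises the degree by at most one when one "unrestricts" the coordinate $i$, equivalently that restricting by one coordinate drops pure degree by at most one.

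Concretely, I would argue as follows. It suffices to handle the case $\ell = 1$, say $T = \{(i,j)\}$; the general case follows by applying this $\ell$ times along a chain $\emptyset \subseteq T_1 \subseteq \cdots \subseteq T_\ell = T$, since $(f_{T_{k}})^{=?}$ relates to $f_{T_{k-1}}$ by the one-step statement and degrees/orthogonality on $S_n^{T_{k-1}}$ are what feed into the next step. So fix $f$ of pure degree $d$ on $S_n$, and let $h \in V_{d-2}(S_n^{\{(i,j)\}})$ be arbitrary; I want $\langle f_{\{(i,j)\}}, h \rangle = 0$. The natural move is to lift $h$ back to a function $\tilde h$ on $S_n$ of degree at most $d-1$: extend $h$ (which is an $(A,B)$-junta with $|A| = |B| \le d-2$, $i \notin A$, $j \notin B$) by $\tilde h = e_{ij} \cdot h$ viewed on all of $S_n$, which is an $(A\cup\{i\}, B\cup\{j\})$-junta, hence of degree at most $d-1$ by Lemma~\ref{lem:V_AB spanning set}. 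Then
\[
\langle f_{\{(i,j)\}}, h\rangle_{S_n^{\{(i,j)\}}}
= \frac{1}{\E[e_{ij}]}\,\langle f \cdot e_{ij}, \tilde h\rangle_{S_n}
= \frac{1}{\E[e_{ij}]}\,\langle f, e_{ij}\tilde h\rangle_{S_n}
= \frac{1}{\E[e_{ij}]}\,\langle f, \tilde h\rangle_{S_n} = 0,
\]
where the last equality uses $e_{ij}\tilde h = \tilde h$ and that $f$, being of pure degree $d$, is orthogonal to $V_{d-1} \ni \tilde h$. This handles $\ell = 1$ directly (and in fact this argument works for general $\ell$ in one shot, by taking $T$-indicator times an $(A,B)$-junta of degree $\le d-1-\ell$ with $A,B$ disjoint from the coordinates of $T$, lifting to an $(A\cup I_T, B\cup J_T)$-junta of degree $\le d-1$).

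The only subtlety, and the place I expect to need a little care, is justifying that a general degree-$(d-1-\ell)$ function on $S_n^T$ is a linear combination of such $(A,B)$-junta restrictions with $A,B$ disjoint from the coordinates of $T$ — i.e. that $V_{d-1-\ell}(S_n^T)$, as defined via juntas on the "free" coordinates of $S_n^T$, is spanned by the $1_{T'}$ with $T' \supseteq T$ and $|T' \setminus T| \le d-1-\ell$, which then lift to $1_{T'} = 1_T \cdot \prod e_{ab}$ of degree $\le d-1$ on $S_n$. This is the exact analogue of Lemma~\ref{lem:V_AB spanning set} transported through the bimodule isomorphism $S_n^T \cong S_{n-\ell}$, so it should be immediate, but it is worth stating explicitly since the whole argument hinges on the lift landing in $V_{d-1}(S_n)$. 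Once that is in place, the orthogonality $\langle f, \tilde h\rangle = 0$ is exactly the definition of $f$ having pure degree $d$, and the proof is complete; no induction on $n$ or quantitative estimate is needed.
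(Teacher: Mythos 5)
Your proof is correct and is essentially the paper's own argument: the paper also extends a $(d-1-\ell)$-junta on $S_n^T$ by zero to $S_n$ (equivalently, multiplies by $1_T$), observes the extension is a $(d-1)$-junta and hence lies in $V_{d-1}$, and concludes from $f \perp V_{d-1}$. The "subtlety" you flag is handled automatically by the paper's definition of $V_{d-1-\ell}(S_n^T)$ as the span of juntas on the free coordinates, and the one-step/induction framing is unnecessary, as you yourself note.
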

\begin{proof}
Clearly, it is enough to show that $f_T$ is orthogonal to all $(d-1-\ell)$-juntas.
Fix $g\colon S_{n}^{T}\to\mathbb{R}$ to be  a  $(d-1-\ell)$-junta, and let $h$ be its extension
to $S_n$ by setting it to be $0$ outside $S_n^T$. Then $h$ is a $\left(d-1\right)$-junta, and so
\[
0=\left\langle f,h\right\rangle =\frac{\left(n-\ell\right)!}{n!}\left\langle f_{T},g\right\rangle
.\qedhere
\]
\end{proof}

\subsubsection{Extension to functions}

Any random walk $\mathrm{M}$ on $S_n$ extends to an operator on functions on $S_n$, which maps $f\colon S_n \to \mathbb{R}$ to the function $\mathrm{M}f\colon S_n \to \mathbb{R}$ given by
\[
 \mathrm{M}f(\pi) = \E_{\substack{u \in S_n \\ v \sim \mathrm{M} u}}[f(u)\mid v=\pi].
\]

\subsection{Strengthening Proposition~\ref{prop:n to m}}
Our main goal in this section is to prove the following statement that both strengtheners and generalizes Proposition~\ref{prop:n to m}.
\begin{prop}
\label{prop:globalness of Tf} Let $f\colon S_{n}\to\mathbb{R}$.
Let $\mathrm{M}$ be a Cayley random walk on $S_{n}$,
let $g=\mathrm{M}f$, and let $T=\left\{ \left(i_{1},j_{1}\right),\ldots,\left(i_{t},j_{t}\right)\right\} $
be a consistent set. Then for all $d$,
\[
\|\left(g_{T}\right)^{\le d}\|_{2}\le
\max_{\substack{T'=\left\{ \left(i_{1},j_{1}'\right),\ldots,\left(i_{t},j_{t}'\right)\right\}\\ \mathrm{M}\text{ compatible  with }(T,T')}}\|\left(f_{T'}\right)^{\le d}\|_{2}.
\]
\end{prop}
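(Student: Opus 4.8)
The plan is to reduce the statement about $g_T = (\mathrm{M}f)_T$ to a statement about the $f_{T'}$ by unpacking the definition of $\mathrm{M}$ as an operator on functions and using the compatibility dichotomy established above. First I would observe that $g_T$ is itself obtained from the various $f_{T'}$ by an averaging operation: writing out $\mathrm{M}f(\pi) = \E_{u,\,v\sim\mathrm{M}u}[f(u)\mid v=\pi]$ and restricting $\pi$ to $S_n^T$, the conditional distribution of the pre-image $u$ lives on $\bigcup_{T'} S_n^{T'}$ where $T'$ ranges over sets of the form $\{(i_1,j_1'),\ldots,(i_t,j_t')\}$ with $\mathrm{M}$ compatible with $(T,T')$ — this is exactly the content of the compatibility claim, which says that the only $T'$ contributing are the compatible ones, and for those every permutation in $S_n^{T}$ is reachable. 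Moreover, since $\mathrm{M}$ is Cayley, the induced map from $\bigsqcup_{T'} S_n^{T'}$ to $S_n^T$ is itself a ``Cayley-type'' averaging between these cosets (each isomorphic to $S_{n-t}$ as a bimodule), and in particular it is a Markov operator.

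The key step is then to show that this averaging operator does not increase the $2$-norm of the degree-$\le d$ part. Concretely, I would write $g_T = \sum_{T'} \mathrm{M}_{T'\to T} f_{T'}$ where $\mathrm{M}_{T'\to T}$ is the appropriate weighted averaging operator (weighted by the probability that the pre-image lands in $S_n^{T'}$), and I would argue that each $\mathrm{M}_{T'\to T}$ maps $V_{\le d}(S_n^{T'})$ into $V_{\le d}(S_n^T)$ and commutes with (or at least is compatible with) the degree truncation in the appropriate direction. The cleanest route: because $\mathrm{M}$ is Cayley and compatible with $(T,T')$, the operator $\mathrm{M}_{T'\to T}$ intertwines the $S_{n-t}$-bimodule structures on the two cosets, hence by Lemma~\ref{lem:abjuntas} (applied on $S_{n-t}$) it preserves each $V_{A,B}$ and therefore $V_{\le d}$; consequently $(\mathrm{M}_{T'\to T} f_{T'})^{\le d} = \mathrm{M}_{T'\to T}((f_{T'})^{\le d})$, so $(g_T)^{\le d} = \sum_{T'} \mathrm{M}_{T'\to T}((f_{T'})^{\le d})$. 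Since $\mathrm{M}_{T'\to T}$ composed appropriately is a contraction in $L^2$ (it is a sub-Markov averaging followed by the $T'\to T$ identification, so Fact~\ref{fact:contraction}-type reasoning applies), and the total weight over $T'$ is $1$, Jensen/triangle inequality gives
\[
\|(g_T)^{\le d}\|_2 \le \sum_{T'} w_{T'}\, \|\mathrm{M}_{T'\to T}((f_{T'})^{\le d})\|_2 \le \sum_{T'} w_{T'}\, \|(f_{T'})^{\le d}\|_2 \le \max_{T'} \|(f_{T'})^{\le d}\|_2,
\]
where $w_{T'}\ge 0$ sum to $1$ and the max is over $T'$ compatible with $(T,T')$.

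\textbf{Main obstacle.} The delicate point is the commutation $(\mathrm{M}_{T'\to T} h)^{\le d} = \mathrm{M}_{T'\to T}(h^{\le d})$, i.e. that the $T'\to T$ averaging operator preserves the degree filtration. This is precisely the kind of ``restrictions and truncations do not commute'' issue flagged throughout the paper, and here it works only because $\mathrm{M}$ is Cayley: the identification of $S_n^{T'}$ with $S_n^{T}$ coming from a compatible Cayley step is a bimodule isomorphism (translation by a fixed $\tau$ with $\tau(j_\ell') = j_\ell$), so it sends $d$-juntas to $d$-juntas and hence respects $V_{\le d}$. I would need to be careful that the \emph{averaging} (not just a single translation) still does this — but averaging a family of bimodule endomorphisms is again a bimodule endomorphism, so Lemma~\ref{lem:abjuntas} closes the gap. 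A secondary bookkeeping obstacle is correctly identifying the weights $w_{T'}$ and checking that $\mathrm{M}_{T'\to T}$, as a weighted averaging, is genuinely an $L^2$-contraction on each coset; this follows from Cauchy--Schwarz exactly as in the proof of Proposition~\ref{prop:n to m}, so it should be routine once the bimodule structure is set up.
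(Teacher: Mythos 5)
Your overall route is the same as the paper's: decompose $(\mathrm{M}f)_T$ as a convex combination $\E_{T'}\bigl[\mathrm{M}_{S_n^{T'}\to S_n^{T}} f_{T'}\bigr]$ over compatible $T'$ (the paper's Claim~\ref{claim:imm}), show that each transition operator commutes with the degree-$\le d$ truncation, and finish with the triangle inequality and the $L^2$-contraction of Fact~\ref{fact:contraction}. Two of your intermediate justifications, however, do not work as written. First, $\mathrm{M}_{T'\to T}$ does not intertwine the \emph{bimodule} structures of the two cosets: since $\mathrm{M}$ is Cayley it commutes with right multiplication, hence with the right action of the common stabilizer of $\{i_1,\ldots,i_t\}$, but the natural left actions on $S_n^{T}$ and $S_n^{T'}$ are by permutations fixing $\{j_1,\ldots,j_t\}$ and $\{j_1',\ldots,j_t'\}$ respectively --- different subgroups --- and the Cayley step is itself a left multiplication, so no left-equivariance is available. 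Lemma~\ref{lem:abjuntas} is therefore not applicable. The paper works only with the right-module structure (Claim~\ref{claim:symmerty of transitioning operator}) and shows directly that a right-module homomorphism sends $d$-juntas to $d$-juntas, using the characterization of a $d$-junta as a function invariant under right multiplication by the pointwise stabilizer of a $d$-set of domain elements.

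Second, even granting that $\mathrm{M}_{T'\to T}$ maps $V_{\le d}(S_n^{T'})$ into $V_{\le d}(S_n^{T})$, your ``consequently $(\mathrm{M}_{T'\to T}h)^{\le d} = \mathrm{M}_{T'\to T}(h^{\le d})$'' does not follow: an operator can preserve a subspace without commuting with the orthogonal projection onto it, since it may fail to preserve the orthogonal complement. The paper closes exactly this gap in Claim~\ref{claim:pure degree d} by noting that the adjoint $\mathrm{M}_{T'\to T}^{*}$ is again a right-module homomorphism, hence also degree-preserving, which forces $\mathrm{M}_{T'\to T}$ to map pure-degree spaces to pure-degree spaces; the commutation with truncation then follows from uniqueness of the degree decomposition. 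Both gaps are repairable with the paper's arguments, and your final chain of inequalities (triangle inequality over $T'$ with weights summing to $1$, then contraction, then the max) is correct once they are.
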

Let $\mathrm{M}$ be a Cayley random walk and let $T=\left\{ \left(i_{1},j_{1}\right),\ldots,\left(i_{t},j_{t}\right)\right\}$ and $T'=\left\{ \left(i_{1},j_{1}'\right),\ldots,\left(i_{t},j_{t}'\right)\right\}$ be
consistent so that $\mathrm{M}$ is compatible with $(T,T')$.
Put $I = \{(i_1,i_1),\ldots,(i_t,i_t)\}$. 
Define the operator
$\mathrm{M}_{S_n^T\rightarrow S_n^{T'}}\colon L^2(S_n^T)\to L^2(S_n^{T'})$ in the following way: given a function
$f\in  L^2(S_n^T)$, we define
\[
\mathrm{M}_{S_n^{T'}\rightarrow S_n^{T}} f(\pi) = \E_{\substack{u\in_R S_n^{T}\\ v\sim \mathrm{M} u}}\big[f(u)\;\big|\;v = \pi\big].
\]

Drawing inspiration from the proof of Proposition~\ref{prop:n to m}, we study
the operator $\mathrm{M}_{S_{n}^{T}\to S_{n}^{T'}}$. Since we are also dealing with degree truncations, we have to study
its interaction with this operator. Indeed, a key step in the proof is to show that the two operators commute, in the following sense:
for all $d\in\mathbb{N}$ and $f\in L^{2}(S_n^T)$, it holds that
\[
\left(\mathrm{M}_{S_{n}^{T}\to S_{n}^{T'}}f\right)^{=d}=\mathrm{M}_{S_{n}^{T}\to S_{n}^{T'}}\left(f^{=d}\right).
\]
Towards this end, we view $L^2(S_n^{T})$ (and similarly $L^2(S_n^{T'})$) as a right $S_n^I$-module using the following operation:
a function-permutation pair $(f,\pi)\in L^2(S_n^{T})\times S_n^I$ is mapped to a function $f^{\pi} \in L^2(S_n^{T})$ defined as
\[
f^{\pi}(\sigma) = f(\sigma \pi^{-1}).
\]

\begin{claim}
\label{claim:symmerty of transitioning operator}
With the setup above, $\mathrm{M}_{S_{n}^{T}\to S_{n}^{T'}}\colon L^{2}\left(S_{n}^{T}\right)\to L^{2}\left(S_{n}^{T'}\right)$
is a homomorphism of $S_{n}^{I}$-modules.
\end{claim}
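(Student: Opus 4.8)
The statement to prove is that the averaging operator $\mathrm{M}_{S_n^T \to S_n^{T'}}$ intertwines the right $S_n^I$-action, i.e.\ $\mathrm{M}_{S_n^T \to S_n^{T'}}(f^\pi) = (\mathrm{M}_{S_n^T \to S_n^{T'}} f)^\pi$ for all $f \in L^2(S_n^T)$ and $\pi \in S_n^I$. The natural approach is to work with the coupling-theoretic description of $\mathrm{M}_{S_n^T \to S_n^{T'}}$ and to show that the joint distribution defining it is invariant under the simultaneous right action of $S_n^I$ on both coordinates. Since $\pi$ fixes each $i_\ell$, right-multiplying a permutation of $S_n^T$ by $\pi^{-1}$ keeps it in $S_n^T$ (and similarly for $T'$), so the $S_n^I$-action is well-defined on both modules; this should be noted first.

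First I would set up the coupling: let $u \in_R S_n^T$ and $v \sim \mathrm{M} u$, so that $\mathrm{M}_{S_n^T \to S_n^{T'}} f(\pi') = \E[f(u) \mid v = \pi']$ where the conditioning implicitly restricts to $v \in S_n^{T'}$ (compatibility guarantees this event has positive probability and that all of $S_n^{T'}$ is reached). The key point is that the pair $(u,v)$ has a joint law $\nu$ on $S_n^T \times S_n$ which, by the Cayley property of $\mathrm{M}$ together with the uniformity of $u$ on $S_n^T$, is invariant under $(u,v) \mapsto (u\pi^{-1}, v\pi^{-1})$ for any $\pi \in S_n^I$: indeed $u\pi^{-1}$ is again uniform on $S_n^T$ (right multiplication by $\pi^{-1} \in S_n^I$ permutes $S_n^T$), and the transition probability from $u$ to $v$ equals that from $u\pi^{-1}$ to $v\pi^{-1}$ since $\mathrm{M}$ is Cayley (the transition probability depends only on $uv^{-1} = (u\pi^{-1})(v\pi^{-1})^{-1}$). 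This is the analogue of Claim~\ref{claim:symmetry of operators}, and the proof should follow the same template: it suffices to check the identity on indicator functions $1_\sigma$ for $\sigma \in S_n^T$, and then compute both sides of $\langle \mathrm{M}_{S_n^T \to S_n^{T'}}(1_\sigma^\pi), 1_{\sigma'}\rangle$ and $\langle (\mathrm{M}_{S_n^T \to S_n^{T'}} 1_\sigma)^\pi, 1_{\sigma'}\rangle$ for $\sigma' \in S_n^{T'}$, reducing both to $\nu(\sigma\pi^{-1}, \sigma'\pi^{-1})$ versus $\nu(\sigma, \sigma'\pi)$ and using the invariance of $\nu$ to identify them.

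Concretely: $\langle (\mathrm{M}_{S_n^T \to S_n^{T'}} 1_\sigma)^\pi, 1_{\sigma'} \rangle$ is, up to the normalization from $S_n^{T'}$ having uniform measure, the value of $\mathrm{M}_{S_n^T \to S_n^{T'}} 1_\sigma$ at $\sigma'\pi^{-1}$, which is proportional to $\nu(\sigma, \sigma'\pi^{-1})$; and $\langle \mathrm{M}_{S_n^T \to S_n^{T'}}(1_\sigma^\pi), 1_{\sigma'}\rangle$ equals the value of $\mathrm{M}_{S_n^T \to S_n^{T'}} 1_{\sigma\pi}$ at $\sigma'$, proportional to $\nu(\sigma\pi, \sigma')$. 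Applying the invariance of $\nu$ under multiplying both coordinates on the right by $\pi^{-1}$ to the second expression gives $\nu(\sigma, \sigma'\pi^{-1})$, matching the first. One small bookkeeping point is to be careful that the normalizing constant in the definition of $\mathrm{M}_{S_n^T \to S_n^{T'}}$ (the probability of the conditioning event $v \in S_n^{T'}$) is itself $S_n^I$-invariant, which again follows from invariance of $\nu$; this should be dispatched quickly.

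**Main obstacle.** The conceptual content is light — it is the same symmetry-of-couplings argument as in Claim~\ref{claim:symmetry of operators} — so the main obstacle is purely notational: carefully tracking the right-multiplication-by-$\pi^{-1}$ versus right-multiplication-by-$\pi$ bookkeeping through the adjoint/pairing manipulations, and making sure the conditioning on $S_n^{T'}$ (as opposed to all of $S_n$) is handled cleanly. I would also want to explicitly verify at the outset that $f^\pi \in L^2(S_n^T)$ when $f \in L^2(S_n^T)$ and $\pi \in S_n^I$, since the $S_n^I$-module structure on these restricted spaces is the whole point of the statement, and this is exactly where the choice of $I = \{(i_1,i_1),\ldots,(i_t,i_t)\}$ (so that $\pi$ fixes every $i_\ell$) is used.
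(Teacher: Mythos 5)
Your proposal is correct and matches the paper's approach: the paper's proof of this claim is literally "essentially the same as the proof of Lemma~\ref{lem:Trho is symmetric}," i.e.\ the symmetry-of-couplings argument of Claim~\ref{claim:symmetry of operators} applied to the joint law of $(u,v)$ restricted to $S_n^T\times S_n^{T'}$, which is exactly what you spell out. Your additional bookkeeping (well-definedness of the right $S_n^I$-action on $L^2(S_n^T)$, invariance of the normalizing constant) is a faithful and slightly more explicit rendering of the same argument.
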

\begin{proof}
The proof is essentially the same as the proof of Lemma~\ref{lem:Trho is symmetric}, and is therefore omitted.
\end{proof}

Therefore, it is sufficient to prove that any homomorphism commutes with taking pure degree $d$ part, which is the content of the following claim.
\begin{claim}
\label{claim:pure degree d}
Let $T,T'$ be consistent as above, and let $\mathrm{A}\colon L^{2}(S_{n}^{T})\to L^{2}(S_{n}^{T'})$
be a homomorphism of right $S_{n}^{I}$-modules. Then for all $f\in L^2(S_n^T)$ we have that
\[
\left(\mathrm{A}f\right)^{=d}
=\mathrm{A}\left(f^{=d}\right).
\]
\end{claim}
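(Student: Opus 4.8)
The plan is to prove the claim at the level of module maps, namely: every homomorphism $\mathrm{A}\colon L^2(S_n^T)\to L^2(S_n^{T'})$ of right $S_n^I$-modules satisfies $\mathrm{A}\bigl(V_{=e}(S_n^T)\bigr)\subseteq V_{=e}(S_n^{T'})$ for every $e$. Granting this, the claim follows at once: writing $f=\sum_e f^{=e}$ with $f^{=e}\in V_{=e}(S_n^T)$ and applying $\mathrm{A}$ gives $\mathrm{A}f=\sum_e \mathrm{A}(f^{=e})$ with each $\mathrm{A}(f^{=e})\in V_{=e}(S_n^{T'})$; since $L^2(S_n^{T'})=\bigoplus_e V_{=e}(S_n^{T'})$ is an (orthogonal) direct sum, uniqueness of this decomposition forces $(\mathrm{A}f)^{=d}=\mathrm{A}(f^{=d})$.

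To prove the displayed inclusion I would reduce to the regular representation by relabeling. Fix permutations $\pi_T\in S_n^T$ and $\pi_{T'}\in S_n^{T'}$, and define $\phi_T\colon L^2(S_n^I)\to L^2(S_n^T)$ by $(\phi_T h)(\pi_T\sigma)=h(\sigma)$ for $\sigma\in S_n^I$; this is well defined since $\sigma\mapsto\pi_T\sigma$ is a bijection from $S_n^I$ onto $S_n^T$, it is unitary (both cosets have size $(n-|T|)!$), and a short computation shows it is an isomorphism of right $S_n^I$-modules. Moreover $\phi_T$ carries $V_d(S_n^I)$ onto $V_d(S_n^T)$ for every $d$: it sends the spanning indicator $1_{S\to R}$ of a $d$-junta on $S_n^I$ to the function $\pi_T\sigma\mapsto 1_{S\to R}(\sigma)$, which equals $1_{S\to\pi_T(R)}$ on $S_n^T$, and these exhaust the spanning $d$-juntas of $S_n^T$. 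Being unitary, $\phi_T$ therefore also carries $V_d(S_n^I)^\perp$ onto $V_d(S_n^T)^\perp$, hence $V_{=e}(S_n^I)$ onto $V_{=e}(S_n^T)$; likewise for $\phi_{T'}$. Consequently $\mathrm{B}:=\phi_{T'}^{-1}\circ\mathrm{A}\circ\phi_T$ is an endomorphism of $L^2(S_n^I)$ as a right $S_n^I$-module, and it suffices to show $\mathrm{B}\bigl(V_{=e}(S_n^I)\bigr)\subseteq V_{=e}(S_n^I)$.

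Here is where the structure is used. Since $L^2(S_n^I)$ with its right action is the right regular representation of the group $S_n^I$, the endomorphism $\mathrm{B}$ is given by left convolution with some $b\in L^2(S_n^I)$, i.e.\ $\mathrm{B}h=\sum_{\tau\in S_n^I} b(\tau)\,\prescript{\tau}{}h$ where $\prescript{\tau}{}h(\sigma)=h(\tau^{-1}\sigma)$. It therefore suffices to observe that $V_{=e}(S_n^I)$ is invariant under the left action of $S_n^I$. Indeed, $V_e(S_n^I)$ is left-invariant because left-translation permutes the spanning indicators, $\prescript{\tau}{}1_{S\to R}=1_{S\to\tau(R)}$, which still has degree $\le e$ (this is the analogue for restrictions of Lemma~\ref{lem:V_AB spanning set}); the left regular representation is orthogonal, so $V_{e-1}(S_n^I)^\perp$ is left-invariant as well, and hence so is $V_{=e}(S_n^I)=V_e(S_n^I)\cap V_{e-1}(S_n^I)^\perp$. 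Applying $\mathrm{B}$ term by term gives $\mathrm{B}h\in V_{=e}(S_n^I)$ for $h\in V_{=e}(S_n^I)$, and transporting back through $\phi_T$ and $\phi_{T'}$ completes the proof.

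The only genuine subtlety — and the step I expect to be the main obstacle — is that $\mathrm{A}$ is only a \emph{right}-module map between two \emph{distinct} (though isomorphic) spaces, whereas the degree filtration $V_0\subseteq V_1\subseteq\cdots$ and its orthogonal refinement $V_{=e}$ are most naturally controlled through a two-sided symmetry, exactly as in Lemma~\ref{lem:abjuntas}. The relabeling maps $\phi_T,\phi_{T'}$ turn $\mathrm{A}$ into an endomorphism of a single module, and the key point is then that an endomorphism of the regular representation is automatically a left convolution, so it respects every left-invariant subspace; checking that $V_{=e}$ is such a subspace is routine. Everything else is bookkeeping about the maps $\phi_T$.
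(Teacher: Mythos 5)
Your proof is correct, but it reaches the key inclusion $\mathrm{A}\bigl(V_{=e}(S_n^T)\bigr)\subseteq V_{=e}(S_n^{T'})$ by a genuinely different route than the paper. The paper argues directly on the two cosets: it first shows $\mathrm{A}$ preserves the degree filtration by characterizing $d$-juntas as the functions invariant under the right action of permutations fixing the relevant $S$ pointwise (so $f=f^{\pi}$ forces $\mathrm{A}f=(\mathrm{A}f)^{\pi}$), and then upgrades this to pure degrees by observing that the adjoint $\mathrm{A}^{*}$ is again a right-module homomorphism, hence also degree-preserving, so $\langle\mathrm{A}f,g\rangle=\langle f,\mathrm{A}^{*}g\rangle=0$ for $g\in V_{d-1}(S_n^{T'})$. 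You instead transport everything to the right regular representation of $S_n^I$ via the coset-representative unitaries $\phi_T,\phi_{T'}$, invoke the standard fact that a right-module endomorphism of the regular representation is a left convolution, and then check that $V_{=e}$ is left-invariant (on $V_e$ by permuting the spanning indicators, on $V_{e-1}^\perp$ by orthogonality of the left action). Both arguments are sound and both ultimately lean on unitarity of a group action — the paper through the adjoint, you through orthocomplements of left-invariant subspaces. Your version buys a cleaner conceptual picture (the operator is explicitly a combination of left translations, so it respects every left-invariant subspace at once), at the cost of the bookkeeping with $\phi_T,\phi_{T'}$ and the identification of the double coset with the regular representation; the paper's version is shorter and never needs to realize $\mathrm{A}$ concretely. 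Your verification that $\phi_T$ carries $V_d(S_n^I)$ onto $V_d(S_n^T)$ and your use of uniqueness of the orthogonal decomposition at the end are both correct, so there is no gap.
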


\begin{proof}
We first claim that $\mathrm{A}$ preserves degrees, i.e.\
$\mathrm{A} V_{d}(S_n^{T})\subseteq V_{d}(S_n^{T'})$. To show this, it is enough to note that if $f\in L^2(S_n^T)$ is a $d$-junta,
then $\mathrm{A} f$ is a $d$-junta. Let $f$ be a $d$-junta, and suppose that $S\subseteq[n]$ is a set
of size at most $d$ such that $f(\sigma)$ only depends on $\sigma(S)$. Then for any $\pi$ that has $S$
as fixed points, we have that $f(\sigma) = f(\sigma \pi^{-1}) = f^{\pi}(\sigma)$, so $f = f^{\pi}$. Applying
$\mathrm{A}$ and using the previous claim we get that $\mathrm{A} f = \mathrm{A} f^{\pi} = (\mathrm{A} f)^{\pi}$.
This implies that $\mathrm{A}f$ is invariant under any permutation that keeps $S$ as fixed points, so it is an
$S$-junta.

Let $V_{=d}(S_n^{T})$ be the space of functions of pure degree $d$,
i.e.\ $V_{d}(S_n^{T})\cap V_{d-1}(S_n^{T})^{\perp}$. We claim that $\mathrm{A}$ also preserves pure degrees,
i.e.\ $\mathrm{A} V_{=d}(S_n^{T})\subseteq V_{=d}(S_n^{T'})$. By the previous paragraph it is enough to show that
if $f\in V_{=d}(S_n^{T})$, then $\mathrm{A}f$ is orthogonal to $V_{d-1}(S_n^{T'})$. Letting $\mathrm{A}^{*}$ be the
adjoint operator of $\mathrm{A}$, it is easily seen that $\mathrm{A}^{*}\colon L^{2}(S_{n}^{T'})\to L^{2}(S_{n}^{T})$
is also a homomorphism between right $S_n^{I}$-modules, and by the previous paragraph it follows that
$\mathrm{A}^{*}$ preserves degrees. Thus, for any $g\in V_{d-1}(S_n^{T'})$ we have that $\mathrm{A}^{*} g\in V_{d-1}(S_n^{T})$,
and so
\[
\left\langle \mathrm{A}f,g\right\rangle =\left\langle f,\mathrm{A}^{*}g\right\rangle =0.
\]

We can now prove the statement of the claim. Fix $f\in L^2(S_n^T)$ and $d$. Then by the above paragraph,
$\mathrm{A}\left(f^{=d}\right)\in V_{=d}(S_n^{T'})$, and by linearity of $\mathrm{A}$ we have
$\sum\limits_{d}\mathrm{A}\left(f^{=d}\right) = \mathrm{A} f$. The claim follows from the uniqueness of the degree
decomposition.
\end{proof}

We define a transition operator on restrictions as follows. From a restriction $T=\left\{ \left(i_{1},j_{1}\right),\ldots,\left(i_{t},j_{t}\right)\right\}$,
we sample $T'\sim N(T)$ as follows. Take $\pi\in S_{n}^T$ uniformly, sample $\sigma\sim \mathrm{M}\pi$, and then let $T'$ be $\{(i_1,\sigma(i_1)),\ldots,(i_t,\sigma(i_t))\}$.
The following claim is immediate:
\begin{claim}\label{claim:imm}
$(\mathrm{M} f)_T = \E_{T'\sim N(T)}\big[\mathrm{M}_{S_{n}^{T'}\to S_{n}^{T}} f_{T'}\big]$.
\end{claim}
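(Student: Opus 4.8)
The plan is to fix $\pi \in S_n^T$ and check that the two sides of Claim~\ref{claim:imm} agree at $\pi$. Write $T = \{(i_1,j_1),\ldots,(i_t,j_t)\}$, and for a permutation $\rho$ let $T_\rho = \{(i_1,\rho(i_1)),\ldots,(i_t,\rho(i_t))\}$ denote its restriction class along $i_1,\ldots,i_t$, so that $S_n = \bigsqcup_{T'} S_n^{T'}$ is the partition of $S_n$ according to this class (over all consistent $T'$ with first coordinates $i_1,\ldots,i_t$). Let $u$ be uniform on $S_n$ and $v \sim \mathrm{M}u$; by definition $(\mathrm{M}f)(\pi) = \E[f(u)\mid v=\pi]$. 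First I would condition additionally on the class $T_u$ of the source:
\[
(\mathrm{M}f)_T(\pi) = \sum_{T'} \Pr[T_u = T'\mid v = \pi]\cdot \E[\,f(u)\mid v=\pi,\, T_u = T'\,].
\]
Conditioned on $T_u = T'$, the permutation $u$ is uniform on $S_n^{T'}$ and $f(u) = f_{T'}(u)$, so the inner conditional expectation is exactly $(\mathrm{M}_{S_n^{T'}\to S_n^T}f_{T'})(\pi)$, by the very definition of that operator. Thus the claim reduces to the single identity $\Pr[T_u = T' \mid v = \pi] = \Pr[N(T) = T']$ for every such $T'$.

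To prove this identity I would use the Cayley property: the one-step transition probability from $\sigma$ to $\rho$ equals $q(\rho\sigma^{-1})$ for a fixed probability vector $q$ on $S_n$, and $\sum_\sigma q(\rho\sigma^{-1}) = 1$, so a Cayley walk preserves the uniform measure. A short Bayes computation (using this and the fact that all classes $S_n^{T''}$ have equal size) gives $\Pr[T_u = T'\mid v=\pi] = \sum_{w\in S_n^{T'}} q(\pi w^{-1})$. Writing $T' = \{(i_k,j'_k)\}$ and using that $\pi$ is fixed in $S_n^T$, the map $w\mapsto \pi w^{-1}$ is a bijection from $S_n^{T'}$ onto $\{g : g(j'_k)=j_k\ \forall k\}$, so this probability equals $\Pr_{g\sim q}[\,g(j'_k)=j_k\ \forall k\,]$ --- and in particular it does not depend on which $\pi\in S_n^T$ we took. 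Unwinding the definition of $N(T)$ in the same way (sample $\pi'\in_R S_n^T$, $\sigma\sim\mathrm{M}\pi'$, set $T' = T_\sigma$) and using the bijection $\sigma\mapsto\sigma(\pi')^{-1}$ from $S_n^{T'}$ onto $\{g : g(j_k)=j'_k\ \forall k\}$, one gets $\Pr[N(T)=T'] = \Pr_{g\sim q}[\,g(j_k)=j'_k\ \forall k\,]$.

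Finally I would invoke the symmetry of the walk: the transposition operator and all its powers are symmetric, i.e.\ $q(g)=q(g^{-1})$ (a transposition is its own inverse, and a product of $s$ i.i.d.\ uniform transpositions is equidistributed with its inverse). Hence $\Pr_{g\sim q}[g(j'_k)=j_k\ \forall k] = \Pr_{g\sim q}[g^{-1}(j'_k)=j_k\ \forall k] = \Pr_{g\sim q}[g(j_k)=j'_k\ \forall k]$, which matches the two expressions above and finishes the proof.

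The only point requiring any care is this last symmetry step. The identity $\Pr[T_u=T'\mid v=\pi]=\Pr[N(T)=T']$ equates the law of the source-class of a walk that lands at $\pi\in S_n^T$ with the law of the target-class of a walk that starts in $S_n^T$, which is a time-reversal statement; it genuinely uses that $\mathrm{M}$ is reversible with respect to the uniform measure. This holds for all walks to which we apply the claim (the transposition walk and its powers), though it would fail for an arbitrary non-reversible Cayley walk. Everything else is bookkeeping with the defining formulas, together with the elementary observation that $\rho\mapsto\pi\rho^{-1}$ carries one double coset bijectively onto another.
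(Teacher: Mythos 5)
Your proof is correct, and it is in fact more than the paper offers: the paper simply declares Claim~\ref{claim:imm} ``immediate'' and records no argument, so there is nothing to compare against except the definitions themselves. Your decomposition of $(\mathrm{M}f)(\pi)=\E[f(u)\mid v=\pi]$ over the source class $T_u$, the identification of the inner conditional expectation with $\mathrm{M}_{S_n^{T'}\to S_n^T}f_{T'}(\pi)$, and the two Bayes computations giving $\Pr[T_u=T'\mid v=\pi]=\Pr_{g\sim q}[g(j_k')=j_k\ \forall k]$ and $\Pr[N(T)=T']=\Pr_{g\sim q}[g(j_k)=j_k'\ \forall k]$ are all right. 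More importantly, the caveat you isolate is genuine: since $\mathrm{M}f$ is defined by conditioning on the \emph{target} while $N(T)$ is defined by a \emph{forward} step, the claim as stated equates a time-reversed quantity with a forward one, and for a non-reversible Cayley walk it fails (e.g.\ $q$ concentrated on a single $3$-cycle already gives different source and target classes). So the claim implicitly uses that the step distribution is symmetric, $q(g)=q(g^{-1})$ --- equivalently, that $\mathrm{M}$ is self-adjoint on $L^2(S_n)$ --- which holds for the transposition walk and all its powers, the only walks to which Proposition~\ref{prop:globalness of Tf} is applied. Your proof would be a worthwhile addition to the paper precisely because it surfaces this hypothesis; as written, the paper's definitions of a ``Cayley random walk'' do not guarantee it.
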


We are now ready to prove Proposition~\ref{prop:globalness of Tf}.
\begin{proof}[Proof of Proposition~\ref{prop:globalness of Tf}]
By Claim~\ref{claim:imm}, we have $g_{T}=\mathbb{E}_{T'\sim T}\mathrm{M}_{S_{n}^{T'}\to S_{n}^{T}}f_{T'}$.
Using Claim~\ref{claim:pure degree d}
and the linearity of the operator $f\mapsto f^{= d}$, we get
\[
\left(g_{T}\right)^{= d}=\mathbb{E}_{T'\sim N(T)}\mathrm{M}_{S_{n}^{T'}\to S_{n}^{T}}\left(\left(f_{T'}\right)^{= d}\right).
\]
Summing this up using linearity again, we conclude that
\[
\left(g_{T}\right)^{\leq d}=\mathbb{E}_{T'\sim N(T)}\mathrm{M}_{S_{n}^{T'}\to S_{n}^{T}}\left(\left(f_{T'}\right)^{\leq d}\right).
\]
Taking norms and using the triangle inequality gives us that
\[
\|\left(g_{T}\right)^{\le d}\|_{2}
\le
\mathbb{E}_{T'\sim N(T)}\left\Vert \mathrm{M}_{S_{n}^{T'}\to S_{n}^{T}}\left(\left(f_{T'}\right)^{\le d}\right)\right\Vert _{2}
\leq
\max_{T'\colon \mathrm{M}\text{ consistent with }(T,T')}\left\Vert \mathrm{M}_{S_{n}^{T'}\to S_{n}^{T}}\left(\left(f_{T'}\right)^{\le d}\right)\right\Vert _{2}.
\]
The proof is now concluded by appealing to Fact~\ref{fact:contraction}.
\end{proof}

\subsection{A weak level-$d$ inequality}
The last ingredient we will need in the proof of Theorem \ref{thm:lvl_d} is a weak version of the level-$d$ inequality,
which does not take the globalness of $f$ into consideration.
\begin{lem}
\label{lem:weak level d} Let $C$ be sufficiently large, let $n\ge\log\left(1/\epsilon\right)^{d}C^{d^{2}}$,
and let $f\colon S_{n}\to\left\{ 0,1\right\} $ satisfy $\|f\|_{2}\le\epsilon$.
Then
\[
\|f^{\le d}\|_{2}\le n^{d}\log\left(1/\epsilon\right)^{O\left(d\right)}\epsilon^{2}.
\]
\end{lem}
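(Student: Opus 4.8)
The plan is to bound $\norm{f^{\le d}}_\infty$ directly and then convert to an $L^2$ bound using that $f$ is $\set{0,1}$-valued, so that $\norm{f}_1 = \E[f] = \norm{f}_2^2 \le \eps^2$. Since $f^{\le d}$ is the orthogonal projection of $f$ onto $V_d$, for every $\pi$ we have $f^{\le d}(\pi) = \inner{f}{K_\pi}$, where $K_\pi \in V_d$ is the reproducing kernel of $V_d$ at $\pi$ (the unique function in $V_d$ with $g(\pi) = \inner{g}{K_\pi}$ for all $g \in V_d$); as $f \ge 0$ this yields $\card{f^{\le d}(\pi)} \le \norm{K_\pi}_\infty \E[f] \le \norm{K_\pi}_\infty \eps^2$, so it suffices to bound $\norm{K_\pi}_\infty$. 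Using the inversion formula $g^{=\lambda}(\pi) = {\sf dim}(\lambda)\,\E_\sigma[g(\sigma)\chi_\lambda(\pi\sigma^{-1})]$ together with the decomposition $V_d = \bigoplus_{\lambda\,:\,\lambda_1 \ge n-d}V_{=\lambda}$, the reproducing kernel of $V_d$ is $K_\pi(\sigma) = \sum_{\lambda\,:\,\lambda_1 \ge n-d}{\sf dim}(\lambda)\chi_\lambda(\pi\sigma^{-1})$, and since $\card{\chi_\lambda(g)} \le {\sf dim}(\lambda)$ for all $g$,
\[
 \norm{K_\pi}_\infty \le \sum_{\lambda\,:\,\lambda_1 \ge n-d}{\sf dim}(\lambda)^2 = {\sf dim}(V_d).
\]

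Next I would bound ${\sf dim}(V_d)$ crudely by the size of the spanning set $\set{1_T : \card{T}\le d}$ of $V_d$: the number of consistent sets $T$ of size $k$ is $\binom{n}{k}^2 k! \le n^{2k}$, so ${\sf dim}(V_d) \le \sum_{k=0}^d n^{2k} \le (d+1)n^{2d}$. Combining with the previous paragraph, $\norm{f^{\le d}}_\infty \le (d+1)n^{2d}\eps^2$, and hence by Hölder's inequality and the fact that $f^{\le d}$ is the projection of $f$,
\[
 \norm{f^{\le d}}_2^2 = \inner{f^{\le d}}{f} \le \norm{f^{\le d}}_\infty \norm{f}_1 \le (d+1)n^{2d}\eps^2\cdot\eps^2 = (d+1)n^{2d}\eps^4,
\]
so $\norm{f^{\le d}}_2 \le \sqrt{d+1}\,n^d\eps^2 \le n^d\log(1/\eps)^{O(d)}\eps^2$, as claimed.

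I do not expect a serious obstacle here; given the explicit reproducing kernel the argument is a short computation. The only point needing a little care is the last inequality $\sqrt{d+1} \le \log(1/\eps)^{O(d)}$, which is where one uses that $1/\eps$ (equivalently $n$, via the hypothesis) is large compared to $d$ — and one should note that for $\eps$ bounded away from $0$ and $1$ the bound is immediate from Parseval ($\norm{f^{\le d}}_2 \le \norm{f}_2 \le \eps$), so only the regime of small $\eps$ matters. For context, the ``textbook'' alternative — mimicking the hypercube proof of the level-$d$ inequality — runs $\norm{f^{\le d}}_2^2 = \inner{f^{\le d}}{f} \le \norm{f^{\le d}}_q\norm{f}_{q/(q-1)} \le \big(q^{O(d)}n^d\norm{f^{\le d}}_2\big)\eps^{2(q-1)/q}$, using Lemma~\ref{lem:traditional hypercontractivity } for the first factor and Boolean-ness of $f$ for the second; rearranging and choosing $q = \Theta(\log(1/\eps))$ gives the same conclusion, the key point being that $\norm{f^{\le d}}_2$ is kept symbolic on both sides (otherwise one only obtains the weaker exponent $\eps^{3/2}$).
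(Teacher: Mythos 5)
Your argument is correct, and your main route is genuinely different from the paper's. The paper proves this lemma exactly by the ``textbook alternative'' you sketch at the end: Hölder with $q=\log(1/\eps)$, the crude hypercontractive bound $\norm{f^{\le d}}_q\le q^{O(d)}n^d\norm{f^{\le d}}_2$ from Lemma~\ref{lem:traditional hypercontractivity }, the Boolean identity $\norm{f}_{q/(q-1)}=\E[f]^{(q-1)/q}=O(\eps^2)$, and a rearrangement. Your primary argument instead bounds the projection pointwise via the reproducing kernel $K_\pi(\sigma)=\sum_{\lambda_1\ge n-d}{\sf dim}(\lambda)\chi_\lambda(\pi\sigma^{-1})$, using $\card{\chi_\lambda}\le{\sf dim}(\lambda)$ to get $\norm{K_\pi}_\infty\le{\sf dim}(V_d)\le (d+1)n^{2d}$, and then a single $(\infty,1)$-Hölder step. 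This buys several things: it is self-contained (no hypercontractive input at all, only the standard isotypic projection formula that the paper already invokes in Lemma~\ref{lem:multiplying_by_parity}), it does not use the hypothesis $n\ge\log(1/\eps)^dC^{d^2}$ anywhere (that hypothesis exists in the paper's proof solely to invoke Lemma~\ref{lem:traditional hypercontractivity }), and it yields the cleaner factor $\sqrt{d+1}$ in place of $\log(1/\eps)^{O(d)}$. The paper's route, by contrast, recycles machinery it has already built and keeps the argument stylistically uniform with Proposition~\ref{prop:lvl_d}. Your caveat about $\sqrt{d+1}$ versus $\log(1/\eps)^{O(d)}$ for $\eps$ not small is handled appropriately; note that the paper's own proof carries the same implicit restriction to small $\eps$ (it needs $q=\log(1/\eps)\ge 2$ to round to an even integer), so this is not a gap in your argument relative to theirs.
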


\begin{proof}
Set $q=\log(1/\eps)$, and without loss of generality assume $q$ is an even integer (otherwise we may change $q$ by a constant factor to ensure that).
Using H\"{o}lder's inequality, Lemma~\ref{lem:traditional hypercontractivity },
and the fact that $\|f\|_{q/(q-1)}=O\left(\epsilon^{2}\right)$,
we obtain
\[
\norm{f^{\le d}}_{2}^{2}  =\left\langle f^{\le d},f\right\rangle \le
\norm{f^{\le d}}_{q}\norm{f}_{q/(q-1)}
\le
\log\left(1/\epsilon\right)^{O\left(d\right)}n^{d}\norm{f^{\le d}}_{2}\epsilon^{2},
\]
 and the lemma follows by rearranging.
\end{proof}

\subsection{Interchanging truncations and derivatives with small errors}
\begin{lem}
\label{lem:upper bound on restriction  of truncation }
There is $C>0$, such that the following holds for $n\geq 2^{C\cdot d^3}$.
For all derivatives $\mathrm{D}$ of order $t\leq d$ we have:
\[
\norm{D\left(f^{\le d}\right)}_{2}
\le 2^{O\left(d\right)^{4}}\max_{t-\text{derivative }\mathrm{D}'}\norm{\left(\mathrm{D}'f\right)^{\le d-t}}_{2}
 +\left(\frac{1}{n}\right)^{10 d}\norm{f^{\le 4d}}_2.
\]
\end{lem}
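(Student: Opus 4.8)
The plan is to route everything through the auxiliary function $g = P(\mathrm{T})(f^{\le 4d})$ constructed in Lemma~\ref{lem:operator that makes everything nice}, which satisfies $\norm{g - f^{\le d}}_2 \le n^{-19d}\norm{f^{\le 4d}}_2$ and, crucially, has a clean multi-level structure controlled by a polynomial $P$ of small spectral norm $\norm{P} \le 2^{Cd^4}$. Writing $g = P(\mathrm{T})(f^{\le 4d})$, the operator $P(\mathrm{T})$ is a Cayley random walk operator (a polynomial in $\mathrm{T}$), so it commutes with all order-$t$ Laplacians by the fourth item of Lemma~\ref{lem:FOW}, and hence commutes with every order-$t$ derivative operator $\mathrm{D}$, since a derivative is a Laplacian followed by a restriction and Proposition~\ref{prop:globalness of Tf} handles the restriction side. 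Concretely: first, $\norm{\mathrm{D}(f^{\le d})}_2 \le \norm{\mathrm{D}(g)}_2 + \norm{\mathrm{D}(f^{\le d} - g)}_2$, and the second term is at most $2^{O(d)}\norm{f^{\le d}-g}_2 \le (1/n)^{18d}\norm{f^{\le 4d}}_2$ because a derivative of order $t\le d$ has operator norm at most $2^{O(d)}$ on $L^2$ (it is a bounded composition of Laplacians, each of norm $\le 2$, and a restriction, which is an $L^2$-contraction up to the normalization factor). This already disposes of the error term with room to spare.

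Next I would analyze $\mathrm{D}(g)$. Decompose $f^{\le 4d} = \sum_{\ell=0}^{4d} f^{=\ell}$ into pure levels, so $g = \sum_{\ell=0}^{4d} P(\mathrm{T}) f^{=\ell}$; since $\mathrm{T}$ preserves each $V_{=\ell}(S_n)$ (item 1 of Lemma~\ref{lem:FOW}), so does $P(\mathrm{T})$, and the calculation in Lemma~\ref{lem:operator that makes everything nice} shows that $P(\mathrm{T})$ acts on $V_{=\ell}$ as multiplication by a scalar that is within $n^{-19d}$ of $1$ for $\ell \le d$ and within $n^{-19d}$ of $0$ for $d < \ell \le 4d$. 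Writing $\mathrm{D} = \mathrm{D}_{S \to S'}$ for $S$ of size $t$, I apply $\mathrm{D}$ to $g$ and, using that $\mathrm{D}$ commutes with $P(\mathrm{T})$ in the sense established above (the Laplacian part commutes exactly, and the restriction part is handled via Claim~\ref{claim:pure degree d} / Claim~\ref{claim:symmerty of transitioning operator}, since $\mathrm{M}_{S_n^{T}\to S_n^{T'}}$ is a homomorphism of $S_n^I$-modules and thus preserves the degree decomposition on the restricted space), I get that $\mathrm{D}(g) = \mathrm{D}\big(\sum_{\ell} P(\mathrm{T})f^{=\ell}\big)$ where the level-$\ell$ pieces with $\ell > d$ contribute at most $n^{-19d}\cdot 2^{O(d)}\norm{f^{\le 4d}}_2$ to the $L^2$ norm. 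So up to a negligible error, $\norm{\mathrm{D}(g)}_2 \approx \norm{\mathrm{D}(f^{\le d})}_2$ — which is circular — so instead I must relate $\mathrm{D}(g)$ directly to derivatives of $f$ itself.

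The right move is: since $P(0)$ may be taken to vanish on the relevant spectrum in a controlled way, write $g = P(\mathrm{T})(f^{\le 4d})$ and push $P(\mathrm{T})$ through $\mathrm{D}$ to get $\mathrm{D}(g) = \tilde P(\mathrm{T}')\big(\mathrm{D}'(f^{\le 4d})\big)$ for an induced Cayley operator $\tilde P(\mathrm{T}')$ on the restricted space $S_n^{S'}$ with $\norm{\tilde P} \le \norm{P} \le 2^{Cd^4}$, and some order-$t$ derivative $\mathrm{D}'$; this uses Proposition~\ref{prop:globalness of Tf} (in its operator form) together with the fact that $\mathrm{D}' f^{\le 4d}$ has degree $\le 4d - t$ by Claim~\ref{claim:degree reduction}. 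Then $\norm{\mathrm{D}(g)}_2 \le \norm{\tilde P}\cdot \norm{\mathrm{D}'(f^{\le 4d})}_2^{\text{(walked)}}$, and applying Proposition~\ref{prop:globalness of Tf} once more bounds this by $2^{Cd^4}\max_{\mathrm{D}''}\norm{(\mathrm{D}'' f)^{\le 4d - t}}_2$ over order-$t$ derivatives $\mathrm{D}''$ — but I actually want the truncation at $d - t$, not $4d - t$. To get there I invoke Claim~\ref{claim:Stupidtrivial}: since $\mathrm{D}'' f$ restricted appropriately is a sum of pure-degree pieces, and the walk operator only has significant eigenvalues on levels $\le d$, the levels between $d-t$ and $4d-t$ are killed by $\tilde P$ up to $n^{-19d}$ error, leaving $\norm{(\mathrm{D}'' f)^{\le d-t}}_2$. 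Collecting the $2^{O(d)^4}$ factors from $\norm{P}$ and from the $2^{O(d)}$ derivative-norm bounds, and the $n^{-10d}$-type errors (which dominate the $n^{-18d}$ and $n^{-19d}$ terms), yields exactly the claimed inequality. The main obstacle is making the "push $P(\mathrm{T})$ through $\mathrm{D}$" step precise: one must verify that the restriction-compatibility hypothesis of Proposition~\ref{prop:globalness of Tf} holds for the powers of $\mathrm{T}$ appearing in $P$ (powers of the transposition walk are Cayley and compatible with the relevant $(T,T')$ pairs whenever the underlying permutations match on the fixed coordinates, which they do by construction of the derivative), and that the eigenvalue bookkeeping on the restricted space $S_n^{S'} \cong S_{n-t}$ matches that of Lemma~\ref{lem:operator that makes everything nice} with $n$ replaced by $n - t$ — a harmless change given $n \ge 2^{Cd^3}$.
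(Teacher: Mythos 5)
You have assembled the right toolkit --- the multiplier $P(\mathrm{T})$ from Lemma~\ref{lem:operator that makes everything nice}, the commutation of $\mathrm{T}$ with Laplacians, Proposition~\ref{prop:globalness of Tf}, and Claims~\ref{claim:degree reduction} and~\ref{claim:Stupidtrivial} --- but the assembly has a genuine gap at the decisive step. After pushing the derivative through the walk you arrive at a bound of the form $\|\tilde P\|\max_{\mathrm{D}''}\|(\mathrm{D}''f)^{\le 4d-t}\|_2$ and then try to lower the truncation level from $4d-t$ to $d-t$ by arguing that the walk "kills the intermediate levels." This cannot work at that stage: once you have replaced the operator by its spectral norm $\|\tilde P\|$, the eigenvalue structure of $P(\mathrm{T})$ is gone, and in any case that structure refers to the degree decomposition of $L^2(S_n)$ \emph{before} the derivative is applied; it says nothing about which pure-degree components of $\mathrm{D}''f$, a function on a double coset, survive. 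There is also no operator identity $\mathrm{D}(g)=\tilde P(\mathrm{T}')(\mathrm{D}'(f^{\le 4d}))$ with $\mathrm{T}'$ a transposition walk on the restricted group: Proposition~\ref{prop:globalness of Tf} yields only an inequality on truncated norms via a maximum over compatible restrictions, not a conjugation of $P(\mathrm{T})$ into a polynomial in the restricted walk.

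The missing idea is to install the truncation at level $d-t$ on the left-hand side \emph{first}, where it is free: by Claim~\ref{claim:degree reduction}, $\mathrm{D}(f^{\le d})$ has degree at most $d-t$, so $\mathrm{D}(f^{\le d}) = (\mathrm{D}(f^{\le d}))^{\le d-t}$. One then compares this with $(\mathrm{D}(\mathrm{T}_d f))^{\le d-t}$ (applied to the full $f$, not to $f^{\le 4d}$): the contribution of levels $k>4d$ vanishes after the $\le d-t$ projection, because $\mathrm{T}_d f^{=k}$ has pure degree $k$ by Claim~\ref{claim:pure degree d} and its derivative is perpendicular to functions of degree $k-2t-1\ge d-t$, and the remaining discrepancy is $\|\mathrm{D}(f^{\le d}-\mathrm{T}_d f^{\le 4d})\|_2\le n^{2t}\cdot n^{-19d}\|f^{\le 4d}\|_2$. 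Note here that the $L^2$ operator norm of a $t$-derivative is on the order of $n^{2t}$, not $2^{O(d)}$ as you claim: the restriction to a double coset of codimension $2t$ can inflate the $2$-norm by a factor of $n^{t}$. The error term still survives, since $n^{2t-19d}\le n^{-10d}$. Finally, writing $(\mathrm{D}(\mathrm{T}_d f))^{\le d-t}=((\mathrm{T}_d\mathrm{L}f)_{S\to R})^{\le d-t}$ via the Laplacian commutation, Proposition~\ref{prop:globalness of Tf} applied to $h=\mathrm{L}f$ transports the $\le d-t$ truncation through the walk and the restriction simultaneously, landing directly on $\max_{\mathrm{D}'}\|(\mathrm{D}'f)^{\le d-t}\|_2$ at a cost of only $\|P\|\le 2^{O(d^4)}$. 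That single reordering --- truncate first, then invoke Proposition~\ref{prop:globalness of Tf} --- is what your write-up is missing, and without it the step from $4d-t$ down to $d-t$ does not go through.
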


\begin{proof}
Let $\mathrm{T}_{d}=P\left(\text{\ensuremath{\mathrm{T}}}\right)$
be as in Lemma \ref{lem:operator that makes everything nice},
and write $f^{\leq d} = \mathrm{T}_d (f^{\leq 4d}) + g$, where
$\norm{g}_2\leq n^{-19d} \norm{f^{\leq 4d}}_2$.
Let $S$ be a consistent restriction of $t$ coordinates, and
let $\mathrm{D}$ be a derivative along $S$.
Then there is $R\subseteq L$ of size $t$ such that $\mathrm{D}f=\left(\mathrm{L}f\right)_{S\rightarrow R}$.
By Claim~\ref{claim:degree reduction},
the degree of $\mathrm{D}(f^{\leq d})$ is at most $d-t$, thus
\begin{equation}\label{eq:stupid3}
\mathrm{D} (f^{\leq d}) = \left(\mathrm{D} (f^{\leq d})\right)^{\leq d-t}.
\end{equation}
We want to compare the right-hand side with $\left(\mathrm{D}\left(\mathrm{T}_{d} f\right)\right)^{\leq d-t}$, but first
we show that in it one may truncate all degrees higher than $4d$ in $f$.
Note that by Claim~\ref{claim:pure degree d}, for each $k > 4d$ the function $\mathrm{T}_{d} f^{=k}$ has pure
degree $k$, so $\mathrm{D}(\mathrm{T}_{d} f^{=k})$ is perpendicular to degree $k-t-1$ functions. Since $k-2t-1 \geq d-t$,
we have that its level $d-t$ projection is $0$, so
$\left(\mathrm{D}\left(\mathrm{T}_{d} f\right)\right)^{\leq d-t} = \left(\mathrm{D}\left(\mathrm{T}_{d} f^{\leq 4d}\right)\right)^{\leq d-t}$.
It follows that
\begin{align}
\norm{\mathrm{D} (f^{\leq d}) - \left(\mathrm{D}\left(\mathrm{T}_{d} f\right)\right)^{\leq d-t}}_2
=\norm{\left(\mathrm{D}\left(f^{\leq d} - \mathrm{T}_{d} (f^{\leq 4d})\right)\right)^{\leq d-t}}_2
\leq \norm{\mathrm{D} g}_2
&\leq n^{2t}\norm{g}_2 \notag\\\label{eq:aux_1}
&\leq n^{2t-19d}\norm{f^{\leq 4d}}_2.
\end{align}

Our task now is to bound $\norm{\left(\mathrm{D}\left(\mathrm{T}_{d} f\right)\right)^{\leq d-t}}_2$.
Since $\mathrm{T}$ commutes with Laplacians, it follows that $\mathrm{T}_d$ also commutes with Laplacians,
and so
\begin{equation}\label{eq:aux_2}
\left(\mathrm{D}\left(\mathrm{T}_{d} f\right)\right)^{\leq d-t}
= ((\mathrm{L} \mathrm{T}_{d} f )_{S\rightarrow R})^{\leq d-t}
= ((\mathrm{T}_{d} \mathrm{L} f )_{S\rightarrow R})^{\leq d-t}.
\end{equation}
By Proposition \ref{prop:globalness of Tf}, for all $i$ and $h\colon S_n\to\mathbb{R}$ we have
\[
\|\left(\left(\mathrm{T}^{i}h\right)_{S}\right)^{\le d}\|_{2}\le
\max_{S'=\left\{ \left(i_{1},j_{1}'\right),\ldots,\left(i_{t},j_{t}'\right)\right\} }\norm{\left(h_{S'}\right)^{\le d}}_2,
\]
and so
\[
\|\left(\left(\mathrm{T}_d h\right)_{S}\right)^{\le d}\|_{2}\le
\norm{P}\max_{S'=\left\{ \left(i_{1},j_{1}'\right),\ldots,\left(i_{t},j_{t}'\right)\right\} }\norm{\left(h_{S'}\right)^{\le d}}_2
\leq
2^{O(d^4)}\max_{S'=\left\{ \left(i_{1},j_{1}'\right),\ldots,\left(i_{t},j_{t}'\right)\right\} }\norm{\left(h_{S'}\right)^{\le d}}_2.
\]
Applying this for $h = Lf$ gives that
\begin{equation}\label{eq:aux_3}
\norm{((\mathrm{T}_{d} \mathrm{L} f )_{S\rightarrow R})^{\leq d-t}}_2
 \le 2^{O\left(d^{4}\right)}\max_{R'} \norm{\left(\left(\mathrm{L}f\right)_{S\rightarrow R'}\right)^{\le d-t}}_{2}
 =2^{O\left(d^{4}\right)}\max_{D'} \norm{\left(\mathrm{D}'f\right)^{\le d-t}}_{2},
\end{equation}
where the last transition is by the definition of derivatives. Combining~\eqref{eq:aux_1},~\eqref{eq:aux_2},~\eqref{eq:aux_3}
and using the triangle inequality finishes the proof.
\end{proof}

\subsection{Proof of the level-$d$ inequality}
We end this section by deriving the following proposition, which by Claim~\ref{claim:globalness equivalence} implies Theorem~\ref{thm:lvl_d}.
\begin{prop}\label{prop:lvl_d}
There exists an absolute constant
$C>0$ such that the following holds for all $d\in\mathbb{N}$, $\eps>0$ and $n\geq 2^{C\cdot d^3} \log(1/\eps)^{C\cdot d}$. Let $f\colon S_n\to \mathbb{Z}$ be a function, such
that for all $t\le d$ and all $t$-derivatives $\mathrm{D}$ we have
$\|\mathrm{D}f\|_{2}\le\epsilon$. Then
\[
\norm{f^{\le d}}_{2}\le 2^{Cd^{4}}\epsilon^{2}\log\left(1/\epsilon\right)^{Cd}.
\]
\end{prop}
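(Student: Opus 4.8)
The plan is to induct on $d$. The base case $d=0$ is trivial: $f^{\le 0}$ is the constant function equal to $\E[f]$, and $\|f^{\le 0}\|_2^2 = \E[f]^2 \le \|f\|_2^2 = \|\mathrm{D}f\|_2^2 \le \eps^2$ for the $0$-derivative $\mathrm{D}=I$. For the inductive step, suppose the statement holds for all smaller values of $d$. The first move is to run the argument ``from the hypercube'' sketched in the introduction: choosing $q = \log(1/\eps)$ (rounded to an even integer), we have $\|f^{\le d}\|_2^2 = \langle f^{\le d}, f\rangle \le \|f^{\le d}\|_q \|f\|_{q/(q-1)}$, and since $f$ is integer-valued, $\|f\|_{q/(q-1)} \le \|f\|_2^{2(q-1)/q} \le \eps^{2(q-1)/q}$. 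To apply Theorem~\ref{thm:Reasonability} to $f^{\le d}$ and bound $\|f^{\le d}\|_q$, I need a globalness bound on $f^{\le d}$. By Claim~\ref{claim:globalness equivalence}, it suffices to control $\|\mathrm{D}(f^{\le d})\|_2$ for all derivatives $\mathrm{D}$ of order $t \le 2d$. This is exactly where Lemma~\ref{lem:upper bound on restriction of truncation} comes in: for $t\le d$, it reduces $\|\mathrm{D}(f^{\le d})\|_2$ (up to a factor $2^{O(d^4)}$ and a negligible $n^{-10d}\|f^{\le 4d}\|_2$ error) to $\max_{\mathrm{D}'}\|(\mathrm{D}'f)^{\le d-t}\|_2$ over $t$-derivatives $\mathrm{D}'$.

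Now the key observation: $\mathrm{D}'f$ is a function whose derivatives of order $s$ satisfy $\|\mathrm{D}''(\mathrm{D}'f)\|_2 = \|(\mathrm{D}'' \mathrm{D}')f\|_2 \le \eps$ whenever $s + t \le d$ (a composition of a $t$-derivative and an $s$-derivative is an $(s{+}t)$-derivative, using that restrictions and derivatives commute, as noted in the proof of Claim~\ref{claim:globalness equivalence}), and moreover $\mathrm{D}'f$ is still integer-valued. So if $t \ge 1$, then $\mathrm{D}'f$ satisfies the hypotheses of the proposition with $d$ replaced by $d - t \le d - 1$, and the induction hypothesis gives $\|(\mathrm{D}'f)^{\le d-t}\|_2 \le 2^{C(d-t)^4}\eps^2 \log(1/\eps)^{C(d-t)} \le 2^{Cd^4}\eps^2\log(1/\eps)^{Cd}$. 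Plugging back through Lemma~\ref{lem:upper bound on restriction of truncation}, we get $\|\mathrm{D}(f^{\le d})\|_2 \le 2^{O(d^4)}\eps^2\log(1/\eps)^{O(d)} + n^{-10d}\|f^{\le 4d}\|_2$ for all derivatives $\mathrm{D}$ of order $1 \le t \le d$. For the case $t=0$ (i.e.\ $\mathrm{D}=I$), we just use $\|f^{\le d}\|_2 \le \|f\|_2 \le \eps$ directly. Handling orders $d < t \le 2d$ is automatic since $\mathrm{D}(f^{\le d}) = 0$ there by Claim~\ref{claim:degree reduction}. Also the error term $n^{-10d}\|f^{\le 4d}\|_2 \le n^{-10d}\|f\|_2 \le n^{-10d}\eps$ is swallowed by the main term given $n \ge 2^{Cd^3}$. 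Thus Claim~\ref{claim:globalness equivalence}(2) yields that $f^{\le d}$ is $(2d, \delta)$-global with $\delta = 2^{O(d^4)}\eps^2\log(1/\eps)^{O(d)}$ — a massive improvement over the trivial $(2d, 2^{O(d)}\eps)$-globalness.

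With this, applying Theorem~\ref{thm:Reasonability} to $f^{\le d}$ with this $\delta$ (valid since $n \ge q^{Cd^2}$ is implied by our hypothesis on $n$, as $q = \log(1/\eps)$) gives $\|f^{\le d}\|_q \le q^{O(d^3)} \delta^{(q-2)/q} \|f^{\le d}\|_2^{2/q}$. Since $\|f^{\le d}\|_2 \le \eps$, we have $\|f^{\le d}\|_2^{2/q} \le \eps^{2/q}$, and $\delta^{(q-2)/q} \le \delta \cdot \delta^{-2/q}$; with $q = \log(1/\eps)$ the factor $q^{O(d^3)}$ becomes $\log(1/\eps)^{O(d^3)}$ and $\delta^{-2/q}$, $\eps^{-2/q}$ contribute only constant factors. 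Combining with $\|f^{\le d}\|_2^2 \le \|f^{\le d}\|_q \cdot \eps^{2(q-1)/q}$ and simplifying, we obtain $\|f^{\le d}\|_2^2 \le 2^{O(d^4)}\delta \cdot \eps^{2} \log(1/\eps)^{O(d^3)} \cdot \eps^{-O(1/q)}$; since $\delta \le 2^{O(d^4)}\eps^2\log(1/\eps)^{O(d)}$, and crucially the ``gain'' of $\eps^2$ from $\delta$ absorbs one extra power of $\eps$ relative to the naive argument, this gives $\|f^{\le d}\|_2^2 \le 2^{O(d^4)}\eps^4\log(1/\eps)^{O(d^3)}$. Here one needs to be slightly careful: the exponent of $\log(1/\eps)$ coming out is $O(d^3)$ from Theorem~\ref{thm:Reasonability}, which is worse than the claimed $O(d)$; to fix this, rather than bounding $\|f^{\le d}\|_q$ directly via Theorem~\ref{thm:Reasonability}, I would instead re-invoke Theorem~\ref{thm:Hypercontractivity} together with Lemma~\ref{lem:Bootstrapping the globalness} applied to $f^{\le d}$ (now known to be $\delta$-bounded-global hence $\delta$-global with absolute constant), which gives $\|\mathrm{T}^{(\rho)}f^{\le d}\|_q \le \delta^{(q-2)/q}\|f^{\le d}\|_2^{2/q}$ with no $q^{O(d^3)}$ loss, and only a $\log(1/\eps)^{O(d)}$ factor enters through the approximating-polynomial step of Lemma~\ref{lem:Approximation}. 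Carrying the bookkeeping carefully through this route yields the stated $\|f^{\le d}\|_2 \le 2^{Cd^4}\eps^2\log(1/\eps)^{Cd}$, closing the induction.

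The main obstacle I anticipate is the bookkeeping of constants in the inductive step — specifically, making sure the constant $C$ in the exponents $2^{Cd^4}$ and $\log(1/\eps)^{Cd}$ is chosen large enough in advance that the $2^{O(d^4)}$ and $\log(1/\eps)^{O(d)}$ \emph{multiplicative} losses incurred at each level (from Lemma~\ref{lem:upper bound on restriction of truncation}, from Claim~\ref{claim:globalness equivalence}, and from the hypercontractivity application) do not accumulate badly when chained through $d$ levels of recursion. Because the recursion depth is only $d$ and each level multiplies by $2^{O(d^4)}$, the total accumulation is $2^{O(d^5)}$ — so actually the honest bound one should aim for may be $2^{O(d^5)}$ rather than $2^{O(d^4)}$, unless the recursion is set up so that deeper levels use the \emph{smaller} parameter $d - t$ in their loss factors (which they do, since $\mathrm{D}'f$ has degree $\le d - t$); exploiting this, the geometric-type sum $\sum_t (d-t)^4$ telescopes to $O(d^4)$ and the claimed exponent is recovered. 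A secondary subtlety is verifying that all the side conditions on $n$ required by Lemma~\ref{lem:operator that makes everything nice}, Lemma~\ref{lem:upper bound on restriction of truncation}, Theorem~\ref{thm:Reasonability} (or Lemma~\ref{lem:Approximation}), and the base-level application of Lemma~\ref{lem:Bootstrapping the globalness} are all simultaneously implied by the single hypothesis $n \ge 2^{Cd^3}\log(1/\eps)^{Cd}$; this is routine but must be checked.
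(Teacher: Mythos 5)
Your overall architecture matches the paper's: induction on $d$, Lemma~\ref{lem:upper bound on restriction  of truncation } to pass from derivatives of $f^{\le d}$ to truncations of derivatives of $f$, the induction hypothesis applied to the integer-valued functions $\mathrm{D}'f$ on $S_{n-t}$, and a H\"{o}lder-plus-hypercontractivity endgame. The genuine gap is the step where you conclude that $f^{\le d}$ is $(2d,\delta)$-global with $\delta = 2^{O(d^4)}\eps^2\log^{O(d)}(1/\eps)$. Claim~\ref{claim:globalness equivalence}(2) requires a uniform bound on \emph{all} derivatives of order $\le 2d$, including the $0$-derivative, i.e.\ on $\norm{f^{\le d}}_2$ itself --- and that is exactly the quantity the proposition is trying to bound. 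At this point you only know $\norm{f^{\le d}}_2\le\eps$, so the parameter you can legitimately feed into Claim~\ref{claim:globalness equivalence}(2) is $\max(\eps,\delta)=\eps$, which returns precisely the ``trivial'' $(2d,2^{O(d)}\eps)$-globalness you were trying to beat; running your endgame with that parameter yields $\norm{f^{\le d}}_2^2\lesssim\eps^3$, not the needed $\eps^4$. (Indeed, $(2d,\delta)$-globalness applied to $T=\emptyset$ literally asserts $\norm{f^{\le d}}_2\le\delta$, so it cannot serve as an unconditional intermediate step.) The paper closes this with a dichotomy that is missing from your write-up: either $\norm{f^{\le d}}_2^2\le e^{2Cd^4}\eps^4\log^{2Cd}(1/\eps)$ already and we are done, or else $\norm{f^{\le d}}_2$ dominates every derivative bound, so $f^{\le d}$ is $(2d,4^d\norm{f^{\le d}}_2)$-global --- global \emph{relative to its own $2$-norm}. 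In the second case the computation reads $\norm{f^{\le d}}_2^2\le\rho^{-O(d)}\inner{\mathrm{T}^{(\rho)}f^{\le d}}{f}\le\rho^{-O(d)}4^d\norm{f^{\le d}}_2\cdot\eps^{2(q-1)/q}$, and dividing through by $\norm{f^{\le d}}_2$ gives $\norm{f^{\le d}}_2\lesssim\eps^2$ as desired.

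Two secondary points. First, your proposed repair of the $\log^{O(d^3)}$ exponent does not work as stated: routing through Lemma~\ref{lem:Approximation} still costs $\norm{P}\le q^{O(d^3)}$. The paper never needs to recover $\norm{f^{\le d}}_q$ from $\norm{\mathrm{T}^{(\rho)}f^{\le d}}_q$: since $\mathrm{T}^{(\rho)}$ preserves $V_d$ with eigenvalues at least $\rho^{O(d)}$ there (Corollary~\ref{cor:eigenvalues}), one writes $\norm{f^{\le d}}_2^2\le\rho^{-O(d)}\inner{\mathrm{T}^{(\rho)}f^{\le d}}{f^{\le d}}=\rho^{-O(d)}\inner{\mathrm{T}^{(\rho)}f^{\le d}}{f}$ and applies H\"{o}lder and Theorem~\ref{thm:Hypercontractivity} to the right-hand side, paying only $\rho^{-O(d)}=\log^{O(d)}(1/\eps)$. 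Second, your worry about constant accumulation is handled at each single level of the induction --- the factor $e^{C_1 d^4}\cdot e^{C(d-t)^4}$ with $t\ge 1$ is absorbed into $e^{Cd^4}$ for $C$ large, with the spare $\log^{Ct}(1/\eps)$ as additional slack --- rather than by telescoping a sum over a chain of recursive calls.
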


\begin{proof}
The proof is by induction on $d$. If $d=0$, then
\[
\norm{f^{\le d}}_{2}
= \card{\Expect{}{f(\pi)}}
\leq \Expect{}{\card{f(\pi)}^2}
=\norm{f}_2^2
\leq \eps^2,
\]
where in the second transition we used the fact that $f$ is integer-valued.

We now prove the inductive step. Fix $d\geq 1$.
Let $1\leq t\leq d$, and
let $\mathrm{D}$ be a $t$-derivative.
By Lemma~\ref{lem:upper bound on restriction  of truncation }, there is an absolute constant $C_1>0$ such that
\begin{equation}\label{eq2}
\norm{\mathrm{D}\left(f^{\le d}\right)}_{2}  \le e^{C_{1}\left(d^{4}\right)}\max_{\mathrm{D}'\text{ a }t-\text{derivative}}\norm{\left(\mathrm{D}'f\right)^{\le d-t}}_{2}+
n^{-10 d}
\norm{f^{\le 4d}}_{2}.
\end{equation}
Fix $\mathrm{D}'$. The function $\mathrm{D}'f$ takes integer values and
is defined on a domain that is isomorphic to $S_{n-t}$, so by the induction hypothesis we have
\[
\norm{\left(\mathrm{D}'f\right)^{\le d-t}}_{2}\le e^{C\left(d-t\right)^{4}}\epsilon^{2}\log\left(\frac{1}{\epsilon}\right)^{C\left(d-t\right)}.
\]
As for $\|f^{\le 4d}\|_{2}^{2}$, applying Lemma~\ref{lem:weak level d} we see it is at most $n^{8d}\eps^4 \log^{Cd}(1/\eps)$.
Plugging these two estimates into~\eqref{eq2} we get that
\[
\norm{\mathrm{D}\left(f^{\le d}\right)}_{2}
 \le e^{Cd^{4}}\epsilon^{2}\log^{C\cdot d}\left(1/\epsilon\right),
\]
 provided that $C$ is sufficiently large.

If
\[
\norm{f^{\le d}}_{2}^{2}\le e^{2Cd^{4}}\epsilon^{4}\log\left(\frac{1}{\epsilon}\right)^{2Cd}
\]
we're done, so assume otherwise. We get that
 $\norm{\mathrm{D}'\left(f^{\le d}\right)}_{2} \leq \|f^{\le d}\|_{2}$ for all derivatives of order at most $d$, and from Claim~\ref{claim:degree reduction},
 $\norm{\mathrm{D}'\left(f^{\le d}\right)}_{2} = 0$ for higher-order derivatives, and so by Claim~\ref{claim:globalness equivalence},
 the function $f^{\le d}$ is $(2d,4^{d}\|f^{\le d}\|_{2})$-global, and by Lemma~\ref{lem:Bootstrapping the globalness}, we get that
 $f^{\leq d}$ is $4^{d}\|f^{\le d}\|_{2}$-global with constant $4^8$.
 In this case, we apply the standard argument as presented in the overview, as outlined below.

Set $q=\log\left(1/\epsilon\right)$, and without loss of generality assume $q$ is an even integer (otherwise we may change $q$ by a constant factor to ensure that).
Set $\rho=\frac{1}{(10 q 4^8)^2}$.
From Lemmas~\ref{lem:Trho is symmetric},~\ref{lem:abjuntas} we have that $\mathrm{T}^{(\rho)}$ preserves degrees,
and so by Corollary~\ref{cor:eigenvalues} we get
\[
\norm{f^{\le d}}_{2}^{2}
\leq \rho^{-C_2\cdot d}\inner{\mathrm{T}^{\left(\rho\right)}f^{\le d}}{f^{\le d}}
=\rho^{-C_2\cdot d}\inner{\mathrm{T}^{\left(\rho\right)}f^{\le d}}{f}
\le
\rho^{-C_2\cdot d}
\norm{\mathrm{T}^{\left(\rho\right)}f^{\le d}}_{q}
\norm{f}_{q/(q-1)},
\]
where we also used H\"{o}lder's inequality.
 By Theorem~\ref{thm:Hypercontractivity}, we have
$\norm{\mathrm{T}^{(\rho)} f^{\le d}}_{q}\le 4^{d}\norm{f^{\le d}}_{2}$, and
by a direction computation $\norm{f}_{q/(q-1)}\leq \eps^{2(q-1)/q}$. Plugging these two estimates into the inequality above and rearranging yields that
\[
\norm{f^{\le d}}_{2}^{2}  \le
\rho^{-2C_2\cdot d} 4^{2d} \norm{f}_{q/(q-1)}^2
\leq \rho^{-3C_2\cdot d} \eps^4
= 2^{6C_2 \log(10C)} \eps^4 \log^{6C_2\cdot d}(1/\eps)
\leq 2^{C\cdot d^4} \eps^4 \log^{C \cdot d}(1/\eps),
\]
for large enough $C$.
\end{proof}

\subsection{Deducing the strong level-$d$ inequality: proof of Theorem~\ref{thm:lvl_d_strong}}\label{sec:stronger_lvl_d}
Let $\delta = 2^{C_1\cdot d^4} \eps^2\log^{C_1\cdot d}(1/\eps)$ for sufficiently large absolute constant $C_1$.
By Claim~\ref{claim:globalness_of_low_deg_part} we get that $f^{\leq d}$ is $\delta$-global with constant $4^8$.
Set $q = \log(1/\norm{f}_2)$, and let $\rho = 1/(10\cdot 4^8\cdot q)^2$ be from Theorem~\ref{thm:Hypercontractivity}.
From Lemmas~\ref{lem:Trho is symmetric},~\ref{lem:abjuntas} we have that $\mathrm{T}^{(\rho)}$ preserves degrees,
and so by Corollary~\ref{cor:eigenvalues} we get
\[
\norm{f^{\leq d}}_2^2 =
\inner{f^{\leq d}}{f^{\leq d}}
\leq \rho^{-O(d)}\inner{f^{\leq d}}{\mathrm{T}^{(\rho)} f^{\leq d}}
= \rho^{-O(d)}\inner{f}{\mathrm{T}^{(\rho)} f^{\leq d}}
\leq \rho^{-O(d)}\norm{f}_{q/(q-1)}\norm{\mathrm{T}^{(\rho)} f^{\leq d}}_q.
\]
Using $\norm{f}_{q/(q-1)} \leq \norm{f}_2^{2(q-1)/q} = \norm{f}_2^2 \norm{f}_2^{-2/q}\leq O(\norm{f}_2^2)$ and
Theorem~\ref{thm:Hypercontractivity} to bound
$\norm{\mathrm{T}^{(\rho)} f^{\leq d}}_q\leq \delta$, it follows that
\[
\norm{f^{\leq d}}_2^2\leq \rho^{-O(d)} \norm{f}_2^2\delta \leq 2^{C\cdot d^4} \norm{f}_2^2\eps^2\log^{C\cdot d}(1/\epsilon),
\]
where we used $\norm{f}_2^2\leq \eps$.\qed
\bibliographystyle{abbrv}
\bibliography{ref}

\appendix
\section{Missing proofs}\label{sec:KKL}
\subsection{Globalness of $f$ implies globalness of $f^{\leq d}$}
\begin{claim}\label{claim:globalness_of_low_deg_part}
  There exists an absolute constant $C>0$ such that the following holds for all
  $n,d\in\mathbb{N}$ and $\eps>0$ satisfying $n\geq 2^{C\cdot d^3} \log(1/\eps)^{C\cdot d}$.
  Suppose $f\colon S_n\to\mathbb{Z}$ is $(2d,\eps)$-global.
  Then for all $j\leq d$, the function $f^{\leq j}$ is
  \begin{enumerate}
    \item $(2j,2^{O(j^4)}\eps^2 \log^{O(j)}(1/\eps))$-global.
    \item $2^{O(j^4)}\eps^2 \log^{O(j)}(1/\eps)$-global with constant $4^8$.
  \end{enumerate}
\end{claim}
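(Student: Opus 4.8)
The plan is to obtain the second item from the first, and to prove the first item by controlling the derivatives of $f^{\leq j}$ rather than its restrictions directly. Once the first item is established, $f^{\leq j}$ is a $(2j,\delta)$-global function of degree at most $j$ with $\delta = 2^{O(j^4)}\eps^2\log^{O(j)}(1/\eps)$, and since $n \geq Cj\log j$, Lemma~\ref{lem:Bootstrapping the globalness} immediately upgrades this to $f^{\leq j}$ being $\delta$-global with constant $4^8$, which is exactly the second item. So the work is entirely in the first item.

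For the first item, by the second part of Claim~\ref{claim:globalness equivalence} (applicable since $2j \leq n/2$) it suffices to show $\norm{\mathrm{D}(f^{\leq j})}_2 \leq 2^{O(j^4)}\eps^2\log^{O(j)}(1/\eps)$ for every derivative $\mathrm{D}$ of order $\ell \leq 2j$. When $\ell > j$ this is free, as $\mathrm{D}(f^{\leq j}) = 0$ by the degree reduction of Claim~\ref{claim:degree reduction}. For $\ell \leq j$ I would invoke Lemma~\ref{lem:upper bound on restriction  of truncation } with its ``$d$'' equal to $j$ and ``$t$'' equal to $\ell$, which reduces the task to bounding $\max_{\mathrm{D}'}\norm{(\mathrm{D}'f)^{\leq j-\ell}}_2$ over order-$\ell$ derivatives $\mathrm{D}'$ (with $\mathrm{D}' = \mathrm{id}$ when $\ell=0$), up to an additive error $n^{-10j}\norm{f^{\leq 4j}}_2$. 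This error is negligible: $\norm{f}_2 \leq \eps$ (the empty restriction), so Lemma~\ref{lem:weak level d} gives $\norm{f^{\leq 4j}}_2 \leq n^{4j}\log^{O(j)}(1/\eps)\,\eps^2$, whence $n^{-10j}\norm{f^{\leq 4j}}_2 \leq \eps^2$ once $n$ is large enough.

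It then remains to estimate $\norm{(\mathrm{D}'f)^{\leq j-\ell}}_2$. The function $\mathrm{D}'f = (\mathrm{L}_S f)_{S\to R}$ is integer-valued (an alternating sum of $2^\ell$ bimodule-translates of the integer-valued $f$, then restricted) and lives on a double coset isomorphic to $S_{n-2\ell}$. Writing $\mathrm{L}_S f = \sum_A (-1)^{|A|} f^{\tau_A}$ and using that each $f^{\tau_A}$ inherits the $(2d,\eps)$-globalness of $f$, together with the fact that restrictions compose, one checks that $\mathrm{D}'f$ is $(2(j-\ell), 2^{\ell}\eps)$-global (using $d \geq j$). By the first part of Claim~\ref{claim:globalness equivalence}, every derivative of $\mathrm{D}'f$ of order at most $j-\ell$ then has $2$-norm at most $2^{j}\eps$, so Proposition~\ref{prop:lvl_d}, applied on $S_{n-2\ell}$ with degree parameter $j-\ell$ (its side conditions on the ground-set size holding since $n-2\ell$ is still huge in terms of $j-\ell$ and $\log(1/\eps)$), yields $\norm{(\mathrm{D}'f)^{\leq j-\ell}}_2 \leq 2^{O(j^4)}\eps^2\log^{O(j)}(1/\eps)$. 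Feeding both estimates into Lemma~\ref{lem:upper bound on restriction  of truncation } gives the desired bound on $\norm{\mathrm{D}(f^{\leq j})}_2$ for all $\ell \leq j$, and Claim~\ref{claim:globalness equivalence} then converts these derivative bounds (at the cost of a harmless $2^{2j}$ factor, absorbed into $2^{O(j^4)}$) into $(2j,2^{O(j^4)}\eps^2\log^{O(j)}(1/\eps))$-globalness of $f^{\leq j}$.

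The main point requiring care is bookkeeping rather than anything conceptual: one must verify that globalness is genuinely preserved on passing from $f$ to a derivative $\mathrm{D}'f$ — this uses both the invariance of the notion of globalness under the $S_n$-bimodule action and the fact that a restriction of $\mathrm{D}'f$ by a consistent set of size $\leq 2(j-\ell)$ equals a restriction of $f$ (or one of its translates) by a consistent set of size $\leq 2d$ — and that the quantitative hypotheses needed by Proposition~\ref{prop:lvl_d} and Lemma~\ref{lem:weak level d} remain valid on the smaller group $S_{n-2\ell}$, which holds because $n \geq 2^{Cd^3}\log^{Cd}(1/\eps)$ for a sufficiently large absolute constant $C$. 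One should also note that this argument invokes Proposition~\ref{prop:lvl_d} only, not Claim~\ref{claim:globalness_of_low_deg_part} itself, so there is no circularity.
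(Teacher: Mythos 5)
Your proposal is correct and follows essentially the same route as the paper's proof: both bound the derivatives of $f^{\leq j}$ via Lemma~\ref{lem:upper bound on restriction  of truncation }, apply Proposition~\ref{prop:lvl_d} to the (integer-valued) derivatives of $f$, control the error term with Lemma~\ref{lem:weak level d}, convert back to globalness via Claims~\ref{claim:globalness equivalence} and~\ref{claim:degree reduction}, and deduce the second item from Lemma~\ref{lem:Bootstrapping the globalness}. Your treatment of general derivative order $\ell$ is in fact slightly more explicit than the paper's, which writes out only the order-$1$ case of the same argument.
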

\begin{proof}
  If $j=0$, then the claim is clear as $f^{\leq j}$ is just the constant
  $\Expect{}{f(\pi)}$, and its absolute value is at most $\norm{f}_2^2\leq \eps^2$.

  Suppose $j\geq 1$ and let $\mathrm{D}$ be a derivative of order $1\leq r\leq j$, then
  by Claim~\ref{claim:globalness equivalence} we have
  $\norm{\mathrm{D} f}_2\leq 2^{2j}\eps$. Therefore, applying Proposition~\ref{prop:lvl_d}
  on $\mathrm{D} f$, we get that
  \[
  \norm{(\mathrm{D} f)^{\leq j-1}}_2 \leq 2^{O((j-1)^4)} \eps^2 \log^{O(j)}(1/\eps).
  \]
  Using Lemma~\ref{lem:upper bound on restriction  of truncation } we get that
  \[
  \norm{\mathrm{D}(f^{\leq j})}_2
  \leq 2^{O\left(j\right)^{4}}\max_{1-\text{derivative }\mathrm{D}'}\norm{\left(\mathrm{D}'f\right)^{\le j-1}}_{2}
  +\left(\frac{1}{n}\right)^{10j}\norm{f^{\le 4j}}_2
  \leq
  2^{O(j^4)} \eps^2 \log^{O(j)}(1/\eps),
  \]
  where in the last inequality we our earlier estimate and Lemma~\ref{lem:weak level d}.
  For derivatives of order higher than $j$, we have that $\mathrm{D}(f^{\leq j}) = 0$ from Claim~\ref{claim:degree reduction}.
  Thus, Claim~\ref{claim:globalness equivalence} implies that $f^{\leq j}$ is
  $(2j,2^{O(j^4)}\eps^2 \log^{O(j)}(1/\eps))$-global. The second item immediately follows from Lemma~\ref{lem:Bootstrapping the globalness}.
\end{proof}

\subsection{Proof of Theorem~\ref{thm:KKL_analog}}
Our proof will make use of the following simple fact.
\begin{fact}\label{fact:poincare}
 Let $g\colon S_n\to\mathbb{R}$.
 \begin{enumerate}
   \item We have the Poincar\'{e} inequality:
   ${\sf var}(g)\leq \frac{1}{n}\sum\limits_{\mathrm{L}_1} \norm{\mathrm{L}_1 g}_2^2$,
   where the sum is over all $1$-Laplacians.
   \item We have $I[g] = \frac{2}{n-1}\sum\limits_{\mathrm{L}_1} \norm{\mathrm{L}_1 g}_2^2$,
   where again the sum is over all $1$-Laplacians.
 \end{enumerate}
\end{fact}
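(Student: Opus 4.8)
The plan is to reduce both identities to the transposition operator $\mathrm{T}$ of Lemma~\ref{lem:FOW} and its Dirichlet form. For the second item I would just unwind the definitions: by definition
\[
I[g]=\sum_{i=1}^{n}I_i[g]=\sum_{i=1}^{n}\frac{1}{n-1}\sum_{j\neq i}\norm{\mathrm{L}_{(i,j)}g}_2^2=\frac{1}{n-1}\sum_{i\neq j}\norm{\mathrm{L}_{(i,j)}g}_2^2.
\]
Since $(i\,j)$ and $(j\,i)$ denote the same transposition, $\mathrm{L}_{(i,j)}=\mathrm{L}_{(j,i)}$, so the ordered sum over pairs $(i,j)$ counts every distinct $1$-Laplacian exactly twice; hence $\sum_{i\neq j}\norm{\mathrm{L}_{(i,j)}g}_2^2=2\sum_{\mathrm{L}_1}\norm{\mathrm{L}_1 g}_2^2$, which gives $I[g]=\frac{2}{n-1}\sum_{\mathrm{L}_1}\norm{\mathrm{L}_1 g}_2^2$.

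For the Poincar\'e inequality I would combine two computations. First, writing $\mathrm{T}g(\pi)=\frac{1}{\binom{n}{2}}\sum_{\{i,j\}}g(\pi(i\,j))$ and subtracting from $g$ shows $(I-\mathrm{T})g=\frac{1}{\binom{n}{2}}\sum_{\mathrm{L}_1}\mathrm{L}_1 g$. Pairing with $g$ and using that right multiplication by a transposition is a measure-preserving involution of $S_n$ — so $\norm{g^{(i\,j)}}_2=\norm{g}_2$, and therefore $\inner{g}{\mathrm{L}_{(i,j)}g}=\norm{g}_2^2-\inner{g}{g^{(i\,j)}}=\frac{1}{2}\norm{\mathrm{L}_{(i,j)}g}_2^2$ — I obtain the Dirichlet-form identity $\inner{g}{(I-\mathrm{T})g}=\frac{1}{n(n-1)}\sum_{\mathrm{L}_1}\norm{\mathrm{L}_1 g}_2^2$. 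Second, I invoke the spectral gap of $\mathrm{T}$: by Lemma~\ref{lem:FOW}, each $V_{=d}(S_n)$ with $d\geq 1$ is $\mathrm{T}$-invariant with all eigenvalues at most $1-\frac{d}{n-1}\leq 1-\frac{1}{n-1}$, while $\mathrm{T}$ is self-adjoint and fixes the constants. Replacing $g$ by $g-\E[g]$ (which $I-\mathrm{T}$ leaves unchanged, and which does not change $\inner{g}{(I-\mathrm{T})g}$ since $\mathrm{T}$ preserves means) and decomposing it into its pure-degree parts, this yields $\inner{g}{(I-\mathrm{T})g}\geq\frac{1}{n-1}{\sf var}(g)$. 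Comparing the two expressions for the Dirichlet form and multiplying through by $n(n-1)$ gives $\sum_{\mathrm{L}_1}\norm{\mathrm{L}_1 g}_2^2\geq n\,{\sf var}(g)$, which is the asserted bound.

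There is no genuine obstacle here: the statement is a routine Poincar\'e/Dirichlet-form manipulation. The only points requiring care are the bookkeeping of normalizing constants — in particular the double counting coming from the identification $\mathrm{L}_{(i,j)}=\mathrm{L}_{(j,i)}$ and the count $\binom{n}{2}=\frac{n(n-1)}{2}$ of transpositions — and recording explicitly that $\mathrm{T}$ is self-adjoint, fixes the constants, and preserves $V_{=d}$ (all visible from Lemma~\ref{lem:FOW} and its proof), which is what legitimizes both the reduction to mean-zero functions and the spectral-gap estimate.
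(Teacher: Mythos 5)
Your proof is correct and follows essentially the same route as the paper: item (2) by unwinding the definition of $I_i$ and accounting for the double counting $\mathrm{L}_{(i,j)}=\mathrm{L}_{(j,i)}$, and item (1) by computing the Dirichlet form $\inner{g}{(I-\mathrm{T})g}=\frac{1}{n(n-1)}\sum_{\mathrm{L}_1}\norm{\mathrm{L}_1 g}_2^2$ and comparing it with the spectral-gap bound $\inner{g}{(I-\mathrm{T})g}\geq\frac{1}{n-1}{\sf var}(g)$ coming from Lemma~\ref{lem:FOW}. The bookkeeping of constants matches the paper's, so nothing further is needed.
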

\begin{proof}
  The second item is straightforward by the definitions, and we focus on the first one.
  Let $\tilde{\mathrm{L}} g = \Expect{\mathrm{L}_1}{\mathrm{L}_1 g} = (I-\mathrm{T}) g$.
  If $\alpha_{d,r}$ is an eigenvalue of $\mathrm{T}$ corresponding to a function from
  $V_{=d}(S_n)$, then by the second item in Lemma~\ref{lem:FOW} we have
  $\alpha_{d,r}\leq 1-\frac{d}{n-1}$.

  Note that we may find an orthonormal basis of $V_{=d}(S_n)$ consisting of eigenvectors of $\mathrm{T}$,
  and therefore we may first write $g = \sum\limits_{d} g^{=d}$ where
  $g^{=d}\in V_{=d}(S_n)$, and then further decompose each $g_d$ to
  $g_d = \sum\limits_{r=0}^{r_d} g_{d,r}$ where $g_{d,r}\in V_{=d}(S_n)$
  are all orthogonal and eigenvectors of $\mathrm{T}$. We thus get
  \begin{equation}\label{eq:apx5}
  \inner{g}{\tilde{\mathrm{L}} g}
  =\sum\limits_{d}\sum\limits_{r=0}^{r_d} (1-\alpha_{d,r}) \norm{g^{d,r}}_2^2
  \geq \sum\limits_{d}\sum\limits_{r=0}^{r_d}\frac{d}{n-1} \norm{g^{d,r}}_2^2
  = \sum\limits_{d}\frac{d}{n-1}\norm{g^{=d}}_2^2
  \geq \frac{1}{n-1}{\sf var}(g).
  \end{equation}
  On the other hand,
  \[
  \inner{g}{\tilde{L} g}
  =\Expect{\pi}{\Expect{\tau\text{ a transposition}}{g(\pi)(g(\pi) - g(\pi\circ \tau))}}
  =\frac{1}{2} \Expect{\tau\text{ a transposition}}{\Expect{\pi}{(g(\pi) - g(\pi\circ \tau))^2}},
  \]
  which is the same as $\frac{1}{2{n\choose 2}}\sum\limits_{\mathrm{L}_1}{\norm{\mathrm{L}_1g}_2^2}$.
  Combining this with the previous lower bound gives the first item.
\end{proof}

\begin{proof}[Proof of Theorem~\ref{thm:KKL_analog}]
    Let $f = 1_S$. Then
    $I[f] = \frac{n-1}{2}\Prob{\substack{\pi\in S_n\\ \sigma\sim \mathrm{T}\pi}}{f(\pi)\neq f(\sigma)}$, and arithmetizing
    that we have that it is equal to $\frac{n-1}{2}\inner{f}{(I-\mathrm{T}) f}$. Thus, writing
    $f = f^{=0}+f^{=1}+\dots$, where $f^{=j}\in V_{=j}(S_n)$, we have, as in inequality~\eqref{eq:apx5},
    that
    \begin{equation}\label{eq:apx4}
    \frac{n-1}{2}\inner{f}{(I-\mathrm{T}) f}
    \geq \frac{n-1}{2}\sum\limits_{j=0}^{n} \frac{j}{n-1} \norm{f^{=j}}_2^2
    \geq \frac{d}{2}\norm{f^{>d}}_2^2.
    \end{equation}
    To finish the proof, we show that $\norm{f^{>d}}_2^2\geq \Omega({\sf var}(f))$.
    To do that, we upper-bound the weight of $f$ on degrees $1$ to $d$.

    Let $g = f^{\leq d}$. We intend to bound
    ${\sf var}(g)$ using the Poincar\'{e} inequality, namely the first item in Fact~\ref{fact:poincare}.
    Fix an order~$1$ Laplacian $\mathrm{L}_1$. We have
    \begin{equation}\label{eq:apx3}
    \norm{\mathrm{L}_1 g}_2^2
    =\inner{\mathrm{L}_1 g}{\mathrm{L}_1 f}
    \leq\norm{\mathrm{L}_1 g}_4\norm{\mathrm{L}_1 f}_{4/3}.
    \end{equation}
    As $f$ is Boolean, $\mathrm{L}_1 f$ is $\set{-1,0,1}$-valued and so
    $\norm{\mathrm{L}_1 f}_{4/3} = \norm{\mathrm{L}_1 f}_{2}^{3/2}$, and next we bound
    $\norm{\mathrm{L}_1 g}_4$. Note that
    \begin{equation}\label{eq:apx1}
    \norm{\mathrm{L}_1 g}_4^4
    =
    \Expect{\substack{\mathrm{D}_1\\\text{order $1$ derivative}\\ \text{consistent with $\mathrm{L}_1$}}}
    {\norm{\mathrm{D}_1 g}_4^4},
    \end{equation}
    and we analyze $\norm{\mathrm{D}_1 g}_4^4$ for all derivatives $\mathrm{D}_1$. For that we use hypercontractivity, and we first have
    to show that $\mathrm{D}_1 g$ is global.

    Fix a $1$-derivative $\mathrm{D}_1$, and set $h = \mathrm{D}_1 g$.
    By Lemma~\ref{lem:upper bound on restriction  of truncation } (with $\tilde{f} = f-\Expect{}{f}$ instead of $f$), we get that for all $r\leq d-1$
    and order $r$ derivatives $\mathrm{D}$ we have
    \begin{align*}
    \norm{\mathrm{D} h}_{2}
    =
    \norm{\mathrm{D}\mathrm{D}_1\left(\tilde{f}^{\le d}\right)}_{2}
    &\leq
     2^{O\left(d^{4}\right)}
     \max_{\substack{\mathrm{D}'\text{ an }r-\text{derivative}\\\mathrm{D}_1'\text{ a }1-\text{derivative}}}
     \norm{\left(\mathrm{D}'\mathrm{D}_1'\tilde{f} \right)^{\le d-r-1}}_{2}+n^{-10d}\norm{\tilde{f}^{\le 4d}}_{2}\\
    &\leq 2^{-C\cdot d^4/2} + n^{-10d}\sqrt{{\sf var}(f)} \defeq \delta,
    \end{align*}
    where we used $\mathrm{D}'\mathrm{D}_1'\tilde{f} = \mathrm{D}'\mathrm{D}_1'f$, which by assumption has $2$-norm at most $2^{-C\cdot d^4}$,
    and $\norm{\tilde{f}^{\le 4d}}_{2}\leq \norm{\tilde{f}}_2 = \sqrt{{\sf var}(f)}$.
    For $r\geq d$, we have by Claim~\ref{claim:degree reduction} that $\norm{\mathrm{D} h}_{2}=0$. Thus, all derivatives of $h$
    have small $2$-norm, and by Claim~\ref{claim:globalness equivalence} we get that $h$ is $(2d,2^{d}\delta)$-global. Thus,
    from Theorem~\ref{thm:Reasonability} we have that
    \begin{equation}\label{eq:apx2}
      \norm{\mathrm{D}_1 g}_4\leq 2^{O(d^3)}\delta^{1/2}\norm{\mathrm{D}_1 g}_2^{1/2}
    \end{equation}

    Plugging inequality~\eqref{eq:apx2} into~\eqref{eq:apx1} yields that
    \[
    \norm{\mathrm{L}_1 g}_4^4
    \leq 2^{O(d^3)}\delta^{2}
    \Expect{\substack{\mathrm{D}_1\\\text{order $1$ derivative}\\ \text{consistent with $\mathrm{L}_1$}}}
    {\norm{\mathrm{D}_1 g}_2^2}
    =2^{O(d^3)}\delta^{2}\norm{\mathrm{L}_1 g}_2^2
    \leq 2^{O(d^3)}\delta^{2}\norm{\mathrm{L}_1 f}_2^2.
    \]
    Plugging this, and the bound we have on the $4/3$-norm $\mathrm{L}_1 f$, into~\eqref{eq:apx3}, we get that
    \[
    \norm{\mathrm{L}_1 g}_2^2
    \leq
    2^{O(d^3)}\delta^{1/2}
    \norm{\mathrm{L}_1 f}_2^2.
    \]
    Summing this inequality over all $1$-Laplacians and using Fact~\ref{fact:poincare}, we get that
    \[
    {\sf var}(g)
    \leq \frac{1}{n}\sum\limits_{\mathrm{L}_1}\norm{\mathrm{L}_1 g}_2^2
    \leq 2^{O(d^3)}\delta^{1/2}\frac{2}{n-1}\sum\limits_{\mathrm{L}_1}\norm{\mathrm{L}_1 f}_2^2
    = 2^{C\cdot d^3}\delta^{1/2}I[f]
    \]
    for some absolute constant $C$, and we consider two cases.
    \paragraph{The case that $I[f]\leq 2^{-C\cdot d^3}\delta^{-1/2}{\sf var}(f)/2$.}
    In this case we get that ${\sf var}(g)\leq {\sf var}(f)/2$, and so
    $\norm{f^{>d}} = {\sf var}(f) - {\sf var}(g)\geq {\sf var}(f)/2$. Plugging
    this into~\eqref{eq:apx4} finishes the proof.

    \paragraph{The case that $I[f]\geq 2^{-C\cdot d^3}\delta^{-1/2}{\sf var}(f)/2$.}
    By definition of $\delta$ we get that either $I[f]\geq 2^{C\cdot d^4/4} {\sf var}(f)$, in which case we are done,
    or $I[f]\geq 2^{-O(d^3)} n^{5d}{\sf var}(f)^{3/4}$, in which case we are done by the lower bound on $n$.
\end{proof}
\end{document}